\definecolor{darkred}{cmyk}{0,1,1,0.4}
\theoremstyle{plain}
\newtheorem{theorem}{Theorem}[section]
\newtheorem*{theorem*}{Theorem}
\newtheorem{lemma}[theorem]{Lemma}
\newtheorem{corollary}[theorem]{Corollary}
\newtheorem{proposition}[theorem]{Proposition}
\newtheorem{question}[theorem]{Question}
\theoremstyle{definition}
\newtheorem{definition}[theorem]{Definition}
\theoremstyle{remark}
\newtheorem{remark}[theorem]{Remark}
\newtheorem{remark*}[theorem]{Remark\textup{*}}
\newtheorem{example}[theorem]{Example}
\numberwithin{equation}{section}
\numberwithin{figure}{section}
\DeclareMathAlphabet{\mathpzc}{OT1}{pzc}{m}{it}
\def\eps{\varepsilon}
\def\C {\mathbb{C}}
\def\N {\mathbb{N}}
\def\R {\mathbb{R}}
\def\S {\mathbb{S}}
\def\Z {\mathbb{Z}}
\def\eD {\EuScript{D}}
\def\eN {\EuScript{N}}
\def\eP {\EuScript{P}}
\newcommand\1{{\ensuremath {\mathds 1} }}
\newcommand{\keyword}[1]{\textbf{#1}}
\newcommand{\0}{\mathbf{0}}
\newcommand{\balpha}{{\boldsymbol{\alpha}}}
\newcommand{\bA}{\mathbf{A}}
\newcommand{\bR}{\mathbf{R}}
\newcommand{\bX}{\mathbf{X}}
\newcommand{\be}{\mathbf{e}}
\newcommand{\bp}{\mathbf{p}}
\newcommand{\br}{\mathbf{r}}
\newcommand{\bx}{\mathbf{x}}
\newcommand{\bz}{\mathbf{z}}
\newcommand{\cA}{\mathcal{A}}
\newcommand{\cC}{\mathcal{C}}
\newcommand{\cD}{\mathcal{D}}
\newcommand{\cF}{\mathcal{F}}
\newcommand{\cH}{\mathcal{H}}
\newcommand{\cL}{\mathcal{L}}
\newcommand{\gu}{\mathfrak{u}}
\newcommand{\hp}{\hat{p}}
\newcommand{\hx}{\hat{x}}
\newcommand{\hT}{\hat{T}}
\newcommand{\hV}{\hat{V}}
\newcommand{\hbp}{\hat{\mathbf{p}}}
\newcommand{\sGL}{\textup{GL}}
\newcommand{\sSU}{\textup{SU}}
\newcommand{\sU}{\textup{U}}
\newcommand{\ssp}{\textup{p}}
\newcommand{\sx}{\textup{x}}
\newcommand{\sz}{\textup{z}}
\newcommand{\tcC}{\tilde{\mathcal{C}}}
\newcommand{\tX}{\tilde{X}}
\newcommand{\tY}{\tilde{Y}}
\DeclareMathOperator{\Ad}{\mathrm{Ad}}
\DeclareMathOperator{\infspec}{\mathrm{inf\, spec\,}}
\DeclareMathOperator{\curl}{\mathrm{curl}}
\DeclareMathOperator{\diag}{\mathrm{diag}}
\DeclareMathOperator{\Fib}{\mathrm{Fib}}
\DeclareMathOperator{\pr}{\mathrm{pr}}
\DeclareMathOperator{\sign}{\mathrm{sign}}
\DeclareMathOperator{\Span}{\mathrm{Span}}
\DeclareMathOperator{\supp}{\mathrm{supp}}
\newcommand{\mapsfrom}{\reflectbox{$\mapsto$}}
\newcommand{\inp}[1]{\left\langle#1\right\rangle}
\newcommand{\dist}{\mathrm{dist}}
\newcommand{\sym}{\mathrm{sym}}
\newcommand{\asym}{\mathrm{asym}}
\newcommand{\rel}{\mathrm{rel}}
\newcommand{\vphi}{\varphi}
\newcommand{\bDelta}{{\mbox{$\triangle$}\hspace{-8.0pt}\scalebox{0.8}{$\triangle$}}}
\newcounter{mycounter}
\newcommand{\fs}[3][]{
  \begin{tikzpicture}[scale=0.3,font=\footnotesize,anchor=mid,baseline={([yshift=-1ex]current bounding box.center)}]
    \def\height{1.5}
    \def\offset{0.5}
    \ifthenelse{\equal{#1}{}}{\def\height{1}}{} 
    \setcounter{mycounter}{0}
    \@for\el:=#2\do{
      \ifthenelse{\equal{#1}{\value{mycounter}}}{
        \braid at (\value{mycounter},\height) s_1^{-1};
        \stepcounter{mycounter}
      }{
        \stepcounter{mycounter}
        \ifthenelse{\equal{#1}{\value{mycounter}}}{
        }{
          \draw (\value{mycounter}, \height) to (\value{mycounter}, 0);
        }
      }
      \node at (\value{mycounter}, \height+\offset) {$\el$};
    }
    \draw (0, 0) to (\value{mycounter}+1, 0);
    \setcounter{mycounter}{0}
    \@for\el:=#3\do{
      \node at (\value{mycounter}+0.5, -0.6) {$\el$};
      \stepcounter{mycounter}
    }
  \end{tikzpicture}
}
\newcommand{\fswide}[3][]{
  \begin{tikzpicture}[scale=0.3,font=\footnotesize,anchor=mid,baseline={([yshift=-.5ex]current bounding box.center)}]
    \def\height{1.75}
    \def\offset{0.5}
    \def\widthfactor{2}
    \ifthenelse{\equal{#1}{}}{\def\height{1}}{} 
    \setcounter{mycounter}{0}
    \@for\el:=#2\do{
      \ifthenelse{\equal{#1}{\value{mycounter}}}{
        \braid[width=\widthfactor cm, height=1.25 cm] at (\widthfactor*\value{mycounter},\height) s_1^{-1};
        \stepcounter{mycounter}
      }{
        \stepcounter{mycounter}
        \ifthenelse{\equal{#1}{\value{mycounter}}}{
        }{
          \draw (\widthfactor*\value{mycounter}, \height) to (\widthfactor*\value{mycounter}, 0);
        }
      }
      \node at (\widthfactor*\value{mycounter}, \height+\offset) {$\el$};
    }
    \draw (0, 0) to (\widthfactor*\value{mycounter}+\widthfactor*1, 0);
    \setcounter{mycounter}{0}
    \@for\el:=#3\do{
      \node at (\widthfactor*\value{mycounter} + \widthfactor*0.5, -0.6) {$\el$};
      \stepcounter{mycounter}
    }
  \end{tikzpicture}
}
\newcommand{\fswideflex}[4][]{
  \begin{tikzpicture}[scale=0.3,font=\footnotesize,anchor=mid,baseline={([yshift=-.5ex]current bounding box.center)}]
    \def\height{1.75}
    \def\offset{0.5}
    \def\widthfactor{#4}
    \ifthenelse{\equal{#1}{}}{\def\height{1}}{} 
    \setcounter{mycounter}{0}
    \@for\el:=#2\do{
      \ifthenelse{\equal{#1}{\value{mycounter}}}{
        \braid[width=\widthfactor cm, height=1.25 cm] at (\widthfactor*\value{mycounter},\height) s_1^{-1};
        \stepcounter{mycounter}
      }{
        \stepcounter{mycounter}
        \ifthenelse{\equal{#1}{\value{mycounter}}}{
        }{
          \draw (\widthfactor*\value{mycounter}, \height) to (\widthfactor*\value{mycounter}, 0);
        }
      }
      \node at (\widthfactor*\value{mycounter}, \height+\offset) {$\el$};
    }
    \draw (0, 0) to (\widthfactor*\value{mycounter}+\widthfactor*1, 0);
    \setcounter{mycounter}{0}
    \@for\el:=#3\do{
      \node at (\widthfactor*\value{mycounter} + \widthfactor*0.5, -0.6) {$\el$};
      \stepcounter{mycounter}
    }
  \end{tikzpicture}
}
\newcommand{\fswider}[3][]{
  \begin{tikzpicture}[scale=0.3,font=\footnotesize,anchor=mid,baseline={([yshift=-.5ex]current bounding box.center)}]
    \def\height{1.75}
    \def\offset{0.5}
    \def\widthfactor{3}
    \ifthenelse{\equal{#1}{}}{\def\height{1}}{} 
    \setcounter{mycounter}{0}
    \@for\el:=#2\do{
      \ifthenelse{\equal{#1}{\value{mycounter}}}{
        \braid[width=\widthfactor cm, height=1.25 cm] at (\widthfactor*\value{mycounter},\height) s_1^{-1};
        \stepcounter{mycounter}
      }{
        \stepcounter{mycounter}
        \ifthenelse{\equal{#1}{\value{mycounter}}}{
        }{
          \draw (\widthfactor*\value{mycounter}, \height) to (\widthfactor*\value{mycounter}, 0);
        }
      }
      \node at (\widthfactor*\value{mycounter}, \height+\offset) {$\el$};
    }
    \draw (0, 0) to (\widthfactor*\value{mycounter}+\widthfactor*1, 0);
    \setcounter{mycounter}{0}
    \@for\el:=#3\do{
      \node at (\widthfactor*\value{mycounter} + \widthfactor*0.5, -0.6) {$\el$};
      \stepcounter{mycounter}
    }
  \end{tikzpicture}
}
\newcommand{\fsfusedbraided}[5]{
  \begin{tikzpicture}[scale=0.3,font=\footnotesize,anchor=mid,baseline={([yshift=-.5ex]current bounding box.center)}]
    \def\height{3}
    \def\offset{0.5}
    \node at (1, \height+\offset) {$#2$};
    \node at (2, \height+\offset) {$#3$};
    \braid at (1, \height) s_1^{-1};
    \draw (0.5, 0) to (2.5, 0);
    \draw (1, 1.5) to [bend left=-30] (1.5, 1);
    \draw (2, 1.5) to [bend left=30] (1.5, 1);
    \draw (1.5, 1) to (1.5, 0);
    \node at (0.75, -0.6) {$#1$};
    \node at (2.25, -0.6) {$#4$};
    \node at (2, 0.6) {$#5$};
  \end{tikzpicture}
}
\newcommand{\fsfusedbraidedshort}[3]{
  \begin{tikzpicture}[scale=0.3,font=\footnotesize,anchor=mid,baseline={([yshift=-.5ex]current bounding box.center)}]
    \def\height{3}
    \def\offset{0.5}
    \node at (1, \height+\offset) {$#1$};
    \node at (2, \height+\offset) {$#2$};
    \braid at (1, \height) s_1^{-1};
    \draw (1, 1.5) to [bend left=-30] (1.5, 1);
    \draw (2, 1.5) to [bend left=30] (1.5, 1);
    \draw (1.5, 1) to (1.5, 0);
    \node at (2, 0.5) {$#3$};
  \end{tikzpicture}
}
\newcommand{\fsfused}[5]{
  \begin{tikzpicture}[scale=0.3,font=\footnotesize,anchor=mid,baseline={([yshift=-.5ex]current bounding box.center)}]
    \def\height{1.75}
    \def\offset{0.5}
    \node at (1, \height+\offset) {$#2$};
    \node at (2, \height+\offset) {$#3$};
    \draw (0.5, 0) to (2.5, 0);
    \draw (1, \height) to [bend left=-30] (1.5, 1);
    \draw (2, \height) to [bend left=30] (1.5, 1);
    \draw (1.5, 1) to (1.5, 0);
    \node at (0.75, -0.6) {$#1$};
    \node at (2.25, -0.6) {$#4$};
    \node at (2, 0.5) {$#5$};
  \end{tikzpicture}
}
\newcommand{\fsfusedshort}[3]{
  \begin{tikzpicture}[scale=0.3,font=\footnotesize,anchor=mid,baseline={([yshift=-1.5ex]current bounding box.center)}]
    \def\height{1.75}
    \def\offset{0.5}
    \node at (1, \height+\offset) {$#1$};
    \node at (2, \height+\offset) {$#2$};
    \draw (1, \height) to [bend left=-30] (1.5, 1);
    \draw (2, \height) to [bend left=30] (1.5, 1);
    \draw (1.5, 1) to (1.5, 0);
    \node at (2, 0.5) {$#3$};
  \end{tikzpicture}
}
\title{Exchange and exclusion in the non-abelian anyon gas}
\author[D. Lundholm]{Douglas LUNDHOLM}
\address{Uppsala University, Department of Mathematics - Box 480, SE-751 06 Uppsala, Sweden}
\email{douglas.lundholm@math.uu.se}
\author[V. Qvarfordt]{Viktor QVARFORDT}
\address{Stockholm Mathematics Centre (SMC), Stockholm, Sweden}
\email{viktor.qvarfordt@gmail.com }
\subjclass[2020]{81V27, 81V70, 35P15, 20F36}
\begin{document}

\begin{abstract}
We review and develop the many-body spectral theory of ideal anyons,
i.e.\ identical quantum particles in the plane whose exchange rules are governed by 
unitary representations of the braid group on $N$ strands. 
Allowing for arbitrary rank (dependent on $N$) and non-abelian representations, 
and letting $N \to \infty$,
this defines the ideal non-abelian many-anyon gas.
We compute exchange operators and phases for a common and wide class of representations
defined by fusion algebras,
including the Fibonacci and Ising anyon models.
Furthermore, we extend methods of statistical repulsion (Poincar\'e and Hardy inequalities)
and a local exclusion principle (also implying a Lieb--Thirring inequality)
developed for abelian anyons to arbitrary geometric anyon models,
i.e.\ arbitrary sequences of unitary representations of the braid group,
for which two-anyon exchange is nontrivial.
\end{abstract}

\maketitle
\vspace{-0.5cm}
\setcounter{tocdepth}{2}
\tableofcontents

\vspace{-1.0cm}

\section{Introduction}\label{sec:intro}

	Quantum statistics is fundamental to our understanding of the physical world.
	In three spatial dimensions this refers to the division of particles into
	either bosons ---
    such as Higgs, photons, gluons, and other force carriers,
	as well as certain atoms such as $^4$He;
	or fermions --- such as electrons, quarks, 
	and other ordinary matter particles including $^3$He atoms.
	Their collective behavior explains everyday phenomena such as 
	conduction vs.\ insulation, lasing, as well as the stability of large systems of matter such as planets
	and stars against electromagnetic or gravitational collapse.
	When quantum systems of many particles are confined to two dimensions, however,
	other possibilities emerge.
	Such intermediate quantum statistics of exchange phases \cite{LeiMyr-77,GolMenSha-81,GolMenSha-85}
	(as opposed to the exclusion principle \cite{Gentile-40,Gentile-42,Haldane-91}),
	first appeared in theory in the 1970-80's, 
	with associated particles now known as anyons \cite{Wilczek-82b,Wu-84b}.
	They filled a gap in the 
	logical argument used since the 1920's to derive the boson/fermion dichotomy
	\cite{Girardeau-65,Klaiber-68,Souriau-70,StrWil-70,LaiDeW-71,Froehlich-88}, and
	were later found to have an application within the fractional quantum Hall
	effect (FQHE), both in their abelian form \cite{Laughlin-99,AroSchWil-84}
	as well as in their non-abelian form \cite{MooRea-91,Wen-91,GurNay-97}.
	The latter possibility eventually spawned promising applications to quantum computing and 
	complexity theory \cite{Kitaev-03,Lloyd-02,FreKitLarWan-03}.
	The key concepts involved, 
    rooted in configuration spaces and braid groups,
    naturally intersect with many branches of pure mathematics, such as
    knot theory, quantum groups, Chern-Simons theory, and conformal field theory (CFT) 
	\cite{Witten-89,MooSei-89}.
	In a quantum gravity context, both abelian 
	\cite{tHooft-88,DesJac-88,Carlip-90,KayStu-91}
	and non-abelian \cite{PitRui-15}
	anyons appear, the former as a toy model for point masses in 2+1 dimensions 
	or cosmic strings in 3+1 dimensions, and
	the latter proposed for the degrees of freedom on the spherical
	event horizon of `normal' black holes.
	Reviews focusing on the physics of anyons are very numerous; see e.g.\ 
	\cite{DatMurVat-03,Forte-92,Froehlich-90,IenLec-92,Jackiw-90,Khare-05,Lerda-92,LunRou-16,Myrheim-99,Nayak-etal-08,Ouvry-07,Stern-08,Wilczek-90}.
	For introductions to some of the mathematical techniques involved, we refer to
	\cite{DelRowWan-16,DoeStoTol-01,FroGab-90,Girardot-21,LieSeiSolYng-05,Lundholm-17,MacSaw-19,MunSch-95,Oddis-20,Rougerie-16,RowWan-18,Rougerie-21}.

    There has been a recent wave of renewed interest in anyons from both theory and experiment.
	On the theory side,
    the concrete emergence of anyons from underlying systems of bosons and fermions 
	has been studied in increasing detail;
    see e.g.
    \cite{Yakaboylu-etal-20,LamLunRou-22,LunRou-16,ValWesOhb-20,Yakaboylu-etal-19}.
	On the experimental side, signatures of individual anyons have finally been detected in FQHE setups \cite{Bartolomei-etal-20,Nakamura-etal-20}.
    A canonical approach is by measuring phase 
	interference for individual particles, which
    requires extremely sensitive equipment
    and even carries some analytical ambiguities \cite{Forte-91,Jain-07,ReaDas-24},
	while another, more robust, route would be to observe indirect effects of statistics in 
	density distributions for anyons \cite{CorDubLunRou-19,LevLunRou-25,Ataei-etal-25}.
	However, a fundamental problem \cite{CanJoh-94,MurSha-94,NayWil-94} 
	in this regard concerns the basic relationship between exchange phases 
	and exclusion (which in the fermionic case is attributed to Pauli \cite{Pauli-47}),
    i.e.\ the many-body spectral theory including the ground-state problem for the ideal homogeneous quantum gas,
	and this is only beginning to be addressed
	in a mathematically precise manner for anyons. 
    We refer to \cite{Lundholm-23} for a recent overview on the properties of 2D anyon gases from this perspective.

\smallskip

    {\bf Brief outline:}
    The main purpose of the present work is to extend some of the basic mathematical techniques developed 
    for the spectral theory of the ideal abelian anyon gas to the non-abelian context. 
    After an introduction and statement of results in the remainder of this section, 
    we first give in Section~\ref{sec:models} a compact review of the braid group representations we will be mainly concerned with, 
    and then in Section~\ref{sec:phases} show how to compute concrete exchange phases in these anyon models,
    including a few physically relevant examples.
    In Section~\ref{sec:ham}, we consider the Hamiltonian operator for an ideal but otherwise arbitrary geometric $N$-anyon model in the plane,
    and discuss the notion of statistics transmutation.
    In Section~\ref{sec:repulsion}, the connection between anyon models, exchange phases, and the exclusion principle is made mathematically precise in the form of Poincar\'e and Hardy inequalities.
	Finally, in Section~\ref{sec:gas}, some fundamental consequences for the ground-state energy of anyon gases are derived, including a Lieb--Thirring inequality for the inhomogeneous gas.

\subsection{Anyons --- abelian vs.\ non-abelian}

	The most direct route to quantum statistics is to start from an $N$-body
	Schr\"odinger wave function $\Psi\colon (\R^2)^N \to \C$,
	where $\sx = (\bx_1,\ldots,\bx_N) \in (\R^2)^N$ 
	denotes the positions of $N$ point particles in the plane.
	The square of its amplitude $|\Psi(\sx)|$ 
	is normalized $\int_{\R^{2N}} |\Psi|^2 = 1$ 
    and is interpreted as the
    probability density to find the particle labeled $j$ at the position $\bx_j \in \R^2$,
	at the same instant in time for $j = 1,\ldots,N$.
	In the case that the particles are all identical (indistinguishable), 
	this probability must be symmetric:
	\begin{equation}\label{eq:exchange-amp}
		|\Psi(\bx_1, \ldots, \bx_j, \ldots, \bx_k, \ldots, \bx_N)|^2
		= |\Psi(\bx_1, \ldots, \bx_k, \ldots, \bx_j, \ldots, \bx_N)|^2, 
		\quad j \neq k.
	\end{equation}
	This leaves the possibility for an exchange phase:
	\begin{equation}\label{eq:exchange-phase}
		\Psi(\bx_1, \ldots, \bx_j, \ldots, \bx_k, \ldots, \bx_N)
		= e^{i\alpha\pi} \Psi(\bx_1, \ldots, \bx_k, \ldots, \bx_j, \ldots, \bx_N), 
		\quad j \neq k.
	\end{equation}
	Further, by indistinguishability and logical consistency, it turns out that this phase must be independent
	of which pair of particles are exchanged 
    if none of the other particles interfere.
	Thus, if $\alpha = 0$ (or $\alpha \in 2\Z$) then $\Psi$ is symmetric and
	this defines a \keyword{bosonic state}, i.e.\ \keyword{bosons}, subject to Bose--Einstein statistics \cite{Bose-24,Einstein-24},
	while if $\alpha=1$ (or $\alpha \in 2\Z+1$)
	then $\Psi$ is antisymmetric and this defines
    a \keyword{fermionic state}, i.e.\ \keyword{fermions},
	subject to Fermi--Dirac statistics \cite{Fermi-26,Dirac-26}.
	In particular, bosons admit product states (Bose--Einstein condensate)
	\begin{equation}\label{eq:product-state}
		\Psi(\bx_1,\ldots,\bx_N) = \prod_{j=1}^N u(\bx_j),
	\end{equation}
	exhibiting 
    independent identical distribution in a one-body state $u \in L^2(\R^2)$,
	while fermions are necessarily correlated subject to 
	\keyword{Pauli's exclusion principle} \cite{Pauli-25,Pauli-47}:
	\begin{equation}\label{eq:Pauli}
		\Psi(\sx) = 0 \qquad \text{if $\bx_j = \bx_k$ for $j \neq k$}.
	\end{equation}
	Moreover, the space of fermionic states is spanned by Slater determinants (exterior products) of one-body states:
	\begin{equation}\label{eq:Slater-state}
		\Psi(\sx) = u_1 \wedge \ldots \wedge u_N (\sx) = (N!)^{-1/2} \det [u_k(\bx_j)].
	\end{equation}
	If $\alpha \notin \Z$ then \eqref{eq:exchange-phase} defines \keyword{anyons}
	(as in `any phase' \cite{Wilczek-82b})
	with \keyword{statistics parameter} $\alpha$,
	and requires that $\Psi$ is a multivalued function, 
	since a double exchange is not the identity
    (we assume here a continuous elementary exchange process with a positive orientation).
	As the free \emph{ideal} $N$-partice Hamiltonian operator
    we take the non-relativistic kinetic energy:
	$$
		\hT_0 := -\Delta_{\sx} = -\sum_{j=1}^N \Delta_{\bx_j}.
	$$
	It is defined to act on these (single- or multivalued) functions $\Psi$, 
	with sufficient regularity and suitable
	boundary conditions on the unit square 
	$Q_0 = [0,1]^2$, for example, to make the system finite
    (precise definitions will be given in Section~\ref{sec:ham}).
	The ground states of the form \eqref{eq:product-state} respectively \eqref{eq:Slater-state}
	immediately produce calculable ground-state energies $E_N = \infspec \hT_0$.
	For bosons (i.e.\ the \keyword{ideal Bose gas}), these are simply
	\begin{equation}\label{eq:Bose-energy}
		E_N^{(\alpha=0)} = N\lambda_0(-\Delta_{Q_0}^\eN) = 0,
		\quad \text{or} \quad
		\bar{E}_N^{(\alpha=0)} = N\lambda_0(-\Delta_{Q_0}^\eD) = 2\pi^2 N,
	\end{equation}
	while for  
	fermions (i.e.\ the \keyword{ideal Fermi gas}) 
	the semiclassical \keyword{Weyl's law} \cite{Weyl-12} yields
	\begin{equation}\label{eq:Fermi-energy}
		E_N^{(\alpha=1)} = \sum_{k=0}^{N-1} \lambda_k(-\Delta_{Q_0}^\eN) = 2\pi N^2 + o(N^2),
		\ \ \text{or} \ \ 
		\bar{E}_N^{(\alpha=1)} = \sum_{k=0}^{N-1} \lambda_k(-\Delta_{Q_0}^\eD) = 2\pi N^2 + o(N^2),
	\end{equation}
	as $N \to \infty$, where
	$\lambda_0 \le \lambda_1 \le \ldots$
	denote the eigenvalues of $-\Delta_{Q_0}^{\eN/\eD}$,
	the Neumann/Dirichlet Laplacian on $Q_0$, 
	ordered according to their multiplicity.
	On the other hand, the anyonic case presents a real difficulty as it is not 
	directly reducible to simple
	product states but actually turns out to be equivalent to a system of 
	\emph{interacting} bosons or fermions with complicated many-body correlations.
	In fact, the operator $\hT_0$ acting on multivalued $\Psi$ subject to \eqref{eq:exchange-phase}
	is equivalent to the magnetic operator
	$$
		\hT_\alpha := \sum_{j=1}^N \left( -i\nabla_{\bx_j} + \alpha\sum_{k \neq j} (\bx_j - \bx_k)^{-\perp} \right)^2,
		\quad
		\bx^{-\perp} := \frac{(-y,x)}{x^2 + y^2} \ \text{for} \ \bx = (x,y) \in \R^2 \setminus \{\0\},
	$$
	acting on bosonic $\Psi$. 
	Thus, mathematical techniques for interacting Bose gases are essential.
	
	If $\Psi$ takes values in some Hilbert space $\cF$,
	then the condition \eqref{eq:exchange-phase} may be replaced by
	\begin{equation}\label{eq:exchange-op}
		\Psi(\bx_1, \ldots, \bx_j, \ldots, \bx_k, \ldots, \bx_N)
		= U \Psi(\bx_1, \ldots, \bx_k, \ldots, \bx_j, \ldots, \bx_N), 
		\quad j \neq k,
	\end{equation}
	where $U \in \sU(\cF)$, the group of unitary operators on $\cF$.
	In general, there will be topological consistency conditions on the possibilities for
	exchange (considered properly as \emph{continuous} loops in the configuration 
	space of particle positions)
	and one must consider a representation of the corresponding braid group,
	i.e.\ a homomorphism
	\begin{equation}\label{eq:braid-rep}
		\rho\colon B_N \to \sU(\cF).
	\end{equation}
	We will later define precisely what we mean by this.
	In our context, where we consider \emph{ideal} anyons, the most general model
	we may consider, referred to as a \keyword{geometric anyon model},
	will be in one-to-one correspondence with such a representation $\rho$.
	The case \eqref{eq:exchange-phase} is a special case where each generator $\sigma_j$
	of $B_N$ is represented as the phase $\rho(\sigma_j) = e^{i\pi\alpha}$.
	
	Whereas phases are abelian, a representation \eqref{eq:braid-rep} such that $\rho(\sigma_j)$
	do not all commute is non-abelian
	(non-abelian anyons are also known as \keyword{nonabelions} or \keyword{plektons} in the literature).
	We will focus attention on a particular family of non-abelian representations
	which arise naturally from the perspective of  
	quantum field theory \cite{FreRehSch-89,FroGab-90}
	and which include a number of anyon models which have been proposed to be
	relevant in condensed matter contexts such as the FQHE.
	Some of these, such as one known as the Fibonacci anyon model, 
	have even been proposed as good candidates for topologically protected
	quantum computing \cite{FreKitLarWan-03,Kitaev-06,Nayak-etal-08,RowWan-18}.
	We refer to this general class of representations as  
	\keyword{algebraic anyon models}.

	Our main aim in this work is to 
	connect the relatively well-developed 
    algebraic theory of non-abelian anyons to 
	the hitherto relatively undeveloped
	many-body spectral theory and the analysis
	of corresponding geometrically defined Laplace operators $\hT_\rho$,
	in order to eventually 
	be able to compute or estimate 
	a physically essential 
	property such as the ground-state energy $E_N$ in the many-body (thermodynamic) limit 
	$N \to \infty$.

\subsection{Some mathematically rigorous results for abelian anyons}

	Apart from a few special systems such as for $2 \le N \le 4$ 
	\cite{LeiMyr-77,Wilczek-82b,AroSchWilZee-85,Sen-91,MurLawBraBha-91,SpoVerZah-91,SpoVerZah-92,Minor-93}, 
	strong external field \cite{Ouvry-07},
	as well as for a subspace of states in harmonic oscillator confinement for general $N$ 
	\cite{Wu-84b,Chou-91a,Chou-91b},
	the spectrum of the many-anyon Hamiltonian $\hT_\alpha$ remains unknown for $\alpha \notin \Z$ (even with a one-body potential added).
	It has been noted however that in a rotationally symmetric situation 
	such as the harmonic oscillator potential
	the ground state (g.s.)\ in the bosonic representation
	must have a very specific total angular momentum \cite{ChiSen-92}:
	$$
		L = -\alpha \binom{N}{2} + O(N^{3/2}),
	$$
	corresponding to a relative angular momentum $-\alpha$ for each pair of particles.
	This implies that the g.s.\ energy $E_N$ as a function of $\alpha$ will have to have level crossings
	between different suitable $L \in 2\Z$ and is therefore only likely 
	to be smooth on intervals 
	of short lengths, tending to zero as $N \to \infty$.
	
\begin{figure}[t]
	\begin{center}
	\begin{tikzpicture}
		\node [above right] at (0,0) {\includegraphics[scale=1.1, clip, trim=0pt 0pt 0pt 0pt]{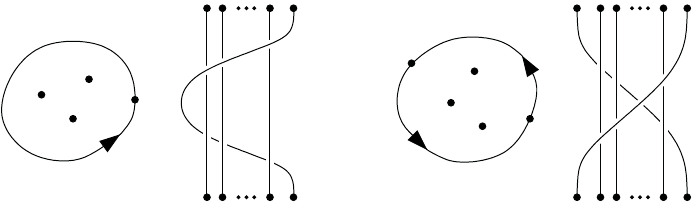}};
		\node [above right] at (2,0) {$e^{i2p\alpha\pi}$};
		\node [above right] at (8.5,0) {$e^{i(2p+1)\alpha\pi}$};
		\node [above right] at (1.2,1.9) {$p$};
		\node [above right] at (8.7,1.9) {$p$};
	\end{tikzpicture}
	\end{center}
	\caption{
		One- respectively two-particle loops of abelian anyons in the plane, with their
		respective phases and braid diagrams obtained by 
		projecting and ordering the particles on the horizontal axis
		and with time running upwards on the vertical axis.
		In each loop $p$ other particles are enclosed, and
		the total obtained phase is $\alpha\pi$ times the number of
		simple braids appearing in the diagram.
		}
	\label{fig:loops}
\end{figure}
	
	\keyword{Statistical repulsion}, generalizing the Pauli principle \eqref{eq:Pauli} for fermions, 
	manifests concretely in three ways: \underline{\emph{primarily}} as an effective scalar repulsion 
	between pairs of particles, making $|\Psi|^2$ smaller along the diagonals of the configuration space;
	\underline{\emph{secondarily}} as a non-trivial growth of the local Neumann energy with the number of particles;
	and \underline{\emph{tertiarily}} as a degeneracy pressure in the density 
	(such as a Thomas--Fermi profile compared to a condensed one-body profile \eqref{eq:product-state}).
	The first effect has been observed already for $N=2$ anyons in early works, 
	manifested as a 
	centrifugal-barrier repulsion due to the fractional relative angular momentum.
	In the context of the abelian many-anyon gas it was noted 
	by D.\,L.\ and Solovej in \cite{LunSol-13a,LunSol-13b} 
	and quantified by the \keyword{`fractionality'} of $\alpha$
	$$
		\alpha_N := \min\limits_{p \in \{0, 1, \ldots, N-2\}} \min\limits_{q \in \Z} |(2p+1)\alpha - 2q|,
	$$
	which entered as a coupling constant in a many-anyon Hardy inequality 
	\begin{equation}\label{eq:anyon-Hardy}
		\hT_\alpha \ge \frac{4\alpha_N^2}{N} \sum_{1 \le j < k \le N} |\bx_j-\bx_k|^{-2}
	\end{equation}
    (interpreted in the sense of non-negative quadratic forms).
	This type of inequality had been proven for fermions (for which $\alpha_N=1 \ \forall N$) by
	Hoffmann-Ostenhof et al.\ \cite[Theorem~2.8]{HofLapTid-08} 
	(improvements in higher dimensions were discussed in \cite{FraHofLapSol-24}).
	They had in fact proven a similar bound also for anyons \cite[Theorem~2.7]{HofLapTid-08} 
	although in this case with a much weaker constant, 
	replacing $4\alpha_N^2/N$ by
	$$
		C_{\alpha,N} = \min_{p \in \{1,\ldots,N-1\}} \left( \frac{1}{p} \min_{q \in \Z} |p\alpha-q| \right)^2,
	$$
	which vanishes e.g.\ for fermions.
	For this result they used a many-body version of a magnetic Hardy inequality 
	for an Aharonov--Bohm singular field due to Laptev and Weidl \cite{LapWei-98}
	and generalized by Balinsky \cite{Balinsky-03} to the case of multiple singularities.
	In a sense, it amounts to considering a single particle in the background field of the others fixed 
	(monodromy; cf.\ Figure~\ref{fig:loops}, left).
	On the other hand,
	the validity of the stronger inequality \eqref{eq:anyon-Hardy} for fermions boils down to  
	the Poincar\'e inequality in pairwise relative coordinates, namely
	that the energy for an antipodal-antisymmetric and nonzero function on the unit 
	circle $\S^1$ (or unit sphere $\S^{d-1}$) 
	cannot be zero:
	$$
		\int_0^{2\pi} |u'(\vphi)|^2 \,d\vphi \ge \int_0^{2\pi} |u(\vphi)|^2 \,d\vphi,
		\qquad \text{if} \quad u(\vphi+\pi) = -u(\vphi).
	$$
	Such a relative Poincar\'e inequality 
    (Wirtinger's inequality)
    was generalized to the anyonic setting 
	in \cite{LunSol-13a} 
	in the form of a symmetry adaptation of the magnetic inequality of Laptev and Weidl.
	For suitably multivalued functions on $\S^1$, this amounts to
	$$
		\int_0^{2\pi} |u'(\vphi)|^2 \,d\vphi \ge \min_{q \in \Z} |\alpha-2q|^2 \int_0^{2\pi} |u(\vphi)|^2 \,d\vphi,
		\qquad \text{if} \quad u(\vphi+\pi) = e^{i\alpha\pi} u(\vphi).
	$$
	Combining this angular result with the radial dependence of the derivative into a centrifugal-barrier repulsion,
    this provides then the first manifestation of statistical repulsion
	(note that it is rooted in simple pair exchange or \emph{half}-monodromy; cf.\ Figure~\ref{fig:loops}, right).

	Also important for the subsequent development, 
	a more powerful local version of the Hardy inequality was introduced
	(it will be given in Section~\ref{sec:repulsion-Hardy} in its improved form).
	In the case that the limiting fractionality
	$$
		\alpha_* := \inf_{N \ge 2} \alpha_{N} = \lim_{N \to \infty} \alpha_{N}
	$$
	is positive, which is true iff $\alpha$ is an odd-numerator rational number,
	this local version of the inequality \eqref{eq:anyon-Hardy} was used to derive local 
	energy estimates for $\hT_\alpha$
	(with Neumann b.c.\ on $Q_0^N$)
	\begin{equation}\label{eq:local-exclusion-alpha-star}
		E_N \ge C \alpha_N^2 (N-1)_+ \ge C \alpha_*^2 (N-1)_+ \quad \forall N,
	\end{equation}
	for an explicit numerical constant $C>0$.
	This provides a \keyword{local exclusion principle} 
	--- a secondary manifestation of statistical repulsion ---
	and it may be applied iteratively in a way which in the case of fermions 
	goes back already to
	Dyson and Lenard \cite{DysLen-67} in their ingenious
	proof of the thermodynamic stability of fermionic matter with Coulomb interaction 
	(see, e.g., \cite{LieSei-10,Lundholm-17}).
	Namely, the corresponding bound for fermions, 
	which follows immediately from \eqref{eq:Fermi-energy} 
	with $\lambda_k \ge \pi^2$ for $k \ge 1$, is
	$$
		E_N \ge \pi^2 (N-1)_+ \quad \forall N.
	$$
	It was shown in \cite{LunSol-13a,LunSol-14} that
	such a linear in $N$ bound \eqref{eq:local-exclusion-alpha-star} is sufficient to estimate the abelian anyon gas energy
	$$
		E_N \ge {\textstyle\frac{1}{4}} C \alpha_*^2 N^2 + o(N^2)
        \qquad
        \text{as $N \to \infty$.}
	$$
	The function $\alpha \mapsto \alpha_*$ appearing in these bounds 
	and supported on odd-numerator rationals 
	is a variant of the \keyword{Thomae} or \keyword{`popcorn' function}; 
	cf. Figure~\ref{fig:popcorn}.
	
\begin{figure}[t]
	\centering
	\scalebox{0.9}{
	\begin{tikzpicture}
		\node [above right] at (0,0) {\includegraphics[scale=0.65, trim=0.4cm 0cm 0cm 0cm]{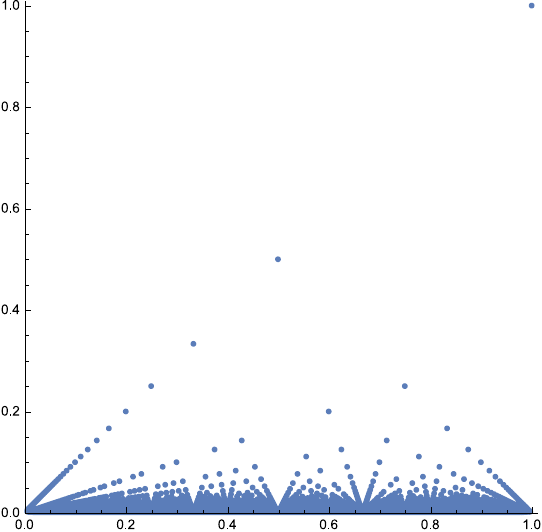}};
		\node [above right] at (5.70,0) {\scalebox{0.8}{$\alpha$\hspace{-20pt}}};
		\node [above right] at (-0.1,5.95) {\scalebox{0.8}{$\alpha_*$\hspace{-20pt}}};
	\end{tikzpicture}
	\begin{tikzpicture}
		\node [above right] at (0,0) {\includegraphics[scale=0.79]{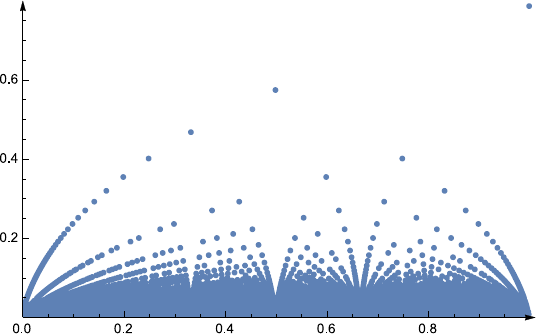}};
		\node [above right] at (7.23,.18) {\scalebox{0.8}{$\alpha$\hspace{-2pt}}};
		\node [above right] at (0,4.6) {\scalebox{0.8}{$f(j_{\alpha_*}'^2)$\hspace{-20pt}}};
		\node [above right] at (0.28,0.23) {\scalebox{6.8}[5.2]{\begin{tikzpicture}
			\draw[very thin,color=orange,domain=0:0.054,samples=50] plot (\x,{0.25*3.8*\x/sqrt(1+16.5*\x)});
			\draw[very thin,color=orange,domain=0.052:1] plot (\x,{0.25*0.147});
			\end{tikzpicture}}};
	\end{tikzpicture}
	}
	\caption{Left: The popcorn function $\alpha \mapsto \alpha_*$. 
		Right: A numerical lower bound to the function $\alpha \mapsto f(j_{\alpha_*}'^2)$ from \cite[Fig.~6]{LarLun-16}.
		The continuous orange curve indicates $\alpha \mapsto c(\alpha_2)$ for comparison (see text).}
	\label{fig:popcorn}
\end{figure}	
		
	These bounds were subsequently improved in works of 
	D.\,L.\ with Larson respectively Seiringer; 
	first in \cite{LarLun-16} considering a refinement of the
	many-anyon Hardy inequality to yield
	$$
		E_N \ge f\bigl(j_{\alpha_N}'^2\bigr) (N-1)_+,
	$$
	where the function $f\bigl(j_{\alpha}'^2\bigr) \simeq 4\pi\alpha$ 
	as $\alpha \to 0$ has an expected linear behavior for small $\alpha$
	(its definition and uniform bounds will be given below),
	and then with a scale-covariant method in \cite{LunSei-17} to
	$$
		E_N \ge c(\alpha_2) N, \qquad N \ge 2,
	$$
	where $c(\alpha_2) = \frac{1}{4}\min\left\{ f\bigl(j_{\alpha_2}'^2\bigr),0.147 \right\}$.
	The latter bound thus removes the dependence on $\alpha_*$, 
	although with a weaker constant; cf.\ Figure~\ref{fig:popcorn}.
	In fact, it was shown in \cite{LunSei-17} that
	the homogeneous ideal abelian anyon gas is subject to uniform linear bounds in 
	$\alpha \in [0,1]$
	(by periodicity and a conjugation symmetry $\Psi \mapsto \overline{\Psi}$, 
	$\alpha \mapsto -\alpha$,
	it suffices to study this range):
	
\begin{theorem}[{\keyword{Uniform bounds for the ideal abelian anyon gas} \cite{LunSei-17}}]\label{thm:intro-abelian}
	\mbox{}\\
	For any sequence of abelian $N$-anyon models $\rho_N\colon B_N \to \sU(1)$,
	$\rho_N(\sigma_j) = e^{i\alpha\pi}$, 
	with statistics parameters $\alpha = \alpha(N) \in [0,1]$
	and n-anyon exchange parameters $\alpha_n = \alpha_n(N) \in [0,1]$, 
	we have the following uniform bounds for the ground-state energy 
	$E_N = \infspec \hT_\alpha$ 
    (Neumann b.c. on $Q_0$, resp. $\bar{E}_N$ Dirichlet):
	\begin{equation}\label{eq:homogeneous-abelian}
		\frac{1}{4} C(\rho_N) N^2 \bigl(1 - O(N^{-1})\bigr)
		\le E_N \le \bar{E}_N 
		\le 2\pi^2 N^2 \bigl(1 + O(N^{-1/2})\bigr),
	\end{equation}
	where
	$$
		C(\rho_N) := \max \left\{ C_4(\rho_N), f\bigl( j_{\alpha_N}'^2 \bigr) \right\},
		\qquad  \nu/3 \le f( j_{\nu}'^2 ) \le 4\pi\nu(1+\nu)
		\ \ \forall \nu \in [0,1],
	$$
	and
	$$
		C_4(\rho_N) := \frac{1}{4} \min\{E_2,E_3,E_4\}
		\ \ge \ c(\alpha) := \frac{1}{4}\min\left\{ f(j_{\alpha}'^2),0.147 \right\}.
	$$
	Further, 
	there exist universal constants $C_2 \ge C_1 > 0$ such that,
	if $\alpha$ is independent of $N$,
	$$
		C_1 \alpha \le \liminf_{N \to \infty} E_N/N^2 
		\le \limsup_{N \to \infty} \bar{E}_N/N^2 \le C_2 \alpha.
	$$
\end{theorem}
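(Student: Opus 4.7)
The plan is to treat the upper and lower bounds by different techniques: the upper bound via a trial-state comparison with the fermionic Dirichlet problem, and the lower bound via a local exclusion principle on subsquares of $Q_0$ combined with an adaptive covering argument that ensures uniformity in the statistics parameter $\alpha$.

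For the upper bound, I would exhibit an explicit trial state inspired by the fermionic Dirichlet ground state: multiply the Slater determinant of the first $N$ Dirichlet eigenfunctions of $-\Delta_{Q_0}$ by an anyonic phase factor in the singular gauge, converting the fermionic exchange sign $-1$ into $e^{i\alpha\pi}$. Since the Slater determinant vanishes on the pairwise diagonals, the cross terms between the fermionic kinetic energy and the singular phase are controllable, and the dominant contribution is the fermionic Dirichlet energy $\bar{E}_N^{(\alpha=1)} = 2\pi^2 N^2(1+O(N^{-1/2}))$ from Weyl's law \eqref{eq:Fermi-energy}.

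For the lower bound, I would follow the strategy of \cite{LunSol-13a,LunSol-14}: first prove a local exclusion principle stating that an $n$-anyon wave function supported on a square $Q\subset\R^2$ with Neumann boundary conditions has kinetic energy at least
\begin{equation*}
   \langle \Psi, \hT_\alpha \Psi\rangle \ge \frac{c(\rho_N)}{|Q|}(n-1)_+ \norm{\Psi}^2,
\end{equation*}
where $c(\rho_N) \ge \tfrac{1}{4} f(j_{\alpha_N}'^2)$ comes from integrating the refined many-anyon Hardy inequality \eqref{eq:anyon-Hardy} in the form of \cite{LarLun-16}. Then I would apply the Dyson--Lenard-type covering iteration, partitioning $Q_0$ into a hierarchy of disjoint boxes such that each final box contains a controlled particle count, and sum the local bounds to arrive at the global estimate $E_N \ge \tfrac{1}{4} f(j_{\alpha_N}'^2) N^2(1-O(N^{-1}))$.

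The main obstacle is the uniformity in $\alpha$: the function $f(j_{\alpha_N}'^2)$ degenerates on the dense set of even-numerator rationals, where $\alpha_N\to 0$. To recover the constant $C_4(\rho_N)=\tfrac{1}{4}\min\{E_2,E_3,E_4\}$ I would invoke the scale-covariant method of \cite{LunSei-17}: rather than using a fixed partition, choose the box sizes adaptively according to the local density, exploiting the scale covariance of $\hT_\alpha$ together with the observation that on a box containing only $n \in \{2,3,4\}$ particles the exact Dirichlet ground-state energies $E_n$ themselves bound the local kinetic energy from below, independently of the global many-body fractionality $\alpha_N$. Taking the maximum of the two local estimates yields $C(\rho_N)$. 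The asymptotic bounds $C_1 \alpha \le \liminf E_N/N^2 \le \limsup \bar{E}_N/N^2 \le C_2\alpha$ then follow from the small-$\alpha$ expansion $f(j_\alpha'^2)\sim 4\pi\alpha$ combined with the universal constant $0.147$ bounding $c(\alpha)$ from below for $\alpha$ away from $0$ and $1$.
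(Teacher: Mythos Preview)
Your lower-bound outline is essentially the paper's argument (and that of \cite{LunSei-17}): local exclusion on squares via the Hardy inequality, then superadditivity over a partition of $Q_0$. Two small corrections: the local energies $E_n$ entering $C_4(\rho_N)$ are \emph{Neumann} energies, not Dirichlet; and for the homogeneous bound \eqref{eq:homogeneous-abelian} the partition is a \emph{uniform} grid of $K^2$ squares with $K\sim\sqrt{N/2}$, not an adaptive covering (the adaptive/density-driven covering is what proves the Lieb--Thirring inequality, Theorem~\ref{thm:LT}).

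Your upper-bound construction, however, differs from the paper's and has a real gap. Writing $\Psi_\alpha=U^{\alpha-1}\Psi_-$ with $\Psi_-$ the fermionic Slater determinant, the kinetic energy acquires the extra term $(1-\alpha)^2\int\sum_j|\bA_j|^2|\Psi_-|^2$, where $\bA_j=\sum_{k\neq j}(\bx_j-\bx_k)^{-\perp}$. The diagonal part $\sum_{j\neq k}|\bx_j-\bx_k|^{-2}$ integrated against the fermionic pair density is of order $N^2\log N$: the pair correlation vanishes only quadratically at contact, so the short-distance singularity is integrable, but the long-range $1/r^2$ tail integrated from the interparticle scale $N^{-1/2}$ to $O(1)$ produces a logarithm. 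This dominates the fermionic kinetic energy $2\pi N^2$, so the claimed bound $2\pi^2N^2(1+o(1))$ does not follow. The paper instead uses the much simpler subadditivity trial state (Lemma~\ref{lem:subadd}): split $Q_0$ into $K^2\approx N$ disjoint squares and put one particle in each in its one-body Dirichlet ground state. There are no cross terms at all, and $\bar E_N\le N\cdot K^2\cdot 2\pi^2\le 2\pi^2N^2(1+O(N^{-1/2}))$ directly.

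A second gap concerns the final assertion $\limsup_{N\to\infty}\bar E_N/N^2\le C_2\alpha$. Neither your Slater trial state nor the box-separation state yields a bound that vanishes as $\alpha\to 0$; both give a constant of order $2\pi$ or $2\pi^2$. The linear-in-$\alpha$ upper bound requires the separate Dyson-type construction of \cite[Lemma~3.2]{LunSei-17} (quoted here as \eqref{eq:abelian-bound} and in the remark after Theorem~\ref{thm:homogeneous-gas}): one first packs $n\sim 1/\alpha$ anyons into a single box with a short-range Jastrow factor controlling the magnetic interaction, obtaining energy $\sim 2\pi\alpha n^2$ per box, and then tiles $Q_0$ by such boxes. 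Your final paragraph addresses only the lower asymptotic $C_1\alpha$; the matching upper asymptotic needs this additional ingredient.
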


	The latter limits are the energy per particle and density;
	thus, the abelian anyon gas with $\alpha \in (0,1]$
	exhibits extensivity in the energy like the Fermi gas \eqref{eq:Fermi-energy} 
	(and unlike the Bose gas \eqref{eq:Bose-energy}),
	as a consequence of the statistical repulsion between pairs of particles.
	
	Another development in \cite{LunSol-13a,LarLun-16,LunSei-17}
	concerns the \keyword{Lieb--Thirring inequality}:
	\begin{equation}\label{eq:intro-LT-kinetic}
		\langle \Psi,\hT_\alpha \Psi\rangle
		\ge C \alpha \int_{\R^2} \varrho_\Psi(\bx)^2 \, d\bx\,,
		\qquad \alpha \in [0,1],
	\end{equation}
	where the \keyword{one-body density} $\varrho_\Psi$ is the marginal of the probability distribution,
	\begin{equation}\label{eq:def-density-R2}
		\varrho_\Psi(\bx) := 
		\sum_{j=1}^N \int_{\R^{2(N-1)}} |\Psi(\bx_1, \ldots, 
		\bx_{j-1}, \bx, \bx_{j+1}, \ldots, \bx_N)|^2 \prod_{k \neq j}d\bx_k.
	\end{equation}
	Such an inequality was first proved for fermions by Lieb and Thirring \cite{LieThi-75,LieThi-76}
	with the goal of simplifying Dyson and Lenard's proof of stability of matter.
	It is a powerful combination of both the uncertainty principle and the exclusion
	principle, and manifests the tertiary form of statistical repulsion as a 
	degeneracy pressure in the density. Namely the r.h.s.\ of \eqref{eq:intro-LT-kinetic}
	is on the form of the \keyword{Thomas--Fermi approximation} \cite{Thomas-27,Fermi-27}
	\begin{equation}\label{eq:TF-approx}
        \inf_{\text{bosonic}\,\Psi : \int_{\R^{2N}}|\Psi|^2=1} \langle \Psi,\hT_\alpha\Psi \rangle
		\approx \inf_{\varrho \ge 0 : \int_{\R^2}\varrho=N} \int_{\R^2} 2\pi\alpha \varrho(\bx)^2 \,d\bx,
	\end{equation}
	which employs the extensive ideal Fermi gas energy locally at $\bx \in \R^2$
	(here with a mean-field motivated interpolation guess for intermediate $\alpha \in (0,1)$;
	cf.\ \cite{Sen-91,ChiSen-92,LiBhaMur-92,CorDubLunRou-19}).
	In \cite{LunSol-13a} it was observed that the local exclusion principle 
	\eqref{eq:local-exclusion-alpha-star} may be
	used to prove the Lieb-Thirring inequality, with a constant involving $\alpha_*^2$, 
	and the successive improvements in \cite{LarLun-16,LunSei-17}
    (see also \cite{RouYan-24})
	for the local energy led to corresponding improvements
	of the constant in \eqref{eq:intro-LT-kinetic}.
	Furthermore, we should mention that
	a possibility of minimizing the energy due to statistical repulsion 
	using certain clustering states (in a sense generalizing 
	\eqref{eq:product-state} and \eqref{eq:Slater-state} for particular $\alpha$)
	was discussed in \cite{LunSol-13b,Lundholm-16}, 
	although this still remains an open problem (see Remark~\ref{rem:Hardy-opt}).
	
	Another line of approach to understanding the many-anyon spectrum is to
	consider the limits $\alpha \to 0$ resp. $\alpha \to 1$, known as \keyword{almost-bosonic} resp. \keyword{almost-fermionic} anyons.
	In the corresponding mean-field (or \keyword{`average-field'}) theory 
	(see, e.g., \cite{CheWilWitHal-89,Wilczek-90,IenLec-92})
	of weakly magnetically interacting bosons, the Thomas--Fermi approximation \eqref{eq:TF-approx}
	has been rigorously justified in a particular limit via regularized 
	(extended, i.e.\ non-ideal) anyons
	\cite{LunRou-15,CorLunRou-16,CorLunRou-proc-17,Girardot-20,AtaLunNgu-24,AtaGirLun-25,Lundholm-24}.
    However, it brings a few surprises, such as a slightly bigger  
	constant than $2\pi$
	due to self-interactions and the emergence of a vortex lattice \cite{CorDubLunRou-19,Ataei-etal-25}.
    The almost-fermionic limit has also been made rigorous \cite{GirRou-21,GirRou-23}, and conjecturally extended to an interesting fractional dependence on $\alpha$ \cite{LevLunRou-25}.
	Note that this quantitative perturbative approach is typically not possible for non-abelian models
	(at least not the algebraic anyon models studied here,
	whose associated exchange phases are distributed at discrete 
	positions on the unit circle), and therefore it will not be our focus in this work.

\subsection{Main new results} 

	In this work we initiate a
    mathematically rigorous study of the ground-state properties
	of the ideal \emph{non}-abelian many-anyon gas, taking $N \to \infty$.
	Note that two-particle energies, second virial coefficients, and other pairwise statistics-dependent 
	properties have already been investigated in the past for certain non-abelian models, notably
	\cite{Verlinde-91,LeeOh-94} as well as \cite{ManTroMus-13a,ManTroMus-13b}
	for non-abelian Chern-Simons (NACS) particles,
	and these could be argued to be dominant for the \emph{dilute} gas.
	
	After dealing with the basic tasks of defining an $N$-anyon model in sufficient generality, 
	its kinetic energy operator and ground-state energy,
	the main aim is to find a non-abelian equivalent of the statistical repulsion
	to generalize the above results for abelian anyons.
	Again, we choose to take the route via a pairwise Poincar\'e inequality 
	and a many-anyon Hardy inequality,
	however this requires first to define and compute for an arbitrary sequence of anyon 
	models
	$\rho_N\colon B_N \to \sU(\cF_N)$ the
	two-anyon \keyword{exchange operators} $U_p$, 
	which are the unitary matrices in \eqref{eq:exchange-op}
	given that $p$ other anyons are encircled in the exchange of two anyons.
	It is a consequence of the uncertainty principle that 
	$U_0 \sim \rho_N(\sigma_j)$ does not suffice.
	
	Our first main observation, Theorem~\ref{thm:gen-exchange}, concerns the reduction of
	$U_p$ in any algebraic anyon model to an exchange involving only three objects,
	thus giving a recipe on how to compute it in general.
	We illustrate this by computing explicit exchange operators $U_p$ 
	and phases (eigenvalues $e^{i\beta\pi}$) for a 
	few models of special interest,
	including Fibonacci and Ising models, as well as the simplest non-trivial Burau model
	(which is not of the same algebraic type).
	The resulting data of interest concerning the exchange properties of the model
	are summarized by the \keyword{exchange parameters} for $p$ \emph{enclosed} particles:
	$$
		\beta_p := \min \{ \beta \in [0,1] : \text{$e^{i\beta\pi}$ or $e^{-i\beta\pi}$ is an eigenvalue of $U_p$} \},
		\quad p \in \{0,1\ldots,N-2\},
	$$
	as well as for $n \in \{2,3,\ldots,N\}$ \emph{involved} particles (worst case of up to $n-2$ enclosed):
	$$
		\alpha_n := \min_{p \in \{0,1,2,\ldots,n-2\}} \beta_p.
	$$

	In Section~\ref{sec:ham-space} we define the kinetic energy operator $\hT_\rho$ for an
	arbitrary representation $\rho$ \eqref{eq:braid-rep}.
	We then extend methods of local statistical repulsion 
	--- in all three of its manifestations ---
	to general anyon models. 
	In particular, we prove a many-anyon Hardy inequality
	\begin{equation}\label{eq:anyon-Hardy-nonab}
		\hT_\rho \ge \frac{4}{N} \max\left\{ \alpha_N^2, \frac{\alpha_2^2}{N-1} \right\} \sum_{1 \le j < k \le N} |\bx_j-\bx_k|^{-2},
	\end{equation}
	given in Theorem~\ref{thm:Hardy} in 
	a stronger but more technical local version.
	We also discuss a few counterexamples to such an
	inequality with any positive constant for models with $\alpha_2=0$.
	A local exclusion principle is obtained in Lemma~\ref{lem:local-exclusion},
	and as a consequence
	uniform lower as well as upper bounds for the homogeneous non-abelian anyon gas 
	ground-state energy are given in Theorem~\ref{thm:homogeneous-gas}
	and its Corollaries~\ref{cor:Fibonacci-gas}-\ref{cor:Clifford-gas}. 
	We summarize:
	
\begin{theorem}[{\keyword{Uniform bounds for the ideal non-abelian anyon gas}}]\label{thm:intro-nonabelian}
	\mbox{}\\
	For any sequence of $N$-anyon models $\rho_N\colon B_N \to \sU(\cF_N)$ 
	with n-anyon exchange parameters $\alpha_n = \alpha_n(N) \in [0,1]$,
	$2 \le n \le N$,
	we have the uniform bounds
	\begin{equation}\label{eq:homogeneous-nonabelian}
		\frac{1}{4} C(\rho_N) N^2 \bigl(1 - O(N^{-1})\bigr)
		\le E_N \le \bar{E}_N 
		\le 2\pi^2 N^2 \bigl(1 + O(N^{-1/2})\bigr),
	\end{equation}
	where
	$$
		C(\rho_N) := \max \left\{ C_4(\rho_N), f(j_{\alpha_N}'^2) \right\}
	$$
	and
	$$
		C_4(\rho_N) := \frac{1}{4} \min\{E_2,E_3,E_4\}
		\ge c(\alpha_2)
		\ge \frac{1}{4}\min\left\{\alpha_2/3,0.147\right\}.
	$$
	In particular:
	\begin{itemize}
	\item
	For Fibonacci anyons, the exchange parameters are $\alpha_2 = \beta_0 = 3/5$
	and $\alpha_N = \beta_1 = 1/5$ for $N \ge 3$,
	and hence \eqref{eq:homogeneous-nonabelian} holds with $C(\rho_N) \ge 1/15$ 
	(numerically $C(\rho_N) \gtrsim 0.35$) for $N \ge 3$.
	\item
	For Ising anyons, the exchange parameters are 
	$\alpha_N = \beta_0 = 1/8$ for all $N \ge 2$,
	and hence \eqref{eq:homogeneous-nonabelian} holds with $C(\rho_N) \ge 1/24$ 
	(numerically $C(\rho_N) \gtrsim 0.25$) for $N \ge 2$.
	\end{itemize}
\end{theorem}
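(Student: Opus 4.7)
The plan is to adapt the two-sided strategy used for the abelian case in Theorem~\ref{thm:intro-abelian}: for the upper bound I would use factorized trial states with disjoint one-body supports so that the representation $\rho_N$ drops out, and for the lower bound I would combine the non-abelian many-anyon Hardy inequality \eqref{eq:anyon-Hardy-nonab} with a local exclusion principle and an iterated covering argument. Concretely for the upper bound, I tile $Q_0$ by $\lfloor\sqrt{N}\rfloor^2$ congruent sub-squares of side $\sim N^{-1/2}$ and place exactly one anyon in each in the Dirichlet ground state of the small square, extended by zero elsewhere. Disjointness of supports means no braid can be executed continuously and the trial wave function transforms trivially under $\rho_N$, so its energy is $N$ copies of $2\pi^2 N$, yielding $\bar E_N \le 2\pi^2 N^2(1+O(N^{-1/2}))$ once integer-rounding and the leftover boundary strip are absorbed in the error.

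For the lower bound I would invoke Lemma~\ref{lem:local-exclusion}, which on any sub-square $Q\subset Q_0$ containing $n$ anyons provides a Neumann energy bound of the form $|Q|^{-1}\cdot\tfrac{1}{4}C(\rho_N)(n-1)_+$, where the constant $C(\rho_N)=\max\{C_4(\rho_N),f(j_{\alpha_N}'^2)\}$ records whichever of two available estimates is stronger: the refined Hardy-based bound $f(j_{\alpha_N}'^2)$ that comes from Theorem~\ref{thm:Hardy} together with the Larson--Lundholm refinement, or the scale-covariant bound $C_4(\rho_N)=\tfrac14\min\{E_2,E_3,E_4\}$ of Lundholm--Seiringer, which sidesteps any dependence on $\alpha_N$ by packaging the smallest few-body energies. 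A Dyson--Lenard iterated subdivision (or equivalently the scale-covariant averaging) then upgrades this per-square linear-in-$n$ bound into the global $E_N \ge \tfrac14 C(\rho_N)N^2(1-O(N^{-1}))$. The chain $C_4(\rho_N)\ge c(\alpha_2)\ge\tfrac14\min\{\alpha_2/3,0.147\}$ is obtained by applying the same Hardy machinery to the $2$-, $3$- and $4$-anyon subproblems, where only the parameter $\alpha_2$ can appear, combined with the uniform lower estimate $f(j_\nu'^2)\ge\nu/3$ stated in Theorem~\ref{thm:intro-abelian}.

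For the two named models I would then feed the explicit exchange-parameter data, computed in earlier sections using Theorem~\ref{thm:gen-exchange}, directly into $C(\rho_N)$. For Fibonacci anyons one obtains $\beta_0=3/5$ and $\beta_p=1/5$ for $p\ge 1$, hence $\alpha_N=1/5$ for $N\ge 3$, and so $C(\rho_N)\ge f(j_{1/5}'^2)\ge 1/15$; the sharper $\gtrsim 0.35$ follows from a direct numerical evaluation of $f(j_{1/5}'^2)$ rather than its linear lower bound. For Ising anyons one has $\beta_p=1/8$ for every $p\ge 0$, hence $\alpha_N=1/8$ for all $N\ge 2$, and thus $C(\rho_N)\ge f(j_{1/8}'^2)\ge 1/24$, refined to $\gtrsim 0.25$ by the same numerical route.

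The main obstacle is really step~(a), the non-abelian many-anyon Hardy inequality \eqref{eq:anyon-Hardy-nonab}. In the abelian case the exchange is a phase and the pairwise Poincar\'e bound is an elementary Fourier computation on $\S^1$; here one must instead treat multivalued $\cF_N$-valued functions whose monodromy around a two-particle exchange is a non-diagonal unitary $U_p$, and prove a matching Poincar\'e inequality whose sharp constant equals the squared distance from $1$ or $-1$ to the spectrum of $U_p$, which is precisely what the exchange parameter $\beta_p$ encodes. A second subtlety is correctly accounting for the spectator anyons that can be enclosed inside a braiding loop (Figure~\ref{fig:loops}): in an $n$-anyon local subproblem the loop can trap any $p\in\{0,1,\dots,n-2\}$ other anyons, which is exactly why the relevant constant must be the worst-case minimum $\alpha_n$ rather than $\beta_0$ alone.
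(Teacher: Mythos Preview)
Your proposal is correct and follows essentially the same route as the paper: the upper bound is obtained via the subadditivity Lemma~\ref{lem:subadd} (your disjoint-support trial state), and the lower bound combines the local exclusion principle Lemma~\ref{lem:local-exclusion} with a single box-splitting at scale $K\sim\sqrt{N/2}$ (rather than an iterated Dyson--Lenard subdivision, though both work). Two small imprecisions worth tightening: the factor $\tfrac14$ in the lower bound does not sit in the local exclusion estimate itself (that gives $E_n(Q)\ge |Q|^{-1}C(\rho_N)(n-1)_+$) but emerges only after optimizing the box count $K$; and the chain $C_4(\rho_N)\ge c(\alpha_2)$ does not come from applying Hardy directly to the $3$- and $4$-anyon problems (where $\alpha_3,\alpha_4$ would appear), but from the a~priori bound of Lemma~\ref{lem:E-apriori}, which controls $E_3,E_4$ purely in terms of $E_2$ via a $4$-box splitting and the diamagnetic inequality.
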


    Although these bounds are probably far from optimal, they have a conceptual interest in that there is at least some explicit analytical control on the strength of the effective statistical repulsion, and that a potential difference may manifest comparing to the abelian case and between different non-abelian models.
    
	The inhomogeneous gas is covered by the Lieb--Thirring inequality, Theorem~\ref{thm:LT},
	which generalizes \eqref{eq:intro-LT-kinetic} to
    \begin{equation}\label{eq:LT-nonabelian}
		\langle \Psi,\hT_\rho \Psi\rangle
		\ge C \alpha_2 \int_{\R^2} \varrho_\Psi(\bx)^2 \, d\bx\,,
    \end{equation}
	for a universal constant $C>0$.
	An immediate application is the thermodynamic 
	stability of any Coulomb-interacting (2D embedded in 3D, and otherwise ideal) anyon gas 
	for which $\alpha_2(N) > 0$ is bounded below uniformly in $N$; 
	cf.\ \cite[Theorem~21]{LunSol-14} and \cite[Sec.~7.3.3]{Lundholm-17}.

\begin{figure}
  \centering
	\begin{tikzpicture}
		\draw [red!60!white] (0,0) 	ellipse (1.5 and 0.7);
		\node at (-4,0) {\large\bf algebraic};
		\node at (0,0) {\large\bf geometric};
		\node at (4,0) {\large\bf magnetic};
		\draw [arrows=->,thick] (-2.8,0) -- (-1.8,0);
		\draw [arrows=->,thick] (2.8,0.2) -- (1.8,0.2);
		\draw [arrows=->,thick,dashed] (1.8,-0.2) -- (2.8,-0.2);
		\node at (0,-1.3) {$\rho\colon B_N \to \sU(D)$};
		\node [darkred] at (0,1.5) {\small most general};
		\draw [arrows=->,darkred] (0,1.2) -- (0,0.8);
		\node [darkred] at (4,1.5) {\small most realistic};
		\draw [arrows=->,darkred] (4,1.2) -- (4,0.4);
		\node [darkred] at (-4,1.5) {\small most computational};
		\draw [arrows=->,darkred] (-4,1.2) -- (-4,0.4);
		\node at (3,-0.7) {\small statistics transmutation};
	\end{tikzpicture}
  \caption{Anyon models.}
  \label{fig:anyon-models}
\end{figure}

    \smallskip
    
	Due to the necessity in this field
	of employing tools from all corners of mathematics,
	we follow the style of a review and will introduce the necessary concepts as they arise.
	We will go from algebraic to geometric and magnetic models (see Figure~\ref{fig:anyon-models}) and,
	via functional inequalities of Poincar\'e, Hardy and Lieb--Thirring, 
	finally to the `physical/thermodynamic' application.
	In Section~\ref{sec:models} we recall the essentials of algebraic anyon models,
	which include models considered in FQHE contexts and of relevance to 
	topological quantum computing.
	Section~\ref{sec:phases} is devoted to defining and computing general exchange operators
	and phases for these models of interest.
	Previous work for abelian anyons has relied on the existence of a
	magnetic representation, however as we will discuss in Section~\ref{sec:ham-transmutation},
	it is not obvious that such a representation should exist for all
	non-abelian models. Therefore we take a different approach via covering
	spaces, as outlined e.g.\ in \cite{FroMar-89,MunSch-95}. 
	The technical machinery for this is recalled in 
	Sections~\ref{sec:ham-cspace}-\ref{sec:ham-space} and \ref{sec:ham-energy}.
	In Section~\ref{sec:repulsion} statistical repulsion is discussed,
	based on our generalization \eqref{eq:anyon-Hardy-nonab}
	of the many-anyon Hardy inequality for an arbitrary geometric anyon model.
	Finally, in Section~\ref{sec:gas} these tools are applied to prove bounds
	for the homogeneous anyon gas energy as well as the potentially inhomogeneous gas 
	by means of the Lieb--Thirring inequality.
	
	We use boldface to help the reader note whenever a new key concept arises.
	Although we have tried to uniformize and streamline the various notations found in the 
	literature, beware e.g.\ that the popular symbol $\sigma$ has multiple meanings 
	(spectrum, anyon, braid, $\ldots$).
	
	The results in this work are based in part on the MSc thesis of V.Q.\ \cite{Qvarfordt-17}.
    A first version of the manuscript appeared already in 2020, 
    although publication of the work was subsequently delayed 
    due to the Covid-19 global pandemic
    and V.Q.\ having left academia.

\medskip\noindent\textbf{Acknowledgments.} 
D.L.\ gratefully acknowledges financial support from the
Swedish Research Council 
(grant no.\ {2013-4734}, ``Spectral theory of quantum systems with exotic symmetries'';
grant no.\ {2021-05328}, ``Mathematics of anyons and intermediate quantum statistics'')
and the G\"oran Gustafsson Foundation (grant no.\ 1804).
Further, D.L.\ thanks Gustav Brage, Erik S\"onnerlind, Oskar Weinberger and Joel Wiklund
for fruitful discussions on braid group representations
in their BSc thesis projects at KTH Engineering Physics 
\cite{Weinberger-15,BraSon-18,Wiklund-18}.
We have also benefited significantly from discussions with 
John Andersson,
Eddy Ardonne, 
Tilman Bauer, 
Alexander Berglund,
Michele Correggi,
Gerald Goldin,
Hans Hansson,
Simon Larson, 
Jon Magne Leinaas, 
Tomasz Maciazek, 
Dan Petersen, 
Nicolas Rougerie,
Adam Sawicki, 
Robert Seiringer,
Jan Philip Solovej,
Yoran Tournois,
and
Susanne Viefers.
Finally, D.L.~thanks
the Department of Mathematics of Politecnico di Milano for its kind hospitality 
during the spring of 2025 intensive period ``Quantum Mathematics at Polimi'',
as well as Ask Ellingsen for useful discussions and comments on the manuscript.


\section{Algebraic anyon models}\label{sec:models}

	A potential confusion for a newcomer to the vast literature on anyons is that, 
	while a choice of an \emph{ideal} $N$-anyon model is mathematically equivalent to a 
	choice of a braid group representation \eqref{eq:braid-rep}, 
	as we shall make precise in Section~\ref{sec:ham}, for any physical realization 
	it makes all the difference in the world whether
	the corresponding particles are treated as classical or quantum mechanical.
	That is, one might think of localizing the positions of the particles to a specific
	set of points and then perform braiding operations on them as if they were classical objects.
	This is convenient from the practical perspective of quantum computation
	and is sometimes the way that anyons are portrayed (abusing the terminology).
	We stress however that an anyonic object is a consequence of \keyword{identity} at a 
	fundamental ``quantum-mathematical'' level and can never be classical, 
	just like bosons and fermions cannot.
	Namely,
    the uncertainty principle forbids us to talk about precise positions 
	unless we have access to infinite energy,
	and demands that we consider in parallel all possibilities that cannot be logically excluded.
	One way to clarify this difference in perspectives is on one hand
	to consider a realization of a braid group representation 
	(such as an algebraic anyon model or a FQHE quasihole ansatz) 
	as a \emph{kinematical} framework,
	and on the other to require an associated
	geometric or magnetic anyon model in which also the \emph{dynamics} 
	(actual quantum Hamiltonian, Hilbert space etc.) of the particles --- actual anyons ---
	is specified (see \cite{Forte-91,LunRou-16,LamLunRou-22,Lundholm-23} for a related discussion).
	
	For an \keyword{algebraic anyon model} we specify the following information:
	\begin{enumerate}[label=\textup{\arabic*.}]
	\item Particle types / labels / topological charges.
	\item Rules for fusion / splitting.
	\item Rules for braiding.
	\end{enumerate}
	
	Intuitively, \keyword{fusion} takes into account how any subsystem 
	of particles looks from afar,
	and is convenient when there is a flexible or large number of particles.
	The fully consistent machinery of fusion and braiding, 
	which was initially developed 
	in the context of CFT and Chern--Simons theory \cite{Witten-89,MooSei-89} 
	as well as algebraic QFT \cite{FreRehSch-89,FroGab-90},
	led eventually to the notion of a \keyword{unitary braided fusion category} 
	/ \keyword{modular tensor category} \cite[Appendix~E]{Kitaev-06} \cite{RowStoWan-09}.
	This is a highly technical subject and, luckily, over the years it has been simplified
	and destilled through a number of excellent lecture notes, reviews and theses
	into a form which is accessible to an average humanoid mathematical physicist.
	We have attempted to continue this effort and 
	give here only a very brief summary of the parts that are necessary for our purposes.
	Thus we follow the notations of 
	\cite{Preskill-04,Kitaev-06,Bonderson-07,Nayak-etal-08,Trebst-etal-08,FanGar-10,Wang-10,Pachos-12,DelRowWan-16,Simon-16,Tong-16,Garjani-17,FieSim-18}
	(although note differences in conventions!).

\subsection{Fusion}\label{sec:models-fusion}

\subsubsection{Fusion rules}

	We start with a finite set of \keyword{labels} $\cL = \{a,b,c,\ldots\}$, 
	whose elements are interpreted as the different \keyword{types of anyons} 
	(or the \keyword{topological charges})
	present in the model. 
	It must always contain a special element $1 \in \cL$ called the \keyword{vacuum}.
	The elements of $\cL$ are assumed to generate a 
	commutative, associative fusion algebra with unit $1$.
	This means that we may take formal linear combinations of labels and 
	have a fusion rule, i.e.\ a binary composition $\times$ 
	such that for $a,b \in \cL$,
	\begin{equation}\label{eq:fusion-algebra}
		a \times b = \sum_{c \in \cL} N_{ab}^c \,c,
	\end{equation}
	where the nonnegative integers $N_{ab}^c \in \N_0$ are called \keyword{fusion multiplicities}.
	If $N_{ab}^c \neq 0$
	we write $c \in a \times b$ and say that $c$ is a valid result of fusion,
	and $N_{ab}^c$ then counts the number of distinguishable ways that this type of fusion may occur,
	with each one of them called a \keyword{fusion channel}.
	The model is formally\footnote{This terminology is justified in \cite{RowWan-16}.} 
	called \keyword{non-abelian}
	if there exists some $a$ and $b$ in $\cL$ such that
	$$
		\sum_{c \in \cL} N_{ab}^c \ge 2,
	$$
	i.e.\ if there are several possible 
	ways to fuse $a$ and $b$, and otherwise, 
	i.e.\ if both the product and process of fusion are always unique, it is \keyword{abelian}.
	The typical models considered in the literature are actually \keyword{multiplicity free}, 
	i.e.\ $N_{ab}^c \in \{0,1\}$ for all $a,b,c \in \cL$,
	but instead there may be several 
	possible fusion products
	$c \in a \times b$, each defining a unique fusion channel.
	
	To each anyon type $a \in \cL$ is associated a \keyword{quantum dimension} 
	$d_a \ge 1$ 
	obtained as the largest
	eigenvalue of the matrix $N_a = [N_{ab}^c]_{b,c \in \cL}$.
	By the Perron-Frobenius theorem,
	these can also be found by 
	replacing \eqref{eq:fusion-algebra} 
	by the corresponding system of polynomial equations:
	\begin{equation}\label{eq:quantum-dimensions}
		d_a d_b = \sum_{c \in \cL} N_{ab}^c \,d_c.
	\end{equation}
	Furthermore, the fusion algebra \eqref{eq:fusion-algebra} 
	has the property that to each $a \in \cL$
	there is a unique inverse element, or \keyword{charge conjugate}, 
	$\bar{a} \in \cL$
	such that $1 \in a \times \bar{a}$.
	In particular, $\bar{1} = 1$.
	Also note that $1 \times a = a = a \times 1$ implies
	$N^c_{1a} = N^c_{a1} = \delta^c_a$.

\subsubsection{Fusion diagrams and spaces}

	Given a possible fusion $c \in a \times b$, i.e. $N_{ab}^c \neq 0$, 
	we construct a Hilbert space
	$V_{ab}^c$ by assigning an element $\langle a,b;c,\mu|$ of an abstract 
	orthonormal basis to each fusion channel, enumerated by an index
	$\mu \in \{1,2,\ldots,N_{ab}^c\}$:
	$$
		V_{ab}^c := \Span_\C \{ \langle a,b;c,\mu| \}_\mu.
	$$
	The dual space to this \keyword{fusion space} is called a \keyword{splitting space}:
	$$
		V^{ab}_c := \Span_\C \{ |a,b;c,\mu\rangle \}_\mu.
	$$
	We picture each fusion resp.\ splitting channel basis state 
	diagrammatically, with time running upwards, as:
	\begin{equation}
	    \begin{tikzpicture}[scale=0.35,font=\footnotesize,anchor=mid,baseline={([yshift=-.5ex]current bounding box.center)}]
	      \node at (0.7, -2.7) {$a$};
	      \node at (2.3, -2.7) {$b$};
	      \draw[->] (0.7, -2.3) to [bend right=-20] (1.25, -1.35);
	      \draw[->] (2.3, -2.3) to [bend right=20] (1.75, -1.35);
	      \draw (1.5, -1) circle (0.45);
	      \node at (1.503, -0.9) {$\mu$};
	      \draw[->] (1.5, -0.55) to (1.5, 0.6);
	      \node at (2, 0.05) {$c$};
	    \end{tikzpicture}
	    \qquad \text{respectively} \qquad
	    \begin{tikzpicture}[scale=0.35,font=\footnotesize,anchor=mid,baseline={([yshift=-.5ex]current bounding box.center)}]
	      \node at (0.7, 2.7) {$a$};
	      \node at (2.3, 2.7) {$b$};
	      \draw[<-] (0.7, 2.3) to [bend left=-20] (1.25, 1.35);
	      \draw[<-] (2.3, 2.3) to [bend left=20] (1.75, 1.35);
	      \draw (1.5, 1) circle (0.45);
	      \node at (1.503, 1.08) {$\mu$};
	      \draw[<-] (1.5, 0.55) to (1.5, -0.6);
	      \node at (2, -0.05) {$c$};
	    \end{tikzpicture}
		\qquad \text{or simply} \qquad
		\fs{b}{a,c}.
	\end{equation}
	In the last diagram we have suppressed the channel index $\mu$
	(anticipating multiplicity-free models)
	and the arrows,
	and allow time to flow from down/right to up/left
	(this allows for tremendous typographical simplifications upon composing diagrams).
	If one prefers, one may turn the diagrams (or time) around 
	and replace fusion spaces by splitting spaces, and vice versa.
	Beware that the conventions of the literature are mixed.
	Also note that by the properties of the inverse/conjugate/unit 
	we have the simple diagrams
	\begin{equation}
		\fs{1}{a,a}, \qquad
		\fs{a}{1,a}, \qquad
		\fs{\bar{a}}{a,1}, \qquad
		\fs{a}{\bar{a},1}.
	\end{equation}
	
	Fusions/splittings may be diagrammatically composed into more complicated processes;
	a typical one is shown in Figure~\ref{fig:fusions}, left.
	In case a single charge $d \in \cL$ splits consecutively into charges $a,b,c$,
	there are \emph{two} distinct ways to represent this by diagrams:
	\begin{equation}\label{eq:double-fusion-diagram}
		\fs{b,c}{a,e,d}
		\qquad \text{respectively} \qquad
		\fsfused{a}{b}{c}{d}{e},
	\end{equation}
	with corresponding composed splitting spaces defined by
	\begin{equation}\label{eq:fusion-space-decomp-3}
		V^{abc}_d := \bigoplus_{e} V^{ab}_e \otimes V^{ec}_d
		\qquad \text{respectively} \qquad
		\tilde{V}^{abc}_d := \bigoplus_{e} V^{ae}_d \otimes V^{bc}_e.
	\end{equation}
	Note that we must sum over all possible intermediate charges 
	$e \in a \times b$ resp. $e \in b \times c$,
	and that there could also be a multiplicity $N_{ab}^c = \dim V^{ab}_c$
	etc. in each splitting process.
	The diagrams \eqref{eq:double-fusion-diagram} 
	(possibly with additional suppressed labels 
	$\mu \in \{1,\ldots,N_{ab}^e\}$, $\nu \in \{1,\ldots,N_{ec}^d\}$, etc.)\  
	represent the basis elements of the corresponding space.
	All bases are naturally ordered according to a choice of order of the elements in $\cL$.
	
	In the case of further successive splitting we define the standard splitting space
	\begin{equation}\label{eq:fusion-space-decomp-n}
	\begin{aligned}
		V^{a_1 a_2 \cdots a_n}_c 
		&:= \bigoplus_{b_1,b_2,\ldots,b_{n-2}} V^{a_1a_2}_{b_1} \otimes V^{b_1 a_3}_{b_2} \otimes V^{b_2 a_4}_{b_3} \otimes \ldots \otimes V^{b_{n-2} a_n}_c \\
	  &= \left\{
	  \begin{tikzpicture}[scale=0.3,font=\footnotesize,anchor=mid,baseline={([yshift=-.5ex]current bounding box.center)}]
	    \node at (0, -0.6) {$a_1$};
	    \node at (1, 2) {$a_2$};
	    \node at (3, 2) {$a_3$};
	    \node at (10, 2) {$a_{n-1}$};
	    \node at (13, 2) {$a_n$};
	    \draw (1, 0) to (1, 1.5);
	    \draw (3, 0) to (3, 1.5);
	    \draw (10, 0) to (10, 1.5);
	    \draw (13, 0) to (13, 1.5);
	    \draw (-1, 0) to (5, 0);
	    \draw (7, 0) to (15, 0);
	    \node at (2, -0.7) {$b_1$};
	    \node at (4, -0.7) {$b_2$};
	    \node at (6, 0.2) {$\cdots$};
	    \node at (8.5, -0.7) {$b_{n-3}$};
	    \node at (11.5, -0.7) {$b_{n-2}$};
	    \node at (14, -0.7) {$c$};
	  \end{tikzpicture}
	  \;\;\;\bigg|\; \begin{array}{c}\text{for all possible intermediate} \\ \text{charges $b_1, b_2, \ldots, b_{n-2} \in \cL$}\end{array}
	  \right\}.
	\end{aligned}
	\end{equation}
	Again, in the case of nontrivial multiplicities 
	one must also consider all intermediate
	labels $\mu_1, \ldots, \mu_{n-2}$ of the respective spaces.
	Also note that by fusion with the vacuum we may always shift
	the diagrammatic representation:
	\begin{equation}
		\fswide{a_2,a_3}{a_1,b_1,b_2} \ldots = \fswide{a_1,a_2,a_3}{1,a_1,b_1,b_2} \ldots
	\end{equation}

\subsubsection{F-symbols: Associativity of fusion}

	The requirement of associativity of the fusion algebra,
	\begin{equation}\label{eq:F-assoc}
		(a \times b) \times c = a \times (b \times c),
	\end{equation}
	enforces natural isomorphisms on the fusion spaces
	$V^{abc}_d$ and $\tilde{V}^{abc}_d$.

\begin{definition}[F operator]\label{def:F}
	The \keyword{$F$ operator} (or \keyword{$F$ matrix})
	$F^{abc}_d\colon V^{abc}_d = \bigoplus_e V^{ab}_e \otimes V^{ec}_d \to \tilde{V}^{abc}_d = \bigoplus_e V^{ae}_d \otimes V^{bc}_e$ 
	is an isomorphism, diagrammatically represented by
	\begin{equation}\label{eq:F-matrix}
		F\colon \fs{b,c}{a,e,d} \mapsto \fsfused{a}{b}{c}{d}{e} = \sum_f F^{abc}_{d;fe} \fs{b,c}{a,f,d}.
	\end{equation}
	In the case of nontrivial fusion multiplicities one may write
	\begin{equation}\label{eq:F-matrix-mult}
		|a,e; d,\alpha\rangle \otimes |b,c; e,\beta\rangle
		= \sum_{f,\mu,\nu} \left[ F^{abc}_{d} \right]_{(f,\mu,\nu),(e,\alpha,\beta)}
		|a,b; f,\mu\rangle \otimes |f,c; d,\nu\rangle,
	\end{equation}
	where the components $F^{abc}_{d;fe} = [F^{abc}_d]_{f,e}$, 
	expressed in the standard bases, are known as the \keyword{F-symbols}.
\end{definition}

	The following lemma will be useful when computing the $F$-symbols:

\begin{lemma}\label{lem:F1}
  Consider the splitting space $V^{abc}_d$ in a multiplicity-free model.
  When one of the particle types is trivial, 
  i.e.\ $a,b,c$ or $d$ equals $1$, 
  then $\dim V^{abc}_d = 1$, and furthermore if $a,b$ or $c$ equals $1$, 
  then the corresponding $F$-matrix $F^{abc}_d$ is trivial. 
  Explicitly that is
  \begin{equation}
    \begin{aligned}
      F^{1bc}_d &= F^{1bc}_{d;\, bd} = 1, \\
      F^{a1c}_d &= F^{a1c}_{d;\, ac} = 1, \\
      F^{ab1}_d &= F^{ab1}_{d;\, db} = 1.
    \end{aligned}
  \end{equation}
\end{lemma}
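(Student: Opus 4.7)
The plan is to unpack the definition of $V^{abc}_d$ in each of the four cases and use the unit property $N^c_{1a} = N^c_{a1} = \delta^c_a$ recorded just after \eqref{eq:fusion-algebra}, together with multiplicity-freeness, to show that exactly one summand in the direct-sum decomposition \eqref{eq:fusion-space-decomp-3} survives, and that this summand has dimension one whenever it is non-trivial.

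Concretely, from
$$
  \dim V^{abc}_d \;=\; \sum_e N^e_{ab} N^d_{ec},
  \qquad
  \dim \tilde V^{abc}_d \;=\; \sum_e N^e_{bc} N^d_{ae},
$$
the unit property immediately collapses each sum to a single term: $\dim V^{1bc}_d = N^d_{bc}$ (forced by $e = b$), $\dim V^{a1c}_d = N^d_{ac}$ (forced by $e = a$), $\dim V^{ab1}_d = N^d_{ab}$ (forced by $e = d$), and $\dim V^{abc}_1 = N^{\bar c}_{ab}$ (using that $1 \in e \times c$ iff $e = \bar c$). In the multiplicity-free setting each of these is in $\{0,1\}$, and when the space is non-trivial it is one-dimensional, as claimed.

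For the $F$-symbols I would argue that, in each of the three cases $a=1$, $b=1$, $c=1$, the domain $V^{abc}_d$ and codomain $\tilde V^{abc}_d$ of $F^{abc}_d$ are both naturally identified with the same one-dimensional space $V^{bc}_d$, $V^{ac}_d$, or $V^{ab}_d$ respectively: exactly one intermediate label $e$ appears on each side, and the vacuum line carries no basis information. The $F$-move then merely re-draws the diagram \eqref{eq:double-fusion-diagram} with the trivial strand attached to a different internal vertex. By the triangle coherence axiom of the underlying unitary (braided) fusion category — which is the standard normalization that fixes the associator on tensor products involving the unit object — the resulting matrix element is $1$, giving the three identities of the lemma.

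The only subtlety, and the single point that is worth flagging in the write-up, is that the value $1$ (as opposed to an arbitrary phase) is a \emph{convention/gauge-fixing} enforced by the triangle axiom rather than a theorem following from the fusion rules alone; everything else in the proof is a one-line bookkeeping of Kronecker deltas.
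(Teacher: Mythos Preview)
Your proof is correct and follows essentially the same route as the paper: the paper's argument is a terse diagrammatic version of exactly your Kronecker-delta bookkeeping, writing for instance $\fs{b,c}{1,e,d} = \fs{b,c}{1,b,d} = \fsfused{1}{b}{c}{d}{d}$ and leaving the rest implicit. Your additional remark that the value $1$ (rather than a phase) is a normalization enforced by the triangle axiom is a worthwhile clarification that the paper does not spell out.
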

\begin{proof}
	By the defining properties of the fusion algebra,
	\begin{equation}
	    \fs{b,c}{1,e,d} = \fs{b,c}{1,b,d} = \fsfused{1}{b}{c}{d}{d},
	\end{equation}
	and so on.
\end{proof}

\subsection{Braiding}\label{sec:models-braiding}

Now that we have defined the fusion spaces, we may consider braiding operators
on these spaces.
The simplest cases define the R and B symbols.

\subsubsection{R-symbols: Commutativity of fusion}

	One requires that
	the result of fusing $a$ with $b$ must be the same as fusing $b$ with $a$. 
	That is, fusion is commutative,
	\begin{equation}
	  a \times b = b \times a.
	\end{equation}
	This gives rise to a natural isomorphism between the corresponding 
	fusion/splitting spaces for each possible fusion channel.

\begin{definition}[R operator]\label{def:R}
  The \keyword{$R$ operator} $R^{ab}$ is an isomorphism on each fusion channel 
  $c \in a \times b$
  \begin{equation}\label{R-matrix}
    R^{ab} : V^{ba}_c \to V^{ab}_c,
  \end{equation}
  or in terms of the basis states, unitary matrices $R^{ab}_c \in \sU(N_{ab}^c)$,
  \begin{equation}\label{R-matrix-map}
    R^{ab}\colon |b,a; c,\mu\rangle \mapsto \sum_\nu [R^{ab}_c]_{\nu\mu} |a,b; c,\nu\rangle,
  \end{equation}
  diagrammatically represented by 
  \begin{equation}
    R^{ab} :
    \begin{tikzpicture}[scale=0.35,font=\footnotesize,anchor=mid,baseline={([yshift=-.5ex]current bounding box.center)}]
      \node at (0.7, 2.7) {$a$};
      \node at (2.3, 2.7) {$b$};
      \draw (0.7, 2.3) to [bend left=-20] (1.25, 1.35);
      \draw (2.3, 2.3) to [bend left=20] (1.75, 1.35);
      \draw (1.5, 1) circle (0.45);
      \node at (1.503, 1.08) {$\mu$};
      \draw (1.5, 0.55) to (1.5, -0.6);
      \node at (2, -0.05) {$c$};
    \end{tikzpicture}
    \mapsto
    \begin{tikzpicture}[scale=0.3,font=\footnotesize,anchor=mid,baseline={([yshift=-.5ex]current bounding box.center)}]
      \node at (0.7, 4.9) {$a$};
      \node at (2.3, 4.9) {$b$};
      \braid[width=1.6cm, height=2cm] at (0.7, 4.5) s_1^{-1};
      \draw (0.7, 2) to [bend left=-30] (1.25, 1.2);
      \draw (2.3, 2) to [bend left=30] (1.75, 1.2);
      \draw (1.5, 0.85) circle (0.45);
      \node at (1.503, 0.9) {$\mu$};
      \draw (1.5, 0.4) to (1.5, -0.75);
      \node at (2, -0.15) {$c$};
    \end{tikzpicture}
    = \sum_{\nu} \left[ R^{ab}_c \right]_{\nu\mu}
    \begin{tikzpicture}[scale=0.35,font=\footnotesize,anchor=mid,baseline={([yshift=-.5ex]current bounding box.center)}]
      \node at (0.7, 2.7) {$a$};
      \node at (2.3, 2.7) {$b$};
      \draw (0.7, 2.3) to [bend left=-20] (1.25, 1.35);
      \draw (2.3, 2.3) to [bend left=20] (1.75, 1.35);
      \draw (1.5, 1) circle (0.45);
      \node at (1.503, 1.02) {$\nu$};
      \draw (1.5, 0.55) to (1.5, -0.6);
      \node at (2, -0.05) {$c$};
    \end{tikzpicture}.
  \end{equation}
  In the case of trivial 
  fusion multiplicities we thus see that the $R$ operator is diagonal,
  \begin{equation}
    R^{ab} : \fsfused{}{a}{b}{}{c} \mapsto \fsfusedbraided{}{a}{b}{}{c} = R^{ab}_c \fsfused{}{a}{b}{}{c},
  \end{equation}
  with $R^{ab}_c \in U(1)$ the \keyword{R-symbols}.
  That is, there is no mixing of the $c$ label
  (and mixing of $a$ and $b$ cannot occur since these are fixed in the definition).
\end{definition}

	We require trivial braiding with the vacuum: 
	$R^{1b}_c = 1$ and $R^{a1}_c = 1$, for all $a,b,c \in \cL$.

\subsubsection{B-symbols: Braiding of standard fusion states}

	We note that
	\begin{equation}
	  \begin{aligned}
	    \fs[1]{b,c}{a,e,d}
	    &= \sum_f \left(F^{-1}\right)^{acb}_{d;fe} \fsfusedbraided{a}{b}{c}{d}{f} 
	    = \sum_f R^{bc}_f \left(F^{-1}\right)^{acb}_{d;fe} \fsfused{a}{b}{c}{d}{f} \\
	    &= \sum_g \sum_f F^{abc}_{d;gf} \, R^{bc}_f \left(F^{-1}\right)^{acb}_{d;fe} \fs{b,c}{a,g,d},
	  \end{aligned}
	\end{equation}
	where the ordering of the symbols is natural for matrix multiplication 
	(also with regard to any suppressed multiplicities).
	The above operation on splitting states in $V^{abc}_d$ defines the B operator.

	\begin{definition}[B operator]\label{def:B}
	  The \keyword{$B$ operator} $B^{abc}_d$ is an isomorphism on 
	  the splitting space $V^{abc}_d$, given by
	  $\fs[1]{b,c}{a,e,d} = \sum_g B^{abc}_{d;ge} \fs{b,c}{a,g,d}$,
	  with the \keyword{B-symbols}
	  \begin{equation}\label{eq:B-matrix}
		B^{abc}_{d;ge} := 
		\sum_f F^{abc}_{d;gf} \, R^{bc}_f \left(F^{-1}\right)^{acb}_{d;fe}.
	  \end{equation}
	  Symbolically we write this relationship as $B = F R F^{-1}$.
	\end{definition}

\begin{lemma}\label{lem:B1}
  Consider the fusion space $V^{abc}_d$ in a multiplicity-free model.
  When one of the particle types is trivial, 
  i.e.\ $a,b,c$ or $d$ equals $1$, then $\dim V^{abc}_d = 1$ and the corresponding 
  $B$-matrix $B^{abc}_d$ is one-dimensional,
  \begin{equation}
    \begin{aligned}
      B^{1bc}_d &= B^{1bc}_{d;bc} = R^{bc}_d, \\
      B^{a1c}_d &= B^{a1c}_{d;ad} = R^{1c}_c = 1, \\
      B^{ab1}_d &= B^{ab1}_{d;da} = R^{b1}_b = 1, \\
      B^{abc}_1 &= B^{abc}_{1;\bar{c}\bar{b}} = R^{bc}_{\bar{a}}. \\
    \end{aligned}
  \end{equation}
\end{lemma}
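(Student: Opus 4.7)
The strategy is to apply the definition $B=FRF^{-1}$ from \eqref{eq:B-matrix} and use the fusion rules to reduce the sum over the intermediate $f$ to a single term in each case. Let $g$, $f$, $e$ denote the intermediate labels in $V^{ab}$, $V^{bc}$, and $V^{ac}$ that parametrize $V^{abc}_d$, $\tilde V^{abc}_d$, and $V^{acb}_d$ respectively. When one of $a,b,c,d$ is trivial, each of these labels is uniquely determined: $(g,f,e)=(b,d,c)$ if $a=1$; $(a,c,d)$ if $b=1$; $(d,b,a)$ if $c=1$; and $(\bar c,\bar a,\bar b)$ if $d=1$. Consequently all three splitting spaces are one-dimensional and $B^{abc}_d$ is a single scalar, obtained as the product of the lone nonzero entries of $F^{abc}_d$, $R^{bc}$, and $(F^{-1})^{acb}_d$.

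For the three cases $a=1$, $b=1$, and $c=1$, both $F$-factors are trivial by Lemma~\ref{lem:F1}: the second matrix $F^{acb}_d$ becomes $F^{1cb}_d$, $F^{ac1}_d$, or $F^{a1b}_d$ respectively, so one of its first three labels is also the vacuum and the lemma applies equally to it. Hence $B^{abc}_{d;ge}$ reduces to the sole $R$-factor at the forced intermediate $f$: this yields $R^{bc}_d$ when $a=1$, and the trivial values $R^{1c}_c=R^{b1}_b=1$ when $b=1$ or $c=1$, matching the three stated identities.

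The main obstacle is the case $d=1$, since Lemma~\ref{lem:F1} does not trivialize either $F^{abc}_1$ or $F^{acb}_1$ — it only guarantees that the relevant spaces are one-dimensional. My plan here is to verify that the product $F^{abc}_{1;\bar c\bar a}\,(F^{-1})^{acb}_{1;\bar a\bar b}$ equals $1$. This is most directly accomplished by invoking the standard gauge convention in which $F$-symbols with trivial total charge are normalized to unity; more intrinsically it follows from the pentagon axiom applied to a configuration with a vacuum leg, which ties $F^{abc}_1$ and $F^{acb}_1$ together. An equivalent diagrammatic reading is that the unique state in $V^{abc}_1$ has $b$ and $c$ fused to $\bar a$, so their braid contributes precisely the phase $R^{bc}_{\bar a}$ with any residual $F$-dressing cancelling on the one-dimensional fibres. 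Either route delivers $B^{abc}_{1;\bar c\bar b}=R^{bc}_{\bar a}$.
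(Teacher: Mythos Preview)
Your algebraic reduction via $B=FRF^{-1}$ together with Lemma~\ref{lem:F1} is correct for the three cases $a=1$, $b=1$, $c=1$, and this is exactly what the paper does, only phrased diagrammatically: the paper writes out the braided fusion tree, observes that both $F$-moves are trivial by Lemma~\ref{lem:F1}, and reads off the remaining $R$-factor.

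For $d=1$ your third option --- the diagrammatic reading --- is precisely the paper's argument. The paper simply writes the chain
\[
(\text{braided standard tree}) \;=\; (\text{fused braided tree with } b,c\to\bar a) \;=\; R^{bc}_{\bar a}\cdot(\text{fused tree}) \;=\; R^{bc}_{\bar a}\cdot(\text{standard tree})
\]
and stops. The first and last equalities are the $(F^{-1})^{acb}_1$ and $F^{abc}_1$ moves between one-dimensional spaces, asserted as bare identities; the paper does not justify the $F$-cancellation any more explicitly than your sketch. Your hesitation is reasonable, but there is no additional idea in the paper's proof that you are missing. If you want the $d=1$ case to stand independently of a normalization convention on $F^{abc}_1$, you would indeed need to extract the relation $F^{abc}_{1;\bar c\,\bar a}=F^{acb}_{1;\bar b\,\bar a}$ from the pentagon (or fix the gauge used in the paper's explicit Fibonacci and Ising computations, where these symbols are equal to~$1$), but the paper does not carry this out either.
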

\begin{proof}
	This follows directly from the diagrams
  \begin{gather}
    \fs[1]{b,c}{1,c,d}
    = \fsfusedbraidedshort{b}{c}{d}
    = R^{bc}_d \fsfusedshort{b}{c}{d}
    = R^{bc}_d \fs{b,c}{1,b,d}, \\
    \fs[1]{b,c}{a,\parbox[c][0.8em][b]{0.5em}{$\overline{b}$},1}
    = \fsfusedbraided{a}{b}{c}{1}{\overline{a}}
    = R^{bc}_{\overline{a}} \fsfused{a}{b}{c}{1}{\overline{a}}
    = R^{bc}_{\overline{a}} \fs{b,c}{a,\overline{c},1},
  \end{gather}
  and so on.
\end{proof}

\subsection{Pentagon and hexagon equations}\label{sec:models-equations}

\begin{figure}[t]
	\begin{center}
	\begin{tikzpicture}
		\node [above right] at (0,0) {\includegraphics[scale=0.05]{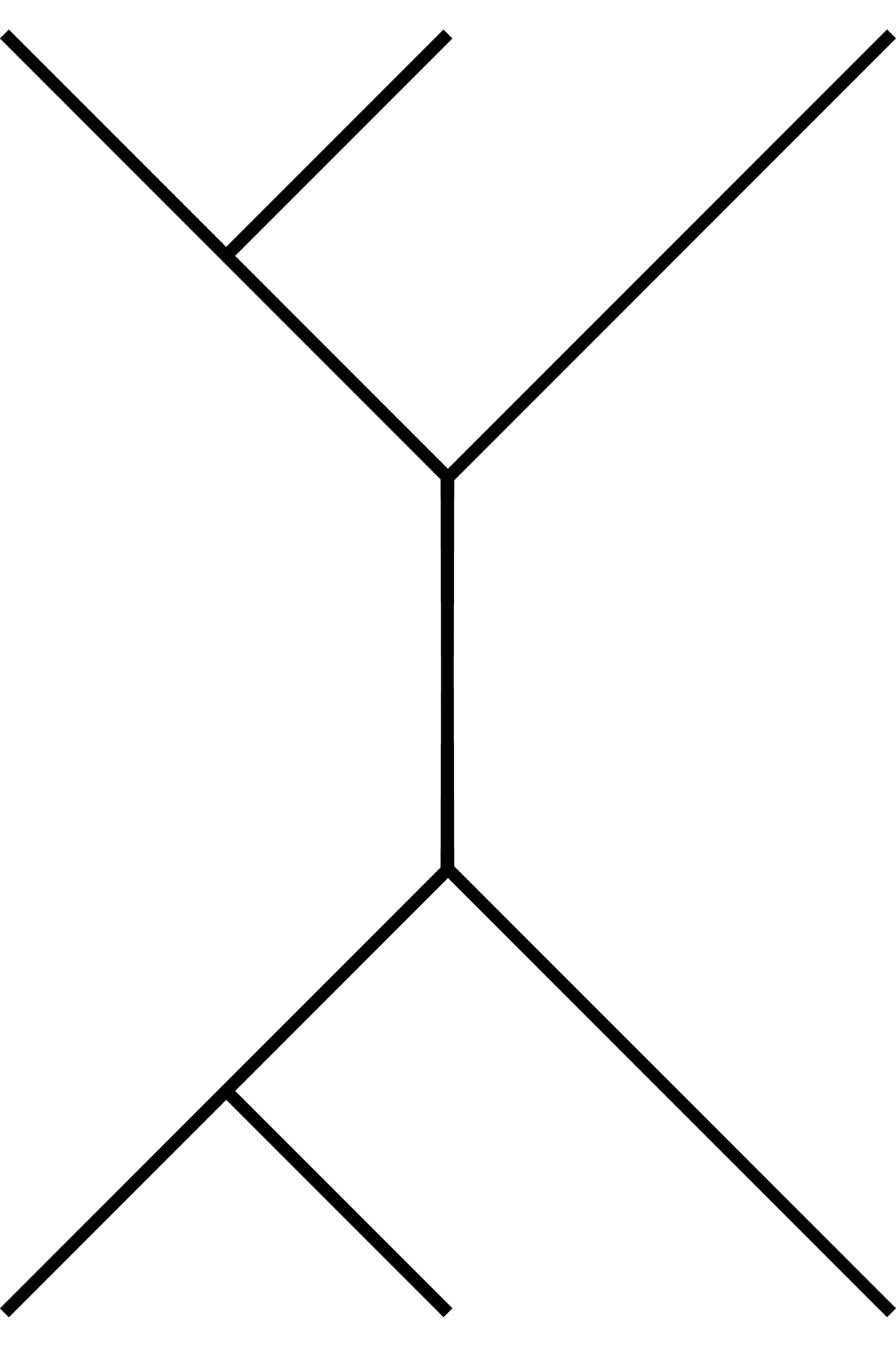}};
		\node [above right] at (-0.1,2) {\scalebox{0.8}{$a_1$\hspace{-20pt}}};
		\node [above right] at (0.4,2) {\scalebox{0.8}{$a_2\,\ldots$\hspace{-20pt}}};
		\node [above right] at (1.2,2) {\scalebox{0.8}{$a_n$\hspace{-20pt}}};
		\node [above right] at (-0.1,-0.3) {\scalebox{0.8}{$b_1$\hspace{-20pt}}};
		\node [above right] at (0.4,-0.3) {\scalebox{0.8}{$b_2\,\ldots$\hspace{-20pt}}};
		\node [above right] at (1.2,-0.3) {\scalebox{0.8}{$b_n$\hspace{-20pt}}};
		\node [above right] at (0.8,0.9) {\scalebox{0.8}{$c$\hspace{-20pt}}};
	\end{tikzpicture}
	\hspace{1.5cm}
	\begin{tikzpicture}
		\node [above right] at (0,0) {\includegraphics[scale=0.05]{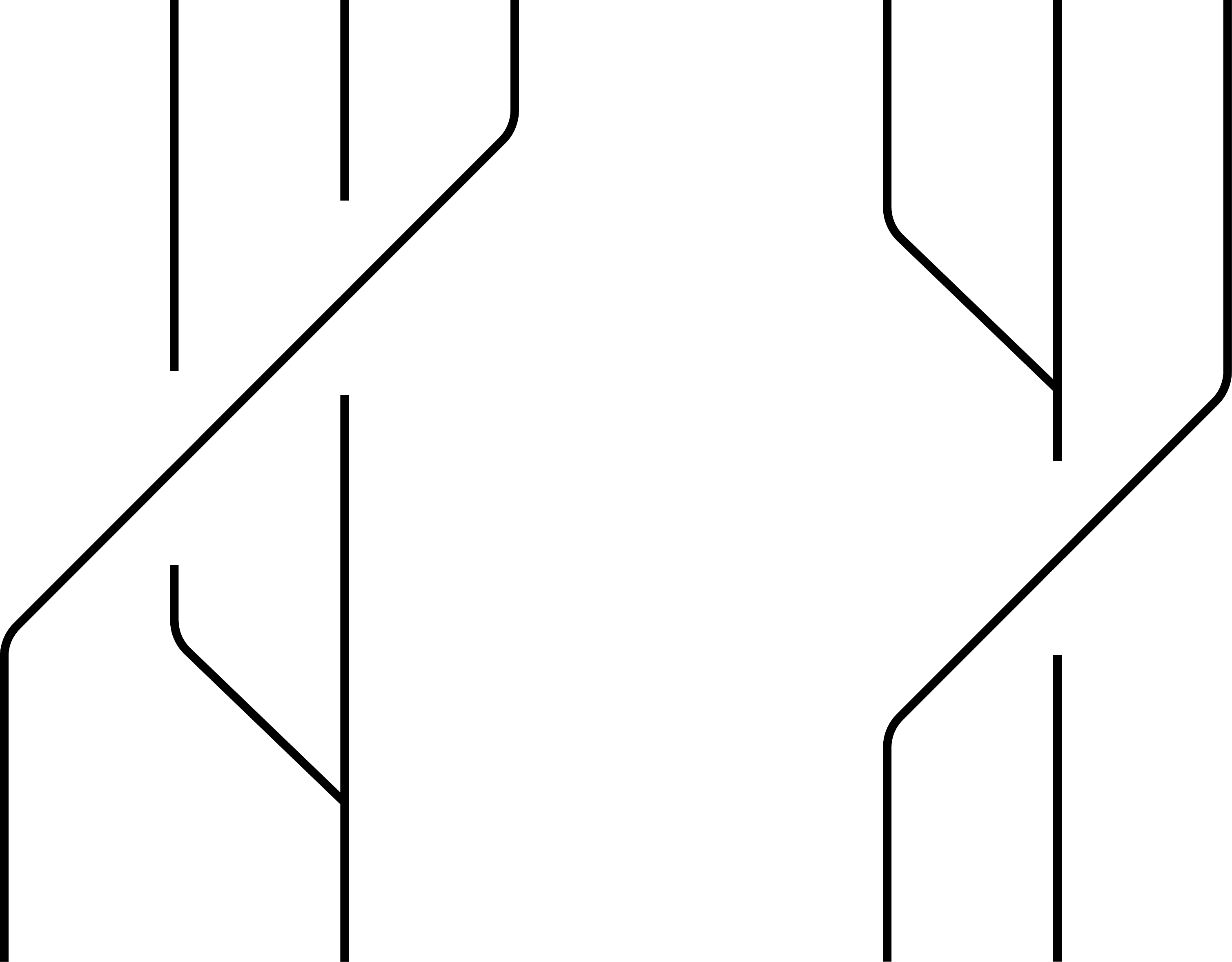}};
		\node [above right] at (1.2,1.0) {$=$};
		\node [above right] at (0.2,2.1) {\scalebox{0.8}{$a$\hspace{-20pt}}};
		\node [above right] at (0.55,2.1) {\scalebox{0.8}{$b$\hspace{-20pt}}};
		\node [above right] at (0.9,2.1) {\scalebox{0.8}{$d$\hspace{-20pt}}};
		\node [above right] at (-0.2,-0.3) {\scalebox{0.8}{$d$\hspace{-20pt}}};
		\node [above right] at (0.55,-0.3) {\scalebox{0.8}{$c$\hspace{-20pt}}};
		\node [above right] at (1.7,2.1) {\scalebox{0.8}{$a$\hspace{-20pt}}};
		\node [above right] at (2.05,2.1) {\scalebox{0.8}{$b$\hspace{-20pt}}};
		\node [above right] at (2.4,2.1) {\scalebox{0.8}{$d$\hspace{-20pt}}};
		\node [above right] at (1.7,-0.3) {\scalebox{0.8}{$d$\hspace{-20pt}}};
		\node [above right] at (2.05,-0.3) {\scalebox{0.8}{$c$\hspace{-20pt}}};
	\end{tikzpicture}
	\end{center}
	\caption{Some typical fusion/splitting/braiding diagrams.}
	\label{fig:fusions}
\end{figure}

	There are certain consistency conditions to be met upon combining
	fusions and braidings.
	Apart from the 
	geometrically more obvious relations that have been coded into
	the diagrammatic framework, such as e.g.\ Figures~\ref{fig:fusions} and \ref{fig:braids} 
	(for their formal origins, see e.g.\ \cite[Eq.~(4.3)]{MooSei-89}, 
	\cite[Eq.~(4.89)-(4.91)]{FroGab-90} or \cite[Eq.~(206)]{Kitaev-06}),
	there are also a few less obvious ones.

	One such family of constraints are the \keyword{Pentagon equations} 
	(Figure~\ref{fig:pentagon})
	\begin{equation}\label{eq:pentagon}
		F^{pzw}_{u;tq} F^{xyt}_{u;sp} 
		= \sum_{r \in \cL} F^{xyz}_{q;rp} F^{xrw}_{u;sq} F^{yzw}_{s;tr},
	\end{equation}
	and another the \keyword{Hexagon equations} (Figure~\ref{fig:hexagon})
	\begin{equation}\label{eq:hexagon}
		R^{xz}_p F^{xzy}_{u;qp} R^{yz}_q 
		= \sum_{r \in \cL} F^{zxy}_{u;rp} R^{rz}_u F^{xyz}_{u;qr},
	\end{equation}
	\begin{equation}\label{eq:hexagon-clockwise}
		\left(R^{xz}_p\right)^{-1} F^{xzy}_{u;qp} \left(R^{yz}_q\right)^{-1} 
		= \sum_{r \in \cL} F^{zxy}_{u;rp} \left(R^{rz}_u\right)^{-1} F^{xyz}_{u;qr}.
	\end{equation}

\begin{figure}[t]
	\begin{center}
	\includegraphics[scale=0.15]{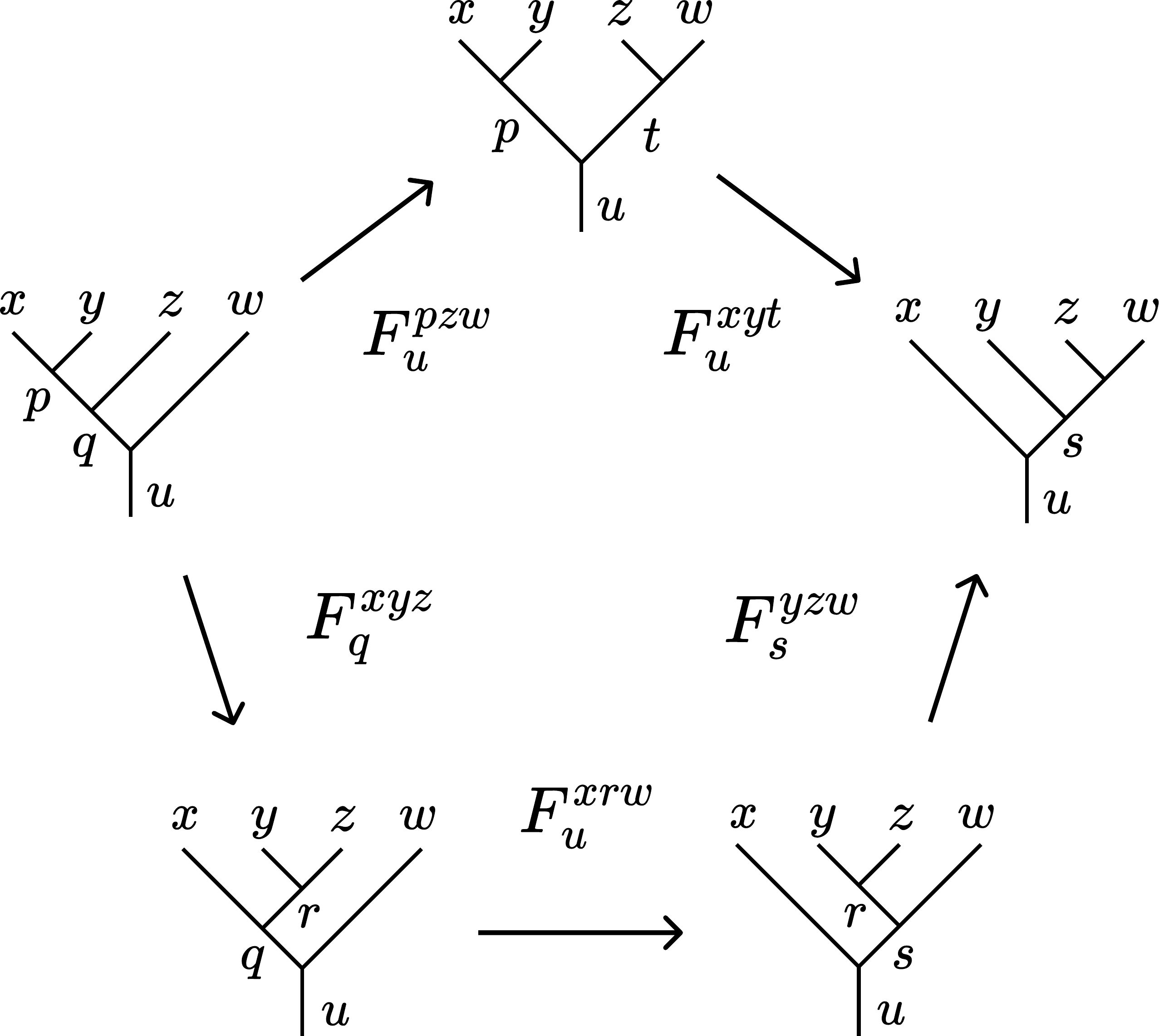}
	\ 
	\includegraphics[scale=0.15]{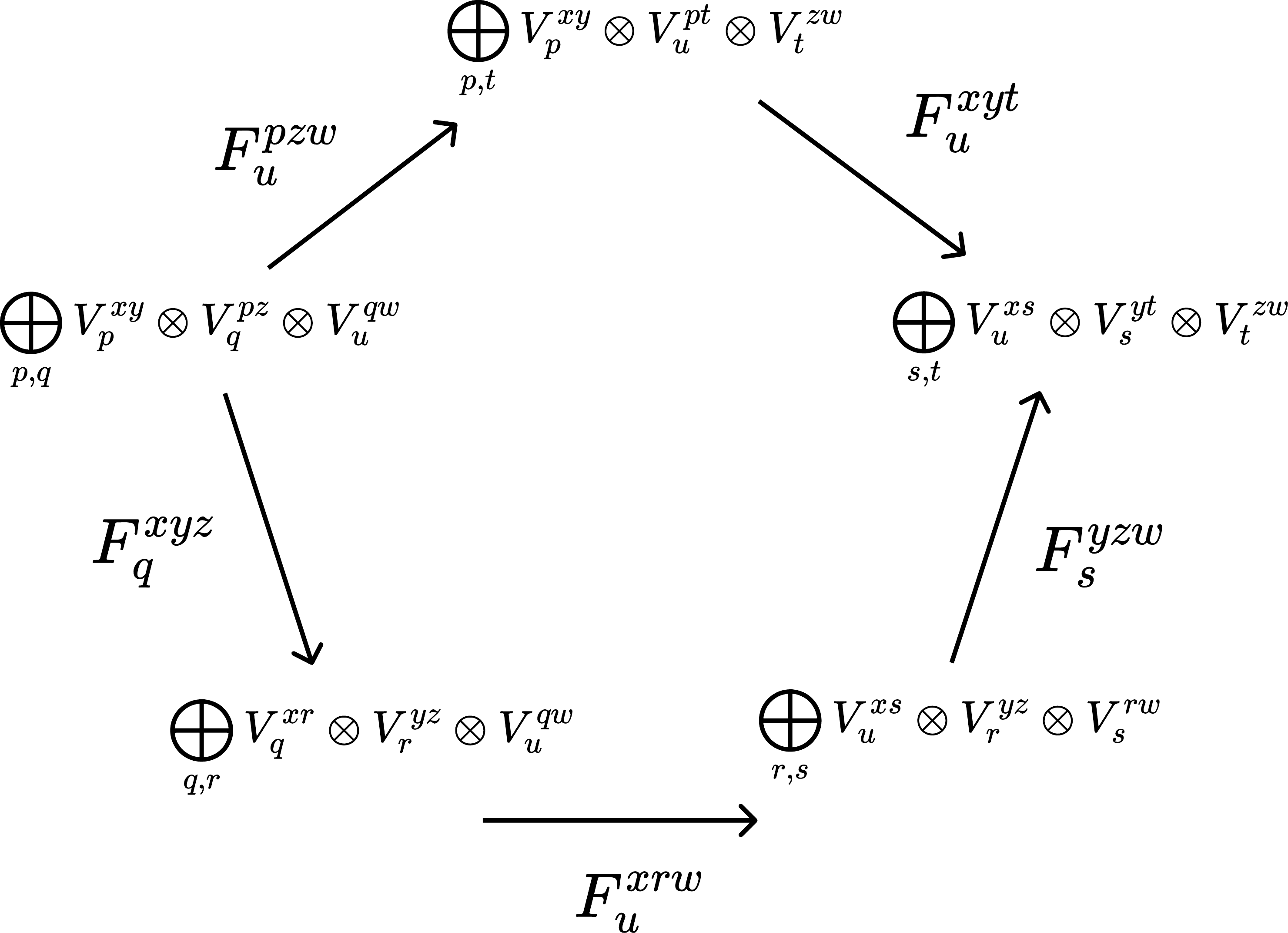}
	\end{center}
	\caption{Diagram corresponding to the pentagon equation \eqref{eq:pentagon}.}
	\label{fig:pentagon}
\end{figure}
\begin{figure}[t]
	\begin{center}
	\includegraphics[scale=0.2]{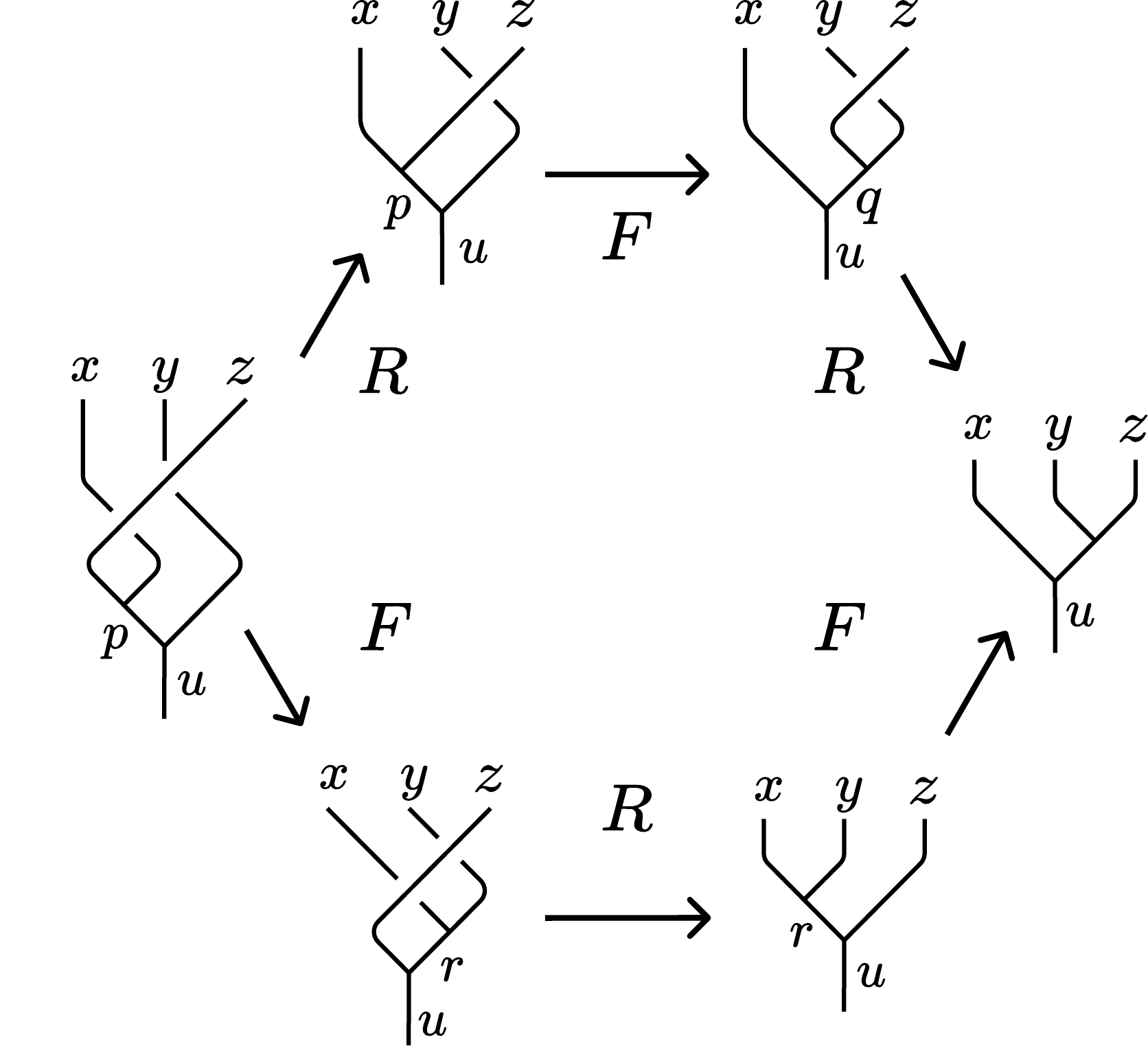}
	\hspace{1cm}
	\includegraphics[scale=0.2]{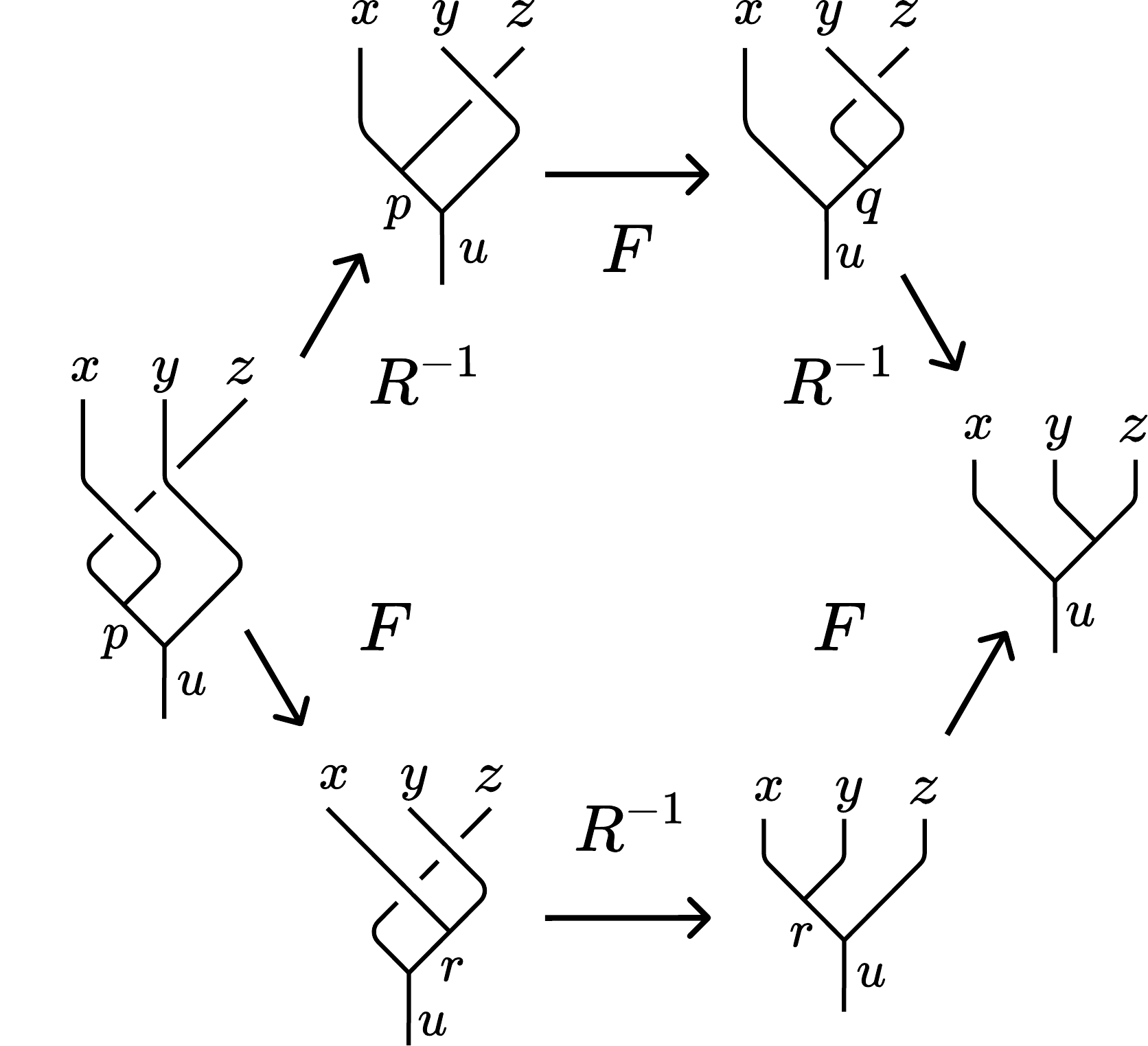}
	\end{center}
	\caption{Diagrams corresponding to the hexagon equations \eqref{eq:hexagon}-\eqref{eq:hexagon-clockwise}.}
	\label{fig:hexagon}
\end{figure}

	Furthermore, there is a remaining
	\keyword{gauge freedom} in the space of solutions to these equations; 
	see e.g. \cite[Section 2.5]{Bonderson-07} and \cite[Appendix~E.6]{Kitaev-06}. 
	Namely, for states $|\mu\rangle \in V^{ab}_c$ one may consider a change of basis
	\begin{equation}\label{eq:gauge-transformation}
		\widetilde{|\mu'\rangle} = \sum_{\mu} \bigl[ u^{ab}_c \bigr]_{\mu\mu'} |\mu\rangle,
	\end{equation}
	with $u^{ab}_c \in \sU(N_{ab}^c)$.
	In the multiplicity-free case these are simply phases, and amount to the transformations
	\begin{equation}\label{eq:F-R-gauge-freedom}
		\tilde{F}^{abc}_{d;ef} = \frac{u^{af}_d u^{bc}_f}{u^{ab}_e u^{ec}_d} F^{abc}_{d;ef},
		\qquad\qquad
		\tilde{R}^{ab}_c = \frac{u^{ba}_c}{u^{ab}_c} R^{ab}_c.
	\end{equation}
	With the totality of constraints it has been shown that there are only finitely
	many gauge equivalence classes of solutions with a given fusion algebra \eqref{eq:fusion-algebra}
	(known in category theory as Ocneanu rigidity; see \cite{Kitaev-06,EtiNikOst-05}).

	We now consider a few central examples of algebraic anyon models
	which solve the above constraints.

\subsection{Vacuum, bosons and fermions}\label{sec:models-bose-fermi}

	A simplest example of an algebraic anyon model is to take $\cL = \{1\}$,
	so that all symbols are trivial and there is only the vacuum state.
	This is the \keyword{trivial/vacuum model}.
	
	The next-to-simplest case is two symbols $\cL = \{1,\psi\}$,
	where $\bar\psi = \psi$ and $\psi \times \psi = 1$.
	Note that together with the default fusion relations 
	$1 \times 1 = 1$ and $1 \times \psi = \psi$, 
	these make up the relations of the group $\Z_2$.
	Here $F=1$ up to a gauge phase, 
	the pentagon equations are trivial while the hexagon equations reduce to
	\begin{equation}\label{eq:abelian-hexagon}
		R^{ac}_{a+c} R^{ab}_{a+b} = R^{a,(b+c)}_{a+b+c},
	\end{equation}
	where addition is that in $\Z_2$,
	with the only non-trivial equation
	$$
		(R^{\psi\psi}_1)^2 = 1.
	$$
	There are thus two such models:
	$R^{\psi\psi}_1 = \pm 1 = e^{in\pi}$, $n=0,1$, 
	denoted $\Z_2^{(n)}$
	and corresponding to \keyword{bosons} and \keyword{fermions} respectively.
	Note that there is also an accompanying grading in the fusion 
	whether we have even or odd numbers of particles,
	$$
	\psi^N = \begin{cases}1, & \text{$N$ even,}\\\psi, & \text{$N$ odd.}\end{cases}
	$$

\subsection{Abelian anyons}\label{sec:models-abelian}

	Recall that an abelian model is characterized by the fact that there is a unique
	result of fusion, $a \times b = c = b \times a$.
	
	As a generalization of the above $\Z_2$ abelian model, and
	in order to still only have a \emph{finite} collection of charges $\cL$,
	we consider an \keyword{elementary charge}
	$\alpha \in [0,2)$ such that $q = e^{i\pi\alpha}$ is a finite root of unity.
	Either $\alpha = \mu/\nu$ is a reduced fraction with $\mu \in \{0,2,\ldots,2(\nu-1)\}$
	and $\nu \ge 1$ odd, i.e. $q^\nu = 1$ an odd root of unity, 
	or $\alpha = \mu/\nu$ is a reduced fraction with $\mu \in \{1,3,\ldots,2\nu-1\}$ 
	and $\nu \ge 1$, i.e. $q^\nu = -1$ an even root of unity.
	We fix the denominator $\nu$ and define the fusion rules
	$$
		[\alpha] \times [\beta] = [\alpha + \beta \mod 2]
	$$
	or equivalently, with $\alpha = \mu/\nu$, $\beta = \lambda/\nu$,
	we may use the notation
	$$
		[\mu]_\nu \times [\lambda]_\nu = [\mu + \lambda \mod 2\nu]_\nu,
	$$
	the relations of $\Z_{2\nu}$.
	However in the case that $\nu$ is odd 
	and $\mu,\lambda$ even we obtain the subgroup $\Z_\nu$ of even integers.
	The conjugate charge is $\bar{\alpha} = -\mu/\nu \mod 2$ 
	i.e. $\overline{[\mu]_\nu} = [-\mu \mod 2\nu]_\nu$.

	Since $[\mu]_\nu^{\times{2\nu}} = [2\nu\mu]_\nu = [0]_\nu = 1$,
	iteration of the hexagon equation \eqref{eq:abelian-hexagon} yields
	$$
		(R^{[\mu]_\nu,[\mu]_\nu}_{[2\mu]_\nu})^{2\nu} 
		= R^{[\mu]_\nu,[\nu\mu]_\nu}_{[(\nu+1)\mu]_\nu} R^{[\mu]_\nu,[\nu\mu]_\nu}_{[(\nu+1)\mu]_\nu}
		= R^{[\mu]_\nu,[2\nu\mu]_\nu}_{[\mu]_\nu} = 1,
	$$
	and similarly to the previous case we have a family of models satisfying these equations
	--- all the $2\nu$:th roots of unity.
	Let $\cL=\{1,\balpha,\balpha^2,\ldots,\balpha^{2\nu-1}\}$ denote the one with
	$R^{\balpha\balpha}_{\balpha^2}=e^{i\alpha\pi}$, i.e. the elementary
	anyon $\balpha$ has the statistics parameter $\alpha$.
	In the case of even-numerator $\alpha$ we have $\balpha^\nu=1$ and a reduction to a
	group of $\nu$ roots of unity
	$\cL=\{1,\balpha,\balpha^2,\ldots,\balpha^{\nu-1}\}$.
	
	Note that the quantum dimension is $d_a = 1$ for all $a \in \cL$, 
	since $d_a d_{\bar a} = d_1 = 1$ and
	$$
		\prod_{a \in \cL} d_a = d_{[\sum_{a \in \cL} a]},
	$$
	and which would otherwise produce a contradiction.

\subsection{Fibonacci anyons}\label{sec:models-fib}

	The \keyword{Fibonacci anyon model} is determined by a single nontrivial anyon type
	$\cL = \{1,\tau\}$ 
	(with $\overline{\tau} = \tau$) 
	and the fusion rules
	\begin{equation}
		1 \times 1 = 1, \qquad
		1 \times \tau = \tau \times 1 = \tau, \qquad
		\tau \times \tau = 1 + \tau.
	\end{equation}
	The quantum dimension $d_\tau$ is the largest solution to
	$d_\tau^2 = 1 + d_\tau$. Hence, $d_\tau = \phi > 1$,
	where $\phi := (1+\sqrt{5})/2$ is the golden ratio.

	Only fusion or splitting processes involving chains of nontrivial 
	anyons $\tau$ are of interest,
	and the corresponding standard splitting spaces 
	$V^{\tau^n}_* = V^{\tau^n}_1 \oplus V^{\tau^n}_\tau$ 
	are spanned by elements of the form
	\begin{multline*}
		\fs{\tau}{1,\tau}; \quad
		\fs{\tau,\tau}{1,\tau,1},
		\fs{\tau,\tau}{1,\tau,\tau}; \quad
		\fs{\tau,\tau,\tau}{1,\tau,\tau,1}, 
		\fs{\tau,\tau,\tau}{1,\tau,1,\tau}, 
		\fs{\tau,\tau,\tau}{1,\tau,\tau,\tau}; \\
		\fs{\tau,\tau,\tau,\tau}{1,\tau,1,\tau,1}, 
		\fs{\tau,\tau,\tau,\tau}{1,\tau,\tau,\tau,1},
		\fs{\tau,\tau,\tau,\tau}{1,\tau,\tau,1,\tau},
		\fs{\tau,\tau,\tau,\tau}{1,\tau,1,\tau,\tau},
		\fs{\tau,\tau,\tau,\tau}{1,\tau,\tau,\tau,\tau};
		\quad \ldots;
	\end{multline*}
	with dimensions growing according to the Fibonacci sequence.
	Namely, the total charge can be either $1$ or $\tau$
	and if it is $\tau$ then the next-to-last intermediate charge can be either 
	$1$ or $\tau$, enumerating the states of $V^{\tau^{n-1}}_*$ recursively,
	while if the total charge is $1$ then the next-to-last must
	be $\tau$, while the second next-to-last can then be either $1$ or $\tau$,
	enumerating the states of $V^{\tau^{n-2}}_*$,
	and so on.
	
\begin{definition}\label{def:Fib}
	The $n$:th \keyword{Fibonacci number} $\Fib(n)$ is defined by the recurrence relation
	\begin{equation}
		\Fib(0) = 0, \quad
		\Fib(1) = 1, \quad
		\Fib(n) = \Fib(n-1) + \Fib(n-2).
	\end{equation}
\end{definition}

	For example:
	\begin{center}
		\renewcommand{\arraystretch}{1.5}
		\begin{tabular}{c|ccccccccccc}
		$n$       & $-3$ & $-2$ & $-1$ & $0$ & $1$ & $2$ & $3$ & $4$ & $5$ & $6$ & $7$ \\ \hline
		$\Fib(n)$ & $2$ & $-1$ & $1$ & $0$ & $1$ & $1$ & $2$ & $3$ & $5$ & $8$ & $13$
		\end{tabular}
	\end{center}
	Furthermore, there is the closed-form formula
	\begin{equation}
		\Fib(n) = \frac{\phi^n-(-\phi)^{-n}}{\sqrt{5}}.
	\end{equation}
	From the above observations we have
	$$
		\dim V^{1\tau^n}_1 = \Fib(n-1), \quad
		\dim V^{1\tau^n}_\tau = \Fib(n), \quad
		\dim V^{1\tau^n}_* = \Fib(n+1).
	$$
	
	See e.g.\ \cite{Trebst-etal-08} and \cite[Examples~3.9 and 3.12]{Garjani-17} 
	for explicit solutions of the constraints.
	The nontrivial pentagon equations read 
	\begin{equation}\label{eq:fib-pentagon}
		F^{\tau\tau c}_{\tau;da} F^{a\tau\tau}_{\tau;cb} 
		= F^{\tau\tau\tau}_{d;ce} F^{\tau e\tau}_{\tau;db} F^{\tau\tau\tau}_{b;ea},
	\end{equation}
	for all allowed $a,b,c,d,e$ internal edges of fusion trees,
	and the hexagon equations
	\begin{equation}\label{eq:fib-hexagon}
		R^{\tau\tau}_c F^{\tau\tau\tau}_{\tau;ca} R^{\tau\tau}_a 
		= \sum_b F^{\tau\tau\tau}_{\tau;cb} R^{\tau b}_\tau F^{\tau\tau\tau}_{\tau;ba},
	\end{equation}
	for $a,b,c$ internal edges of fusion trees.
	
	One can show 
	that the only potentially nontrivial F-symbols are
	$F^{\tau\tau\tau}_1$ and $F^{\tau\tau\tau}_\tau$
	(all others allowed by fusion equal 1),
	and that the pentagon equations and unitarity imply
	\begin{equation}\label{eq:F-fib}
		F^{\tau\tau\tau}_{1} = 1 
		\qquad \text{and} \qquad
		F^{\tau\tau\tau}_{\tau} = 
		\begin{bmatrix} \phi^{-1} & \eta \phi^{-1/2} \\ \bar\eta \phi^{-1/2} & -\phi^{-1} \end{bmatrix}
		=: F,
	\end{equation}
	(with a convenient abuse of notation),
	where the phase $|\eta|=1$ is a remaining gauge parameter 
	which we will put to $\eta=1$ for simplicity
	(cf. \cite{Garjani-17}).
	Note that with this convention $\det F = -1$
	and $F^2 = \1$, i.e. $F^{-1} = F = \bar{F}$, as matrices.
	Hence,
	\begin{equation}
	  \begin{aligned}
	    \fsfused{\tau}{\tau}{\tau}{1}{\tau} &=& \fs{\tau,\tau}{\tau,\tau,1}&, \\
	    \fsfused{\tau}{\tau}{\tau}{\tau}{1} &=&
	   	\phi^{-1} \ \fs{\tau,\tau}{\tau,1,\tau} &\ +& \phi^{-1/2} \fs{\tau,\tau}{\tau,\tau,\tau},\\
	    \fsfused{\tau}{\tau}{\tau}{\tau}{\tau} &=&
	   	\phi^{-1/2} \fs{\tau,\tau}{\tau,1,\tau} &\ -& \phi^{-1} \ \fs{\tau,\tau}{\tau,\tau,\tau},\\
	  \end{aligned}
	\end{equation}
	and the inverse transformations are found by a direct exchange of diagrams.

	There are two sets of solutions for the nontrivial R-symbols
	(with continued abuse of notation):
	$$
		R := R^{\tau\tau}_1 \oplus R^{\tau\tau}_\tau = \diag(R^{\tau\tau}_1,R^{\tau\tau}_\tau),
	$$
	with
	\begin{equation}\label{eq:R-fib-A}
		R^{\tau\tau}_{1} = e^{4\pi i/5}, \qquad
		R^{\tau\tau}_{\tau} = e^{-3\pi i/5},
	\end{equation}
	respectively their complex conjugates
	\begin{equation}\label{eq:R-fib-B}
		R^{\tau\tau}_{1} = e^{-4\pi i/5}, \qquad
		R^{\tau\tau}_{\tau} = e^{3\pi i/5}.
	\end{equation}
	Hence for the nontrivial B-symbols, $B = FRF$:
	\begin{equation}\label{eq:B-fib-1}
		B^{1\tau\tau}_* = R \ \ 
		(\text{i.e.}\ B^{1\tau\tau}_a = R^{\tau\tau}_a \1, \ a \in \{1,\tau\}),
		\qquad
		B^{\tau\tau\tau}_1 = R^{\tau\tau}_\tau,
	\end{equation}
	and (yet another abuse of notation)
	\begin{equation}\label{eq:B-fib-2}
		B := B^{\tau\tau\tau}_{\tau} = 
		\begin{bmatrix} B^{\tau\tau\tau}_{\tau;11} & B^{\tau\tau\tau}_{\tau;1\tau} \\ B^{\tau\tau\tau}_{\tau;\tau 1} & B^{\tau\tau\tau}_{\tau;\tau\tau} \end{bmatrix} =
		\begin{bmatrix} \phi^{-1} e^{-4\pi i/5} & \phi^{-1/2} e^{3\pi i/5} \\ \phi^{-1/2} e^{3\pi i/5} & -\phi^{-1} \end{bmatrix},
	\end{equation}
	i.e. diagrammatically
	\begin{equation}
	  \begin{aligned}
	    \fs[1]{\tau,\tau}{1,\tau,1} &= e^{4\pi i/5} \fs{\tau,\tau}{1,\tau,1}, \qquad 
	    \fs[1]{\tau,\tau}{1,\tau,\tau} = e^{-3\pi i/5} \fs{\tau,\tau}{1,\tau,\tau}, \qquad
	    \fs[1]{\tau,\tau}{\tau,\tau,1} = e^{-3\pi i/5} \fs{\tau,\tau}{\tau,\tau,1}, \\
	    \fs[1]{\tau,\tau}{\tau,1,\tau} &= \phi^{-1} e^{-4\pi i/5} \fs{\tau,\tau}{\tau,1,\tau} + \phi^{-1/2} e^{3\pi i/5} \fs{\tau,\tau}{\tau,\tau,\tau},\\
	    \fs[1]{\tau,\tau}{\tau,\tau,\tau} &= \phi^{-1/2} e^{3\pi i/5} \fs{\tau,\tau}{\tau,1,\tau} - \phi^{-1} \fs{\tau,\tau}{\tau,\tau,\tau}.
	  \end{aligned}
	\end{equation}
	One could also take all symbols complex conjugated, corresponding to the choice \eqref{eq:R-fib-B}.
	We will stick to the first choice \eqref{eq:R-fib-A} throughout,
	while the second \eqref{eq:R-fib-B} amounts to time reversal of all diagrams.
	Note further that $B^{-1} = B^\dagger = \bar{B}$ and 
	$B^{10} = FR^{10}F = \1$.
	
\subsection{Ising anyons}\label{sec:models-ising}

	The \keyword{Ising anyon model} is described by the set of charges
	$\cL = \{1,\sigma,\psi\}$ 
	(with $\overline{\sigma} = \sigma$ and $\overline{\psi} = \psi$),
	i.e. two nontrivial anyon types,
	with the nontrivial fusion rules
	\begin{equation}
	  \begin{aligned}
	    \sigma \times \sigma &= 1 + \psi, \\
	    \sigma \times \psi &= \psi \times \sigma = \sigma, \\
	    \psi \times \psi &= 1.
	  \end{aligned}
	\end{equation}
	The quantum dimensions are thus $d_\psi =1$ and $d_\sigma = \sqrt{2}$.
	The $\sigma$ and $\psi$ are called Ising anyon and Majorana fermion, respectively.
	Even though there is the additional particle type in this model, the possible 
	fusion/splitting basis states involving the Ising anyon $\sigma$
	are actually less numerous than 
	those of the Fibonacci anyon
	($d_\sigma < d_\tau$),
	the first few being

	\begin{multline*}
		\fs{\sigma}{1,\sigma}; \ \ 
		\fs{\sigma,\sigma}{1,\sigma,1},
		\fs{\sigma,\sigma}{1,\sigma,\psi}; \ \ 
		\fs{\sigma,\sigma,\sigma}{1,\sigma,1,\sigma}, 
		\fs{\sigma,\sigma,\sigma}{1,\sigma,\psi,\sigma}; \ \ 
		\fs{\sigma,\sigma,\sigma,\sigma}{1,\sigma,1,\sigma,1}, 
		\fs{\sigma,\sigma,\sigma,\sigma}{1,\sigma,\psi,\sigma,1},
		\fs{\sigma,\sigma,\sigma,\sigma}{1,\sigma,1,\sigma,\psi},
		\fs{\sigma,\sigma,\sigma,\sigma}{1,\sigma,\psi,\sigma,\psi};
		\ \ \ldots;
	\end{multline*}
	
	There are two (see e.g.\ \cite[Example~3.10]{Garjani-17}) solutions for the F-symbols:
	\begin{equation}\label{eq:F-ising-1}
		F^{\sigma\sigma\sigma}_{\sigma} = \pm \frac{1}{\sqrt{2}}
		\begin{bmatrix} 1 & 1 \\ 1 & -1 \end{bmatrix} =: F,
	\end{equation}
	\begin{equation}\label{eq:F-ising-2}
		F^{\sigma\psi\sigma}_{\psi;\sigma\sigma}
		= F^{\psi\sigma\psi}_{\sigma;\sigma\sigma} = -1,
	\end{equation}
	all other symbols being trivial.
	Note that also in this case $F^{-1} = F = \bar{F}$.
	There are actually eight different anyon models with the above fusion algebra,
	characterized by the value of the ``\keyword{topological spin}'' 
	(cf. \cite[Sec.~10.5-10.6]{Kitaev-06}, \cite{BonGurNay-11})
	$$
		\theta_\sigma := \frac{1}{\sqrt{2}}\left( R^{\sigma\sigma}_1 + R^{\sigma\sigma}_\psi \right)
		= e^{i2\pi(2j+1)/16}, \qquad j = 0,1,\ldots,7.
	$$
	We consider in this work
	one specific set of solutions for the $R$-symbols, with $\theta_\sigma = e^{\pi i/8}$:
	\begin{equation}\label{eq:R-ising}
		R^{\sigma\sigma}_{1} = e^{-\pi i/8}, \qquad
		R^{\sigma\sigma}_{\psi} = e^{3\pi i/8}, \qquad
		R^{\sigma\psi}_{\sigma} = R^{\psi\sigma}_{\sigma} = e^{-\pi i/2} = -i, \qquad
		R^{\psi\psi}_{1} = -1,
	\end{equation}
	and another is found by taking complex conjugates. 
	Corresponding diagrams are given in \cite[Table~1]{Kitaev-06}.
	Hence for the nontrivial B-symbols, we have directly from Lemma~\ref{lem:B1}
	\begin{equation}\label{eq:B-ising-1}
	\begin{aligned}
		B^{1\psi\psi}_1 = B^{1\psi\psi}_{1;\psi\psi} &= R^{\psi\psi}_1 = -1, \\
		B^{1\sigma\psi}_\sigma = B^{1\sigma\psi}_{\sigma;\sigma\psi} &= R^{\sigma\psi}_\sigma = -i, \\
		B^{1\psi\sigma}_\sigma = B^{1\psi\sigma}_{\sigma;\psi\sigma} &= R^{\psi\sigma}_\sigma = -i, \\
		B^{1\sigma\sigma}_{*} &= R^{\sigma\sigma}_*, \\
		B^{\psi\sigma\sigma}_1 = B^{\psi\sigma\sigma}_{1;\sigma\sigma} &= R^{\sigma\sigma}_\psi = e^{3\pi i/8}, \\
		B^{\sigma\psi\sigma}_1 = B^{\sigma\psi\sigma}_{1;\sigma\psi} &= R^{\psi\sigma}_\sigma = -i, \\
		B^{\sigma\sigma\psi}_{1} = B^{\sigma\sigma\psi}_{1;\psi\sigma} &= R^{\sigma\psi}_\sigma = -i, \\
	\end{aligned}
	\end{equation}
	and by simple computation one finds also
	\begin{equation}\label{eq:B-ising-2}
	\begin{aligned}
		B^{\psi\psi\psi}_\psi = B^{\psi\psi\psi}_{\psi;1 1} &= R^{\psi\psi}_1 = -1, \\
		B^{\sigma\psi\psi}_\sigma = B^{\sigma\psi\psi}_{\sigma;\sigma\sigma} &= R^{\psi\psi}_1 = -1, \\
		B^{\psi\sigma\psi}_\sigma = B^{\psi\sigma\psi}_{\sigma;\sigma 1} &= -R^{\sigma\psi}_\sigma = i, \\
		B^{\psi\psi\sigma}_\sigma = B^{\psi\psi\sigma}_{\sigma;1\sigma} &= -R^{\psi\sigma}_\sigma = i, \\
		B^{\psi\sigma\sigma}_\psi = B^{\psi\sigma\sigma}_{\psi;\sigma\sigma} &= R^{\sigma\sigma}_1 = e^{-\pi i/8}, \\
		B^{\sigma\psi\sigma}_\psi = B^{\sigma\psi\sigma}_{\psi;\sigma 1} &= -R^{\psi\sigma}_\sigma = i, \\
		B^{\sigma\sigma\psi}_\psi = B^{\sigma\sigma\psi}_{\psi;1\sigma} &= -R^{\sigma\psi}_\sigma = i, \\
	\end{aligned}
	\end{equation}
	as well as
	\begin{equation}\label{eq:B-ising-3}
		B  = FRF = B^{\sigma\sigma\sigma}_{\sigma} = \frac{1}{2}
		\begin{bmatrix} R^{\sigma\sigma}_1+R^{\sigma\sigma}_\psi & R^{\sigma\sigma}_1-R^{\sigma\sigma}_\psi \\ 
			R^{\sigma\sigma}_1-R^{\sigma\sigma}_\psi & R^{\sigma\sigma}_1+R^{\sigma\sigma}_\psi \end{bmatrix}
		= \frac{1}{\sqrt{2}}
		\begin{bmatrix} e^{\pi i/8} & e^{-3\pi i/8} \\ e^{-3\pi i/8} & e^{\pi i/8} \end{bmatrix}.
	\end{equation}
	Note again that $B^{-1} = B^\dagger = \bar{B}$, while $B^4 = FR^4F = -i\1$.


\section{Exchange operators and phases}\label{sec:phases}

	Here we show how to compute the operators and corresponding phases 
	associated with an exchange of two
	anyons in an arbitrary algebraic anyon model,
	and illustrate the procedure for abelian, Fibonacci and Ising anyons.
	We also consider a couple of other common representations 
	that arise in a different way.
	For an introduction to the braid group and its representations 
	we refer to \cite{Birman-74,KasTur-08}.

\subsection{Braid group representations}\label{sec:phases-reps}

\begin{definition}[Braid/permutation group]\label{def:braid-group}
    The \keyword{braid group} on $N$ strands, $B_N$, 
	may be defined as the group
	with generators $\sigma_1, \ldots, \sigma_{N-1}$
	subject to the relations
	\begin{subequations}
	\label{eq:braid-relations}
	  \begin{align}
	    \label{eq:braid-relation-1}
	    \sigma_j \sigma_k &= \sigma_k \sigma_j, &&\text{if } |j-k| \ge 2, \\
	    \label{eq:braid-relation-2}
	    \sigma_j \sigma_{j+1} \sigma_j &= \sigma_{j+1} \sigma_j \sigma_{j+1}, &&j = 1,2,\ldots,N-2,
	  \end{align}
	\end{subequations}
	and similarly for their inverses $\sigma_j^{-1}$.
	The \keyword{symmetric/permutation group} $S_N$ may be defined by the same generators and relations 
	as the braid group, with the additional relations
	$\sigma_j^2 = 1$ for $j=1,2,\ldots,N-1$,
	hence $\sigma_j^{-1} = \sigma_j$ in this case.
\end{definition}
	
\begin{figure}[t]\label{fig:braids}
	\centering
	\begin{tikzpicture}[scale=0.4,font=\footnotesize,anchor=mid,baseline={([yshift=-.5ex]current bounding box.center)}]
		\braid[number of strands=7] s_4^{-1};
		\node at (1, -2.0) {$1$};
		\node at (2, -2.0) {$2$};
		\node at (3, -2.0) {$\ldots$};
		\node at (4, -2.0) {$j$};
		\node at (6, -2.0) {$\ldots$};
		\node at (7, -2.0) {$N$};
	\end{tikzpicture}
	\qquad
	\qquad
	\begin{tikzpicture}[scale=0.4,font=\footnotesize,anchor=mid,baseline={([yshift=-.5ex]current bounding box.center)}]
		\braid[number of strands=4] s_1^{-1} s_2^{-1} s_1^{-1};
	\end{tikzpicture}
	\ = \ 
	\begin{tikzpicture}[scale=0.4,font=\footnotesize,anchor=mid,baseline={([yshift=-.5ex]current bounding box.center)}]
		\braid[number of strands=4] s_2^{-1} s_1^{-1} s_2^{-1};
	\end{tikzpicture}
	\qquad
	\qquad
	\begin{tikzpicture}[scale=0.4,font=\footnotesize,anchor=mid,baseline={([yshift=-.5ex]current bounding box.center)}]
		\braid[number of strands=4] s_3^{-1} s_1^{-1};
	\end{tikzpicture}
	\ = \ 
	\begin{tikzpicture}[scale=0.4,font=\footnotesize,anchor=mid,baseline={([yshift=-.5ex]current bounding box.center)}]
		\braid[number of strands=4] s_1^{-1} s_3^{-1};
	\end{tikzpicture}
	\caption{Braid diagrams corresponding to the generator $\sigma_j$ of $B_N$, 
		i.e. a counterclockwise exchange of particles/strands $j$ and $j+1$
		with time running upwards,
		and the relations $\sigma_1 \sigma_2 \sigma_1 = \sigma_2 \sigma_1 \sigma_2$
		respectively $\sigma_3 \sigma_1 = \sigma_1 \sigma_3$ of $B_4$.}
\end{figure}

	See Figure~\ref{fig:braids} for corresponding diagrams that encode these relations.
	An important example that we will return to frequently in this work is the
	exchange of two anyons, while enclosing exactly $p$ other anyons.
	This is represented by the group element and diagram (Figure~\ref{fig:loops}, right):
	\begin{equation}\label{eq:exchange-braid}
		\Sigma_p := \sigma_1 \sigma_2 \ldots \sigma_p \sigma_{p+1} \sigma_p \ldots \sigma_2 \sigma_1
		\qquad \leftrightarrow \qquad
	    \begin{tikzpicture}[scale=0.4,font=\footnotesize,anchor=mid,baseline={([yshift=-.5ex]current bounding box.center)}]
			\braid s_1^{-1} s_2^{-1} s_3^{-1} s_2^{-1} s_1^{-1};
			\node at (1, -6.25) {\tiny 1};
			\node at (2, -6.25) {\tiny 2};
			\node at (3, -6.25) {\tiny ...};
			\node at (4.2, -6.25) {\tiny p+2};
			\node at (1, 0.5) {$t$};
			\node at (2.1, 0.5) {$t_1..$};
			\node at (3.1, 0.5) {$t_p$};
			\node at (4, 0.5) {$t$};
		\end{tikzpicture}.
	\end{equation}
	
	In fact, the number of generators of the braid group may be reduced 
	to only two: 
	
\begin{lemma}\label{lem:braid-presentation}
	For any $N \ge 3$, $B_N$ is generated by the two elements
	$\sigma_1$ and
	$$
		\Theta_N := \sigma_1 \sigma_2 \ldots \sigma_{N-1},
	$$
	and furthermore
	\begin{equation}\label{eq:braid-presentation}
		\sigma_k = \Theta_N^{k-1} \sigma_1 \Theta_N^{1-k},
	\end{equation}
	for any $k \in \{1,2,\ldots,N-1\}$.
\end{lemma}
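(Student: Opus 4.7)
The plan is to establish the conjugation identity $\Theta_N \sigma_k \Theta_N^{-1} = \sigma_{k+1}$ for each $k \in \{1, \ldots, N-2\}$ directly from the braid relations \eqref{eq:braid-relation-1}--\eqref{eq:braid-relation-2}, then iterate it to obtain \eqref{eq:braid-presentation}, which immediately shows that $\sigma_1$ and $\Theta_N$ suffice to generate $B_N$.

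First, I would compute $\Theta_N \sigma_k = \sigma_1 \sigma_2 \cdots \sigma_{N-1} \sigma_k$ and use the far-commutation relation \eqref{eq:braid-relation-1} to push $\sigma_k$ leftwards past $\sigma_{N-1}, \sigma_{N-2}, \ldots, \sigma_{k+2}$ (all of which have index differing from $k$ by at least $2$). This yields
\begin{equation*}
\Theta_N \sigma_k = \sigma_1 \sigma_2 \cdots \sigma_{k-1}\, \sigma_k \sigma_{k+1} \sigma_k \, \sigma_{k+2} \cdots \sigma_{N-1}.
\end{equation*}
Applying the braid relation \eqref{eq:braid-relation-2} in the form $\sigma_k \sigma_{k+1} \sigma_k = \sigma_{k+1} \sigma_k \sigma_{k+1}$ to the middle triple, and then using \eqref{eq:braid-relation-1} again to commute $\sigma_{k+1}$ past $\sigma_1, \ldots, \sigma_{k-1}$, I arrive at
\begin{equation*}
\Theta_N \sigma_k = \sigma_{k+1}\, \sigma_1 \sigma_2 \cdots \sigma_{k-1} \sigma_k \sigma_{k+1} \sigma_{k+2} \cdots \sigma_{N-1} = \sigma_{k+1} \Theta_N,
\end{equation*}
which is precisely $\sigma_{k+1} = \Theta_N \sigma_k \Theta_N^{-1}$.

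Next, a simple induction on $k$ starting from $k=1$ delivers \eqref{eq:braid-presentation}: if $\sigma_k = \Theta_N^{k-1} \sigma_1 \Theta_N^{1-k}$, then $\sigma_{k+1} = \Theta_N \sigma_k \Theta_N^{-1} = \Theta_N^k \sigma_1 \Theta_N^{-k}$. Since every generator $\sigma_1, \ldots, \sigma_{N-1}$ of $B_N$ is now expressed as a word in $\sigma_1$ and $\Theta_N^{\pm 1}$, the subgroup they generate is all of $B_N$.

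The only real work is the one-line computation showing $\Theta_N \sigma_k = \sigma_{k+1} \Theta_N$; everything else is bookkeeping. The mild subtlety is keeping track of which indices commute with which, but this is entirely governed by \eqref{eq:braid-relation-1} together with one application of \eqref{eq:braid-relation-2}, so I do not anticipate any genuine obstacle.
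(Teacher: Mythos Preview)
Your proof is correct and is exactly the ``straightforward by induction'' argument the paper alludes to (the paper gives no details beyond citing \cite[Theorem~3.5]{Weinberger-15}). The key conjugation identity $\Theta_N \sigma_k \Theta_N^{-1} = \sigma_{k+1}$ that you establish from the braid relations is precisely the inductive step needed.
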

	The proof is straightforward by induction;
	see e.g.\ \cite[Theorem~3.5]{Weinberger-15}. 
	One should note however that the group relations expressed in terms of 
	$\{\sigma_1,\Theta_N\}$ are more complicated.

	As a consequence of this alternative presentation 
	we have also the following similarity property.
	We write as usual $\tilde U \sim U$ if $\tilde U = S^{-1}US$ for some isometry $S$.

\begin{lemma}\label{lem:similar-generators}
	For any $N \ge 2$, any representation $\rho\colon B_N \to \sGL(V)$ 
	and any pair $j,k \in \{1,2,\ldots,N-1\}$ of generators, we have similarity
	\begin{equation}\label{eq:similar-generators}
		\rho(\sigma_j) \sim \rho(\sigma_k).
	\end{equation}
	Furthermore, any unitary representation $\rho\colon B_N \to \sU(V)$
	is completely reducible into orthogonal blocks of irreducible unitary representations
	$\rho = \oplus_n \rho_n$, and if $\rho_n(\sigma_j)$ is scalar 
	(a scalar multiple of the identity)
	for some $j$ then so is $\rho_n$.
\end{lemma}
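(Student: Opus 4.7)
The first assertion is essentially a direct consequence of Lemma~\ref{lem:braid-presentation}. The plan is to apply the presentation \eqref{eq:braid-presentation} after passing through $\rho$: since homomorphism preserves products and inverses,
$$\rho(\sigma_k) = \rho(\Theta_N)^{k-1}\rho(\sigma_1)\rho(\Theta_N)^{1-k},$$
so $\rho(\sigma_k)\sim\rho(\sigma_1)$ via the invertible operator $S_k:=\rho(\Theta_N)^{1-k}\in\sGL(V)$. By transitivity, $\rho(\sigma_j)\sim\rho(\sigma_k)$ for all pairs $j,k$. (The case $N=2$ is trivial since there is only one generator.) In the unitary setting $\rho(\Theta_N)$ is unitary, so the intertwining operator is actually an isometry, matching the terminology used in the statement.

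Complete reducibility of a unitary representation is a standard fact: if $W\subset V_n$ is $\rho$-invariant, then for any $g\in B_N$ and $v\in W^\perp$, $w\in W$ we have $\langle \rho(g)v,w\rangle = \langle v,\rho(g)^{-1}w\rangle=0$ since $\rho(g^{-1})w\in W$. Thus $W^\perp$ is also invariant, and by iterating on the (finite-dimensional) summands we obtain an orthogonal decomposition $\rho=\bigoplus_n\rho_n$ into irreducibles.

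Finally, suppose $\rho_n(\sigma_j)=\lambda\, \mathrm{id}$ for some $j$. By the first part there is an invertible $S$ with $\rho_n(\sigma_k)=S^{-1}\rho_n(\sigma_j)S=S^{-1}(\lambda\,\mathrm{id})S=\lambda\,\mathrm{id}$, so \emph{every} generator of $B_N$ maps to the same scalar $\lambda\,\mathrm{id}$. Since the $\sigma_k$ generate $B_N$ as a group, $\rho_n(g)$ is a scalar multiple of the identity for every $g\in B_N$. Every subspace of $V_n$ is therefore $\rho_n$-invariant; by irreducibility this forces $\dim V_n=1$, i.e.\ $\rho_n$ is itself a one-dimensional scalar representation.

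There is no real obstacle; the only point requiring a moment of care is the passage from ``scalar on the generators'' to ``$\rho_n$ itself is scalar,'' which uses both the group-theoretic fact that the $\sigma_k$ generate $B_N$ and the irreducibility of $\rho_n$ (to upgrade ``every operator in the image is scalar'' to ``the underlying space is one-dimensional'').
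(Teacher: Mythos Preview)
Your proof is correct and, for the similarity statement, is exactly the paper's argument via the presentation \eqref{eq:braid-presentation}. For the remaining two claims the paper simply cites \cite[Propositions~4.2 and 4.7]{Weinberger-15}, whereas you supply the standard direct arguments (invariance of orthogonal complements under unitary actions, and conjugation preserving scalars); this is a fine and self-contained substitute.
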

\begin{proof}
	This is a trivial consequence of the presentation \eqref{eq:braid-presentation}:
	$$
		\rho(\sigma_k) = \rho(\Theta_N^{k-1}) \rho(\sigma_1) \rho(\Theta_N^{k-1})^{-1},
	$$
	hence $\rho(\sigma_1) \sim \rho(\sigma_2) \sim \ldots \sim \rho(\sigma_{N-1})$.
	For the second statement, see e.g. 
	\cite[Propositions~4.2 and 4.7]{Weinberger-15}.
\end{proof}

\begin{definition}[Abelian vs.\ non-abelian representations]
	Naturally, we call a representation $\rho\colon B_N \to \mathrm{GL}(V)$ 
	\keyword{abelian} if $[\rho(\sigma_j),\rho(\sigma_k)] = 0$ 
	for all $j,k \in \{1,\ldots,N-1\}$,
	and \keyword{non-abelian} otherwise.
\end{definition}

	Now let us look for concrete representations $\rho(\sigma_j)$ 
	on the spaces of our algebraic anyon models.
	We will use the following notation for the standard fusion/splitting spaces:
	\begin{equation}
	\begin{aligned}
		V^{a,t^n}_c & =
		\Span \left\{ \fswide{t,t}{a,b_1,b_2} \cdots \fswider{t,t}{b_{n-2},b_{n-1},c} : \text{ for all possible $b_j$} \right\} \\
		V^{a,t^n}_* & =
		\Span \left\{ \fswide{t,t}{a,b_1,b_2} \cdots \fswider{t,t}{b_{n-2},b_{n-1},b_n} : \text{ for all possible $b_j$} \right\} \\
		V^{1,t^n}_* & =
		\Span \left\{ \fswide{t,t}{1,t,b_1} \cdots \fswider{t,t}{b_{n-3},b_{n-2},b_{n-1}} : \text{ for all possible $b_j$} \right\} = V^{t,t^{n-1}}_*\\
		V^{*,t^n}_* & =
		\Span \left\{ \fswide{t,t}{b_1,b_2,b_3} \cdots \fswider{t,t}{b_{n-1},b_{n},b_{n+1}} : \text{ for all possible $b_j$} \right\}
	\end{aligned}
	\end{equation}
	Note how the only thing that differs are the fixing of left and right charge sectors.
	Also note the chain of inclusions 
	$V^{1,t^n}_c \subseteq V^{1,t^n}_* \subseteq V^{*,t^n}_*$,
	and
	\begin{equation}\label{eq:splitting-sectors}
		V^{*,t^n}_* = \bigoplus_{\substack{a \in \cL\\c \in a\times t^n}} V^{a,t^n}_c.
	\end{equation}

\begin{definition}[\keyword{Splitting basis representation}]\label{def:splitting-rep}
	Denote $\rho_n(\sigma_j)$ the 
	representation of the braid generator
	$\sigma_j$ on the standard splitting space $V^{*,t^n}_*$,
	mapping a basis state to
	\begin{equation}\label{eq:splitting-reps}
	\begin{tikzpicture}[scale=0.4,font=\footnotesize,anchor=mid,baseline={([yshift=-.5ex]current bounding box.center)}]
		\braid[number of strands=7] s_4^{-1};
		\node at (1, 0.5) {$t$};
		\node at (2, 0.5) {$t$};
		\node at (3, 0.5) {$\ldots$};
		\node at (4, 0.5) {$t$};
		\node at (5, 0.5) {$t$};
		\node at (6, 0.5) {$\ldots$};
		\node at (7, 0.5) {$t$};
		\draw (0, -1.5) to (8, -1.5);
		\node at (0.5, -2.0) {$b_1$};
		\node at (1.5, -2.0) {$b_2$};
		\node at (3, -2.0) {$\ldots$};
		\node at (4.5, -2.0) {$b_{j+1}$};
		\node at (6, -2.0) {$\ldots$};
		\node at (7.9, -2.0) {$b_{n+1}$};
	\end{tikzpicture},
	\quad \text{ for all possible intermediate $b_k \in \cL$}, \ 
	k \in \{1,\ldots,n+1\}.
	\end{equation}
	Hence the $j$th anyon goes in front of the $j+1$th anyon, 
	while in $\rho_n(\sigma_j^{-1})$ the $j+1$th anyon goes in front of the $j$th anyon.
	The action for an arbitrary braid 
	$b = \sigma_{k_1}^{n_{k_1}} \ldots \sigma_{k_m}^{n_{k_m}}$
	is defined similarly by composition upwards
	(multiplication on the left corresponding to braiding later in time), 
	thus obtaining a $D_n$-dimensional unitary representation 
	$\rho_n\colon B_n \to \sU(V^{*,t^n}_*)$,
	with
	$$
		D_n = \dim V^{*,t^n}_* 
		= \sum_{b_2, \ldots, b_{n-1}} N_{b_1 t}^{b_2} N_{b_2 t}^{b_3} \ldots N_{b_n t}^{b_{n+1}}
	$$
	(thus, typically,
	$D_n \sim d_t^n$, as $n \to \infty$).
\end{definition}
	
	Subrepresentations may then also be obtained by restricting to 
	$V^{a,t^n}_c$ 
	for allowed values of $a = b_1$ and $c = b_{n+1}$,
	$\rho_n = \oplus \rho^{a,t^n}_c$.
	(Note that braiding does not change these values, cf. Lemma~\ref{lem:Up-braid} below.)

\begin{theorem}\label{thm:standard-rep-R}
	We have for the splitting basis representation $\rho_n$ with anyon type $t$
	\begin{equation}
		\rho_n(\sigma_j) \sim \bigoplus R^{tt},
	\end{equation}
	where the sum is over the labels $a,c \in \cL$ of subrepresentations $V^{a,t^n}_c$
	in the decomposition \eqref{eq:splitting-sectors} 
	and the intermediate possibilities $d$ of splitting spaces $V^{att}_d$.
	
\end{theorem}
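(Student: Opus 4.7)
The plan is to first invoke Lemma~\ref{lem:similar-generators} to reduce to the single generator $\sigma_1$, since $\rho_n(\sigma_j) \sim \rho_n(\sigma_1)$ for every $j$. Next I would decompose the splitting space according to the fixed outer charges, $V^{*,t^n}_* = \bigoplus_{a,c} V^{a,t^n}_c$, noting that neither the leftmost label $a = b_1$ nor the total charge $c = b_{n+1}$ is altered by braiding (both sit on the boundary of any diagram and no strand crosses them); hence it suffices to analyse $\rho^{a,t^n}_c(\sigma_1)$ sector by sector.

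The key observation is that $\sigma_1$ only touches the first two anyon strands, so in the standard left-associated basis it only modifies the two leading tensor factors $V^{a,t}_{b_2} \otimes V^{b_2,t}_{b_3}$, while the remaining factors $V^{b_3,t}_{b_4} \otimes \cdots \otimes V^{b_n,t}_c$ are left untouched and merely contribute multiplicities. Summing over the internal charge $b_2$ identifies the affected part with the three-strand splitting space $V^{a,t,t}_{b_3} = \bigoplus_{b_2} V^{a,t}_{b_2} \otimes V^{b_2,t}_{b_3}$, on which the action of $\sigma_1$ is by construction the $B$-operator $B^{a,t,t}_{b_3}$ of Definition~\ref{def:B}. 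Since $B = F R F^{-1}$, the unitary $F$-matrix $F^{att}_{b_3}$ intertwines $B^{a,t,t}_{b_3}$ with the block-diagonal operator $\bigoplus_{d \in t \times t} R^{tt}_d$ acting on $\bigoplus_d V^{a,d}_{b_3} \otimes V^{t,t}_d$. Reassembling over the intermediate $b_3$ (and over the inert rest of the tree, which supplies multiplicities) and then over all outer sectors $(a,c)$ yields exactly the claimed similarity $\rho_n(\sigma_1) \sim \bigoplus R^{tt}$.

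The only technical obstacle, modest in nature, is the bookkeeping of multiplicity indices when the model is not multiplicity free: each $V^{ab}_c$ then carries an $N_{ab}^c$-dimensional channel index, and one must implement the basis change in the form \eqref{eq:F-matrix-mult} so that the conjugation by $F^{att}_{b_3}$ genuinely block-diagonalises $B^{a,t,t}_{b_3}$ into copies of the unitary matrices $R^{tt}_d \in \sU(N_{tt}^d)$ on each $V^{tt}_d$. Once this is set up, the theorem follows directly from the definitions of the $F$, $R$ and $B$ operators together with the block decomposition by the outer charges fixed by the braid.
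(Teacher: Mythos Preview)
Your proposal is correct and follows essentially the same approach as the paper's proof: reduce to $\sigma_1$ via Lemma~\ref{lem:similar-generators}, decompose into sectors $V^{a,t^n}_c$, observe that $\sigma_1$ acts as $B^{att}_d \otimes \1$ on the appropriate tensor decomposition, and then use $B = FRF^{-1}$ to obtain $B \sim R$. The paper's proof is terser (three lines), but the content is the same; your additional remarks on the multiplicity bookkeeping are helpful elaborations rather than a different argument.
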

\begin{proof}
	On each $V^{a,t^n}_c$ in \eqref{eq:splitting-sectors} use
	Lemma~\ref{lem:similar-generators} and the definition of the B-operator,
	$B \sim R$,
	$$
		\rho_c^{a,t^n}(\sigma_1) 
		= \bigoplus_{d \in a \times t \times t} B^{att}_d \otimes \1_c^{d,t^{n-2}}
		\sim \bigoplus_{d \in a \times t \times t} R^{tt}_* \otimes \1_c^{d,t^{n-2}}.
	$$
\end{proof}

	In the case that consecutive braids are considered, 
	the following diagram, allowing for an arbitrary central element $c$,
	will be particularly important.
	
\begin{lemma}\label{lem:Up-braid}
	Useful diagram:
  \begin{equation}\label{eq:exchange-t1-t2}
    \begin{tikzpicture}[scale=0.4,font=\footnotesize,anchor=mid,baseline={([yshift=-.5ex]current bounding box.center)}]
      \braid s_1^{-1} s_2^{-1} s_1^{-1};
      \node at (1, 0.5) {$t_1$};
      \node at (2, 0.5) {$c$};
      \node at (3, 0.5) {$t_2$};
      \draw (0, -3.5) to (4, -3.5);
      \node at (0.5, -4.25) {$a$};
      \node at (1.5, -4.25) {$b$};
      \node at (2.5, -4.25) {$d$};
      \node at (3.5, -4.25) {$e$};
    \end{tikzpicture} =
    \sum_{f,g,h} B^{act_2}_{d;fb} \, B^{ft_1t_2}_{e;gd} B^{at_1c}_{g;hf} \,
    \begin{tikzpicture}[scale=0.4,font=\footnotesize,anchor=mid,baseline={([yshift=-.5ex]current bounding box.center)}]
      \draw (1, 0) to (1, -1);
      \draw (2, 0) to (2, -1);
      \draw (3, 0) to (3, -1);
      \node at (1, 0.5) {$t_1$};
      \node at (2, 0.5) {$c$};
      \node at (3, 0.5) {$t_2$};
      \draw (0, -1) to (4, -1);
      \node at (0.5, -1.75) {$a$};
      \node at (1.5, -1.75) {$h$};
      \node at (2.5, -1.75) {$g$};
      \node at (3.5, -1.75) {$e$};
    \end{tikzpicture}.
  \end{equation}
\end{lemma}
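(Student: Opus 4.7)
My strategy is to reduce the three-crossing braid to a product of three applications of the $B$-operator (Definition~\ref{def:B}) by resolving the crossings one at a time, starting with the crossing closest to the fusion line and working upward through the diagram.

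First I would read off the strand labels at the fusion line. Since the braid word $\sigma_1^{-1}\sigma_2^{-1}\sigma_1^{-1}$, read top to bottom in the diagram, sends the top endpoints $(t_1,c,t_2)$ to bottom endpoints $(t_2,c,t_1)$, the initial internal labels $a,b,d,e$ encode the fusions $a+t_2\to b$, $b+c\to d$, $d+t_1\to e$. Absorbing the bottom $\sigma_1^{-1}$ first: this crossing acts on a local three-strand subtree of type $V^{act_2}_d$ with intermediate label $b$, so by Definition~\ref{def:B} it contributes a factor $B^{act_2}_{d;fb}$ and replaces $b$ by a new intermediate $f$ sitting between $a+c$. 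Next, the middle $\sigma_2^{-1}$ acts on the subtree $V^{ft_1t_2}_e$ with intermediate $d$, giving a factor $B^{ft_1t_2}_{e;gd}$ and replacing $d$ by a new intermediate $g$. Finally, the top $\sigma_1^{-1}$ acts on the subtree $V^{at_1c}_g$ with intermediate $f$, producing $B^{at_1c}_{g;hf}$ and replacing $f$ by $h$. Throughout, the outer labels $a$ and $e$ are preserved, because braiding conserves both the total incoming and outgoing charges.

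Multiplying the three $B$-symbols and summing over the new intermediates $f,g,h$ then reproduces exactly the right-hand side of \eqref{eq:exchange-t1-t2}. The only bookkeeping I anticipate needing care is to track, at each horizontal slice of the diagram, which pair of adjacent strands is being crossed and what the corresponding local three-strand subtree looks like; all other data is carried by Definition~\ref{def:B}. No pentagon or hexagon identities enter the argument, since the entire reduction is a threefold iteration of the $B$-operator, not a rearrangement of its defining ingredients.
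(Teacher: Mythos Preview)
Your proposal is correct and follows essentially the same approach as the paper's proof: both resolve the three crossings one at a time, starting from the one nearest the fusion line and working upward, each step invoking Definition~\ref{def:B} on the appropriate local subtree to produce in order the factors $B^{act_2}_{d;fb}$, $B^{ft_1t_2}_{e;gd}$, and $B^{at_1c}_{g;hf}$. Your identification of the strand labels at the fusion line and the intermediate charges at each stage matches the paper exactly.
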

\begin{proof}  
	We can replace
	$\fs[1]{c,\,t_2}{a,b,d} = \sum_f B^{act_2}_{d;fb} \fs{c,\,t_2}{a,f,d}$
	and continue:
  \begin{equation}
    \begin{aligned}
      \begin{tikzpicture}[scale=0.4,font=\footnotesize,anchor=mid,baseline={([yshift=-.5ex]current bounding box.center)}]
        \braid s_1^{-1} s_2^{-1} s_1^{-1};
        \node at (1, 0.5) {$t_1$};
        \node at (2, 0.5) {$c$};
        \node at (3, 0.5) {$t_2$};
        \draw (0, -3.5) to (4, -3.5);
        \node at (0.5, -4.25) {$a$};
        \node at (1.5, -4.25) {$b$};
        \node at (2.5, -4.25) {$d$};
        \node at (3.5, -4.25) {$e$};
      \end{tikzpicture} 
      &=
      \sum_{f} B^{act_2}_{d;fb}
      \begin{tikzpicture}[scale=0.4,font=\footnotesize,anchor=mid,baseline={([yshift=-.5ex]current bounding box.center)}]
        \braid s_1^{-1} s_2^{-1};
        \node at (1, 0.5) {$t_1$};
        \node at (2, 0.5) {$c$};
        \node at (3, 0.5) {$t_2$};
        \draw (0, -2.5) to (4, -2.5);
        \node at (0.5, -3.25) {$a$};
        \node at (1.5, -3.25) {$f$};
        \node at (2.5, -3.25) {$d$};
        \node at (3.5, -3.25) {$e$};
      \end{tikzpicture} 
      =
      \sum_{f} B^{act_2}_{d;fb}
      \sum_{g} B^{ft_1t_2}_{e;gd}
      \begin{tikzpicture}[scale=0.4,font=\footnotesize,anchor=mid,baseline={([yshift=-.5ex]current bounding box.center)}]
        \braid s_1^{-1};
        \draw (3, 0) to (3, -1.5);
        \node at (1, 0.5) {$t_1$};
        \node at (2, 0.5) {$c$};
        \node at (3, 0.5) {$t_2$};
        \draw (0, -1.5) to (4, -1.5);
        \node at (0.5, -2.25) {$a$};
        \node at (1.5, -2.25) {$f$};
        \node at (2.5, -2.25) {$g$};
        \node at (3.5, -2.25) {$e$};
      \end{tikzpicture} \\
      &=
      \sum_{f} B^{act_2}_{d;fb}
      \sum_{g} B^{ft_1t_2}_{e;gd}
      \sum_{h} B^{at_1c}_{g;hf}
      \begin{tikzpicture}[scale=0.4,font=\footnotesize,anchor=mid,baseline={([yshift=-.5ex]current bounding box.center)}]
        \draw (1, 0) to (1, -1);
        \draw (2, 0) to (2, -1);
        \draw (3, 0) to (3, -1);
        \node at (1, 0.5) {$t_1$};
        \node at (2, 0.5) {$c$};
        \node at (3, 0.5) {$t_2$};
        \draw (0, -1) to (4, -1);
        \node at (0.5, -1.75) {$a$};
        \node at (1.5, -1.75) {$h$};
        \node at (2.5, -1.75) {$g$};
        \node at (3.5, -1.75) {$e$};
      \end{tikzpicture}.
    \end{aligned}
  \end{equation}
\end{proof}

\subsection{Exchange operators}\label{sec:phases-ops}

	We may now generalize the simple exchange of two anyons,
	allowing for other anyons to be topologically enclosed in the exchange loop.
	Given a representation $\rho\colon B_n \to \sU(\cF)$, 
	with $n \ge p+2$, one may thus consider the 
	\keyword{two-anyon exchange operator}\footnote{\cite{FanGar-10} call the matrices
	$\rho(\sigma_j)$ `elementary braiding operations' (EBO), 
	and we may refer to these as \emph{elementary} exchange operators,
	however we have not
	seen a name convention given to $U_p$ in the literature.}
	corresponding to the braid \eqref{eq:exchange-braid},
	\begin{equation}\label{eq:def-U_p}
		U_p := \rho(\Sigma_p) = \rho(\sigma_1) \rho(\sigma_2) \ldots \rho(\sigma_p) \rho(\sigma_{p+1}) \rho(\sigma_p) \ldots \rho(\sigma_2) \rho(\sigma_1).
	\end{equation}
	However we will also define a more general class of exchange operators
	acting on the standard splitting states.
	
\begin{definition}[Exchange operator]
  Let $U_{t,c,t}$ denote counterclockwise exchange of a pair of anyons of type 
  $t$ around one anyon of type $c$ acting on any 
  corresponding standard splitting state, that is,
  according to \eqref{eq:exchange-t1-t2},
  \begin{equation}\label{eq:def-U_tct}
    U_{t,c,t} :
    \fs{t,c,t}{a,b,d,e}
    \mapsto
    \begin{tikzpicture}[scale=0.4,font=\footnotesize,anchor=mid,baseline={([yshift=-.5ex]current bounding box.center)}]
      \braid s_1^{-1} s_2^{-1} s_1^{-1};
      \node at (1, 0.5) {$t$};
      \node at (2, 0.5) {$c$};
      \node at (3, 0.5) {$t$};
      \draw (0, -3.5) to (4, -3.5);
      \node at (0.5, -4.25) {$a$};
      \node at (1.5, -4.25) {$b$};
      \node at (2.5, -4.25) {$d$};
      \node at (3.5, -4.25) {$e$};
    \end{tikzpicture} =
    \sum_{f,g,h} B^{act}_{d;fb} \, B^{ftt}_{e;gd} \, B^{atc}_{g;hf} \,
    \begin{tikzpicture}[scale=0.4,font=\footnotesize,anchor=mid,baseline={([yshift=-.5ex]current bounding box.center)}]
      \draw (1, 0) to (1, -1);
      \draw (2, 0) to (2, -1);
      \draw (3, 0) to (3, -1);
      \node at (1, 0.5) {$t$};
      \node at (2, 0.5) {$c$};
      \node at (3, 0.5) {$t$};
      \draw (0, -1) to (4, -1);
      \node at (0.5, -1.75) {$a$};
      \node at (1.5, -1.75) {$h$};
      \node at (2.5, -1.75) {$g$};
      \node at (3.5, -1.75) {$e$};
    \end{tikzpicture}.
  \end{equation}
	More generally, if the exchange is done in such a way as to enclose 
	$p$ anyons of type $t_1, \ldots t_p$, 
	we write $U_{t,\{t_1,\ldots,t_p\},t}$,
 \begin{equation}\label{eq:def-U_tpqt}
    U_{t,\{t_1,\ldots,t_p\},t} :
	\fswideflex{t,t_1,t_2}{a_1,a_2,a_3,a_4}{2}
	\ldots
	\fswideflex{t_p,t}{a_{p+1},a_{p+2},a_{p+3}}{3}
    \mapsto
    \begin{tikzpicture}[scale=0.4,font=\footnotesize,anchor=mid,baseline={([yshift=-.5ex]current bounding box.center)}]
      \braid s_1^{-1} s_2^{-1} s_3^{-1} s_2^{-1} s_1^{-1};
      \node at (1, 0.5) {$t$};
      \node at (2, 0.5) {$t_1..$};
      \node at (3, 0.5) {$t_p$};
      \node at (4, 0.5) {$t$};
      \draw (0, -5.5) to (5, -5.5);
      \node at (0.5, -6.25) {$a_1$};
      \node at (1.5, -6.25) {$a_2$};
      \node at (2.7, -6.25) {$\ldots$};
      \node at (4.7, -6.25) {$a_{p+3}$};
    \end{tikzpicture}.
  \end{equation}
	As usual, if an anyon $t$ is repeated $p$ times we write $t^p$,
	and if the anyon type $t$ is understood we denote for brevity 
	$U_p := U_{t,t^p,t}$.
	This is the representation of \eqref{eq:def-U_p} obtained from 
	the splitting basis representation $\rho_{n=p+2}$ in Definition~\ref{def:splitting-rep}.
\end{definition}
	
	We shall only need to compute $U_p$ up to isometries ---
	in fact we only need to know its eigenvalues.
	Therefore, as it turns out, the order of the inner anyons is not important in
	$U_{t,\{t_1,\ldots,t_p\},t}$, only their fusion products.

\begin{theorem}[{Reduction of exchange operator \cite{Qvarfordt-17}}]\label{thm:gen-exchange}
  The exchange operator 
  is given up to similarity by
  \begin{equation}
    U_{t,\{t_1,\ldots,t_p\},t} \ \sim \ \bigoplus_{c \in t_1 \times \ldots \times t_p} U_{t,c,t}
  \end{equation}
  where the direct sum is taken over all fusion channels of the fusion 
  $t_1 \times t_2 \times \ldots \times t_p$. 
  That is, $c$ is a possible result of the fusion $t_1 \times \ldots \times t_p$, 
  counted with multiplicity.
\end{theorem}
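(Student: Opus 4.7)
The plan is to perform a change of basis via $F$-moves that re-associates the standard splitting tree so that the $p$ enclosed anyons $t_1,\ldots,t_p$ fuse together first into an intermediate charge $c$, before being combined with the outer $t$'s and the ambient charges. Once this regrouping is carried out, the braid $\Sigma_p = \sigma_1 \sigma_2 \cdots \sigma_{p+1} \cdots \sigma_2 \sigma_1$ will act non-trivially only on the three-anyon factor $V^{tct}$ and trivially (as the identity) on the multiplicity space of fusion channels for $t_1\times\cdots\times t_p \to c$.

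First, I would use iterated applications of the $F$-operator (Definition~\ref{def:F}) to rewrite the standard splitting basis
$$
\fswideflex{t,t_1,t_2}{a_1,a_2,a_3,a_4}{2}\ldots \fswideflex{t_p,t}{a_{p+1},a_{p+2},a_{p+3}}{3}
$$
in terms of a new basis where the $p$ middle anyons $t_1,\ldots,t_p$ are first fused into a single intermediate charge $c$ (with some multiplicity index $\mu$ running over the fusion channels of $t_1\times\cdots\times t_p \to c$), and only then does $c$ get combined with the outer $t$'s and the surrounding structure. Concretely, this means applying a sequence of $F$-moves that turns the left-associated tree into one of the form
$$
V^{a_1,t,c,t,\ldots}_{*} \otimes V^{t_1 t_2 \cdots t_p}_c,
$$
summed over $c$. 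Each $F$-move is a unitary isomorphism by Definition~\ref{def:F}, so the composition $S$ is an isometry, and the whole change of basis is a similarity transformation.

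Next, I would observe that under this regrouping, the braid $\Sigma_p$ acts geometrically as a counterclockwise exchange of the two outer $t$'s around the \emph{single composite anyon} of total charge $c$: the inner strands $t_1,\ldots,t_p$ are not braided among themselves in $\Sigma_p$, they travel together as a bundle. In the new basis this means that $S^{-1}\rho(\Sigma_p)S$ preserves the decomposition over $c$ and $\mu$, and on each summand it acts as $U_{t,c,t}$ on the $V^{t,c,t}$ part and as the identity on the multiplicity-$\mu$ part and on the ambient factor. To verify this rigorously, I would expand $\rho(\Sigma_p)$ via the $B$-matrix formula \eqref{eq:def-U_tpqt} and, after conjugation by the $F$-moves, repeatedly apply the Pentagon equation \eqref{eq:pentagon} (which encodes precisely the compatibility between re-association and braiding, together with the Hexagon equations \eqref{eq:hexagon}--\eqref{eq:hexagon-clockwise}) to collapse the $p+1$ consecutive $B$-factors into a single three-anyon exchange $U_{t,c,t}$ acting on each block.

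Collecting these blocks and noting that each $c \in t_1\times\cdots\times t_p$ appears with multiplicity $\dim V^{t_1\cdots t_p}_c$ (so the direct sum is indeed taken with multiplicity) gives the stated similarity
$$
U_{t,\{t_1,\ldots,t_p\},t} \ \sim \ \bigoplus_{c \in t_1\times \cdots \times t_p} U_{t,c,t}.
$$
The main obstacle is the bookkeeping in the third step: one must check that after the change of basis, the cumulative $F$-conjugated product of $B$-matrices really does decouple into a single $B$-matrix acting on $V^{tct}$, with no residual mixing between different multiplicity indices $\mu$. This is geometrically obvious from the braid picture (the enclosed cluster is topologically a single object), but the algebraic verification requires a careful induction on $p$, using Pentagon/Hexagon at each step to move $F$'s past $B$'s. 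An alternative, cleaner route is to argue directly at the level of the braided tensor category: the functor assigning to a braid its action on splitting spaces is natural with respect to fusion, so the claim reduces to the statement that in the braid group $B_{p+2}$, the element $\Sigma_p$ is in the image of the cabling homomorphism $B_3 \hookrightarrow B_{p+2}$ that replaces the middle strand by a $p$-cable, combined with the decomposition \eqref{eq:fusion-space-decomp-n}.
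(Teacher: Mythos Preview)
Your proposal is correct and follows essentially the same approach as the paper: an $F$-move change of basis that fuses the enclosed $t_1,\ldots,t_p$ into a single intermediate charge $c$, followed by the observation that $\Sigma_p$ then acts as $U_{t,c,t}$ on each block and trivially on the multiplicity factor. The paper carries out your ``cleaner route'' directly, using the diagrammatic braiding/fusion identity of Figure~\ref{fig:fusions} (right) --- which is exactly the hexagon equation --- to slide the braid past the fused subtree, rather than expanding into $B$-symbols and invoking Pentagon/Hexagon algebraically; your cabling remark captures the same geometric content.
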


\begin{proof}
	We make a change of basis at the base of the braid 
	and use the diagrammatic rules.
	Namely, using a sequence of F-moves, where $t_2$ moves to the same tree as $t_1$,
	$t_3$ to the same as $t_1$ and $t_2$, etc:
\begin{equation}\label{eq:F U_p basis}
  \begin{tikzpicture}[scale=0.18,font=\footnotesize,anchor=mid]
    \node at (-58, -4) {$a_1$};
    \node at (-55, -4) {$a_2$};
    \node at (-43, -4) {$a_{p+3}$};
    \draw (-57, -1) to (-57, -3); 
    \node at (-57, 0) {$t$};
    \draw (-54, -1) to (-54, -3); 
    \node at (-54, 0) {$t_1$};
    \node at (-51, -1.5) {$\cdots$};
    \draw (-48, -1) to (-48, -3); 
    \node at (-48, 0) {$t_p$};
    \draw (-45, -1) to (-45, -3); 
    \node at (-45, 0) {$t$};
    \draw (-60, -3) to (-42, -3); 
    \node[font=\large] at (-38, -1.5) {$\xmapsto{F^{-1}}$};
    \draw (-30, -1) to (-30, -3); 
    \node at (-30, 0) {$t$};
    \draw (-28, 1) to [bend left=-30] (-27, -1);
    \draw (-26, 1) to [bend left=30]  (-27, -1);
    \draw (-27, -1) to (-27, -3); 
    \node at (-28, 2) {$t_1$};
    \node at (-26, 2) {$t_2$};
    \draw (-24, -1) to (-24, -3); 
    \node at (-24, 0) {$t_3$};
    \node at (-21, -1.5) {$\cdots$};
    \draw (-18, -1) to (-18, -3); 
    \node at (-18, 0) {$t_p$};
    \draw (-15, -1) to (-15, -3); 
    \node at (-15, 0) {$t$};
    \draw (-33, -3) to (-12, -3); 
    \node[font=\large] at (-8, -1.5) {$\xmapsto{F^{2-p}}$};
    \draw (-2, 6) to [bend left=-30] (-1, 4);
    \draw ( 0, 6) to [bend left=30]  (-1, 4);
    \draw (-1, 4) to [bend left=-30] (0, 2);
    \draw (1, 4) to [bend left=30]  (0, 2);
    \draw (0, 2) to [bend left=-30] (1, 0);
    \draw (2, 2) to [bend left=30]  (1, 0);
    \draw (1, 0) to (1, -3); 
    \node at (1+0.75, -0.75) {$c$};
    \node[font=\scriptsize] at (3, 3) {$t_p$};
    \node[font=\scriptsize] at (2.4, 5.4) {$t_{p-1}$};
    \node[font=\scriptsize] at (1, 7.5) {$t_{p-2}$};
    \node[rotate=20] at (-2.25, 7) {$\vdots$};
    \draw (-5, -3) to (7, -3); 
    \draw (-2.5, 0) to (-2.5, -3); 
    \draw (4.5, 0) to (4.5, -3); 
    \node at (-2.5+0.75, -0.75) {$t$};
    \node at (4.5+0.75, -0.75) {$t$};
  \end{tikzpicture}
\end{equation}
	The states are then enumerated by the intermediate total charge
	$c \in t_1 \times t_2 \times \ldots \times t_p$.
	In other words we have decomposed the space
	\begin{equation}
		V^{a_1 t t_1 \ldots t_p t}_{a_{p+3}}
		\cong \bigoplus_{c \,\in\, t_1 \times t_2 \times \ldots \times t_p} 
		\widetilde{V}^{t_1 \ldots t_p}_c \otimes V^{a_1 t c t}_{a_{p+3}},
	\end{equation}
	where $\widetilde{V}$ is the space of states formed of successive splitting
	of the intermediate charge $c$ (cf. \eqref{eq:fusion-space-decomp-3})
	into the anyons $t_p,t_{p-1}, \ldots, t_1$, forming a standard splitting tree
	(in this case also known as a `staircase configuration').
	
	We may then use the diagrammatic identities of
	Figure~\ref{fig:fusions} 
	(which actually translate to the hexagon equations 
	\cite[p.196-197]{MooSei-89}, \cite[p.81-82]{Kitaev-06})
	to place the braid at the base state, thereby reducing the operation
	to $U_{t,c,t}$ acting only on the factor $V^{a_1,tct}_{a_{p+3}}$.
	Finally we may use the inverse isomorphism F-moves to transform the
	resulting unbraided state into the original standard form.
	Hence, in our shorthand notation,
	$$
		U_{t,\{t_1,\ldots,t_p\},t} = (F)^{p-1} 
			\bigl( \oplus_c U_{t,c,t} \bigr) (F^{-1})^{p-1}.
	$$
	
	The procedure is similar to the one depicted in \cite[Figure~33]{MooSei-89}
	corresponding to the first $p$ simple braids or first B-move of \eqref{eq:def-U_tct}.
\end{proof}

	We illustrate this theorem with some important examples of anyon models.

\subsection{Abelian anyons}\label{sec:phases-abelian}

	In an arbitrary abelian anyon model with elementary exchange phase
	$\rho(\sigma_j) = e^{i\alpha\pi}$,
	we have the one-dimensional exchange operator (complex phase factor)
	$$
		U_p = U_{\balpha,\balpha^p,\balpha} 
		= \rho(\sigma_1) \ldots \rho(\sigma_p) \rho(\sigma_{p+1}) \rho(\sigma_p) \ldots \rho(\sigma_1)
		= e^{i(1+2p)\alpha\pi}.
	$$
	An alternate derivation is using $B^{abc}_{a+b+c} = R^{bc}_{b+c}$:
	$$
		U_p = U_{\balpha,\balpha^p,\balpha} 
		= B^{1 \balpha^p \balpha}_{\balpha^{p+1};\balpha^p \balpha}
			B^{\balpha^p \balpha \balpha}_{\balpha^{p+2};\balpha^{p+1} \balpha^{p+1}} 
			B^{1 \balpha \balpha^p}_{\balpha^{p+1};\balpha \balpha^p}
		= R^{\balpha^p \balpha}_{\balpha^{p+1}} R^{\balpha \balpha}_{\balpha^2} R^{\balpha \balpha^p}_{\balpha^{p+1}} 
		= e^{i(1+2p)\alpha\pi}.
	$$
	Note that in Theorem~\ref{thm:gen-exchange} 
	we have always a unique result of fusion $\balpha^p = \balpha^{\times p}$.

\begin{figure}
  \centering
  \begin{tikzpicture}[scale=2]
    \draw[->] (-1.25,0) -- (1.25,0) node[below] {Re};
    \draw[->] (0,-1.25) -- (0,1.25) node[right] {Im};
    \draw (0,0) circle (1);
    \node[font=\large] at ({cos(deg(1*pi/5))}, {sin(deg(1*pi/5))}) {$\bullet$};
    \node[above right] at ({cos(deg(1*pi/5))}, {sin(deg(1*pi/5))}) {$e^{i\pi/5}$ {\small ($p \equiv 3$)}};
    \node[font=\large] at ({cos(deg(3*pi/5))}, {sin(deg(3*pi/5))}) {$\bullet$};
    \node[above left] at ({cos(deg(3*pi/5))}, {sin(deg(3*pi/5))}) {$e^{i3\pi/5}$ {\small ($p \equiv 0$)}};
    \node[font=\large] at ({cos(deg(5*pi/5))}, {sin(deg(5*pi/5))}) {$\bullet$};
    \node[above left] at ({cos(deg(5*pi/5))}, {sin(deg(5*pi/5))}) {$e^{i\pi}$ {\small ($p \equiv 2$)}};
    \node[font=\large] at ({cos(deg(7*pi/5))}, {sin(deg(7*pi/5))}) {$\bullet$};
    \node[below left] at ({cos(deg(7*pi/5))}, {sin(deg(7*pi/5))}) {$e^{i7\pi/5}$ {\small ($p \equiv 4$)}};
    \node[font=\large] at ({cos(deg(9*pi/5))}, {sin(deg(9*pi/5))}) {$\bullet$};
    \node[below right] at ({cos(deg(9*pi/5))}, {sin(deg(9*pi/5))}) {$e^{i9\pi/5}$ {\small ($p \equiv 1$)}};
  \end{tikzpicture}
  \caption{Exchange phases $U_p$, $p$ modulo $5$, on the complex unit circle 
	for abelian anyons with $\alpha=3/5$.}
  \label{fig:abelian-eigenvals}
\end{figure}

	More generally we could also have considered a \emph{reducible} abelian model:
	$$
		\rho(\sigma_j) = S^{-1} \diag(e^{i\alpha_1\pi},\ldots,e^{i\alpha_D\pi}) S,
		\quad j = 1,\ldots,N-1, \quad S \in \sU(D),
	$$
	i.e. $\rho \sim \rho^{\alpha_1} \oplus \cdots \oplus \rho^{\alpha_D}$,
	for which
	$$
		U_p = S^{-1} \diag(e^{i(1+2p)\alpha_1\pi},\ldots,e^{i(1+2p)\alpha_D\pi}) S,
		\quad p \ge 0.
	$$
	
	Note that this formula also provides a way to test if a given model is non-abelian:
	
\begin{proposition}\label{prop:test-non-abelian}
	Let $\rho\colon B_N \to \sU(D)$ be a representation with
	$\rho(\sigma_j) \sim \diag(e^{i\alpha_1\pi},\ldots\\,e^{i\alpha_D\pi})$
	(the similarity here is not necessarily uniform in $j$).
	If, for some $p \ge 1$,
	$$
		U_p \nsim \diag(e^{i(1+2p)\alpha_1\pi},\ldots,e^{i(1+2p)\alpha_D\pi})
	$$
	then $\rho$ is non-abelian, i.e. there exist $j \neq k$ so that
	$[\rho(\sigma_j),\rho(\sigma_k)] \neq 0$.
\end{proposition}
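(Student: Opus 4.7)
The plan is to argue by contrapositive: assuming that $\rho$ is abelian, i.e.\ $[\rho(\sigma_j),\rho(\sigma_k)] = 0$ for all $j,k$, I would show that
$$
	U_p \sim \diag(e^{i(1+2p)\alpha_1\pi}, \ldots, e^{i(1+2p)\alpha_D\pi})
$$
for every $p \ge 1$. Note that for $U_p$ with $p \ge 1$ to even be defined we need $N \ge 3$, so that the braid relation \eqref{eq:braid-relation-2} is available.

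The first step is to exploit \eqref{eq:braid-relation-2}. Substituting the assumed commutativity into
$$
	\rho(\sigma_j)\rho(\sigma_{j+1})\rho(\sigma_j) = \rho(\sigma_{j+1})\rho(\sigma_j)\rho(\sigma_{j+1})
$$
one moves one factor of $\rho(\sigma_j)$ through on the left and one factor of $\rho(\sigma_{j+1})$ through on the right to obtain
$$
	\rho(\sigma_j)^2 \rho(\sigma_{j+1}) = \rho(\sigma_j)\rho(\sigma_{j+1})^2.
$$
Since $\rho(\sigma_j)$ and $\rho(\sigma_{j+1})$ are unitary and hence invertible, cancellation forces $\rho(\sigma_j) = \rho(\sigma_{j+1})$ for each $j \in \{1,\ldots,N-2\}$. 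Iterating, all generators are represented by a common unitary $U := \rho(\sigma_1)$.

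The second step is then immediate: the exchange operator \eqref{eq:def-U_p} is a product of $2p+1$ generator factors, so it collapses to $U_p = U^{2p+1}$. By the hypothesis $U \sim D_0 := \diag(e^{i\alpha_1\pi}, \ldots, e^{i\alpha_D\pi})$, and conjugating by the same intertwiner yields
$$
	U_p \sim D_0^{2p+1} = \diag(e^{i(1+2p)\alpha_1\pi}, \ldots, e^{i(1+2p)\alpha_D\pi}),
$$
completing the contrapositive.

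There is no real technical obstacle here; the only point worth flagging is that the braid relation \eqref{eq:braid-relation-2}, in the presence of commutativity, rigidly forces every generator of $B_N$ ($N \ge 3$) to act by the \emph{same} matrix under $\rho$. This rigidity is precisely what makes the spectral test in the proposition sharp: a single $p$ for which the spectrum of $U_p$ fails to match the na\"ive multiset $\{e^{i(1+2p)\alpha_k\pi}\}_{k=1}^D$ already witnesses non-abelianness of $\rho$.
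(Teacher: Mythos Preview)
Your proof is correct and follows the same contrapositive route as the paper. The paper's one-line proof (``immediate from the above observation for arbitrary abelian $\rho$'') appeals to the displayed formula for $U_p$ in the reducible abelian model, which already presupposes that all $\rho(\sigma_j)$ coincide; you make this step explicit by deriving $\rho(\sigma_j)=\rho(\sigma_{j+1})$ from the braid relation \eqref{eq:braid-relation-2} under commutativity, which is exactly the content the paper leaves implicit.
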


	The proof is immediate from the above observation for arbitrary abelian $\rho$.

\subsection{Fibonacci anyons}\label{sec:phases-fib}

	In the case of Fibonacci anyons the exchange operator is
	\begin{equation}
		U_p = U_{\tau,\tau^p,\tau}.
	\end{equation}
	Since $p$ anyons may fuse either to $1$ or $\tau$ with multiplicities
	\begin{equation}
		\tau^p = \Fib(p-1) \cdot 1 + \Fib(p) \cdot \tau,
	\end{equation}
	by Theorem~\ref{thm:gen-exchange} we have
	\begin{equation}
		U_{\tau,\tau^p,\tau} \sim U_{\tau,1,\tau}^{\oplus \Fib(p-1)} \oplus U_{\tau,\tau,\tau}^{\oplus \Fib(p)}.
	\end{equation}
	We compute the respective blocks in this decomposition:

\subsubsection{\texorpdfstring{$U_{\tau,1,\tau}$}{U\_{\tau,1,\tau}}}

	Begin by observing $\fs{\tau,1,\tau}{} = \fs{\tau,\tau}{}$.
	This exchange operator acts on the space with (ordered) basis
	\begin{equation}
	  \left\{
	    \fs{\tau,\tau}{1,\tau,1}, \ 
	    \fs{\tau,\tau}{1,\tau,\tau}, \ 
	    \fs{\tau,\tau}{\tau,\tau,1}, \ 
	    \fs{\tau,\tau}{\tau,1,\tau},
	    \fs{\tau,\tau}{\tau,\tau,\tau}
	  \right\}.
	\end{equation}
	By \eqref{eq:B-fib-1}--\eqref{eq:B-fib-2} we find
	(let us again drop some indices for simplicity)
	\begin{equation}
	\begin{aligned}
      U_0 = \rho_2(\sigma_1) &=
      \begin{bmatrix}
        B^{1\tau\tau}_{1;\tau\tau} & & & & \\
        & B^{1\tau\tau}_{\tau;\tau\tau} & & & \\
        & & B^{\tau\tau\tau}_{1;\tau\tau} & & \\
        & & & B^{\tau\tau\tau}_{\tau;11} & B^{\tau\tau\tau}_{\tau;1\tau} \\
        & & & B^{\tau\tau\tau}_{\tau;\tau1} & B^{\tau\tau\tau}_{\tau;\tau\tau}
      \end{bmatrix} 
      =
      \begin{bmatrix}
        R^{\tau\tau}_1 & & & & \\
        & R^{\tau\tau}_\tau & & & \\
        & & R^{\tau\tau}_\tau & \\
        & & & B_{1 1} & B_{1 \tau} \\
        & & & B_{\tau 1} & B_{\tau \tau}
      \end{bmatrix} \\
      &= R^{\tau\tau}_1 \oplus R^{\tau\tau}_\tau \oplus R^{\tau\tau}_\tau \oplus B.
	\end{aligned}
	\end{equation}
	Taking the first conjugation convention \eqref{eq:R-fib-A}, 
	we observe by $B \sim R$ that 
	\begin{equation}\label{eq:spec-B-fib}
		\mathrm{spec}(B) =:
		\sigma(B) = \sigma(R) = \{ R^{\tau\tau}_1, R^{\tau\tau}_\tau \}
		= \{e^{4\pi i/5}, e^{-3\pi i/5}\},
	\end{equation}
	and thus
	\begin{equation}\label{eq:spec-U0-fib}
		\sigma(U_0) 
		= \bigl\{ R^{\tau\tau}_1 \text{(mult.\,2)}, R^{\tau\tau}_\tau \text{(mult.\,3)} \bigr\}
		= \bigl\{ e^{4\pi i/5} \text{(mult.\,2)}, e^{-3\pi i/5} \text{(mult.\,3)} \bigr\}.
	\end{equation}

\subsubsection{\texorpdfstring{$U_{\tau,\tau,\tau}$}{U\_{\tau,\tau,\tau}}}

	This exchange operator acts on the space with (ordered) basis
	\begin{equation}
	  \left\{
	    \fs{\tau,\tau,\tau}{1,\tau,\tau,1}, \ 
	    \fs{\tau,\tau,\tau}{1,\tau,1,\tau},
	    \fs{\tau,\tau,\tau}{1,\tau,\tau,\tau}, \ 
	    \fs{\tau,\tau,\tau}{\tau,1,\tau,1},
	    \fs{\tau,\tau,\tau}{\tau,\tau,\tau,1}, \ 
	    \fs{\tau,\tau,\tau}{\tau,1,\tau,\tau},
	    \fs{\tau,\tau,\tau}{\tau,\tau,1,\tau},
	    \fs{\tau,\tau,\tau}{\tau,\tau,\tau,\tau}
	  \right\}.
	\end{equation}
	Grouping the basis into different charge sectors, we obtain
	\begin{alignat*}{10}
		\rho_3(\sigma_1) &= R^{\tau\tau}_\tau &{}\oplus{}& R &{}\oplus{}& B &{}\oplus{}&
	    \begin{bmatrix}
	      B_{11} & & B_{1\tau} \\
	      & R^{\tau\tau}_{\tau} \\
	      B_{\tau1} & & B_{\tau\tau}
	    \end{bmatrix} \\
	    \rho_3(\sigma_2) &= R^{\tau\tau}_\tau &{}\oplus{}& B &{}\oplus{}& R &{}\oplus{}&
	    \begin{bmatrix}
	      R^{\tau\tau}_\tau \\
	      & B_{11} & B_{1\tau} \\
	      & B_{\tau1} & B_{\tau\tau}
	    \end{bmatrix}
	\end{alignat*}
	\begin{align*}
      U_1 = \rho_3(\sigma_1) \rho_3(\sigma_2) \rho_3(\sigma_1) 
      = \left( R^{\tau\tau}_\tau \right)^3 \oplus \left( RBR \right) \oplus \left( BRB \right)
      \oplus M, 
	\end{align*}
	where $BRB \sim RBR = RFRFR$ and
	\begin{equation}
		M = \begin{bmatrix}
			R^{\tau\tau}_\tau \left(B_{11}\right)^2+B_{1\tau} B_{\tau1} B_{\tau\tau} & B_{1\tau} B_{\tau 1} R^{\tau\tau}_\tau & B_{1\tau} \left(B_{\tau\tau}\right)^2+B_{11} B_{1\tau} R^{\tau\tau}_\tau \\
			B_{1\tau} B_{\tau 1} R^{\tau\tau}_\tau & B_{11} \left(R^{\tau\tau}_\tau\right)^2 & B_{1\tau} B_{\tau\tau} R^{\tau\tau}_\tau \\
			B_{\tau 1} \left(B_{\tau\tau}\right)^2+B_{11} B_{\tau 1} R^{\tau\tau}_\tau & B_{\tau 1} B_{\tau\tau} R^{\tau\tau}_\tau & \left(B_{\tau\tau}\right)^3+B_{1\tau} B_{\tau 1} R^{\tau\tau}_\tau \\
		\end{bmatrix}.
	\end{equation}
	
	In this case we obtain (as can be verified simply by computer algebra)
	\begin{align*}
		(R^{\tau\tau}_\tau)^3 &= e^{\pi i/5}, \\
		\sigma(RBR) = \sigma(BRB) &= \{ e^{4\pi i/5}, e^{-\pi i/5} \}, \\
		\sigma(M) &= \{ e^{4\pi i/5}, e^{\pi i/5}, e^{-\pi i/5} \},
	\end{align*}
	and hence
	\begin{equation}\label{eq:spec-U1-fib}
		\sigma(U_1) 
		= \bigl\{ 
			e^{4\pi i/5} \text{(mult.\,3)}, 
			e^{\pi i/5} \text{(mult.\,2)},
			e^{-\pi i/5} \text{(mult.\,3)}
			\bigr\}.
	\end{equation}

\begin{corollary}[{\cite{Qvarfordt-17}}]\label{cor:Up-fib}
  The eigenvalues of $U_p$ for Fibonacci anyons are
  (cf.\ Figure~\ref{fig:fib-eigenvals})
  \begin{equation}
      \sigma(U_p) =
      \left\{
      \begin{array}{rl}
        e^{i4\pi/5}  &\text{\rm with multiplicity }\,\; \Fib(p+3), \\
        e^{i\pi/5}   &\text{\rm with multiplicity }\,\; 2\Fib(p), \\
        e^{-i\pi/5}  &\text{\rm with multiplicity }\,\; 3\Fib(p), \\
        e^{-i3\pi/5} &\text{\rm with multiplicity }\,\; 3\Fib(p-1)
      \end{array}
      \right\}.
  \end{equation}
  Reducing to $U_p|_{V^{\tau^{2+p}}_*}$
  yields the same eigenvalues with multiplicity 
  $\Fib(p+1)$, $\Fib(p)$, $\Fib(p)$, respectively $\Fib(p-1)$.
\end{corollary}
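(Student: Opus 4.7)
My plan is to apply Theorem~\ref{thm:gen-exchange} to reduce $\sigma(U_p) = \sigma(U_{\tau,\tau^p,\tau})$ to the two base cases $U_0 = U_{\tau,1,\tau}$ and $U_1 = U_{\tau,\tau,\tau}$, whose spectra are recorded in \eqref{eq:spec-U0-fib} and \eqref{eq:spec-U1-fib}. First I would prove by induction on $p$, using the sole nontrivial Fibonacci rule $\tau \times \tau = 1 + \tau$, that $\tau^p = \Fib(p-1) \cdot 1 + \Fib(p) \cdot \tau$ in the fusion algebra; the base cases agree with the tabulated Fibonacci values, and the induction step reproduces the defining recurrence $\Fib(p) = \Fib(p-1) + \Fib(p-2)$. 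Theorem~\ref{thm:gen-exchange} then delivers the similarity
$$U_p \sim U_{\tau,1,\tau}^{\oplus \Fib(p-1)} \oplus U_{\tau,\tau,\tau}^{\oplus \Fib(p)}.$$
Since fusion with the vacuum is trivial, $V^{*,\tau 1 \tau}_* \cong V^{*,\tau^2}_*$ canonically and $U_{\tau,1,\tau}$ is literally $U_0$.

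With this reduction the multiplicities follow by summing contributions. The eigenvalues $e^{-3\pi i/5}$, $e^{\pi i/5}$ and $e^{-\pi i/5}$ each appear in only one of the two base blocks, yielding the stated multiplicities $3\Fib(p-1)$, $2\Fib(p)$ and $3\Fib(p)$ respectively. The eigenvalue $e^{4\pi i/5}$ appears in both, so its multiplicity is
$$2\Fib(p-1) + 3\Fib(p) = 2\bigl(\Fib(p-1)+\Fib(p)\bigr) + \Fib(p) = 2\Fib(p+1) + \Fib(p) = \Fib(p+3),$$
where the last equality uses the Fibonacci recurrence twice.

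For the restriction $U_p|_{V^{\tau^{2+p}}_*} = U_p|_{V^{1,\tau^{p+2}}_*}$, I would rerun the same argument on the $a_1 = 1$ charge sector, replacing $V^{*,\tau c\tau}_*$ by $V^{1,\tau c\tau}_*$ throughout the proof of Theorem~\ref{thm:gen-exchange}. For $c = 1$ this is the $2$-dimensional subspace spanned by $\fs{\tau,\tau}{1,\tau,1}$ and $\fs{\tau,\tau}{1,\tau,\tau}$, on which the block decomposition of $U_0$ restricts to $R^{\tau\tau}_1 \oplus R^{\tau\tau}_\tau$, contributing $e^{4\pi i/5}$ and $e^{-3\pi i/5}$ each with multiplicity $1$. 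The main technical point is the case $c = \tau$: one has to identify which of the blocks in the $8$-dimensional decomposition $U_1 = (R^{\tau\tau}_\tau)^3 \oplus (RBR) \oplus (BRB) \oplus M$ restrict to the $3$-dimensional sector $V^{1,\tau^3}_*$. Using the basis ordering of Section~\ref{sec:phases-fib}, a short direct check on $\fs{\tau,\tau,\tau}{1,\tau,\tau,1}$ and on $\fs{\tau,\tau,\tau}{1,\tau,1,\tau}, \fs{\tau,\tau,\tau}{1,\tau,\tau,\tau}$ shows that the sector $(a_1,a_4) = (1,1)$ gives the $1$-dimensional $(R^{\tau\tau}_\tau)^3$ block (whence the eigenvalue $e^{\pi i/5}$) and the sector $(a_1,a_4) = (1,\tau)$ gives the $RBR$ block with spectrum $\{e^{4\pi i/5}, e^{-\pi i/5}\}$. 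Assembling contributions yields multiplicities $\Fib(p-1) + \Fib(p) = \Fib(p+1)$, $\Fib(p-1)$, $\Fib(p)$, $\Fib(p)$ for the four respective eigenvalues, completing the reduced statement.
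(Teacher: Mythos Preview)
Your proposal is correct and follows essentially the same approach as the paper: the paper states the fusion multiplicity $\tau^p = \Fib(p-1)\cdot 1 + \Fib(p)\cdot \tau$ and the decomposition $U_p \sim U_{\tau,1,\tau}^{\oplus \Fib(p-1)} \oplus U_{\tau,\tau,\tau}^{\oplus \Fib(p)}$ via Theorem~\ref{thm:gen-exchange} immediately before computing the two base spectra \eqref{eq:spec-U0-fib} and \eqref{eq:spec-U1-fib}, and then records the corollary without further argument. Your write-up supplies the details the paper leaves implicit (the Fibonacci identity $2\Fib(p-1)+3\Fib(p)=\Fib(p+3)$ and the block identification for the $a_1=1$ sector), but the strategy is identical.
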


\begin{figure}
  \centering
  \begin{tikzpicture}[scale=2]
    \draw[->] (-1.25,0) -- (1.25,0) node[below] {Re};
    \draw[->] (0,-1.25) -- (0,1.25) node[right] {Im};
    \draw (0,0) circle (1);
    \node[font=\large] at ({cos(deg(pi/5))}, {sin(deg(pi/5))}) {$\bullet$};
    \node[above right] at ({cos(deg(pi/5))}, {sin(deg(pi/5))}) {$e^{i\pi/5}$ {\small ($p \ge 1$)}};
    \node[font=\large] at ({cos(-deg(pi/5))}, {sin(-deg(pi/5))}) {$\bullet$};
    \node[below right] at ({cos(-deg(pi/5))}, {sin(-deg(pi/5))}) {$e^{-i\pi/5}$ {\small ($p \ge 1$)}};
    \node[font=\large] at ({cos(deg(4*pi/5))}, {sin(deg(4*pi/5))}) {$\bullet$};
    \node[above left]  at ({cos(deg(4*pi/5))}, {sin(deg(4*pi/5))}) {$e^{i4\pi/5}$ {\small ($p \ge 0$)}};
    \node[font=\large] at ({cos(-deg(3*pi/5))}, {sin(-deg(3*pi/5))}) {$\bullet$};
    \node[below left]  at ({cos(-deg(3*pi/5))}, {sin(-deg(3*pi/5))}) {$e^{-i3\pi/5}$ {\small ($p \neq 1$)}};
  \end{tikzpicture}
  \caption{Eivenvalues of $U_p$ on the complex unit circle for Fibonacci anyons
  	(in one convention, another is obtained by complex conjugation).}
  \label{fig:fib-eigenvals}
\end{figure}

	Note that the above expressions are valid for all $p \ge 0$ 
	since $\Fib(p)$ is defined also for negative $p$.

\subsection{Ising anyons}\label{sec:phases-ising}

	Let us compute the exchange matrices
	$U_{\sigma,\{\sigma^p,\psi^q\},\sigma}$ resp. $U_{\psi,\{\sigma^p,\psi^q\},\psi}$.

	The fusion multiplicities are
	\begin{equation}
	    \sigma^{2n+1} = 2^{n} \cdot \sigma, \qquad
	    \sigma^{2n}   = 2^{n-1} \cdot (1 + \psi),
	\end{equation}
	and hence by Theorem~\ref{thm:gen-exchange} we have
	(note that these act on differently sized spaces)
	\begin{equation}
	  \begin{aligned}
	    U_{\sigma,\sigma^{2n+1},\sigma} &\sim U_{\sigma,\sigma,\sigma}^{\oplus 2^n}, \\
	    U_{\sigma,\sigma^{2n},\sigma} &\sim U_{\sigma,1,\sigma}^{\oplus 2^{n-1}} \oplus U_{\sigma,\psi,\sigma}^{\oplus 2^{n-1}}, \\
	    U_{\psi,\sigma^{2n+1},\psi} &\sim U_{\psi,\sigma,\psi}^{\oplus 2^n}, \\
	    U_{\psi,\sigma^{2n},\psi} &\sim U_{\psi,1,\psi}^{\oplus 2^{n-1}} \oplus U_{\psi,\psi,\psi}^{\oplus 2^{n-1}}.
	  \end{aligned}
	\end{equation}
	The exchanges involving $\psi$ are easiest to compute as they are all diagonal:
	\begin{equation}
	  \begin{aligned}
	    U_{\psi,1,\psi} &= R^{\psi\psi}_1 \1 = -\1, \\
	    U_{\psi,\psi,\psi} &= R^{\psi\psi}_1 \1 = -\1, \\
	    U_{\psi,\sigma,\psi} &= R^{\sigma\psi}_\sigma R^{\psi\psi}_1 R^{\sigma\psi}_\sigma \1 = \1, \\
	  \end{aligned}
	\end{equation}
	i.e. for all valid labels $a,b,c,d \in \{1,\sigma,\psi\}$,
	\begin{equation}\label{eq:psi-braiding}
		\fs[1]{\psi,\psi}{a,b,c} = - \fs{\psi,\psi}{a,b,c}, \qquad 
		\begin{tikzpicture}[scale=0.4,font=\footnotesize,anchor=mid,baseline={([yshift=-.5ex]current bounding box.center)}]
			\braid s_1^{-1} s_2^{-1} s_1^{-1};
			\node at (1, 0.5) {$\psi$};
			\node at (2, 0.5) {$\psi$};
			\node at (3, 0.5) {$\psi$};
			\draw (0, -3.5) to (4, -3.5);
			\node at (0.5, -4.25) {$a$};
			\node at (1.5, -4.25) {$b$};
			\node at (2.5, -4.25) {$c$};
			\node at (3.5, -4.25) {$d$};
		\end{tikzpicture} =
		-\fs{\psi,\psi,\psi}{a,b,c,d}, \qquad
		\begin{tikzpicture}[scale=0.4,font=\footnotesize,anchor=mid,baseline={([yshift=-.5ex]current bounding box.center)}]
			\braid s_1^{-1} s_2^{-1} s_1^{-1};
			\node at (1, 0.5) {$\psi$};
			\node at (2, 0.5) {$\sigma$};
			\node at (3, 0.5) {$\psi$};
			\draw (0, -3.5) to (4, -3.5);
			\node at (0.5, -4.25) {$a$};
			\node at (1.5, -4.25) {$b$};
			\node at (2.5, -4.25) {$c$};
			\node at (3.5, -4.25) {$d$};
		\end{tikzpicture} =
		+\fs{\psi,\sigma,\psi}{a,b,c,d}.
	\end{equation}
	For the ones involving braiding of $\sigma$'s we obtain off-diagonal matrices however:

\subsubsection{\texorpdfstring{$U_{\sigma,1,\sigma}$}{U\_{\sigma,1,\sigma}}}

	This exchange operator acts on the space 
	$\fs{\sigma,1,\sigma}{} = \fs{\sigma,\sigma}{}$
	with (ordered) basis
	\begin{equation}
	  \left\{
	    \fs{\sigma,\sigma}{1,\sigma,1},
	    \fs{\sigma,\sigma}{1,\sigma,\psi},
	    \fs{\sigma,\sigma}{\sigma,1,\sigma},
	    \fs{\sigma,\sigma}{\sigma,\psi,\sigma},
	    \fs{\sigma,\sigma}{\psi,\sigma,1},
	    \fs{\sigma,\sigma}{\psi,\sigma,\psi}
	  \right\}.
	\end{equation}
	By \eqref{eq:B-ising-1}--\eqref{eq:B-ising-3} we have immediately
	\begin{equation}
	  \begin{aligned}
	    U_{\sigma,1,\sigma} = \rho_2(\sigma_1)
	    &= \begin{bmatrix}
	      R^{\sigma\sigma}_{1} \\
	      & R^{\sigma\sigma}_{\psi} \\
	      & & B^{\sigma\sigma\sigma}_{\sigma;11} & B^{\sigma\sigma\sigma}_{\sigma;1\psi} \\
	      & & B^{\sigma\sigma\sigma}_{\sigma;\psi 1} & B^{\sigma\sigma\sigma}_{\sigma;\psi\psi} \\
	      & & & & R^{\sigma\sigma}_\psi \\
	      & & & & & R^{\sigma\sigma}_1  
	    \end{bmatrix}
	    = R \oplus B \oplus \tilde{R},
	  \end{aligned}  
	\end{equation}
	where $\tilde{R} := R_\psi \oplus R_1$.
	Since $\sigma(B) = \sigma(R) = \sigma(\tilde{R}) = \{R_1,R_\psi\}$,
	we find
	\begin{equation}
		\sigma( U_{\sigma,1,\sigma} )
		= \bigl\{ 
			R_1 \text{(mult.\,3)}, 
			R_\psi \text{(mult.\,3)}
			\bigr\}
		= \bigl\{ 
			e^{-\pi i/8} \text{(mult.\,3)}, 
			e^{3\pi i/8} \text{(mult.\,3)}
			\bigr\}.
	\end{equation}

\subsubsection{\texorpdfstring{$U_{\sigma,\sigma,\sigma}$}{U\_{\sigma,\sigma,\sigma}}}

	This exchange operator acts on the space with (ordered) basis
	\begin{equation}
	  \begin{aligned}
	    \left\{
	      \fs{\sigma,\sigma,\sigma}{1,\sigma,1,\sigma},
	      \fs{\sigma,\sigma,\sigma}{1,\sigma,\psi,\sigma},
	      \fs{\sigma,\sigma,\sigma}{\sigma,1,\sigma,1},
	      \fs{\sigma,\sigma,\sigma}{\sigma,\psi,\sigma,1},
	      \fs{\sigma,\sigma,\sigma}{\sigma,1,\sigma,\psi},
	      \fs{\sigma,\sigma,\sigma}{\sigma,\psi,\sigma,\psi},
	      \fs{\sigma,\sigma,\sigma}{\psi,\sigma,1,\sigma},
	      \fs{\sigma,\sigma,\sigma}{\psi,\sigma,\psi,\sigma}
	    \right\}.
	  \end{aligned}
	\end{equation}
	Again, by grouping into charge sectors,
	
\centerline{
  \begin{minipage}{\linewidth}
    \begin{align*}
      \rho_3(\sigma_1)
      &= \begin{bmatrix}
        R^{\sigma\sigma}_1 \\
        & R^{\sigma\sigma}_\psi \\
        & & B^{\sigma\sigma\sigma}_{\sigma;11} & B^{\sigma\sigma\sigma}_{\sigma;1\psi} \\
        & & B^{\sigma\sigma\sigma}_{\sigma;\psi 1} & B^{\sigma\sigma\sigma}_{\sigma;\psi\psi} \\
        & & & & B^{\sigma\sigma\sigma}_{\sigma;11} & B^{\sigma\sigma\sigma}_{\sigma;1\psi} \\
        & & & & B^{\sigma\sigma\sigma}_{\sigma;\psi 1} & B^{\sigma\sigma\sigma}_{\sigma;\psi\psi} \\
        & & & & & & R^{\sigma\sigma}_\psi \\
        & & & & & & & R^{\sigma\sigma}_1 
      \end{bmatrix} 
      = R \oplus B \oplus B \oplus \tilde{R}, \\
      \rho_3(\sigma_2)
      &= \begin{bmatrix}
        B^{\sigma\sigma\sigma}_{\sigma;11} & B^{\sigma\sigma\sigma}_{\sigma;1\psi} \\
        B^{\sigma\sigma\sigma}_{\sigma;\psi 1} & B^{\sigma\sigma\sigma}_{\sigma;\psi\psi} \\
        & & R^{\sigma\sigma}_1 \\
        & & & R^{\sigma\sigma}_\psi \\
        & & & & R^{\sigma\sigma}_\psi \\
        & & & & & R^{\sigma\sigma}_1 \\ 
        & & & & & & B^{\sigma\sigma\sigma}_{\sigma;11} & B^{\sigma\sigma\sigma}_{\sigma;1\psi} \\
        & & & & & & B^{\sigma\sigma\sigma}_{\sigma;\psi 1} & B^{\sigma\sigma\sigma}_{\sigma;\psi\psi}
      \end{bmatrix}
      = B \oplus R \oplus \tilde{R} \oplus B.
    \end{align*}
  \end{minipage}
}
	\noindent
	Hence
	\begin{equation}
		U_{\sigma,\sigma,\sigma} 
		= \rho_3(\sigma_1) \rho_3(\sigma_2) \rho_3(\sigma_1) 
		= RBR \oplus BRB \oplus B\tilde{R}B \oplus \tilde{R}B\tilde{R},
	\end{equation}
	and using that $BRB \sim RBR = RFRFR$ and that (by computation)
	\begin{equation}
		\sigma(BRB) = \sigma(B\tilde{R}B) = \sigma(\tilde{R}B\tilde{R}) = \{ e^{-\pi i/8}, e^{7\pi i/8} \},
	\end{equation}
	we obtain
	\begin{equation}
		\sigma( U_{\sigma,\sigma,\sigma} )
		= \bigl\{ 
			e^{-\pi i/8} \text{(mult.\,4)}, 
			e^{7\pi i/8} \text{(mult.\,4)}
			\bigr\}.
	\end{equation}

\subsubsection{\texorpdfstring{$U_{\sigma,\psi,\sigma}$}{U\_{\sigma,\psi,\sigma}}}

	This exchange operator acts on the space with (ordered) basis
	\begin{equation}
	  \begin{aligned}
	    \left\{
	      \fs{\sigma,\psi,\sigma}{1,\sigma,\sigma,1},
	      \fs{\sigma,\psi,\sigma}{1,\sigma,\sigma,\psi},
	      \fs{\sigma,\psi,\sigma}{\psi,\sigma,\sigma,1},
	      \fs{\sigma,\psi,\sigma}{\psi,\sigma,\sigma,\psi},
	      \fs{\sigma,\psi,\sigma}{\sigma,1,\psi,\sigma},
	      \fs{\sigma,\psi,\sigma}{\sigma,\psi,1,\sigma}
	    \right\}.
	  \end{aligned}
	\end{equation}
	Here we use Lemma~\ref{lem:Up-braid} on each of these basis states, 
	and \eqref{eq:B-ising-1}-\eqref{eq:B-ising-3}, to find
    \begin{align*}
      U_{\sigma,\psi,\sigma}
      &= \diag \bigl( 
      	B^{1\psi\sigma}_{\sigma;\psi\sigma} B^{\psi\sigma\sigma}_{1;\sigma\sigma} B^{1\sigma\psi}_{\sigma;\sigma\psi}, 
      	B^{1\psi\sigma}_{\sigma;\psi\sigma} B^{\psi\sigma\sigma}_{\psi;\sigma\sigma} B^{1\sigma\psi}_{\sigma;\sigma\psi},
      	B^{\psi\psi\sigma}_{\sigma;1\sigma} B^{1\sigma\sigma}_{1;\sigma\sigma} B^{\psi\sigma\psi}_{\sigma;\sigma 1},
      	B^{\psi\psi\sigma}_{\sigma;1\sigma} B^{1\sigma\sigma}_{\psi;\sigma\sigma} B^{\psi\sigma\psi}_{\sigma;\sigma 1}
      	\bigr) \oplus \tilde{B} \\
      &= \diag \bigl( e^{-5\pi i/8}, e^{7\pi i/8}, e^{7\pi i/8}, e^{-5\pi i/8} \bigr) \oplus \tilde{B},
    \end{align*}
	where the last two states yield the off-diagonal part
	$$
		\tilde{B} := \begin{bmatrix}
			B^{\sigma\psi\sigma}_{\psi;\sigma 1} B^{\sigma\sigma\sigma}_{\sigma;\psi\psi} B^{\sigma\sigma\psi}_{\psi;1\sigma} & B^{\sigma\psi\sigma}_{\psi;\sigma 1} B^{\sigma\sigma\sigma}_{\sigma;1\psi} B^{\sigma\sigma\psi}_{1;\psi\sigma} \\
			B^{\sigma\psi\sigma}_{1;\sigma\psi} B^{\sigma\sigma\sigma}_{\sigma;\psi 1} B^{\sigma\sigma\psi}_{\psi;1\sigma} & B^{\sigma\psi\sigma}_{1;\sigma\psi} B^{\sigma\sigma\sigma}_{\sigma;1 1} B^{\sigma\sigma\psi}_{1;\psi\sigma}
		\end{bmatrix}
		= \frac{1}{\sqrt{2}} \begin{bmatrix}
			e^{-7\pi i/8} & e^{-3\pi i/8} \\
			e^{-3\pi i/8} & e^{-7\pi i/8}
		\end{bmatrix},
	$$
	with $\sigma(\tilde{B}) = \{ e^{-5\pi i/8}, e^{7\pi i/8} \}$.
	Hence
	\begin{equation}
		\sigma( U_{\sigma,\psi,\sigma} )
		= \bigl\{ 
			e^{-5\pi i/8} \text{(mult.\,3)}, 
			e^{7\pi i/8} \text{(mult.\,3)}
			\bigr\}.
	\end{equation}
	
	In summary,

\begin{figure}
  \centering
  \begin{tikzpicture}[scale=2]
    \draw[->] (-1.25,0) -- (1.25,0) node[below] {Re};
    \draw[->] (0,-1.25) -- (0,1.25) node[right] {Im};
    \draw (0,0) circle (1);
    \node[font=\large] at ({cos(-deg(5*pi/8))}, {sin(-deg(5*pi/8))}) {$\bullet$};
    \node[below left] at ({cos(-deg(5*pi/8))}, {sin(-deg(5*pi/8))}) {$e^{-i5\pi/8}$ {\small (even $p \ge 2$)}};
    \node[font=\large] at ({cos(-deg(pi/8))}, {sin(-deg(pi/8))}) {$\bullet$};
    \node[below right] at ({cos(-deg(pi/8))}, {sin(-deg(pi/8))}) {$e^{-i\pi/8}$ {\small (all $p$)}};
    \node[font=\large] at ({cos(deg(3*pi/8))}, {sin(deg(3*pi/8))}) {$\bullet$};
    \node[above right] at ({cos(deg(3*pi/8))}, {sin(deg(3*pi/8))}) {$e^{i3\pi/8}$ {\small (even $p \ge 0$)}};
    \node[font=\large] at ({cos(deg(7*pi/8))}, {sin(deg(7*pi/8))}) {$\bullet$};
    \node[above left] at ({cos(deg(7*pi/8))}, {sin(deg(7*pi/8))}) {$e^{i7\pi/8}$ {\small (all $p \ge 1$)}};
  \end{tikzpicture}
  \caption{Eivenvalues of $U_p$ on the complex unit circle for Ising anyons.}
  \label{fig:ising-eigenvals}
\end{figure}

\begin{corollary}\label{cor:Up-ising}
	The eigenvalues of $U_p = U_{\sigma,\sigma^p,\sigma}$ for Ising anyons are
	(cf.\ Figure~\ref{fig:ising-eigenvals})
	\begin{equation}
		\sigma( U_{\sigma,1,\sigma} )
		= \bigl\{ e^{-\pi i/8}, e^{3\pi i/8} \bigr\},
		\qquad \text{each with multiplicity $3$},
	\end{equation}
	and for $n \ge 1$
	\begin{equation}
		\sigma( U_{\sigma,\sigma^{2n},\sigma} )
		= \bigl\{ e^{-5\pi i/8}, e^{-\pi i/8}, e^{3\pi i/8}, e^{7\pi i/8}
			\bigr\},
		\qquad \text{each with multiplicity $3 \cdot 2^{n-1}$},
	\end{equation}
	respectively, for $n \ge 0$
	\begin{equation}
		\sigma( U_{\sigma,\sigma^{2n+1},\sigma} )
		= \bigl\{ e^{-\pi i/8}, e^{7\pi i/8} \bigr\},
		\qquad \text{each with multiplicity $2^{n+2}$}.
	\end{equation}
	Reducing to $U_p|_{V^{\sigma^{2+p}}_*}$
	yields the same eigenvalues with multiplicity 
	$1$, $2^{n-1}$, respectively $2^n$.
\end{corollary}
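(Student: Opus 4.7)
The plan is to apply Theorem~\ref{thm:gen-exchange} to reduce $U_{\sigma,\sigma^p,\sigma}$ to a direct sum of the three ``atomic'' exchange operators $U_{\sigma,1,\sigma}$, $U_{\sigma,\sigma,\sigma}$, and $U_{\sigma,\psi,\sigma}$, whose spectra have already been computed immediately above the corollary statement. The fusion multiplicities $\sigma^{2n+1} = 2^n\cdot\sigma$ and $\sigma^{2n} = 2^{n-1}\cdot(1+\psi)$, which were recalled at the start of Section~\ref{sec:phases-ising}, give the two similarities
\begin{equation}
    U_{\sigma,\sigma^{2n+1},\sigma} \sim U_{\sigma,\sigma,\sigma}^{\oplus 2^n},
    \qquad
    U_{\sigma,\sigma^{2n},\sigma} \sim U_{\sigma,1,\sigma}^{\oplus 2^{n-1}} \oplus U_{\sigma,\psi,\sigma}^{\oplus 2^{n-1}}.
\end{equation}
Combining these with $\sigma(U_{\sigma,1,\sigma}) = \{e^{-\pi i/8}, e^{3\pi i/8}\}$ (mult.~$3$ each), $\sigma(U_{\sigma,\sigma,\sigma}) = \{e^{-\pi i/8}, e^{7\pi i/8}\}$ (mult.~$4$ each), and $\sigma(U_{\sigma,\psi,\sigma}) = \{e^{-5\pi i/8}, e^{7\pi i/8}\}$ (mult.~$3$ each), and multiplying the individual multiplicities by the respective fusion multiplicities $2^n$, $2^{n-1}$, $2^{n-1}$, gives each of the listed eigenvalues with the asserted multiplicity. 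The case $p=0$ is just the already-computed spectrum of $U_{\sigma,1,\sigma}$ verbatim.

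For the restriction to $U_p|_{V^{\sigma^{2+p}}_*}$, the key observation is that the F-moves used at the base of the braid in the proof of Theorem~\ref{thm:gen-exchange} never touch the leftmost charge label $a_1$. Hence the similarity of that theorem is compatible with the decomposition \eqref{eq:splitting-sectors} into $a_1$-sectors and restricts to the $a_1=1$ block. I therefore only need to identify the $a_1=1$ piece of each of the three atomic spectra. Inspecting the orderings of the bases in the paragraphs above: for $U_{\sigma,1,\sigma}$ the $a_1=1$ subspace is the block $R = \mathrm{diag}(R^{\sigma\sigma}_1, R^{\sigma\sigma}_\psi)$, giving each of $e^{-\pi i/8}, e^{3\pi i/8}$ with multiplicity one; for $U_{\sigma,\sigma,\sigma}$ it is the single block $RBR$, giving each of $e^{-\pi i/8}, e^{7\pi i/8}$ with multiplicity one; for $U_{\sigma,\psi,\sigma}$ it is the diagonal block with entries $e^{-5\pi i/8}$ and $e^{7\pi i/8}$, again with multiplicity one each.

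Multiplying by the fusion multiplicities then yields the stated counts $1$, $2^{n-1}$, and $2^n$ for $p=0$, $p=2n$ ($n\ge 1$), and $p=2n+1$ respectively. As a consistency check one can compare with the total dimension $\dim V^{1,\sigma^{2+p}}_* = 2^{\lfloor(1+p)/2\rfloor}$, obtained by counting walks $1\to\sigma\to b_1\to\cdots$ on the Ising fusion graph, which matches the sum of multiplicities on the right-hand side.

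The only non-routine point is checking that the F-move similarity really does preserve the $a_1$-sectorization; this is immediate from the shape of the diagram in \eqref{eq:F U_p basis}, where the leftmost strand and its adjacent vertex are untouched throughout the rearrangement. The rest is arithmetic.
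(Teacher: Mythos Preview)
Your proposal is correct and matches the paper's approach exactly: the corollary is stated in the paper as a summary (``In summary,'') of the preceding computations, with no separate proof given, and you have simply made explicit the straightforward combination of Theorem~\ref{thm:gen-exchange} with the three atomic spectra $\sigma(U_{\sigma,1,\sigma})$, $\sigma(U_{\sigma,\sigma,\sigma})$, $\sigma(U_{\sigma,\psi,\sigma})$ and the fusion multiplicities. Your added justification that the F-move similarity of Theorem~\ref{thm:gen-exchange} preserves the leftmost charge $a_1$ (hence restricts to the $a_1=1$ sector $V^{\sigma^{2+p}}_*$) is correct and fills in a detail the paper leaves implicit.
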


\subsection{Clifford anyons}\label{sec:phases-clifford}

	Another anyonic model that has been considered in the literature is
	obtained by taking a composition of Ising anyons 
	with an abelian factor of $\alpha = -1/8$:
	$$
		\rho^{\text{Clifford}}(\sigma_j) := \rho^{\alpha = -1/8}(\sigma_j) \rho^{\text{Ising}}(\sigma_j).
	$$
	Because the resulting representation may alternatively 
	be defined using Clifford algebras and spinors \cite{NayWil-96,Ivanov-01},
	these anyons may be referred to as \keyword{Clifford anyons}
	(or ``half-quantum vortices from Majorana fermions'', see e.g. \cite[Ch.\,4.2.2]{Tong-16},
	and \cite[Theorem~4.8]{DelRowWan-16}).
	
	Using 
	$$
		U_p = U_p^{\text{Clifford}} = e^{-(2p+1) \pi i/8} U_p^{\text{Ising}},
	$$ 
	we find
	\begin{align*}
		\sigma( U_0 ) &= \bigl\{ e^{-\pi i/4}, e^{\pi i/4} \bigr\}, \\
		\sigma( U_1 ) &= \bigl\{ e^{-\pi i/2}, e^{\pi i/2} \bigr\} = \{ \pm i \}, \\
		\sigma( U_2 ) &= \bigl\{ e^{-3\pi i/4}, e^{-\pi i/4}, e^{\pi i/4}, e^{3\pi i/4} \bigr\}, \\
		\sigma( U_3 ) &= \bigl\{ \pm 1 \bigr\}, \\
		\sigma( U_4 ) &= \bigl\{ e^{-3\pi i/4}, e^{-\pi i/4}, e^{\pi i/4}, e^{3\pi i/4} \bigr\}, \\
		\sigma( U_5 ) &= \bigl\{ \pm i \bigr\},
	\end{align*}
	and so on, with $\sigma(U_{4k+l}) = \sigma(U_l)$, $k \ge 1$, $l = 1,2,3,4$.
	The first few of these $U_p$ have also been computed in \cite{BraSon-18,Wiklund-18} 
	using representations of Clifford algebras.

\subsection{Burau anyons}\label{sec:phases-burau}

	The standard (unreduced) \keyword{Burau representation} \cite{Burau-35}
	is simply a deformation of the defining representation of permutations on $\C^N$ 
	with deformation parameter $z \in \C \setminus \{0\}$:
	$$
		\rho: B_N \to \mathrm{GL}(\C^N)
	$$
	\begin{equation}\label{eq:Burau-sigma}
		\rho(\sigma_j) = \1_{j-1} \oplus \begin{bmatrix} 1-z & z \\ 1 & 0 \end{bmatrix} \oplus \1_{N-j-1}.
	\end{equation}
	Thus it has rank $N$, 
	but may, depending on $z$, be reduced either once or twice to an irreducible 
	representation of rank $N-1$ or $N-2$.
	At the outset these are not unitary but for $z \in \sU(1)$ close enough to 1 they may be unitarized 
	\cite{Squier-84}.
	For more details we refer to e.g. \cite{Weinberger-15}, 
	\cite[Sec.~2.3]{DelRowWan-16}.
	Note that these are different than the algebraic anyon models considered above
	since their dimension grows linearly with $N$
    (rather than exponentially).
	
	As a simplest example we may consider $N=3$, reduced and unitarized:
	$$
		\rho: B_3 \to \sU(2)
	$$
	\begin{align*}
		\rho(\sigma_1) &= \frac{1}{2} \begin{bmatrix} -w^2+w+1 & -w\sqrt{w+w^{-1}-1}\sqrt{w+w^{-1}+1} \\ -w\sqrt{w+w^{-1}-1}\sqrt{w+w^{-1}+1} & -w^2-w+1 \end{bmatrix}, 
		\\
		\rho(\sigma_2) &= \frac{1}{2} \begin{bmatrix} -w^2+w+1 & +w\sqrt{w+w^{-1}-1}\sqrt{w+w^{-1}+1} \\ +w\sqrt{w+w^{-1}-1}\sqrt{w+w^{-1}+1} & -w^2-w+1 \end{bmatrix},
	\end{align*}
	where $w=\sqrt{z}=e^{i\pi\alpha}$ and $|\alpha| < 1/3$; see \cite{Weinberger-15}.
	We find that
	$$
		\sigma(U_0) = \sigma(\rho(\sigma_1)) = \sigma(\rho(\sigma_2)) = \{1,-w^2\}
	$$
	while
	$$
		U_1 = \rho(\sigma_1\sigma_2\sigma_1) = \rho(\sigma_2\sigma_1\sigma_2) = \diag(w^3,-w^3).
	$$
	Proposition~\ref{prop:test-non-abelian} then verifies that $\rho$ is necessarily
	non-abelian for these $w$ (except possibly at $w=1$), 
	but it may also be verified directly (also at $w=1$) by the identity
	$$
		\rho(\sigma_1\sigma_2) - \rho(\sigma_2\sigma_1) = \begin{bmatrix} 0 & w^2\sqrt{w+w^{-1}-1}\sqrt{w+w^{-1}+1} \\ -w^2\sqrt{w+w^{-1}-1}\sqrt{w+w^{-1}+1} & 0 \end{bmatrix}.
	$$

	In fact, a central result in the theory of braid group representations is 
	that the rank of any non-abelian representation must grow at least
	linearly with $N$, and that the Burau representations are of the smallest rank:
	
\begin{theorem}[{Formanek \cite{Formanek-96}}]\label{thm:small-reps-abelian}
	Let $\rho\colon B_N \to \sU(D)$ be a unitary braid group representation,
	and $N>6$.
	If $D < N-2$, then $\rho$ is abelian. 
	Furthermore, if $D = N-2$ or $D = N-1$ then $\rho$ is either abelian or of 
	Burau type (i.e.\ equivalent to the product of an abelian and a reduced Burau representation).
\end{theorem}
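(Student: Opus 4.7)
The plan is to mimic Formanek's original argument by induction on $N$, using the iterated commutation relation \eqref{eq:braid-relation-1} to cut the representation into smaller pieces governed by $B_{N-2}$. First I would invoke Lemma~\ref{lem:similar-generators} to reduce to the case where $\rho$ is irreducible: if $\rho = \oplus_n \rho_n$ with each $\rho_n$ irreducible of dimension $D_n \le D < N-2$, it suffices to treat each summand, and any scalar block is automatically abelian. Since all $\rho(\sigma_j)$ are pairwise similar, they share a common list of eigenvalues $\lambda_1,\ldots,\lambda_k$ (with multiplicities $m_1,\ldots,m_k$ summing to $D$).

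Next I would fix a diagonalization of $\rho(\sigma_1)$ and write $V = \bigoplus_i V^{(i)}$ with $V^{(i)}=\ker(\rho(\sigma_1)-\lambda_i)$. Because $\sigma_1$ commutes with $\sigma_3,\sigma_4,\ldots,\sigma_{N-1}$ by \eqref{eq:braid-relation-1}, each eigenspace $V^{(i)}$ is stable under this copy of $B_{N-2}\hookrightarrow B_N$. This yields representations $\rho^{(i)}\colon B_{N-2}\to \sU(V^{(i)})$ of dimensions $m_i$ summing to $D<N-2$, so each $m_i<(N-2)-2$ once $N>6$, and the inductive hypothesis would force every $\rho^{(i)}$ to be abelian. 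In particular the elements $\rho(\sigma_3),\ldots,\rho(\sigma_{N-1})$ all commute with each other, reducing the problem to understanding how $\rho(\sigma_2)$ can interact with this diagonal structure under the Yang--Baxter relation $\rho(\sigma_1)\rho(\sigma_2)\rho(\sigma_1)=\rho(\sigma_2)\rho(\sigma_1)\rho(\sigma_2)$ and its counterpart for $\sigma_2,\sigma_3$.

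The heart of the proof is then a rigidity argument: simultaneous commutativity of $\rho(\sigma_3),\ldots,\rho(\sigma_{N-1})$ together with the cyclic conjugation \eqref{eq:braid-presentation} forces simultaneous commutativity of \emph{all} $\rho(\sigma_j)$, contradicting irreducibility unless $D=1$. Concretely, I would analyze the off-diagonal entries of $\rho(\sigma_2)$ in the $V^{(i)}$ decomposition and show, via the hexagon-like identity coming from $\sigma_1\sigma_2\sigma_1=\sigma_2\sigma_1\sigma_2$, that either all such entries vanish (abelian case) or there is enough nontrivial mixing to manufacture $N-2$ independent eigenvectors, violating the bound $D<N-2$. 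This step is where almost all the combinatorial work lies and is the main obstacle, since one must carefully track how the eigenvalue multiplicities $m_i$ can fit into the braid relations across the chain $\sigma_1,\sigma_2,\ldots,\sigma_{N-1}$.

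For the borderline cases $D=N-2$ and $D=N-1$, the same eigenspace analysis still works but now leaves exactly enough room for a single non-abelian configuration. I would show that the extremal solution to the braid relations compatible with the dimension count is, up to tensoring by a one-dimensional (abelian) character $\chi\colon B_N\to\sU(1)$, precisely the reduced Burau representation \eqref{eq:Burau-sigma}: one identifies an invariant hyperplane or codimension-two subspace cut out by the unique eigenvalue of multiplicity $>1$, and recognizes the induced action as the Burau matrix with parameter $z$ determined by the ratio of the two distinct eigenvalues of $\rho(\sigma_j)$. The Squier unitarization condition on $z$ \cite{Squier-84} is automatic from the assumption $\rho(B_N)\subset\sU(D)$. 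The hypothesis $N>6$ enters to exclude low-rank exceptional representations (e.g.\ those coming from accidental isomorphisms with small symmetric or alternating groups) which could otherwise disrupt the induction base.
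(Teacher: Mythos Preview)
The paper does not give a self-contained proof of this theorem. It simply cites Formanek's original paper \cite[Theorem~23]{Formanek-96} for the irreducible case and observes that the unitary formulation stated here follows by complete reducibility (Lemma~\ref{lem:similar-generators}). Your reduction to irreducible summands in the first paragraph is exactly this step, and that is all the paper contributes beyond the citation.

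Your attempt to reconstruct Formanek's inductive argument goes well beyond what the paper does, but it contains a genuine error in the induction step. From $\sum_i m_i = D < N-2$ you cannot conclude that each eigenspace dimension satisfies $m_i < (N-2)-2 = N-4$; for instance with $D = N-3$ and two distinct eigenvalues one could have $m_1 = N-4$, $m_2 = 1$, and the inductive hypothesis for $B_{N-2}$ does not apply to the block of size $N-4$. Formanek's actual argument is more delicate: he does decompose along eigenspaces of $\rho(\sigma_1)$ and restrict to the copy of $B_{N-2}$ generated by $\sigma_3,\ldots,\sigma_{N-1}$, but the dimension bookkeeping that forces abelianness (or Burau type in the borderline cases) requires a careful case analysis of the possible eigenvalue multiplicities and how the Yang--Baxter relation constrains the off-diagonal blocks of $\rho(\sigma_2)$ --- it is not a clean numerical induction of the form you sketch. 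Your later paragraphs correctly identify this as ``where almost all the combinatorial work lies,'' but the specific inductive mechanism you propose to drive it does not close.
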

	The proof is given in \cite[Theorem~23]{Formanek-96} for irreducible
	representations $\rho\colon B_N \to \mathrm{GL}(\C^D)$, 
	and the above formulation follows from 
	the complete reducibility of any unitary representation 
	(see Lemma~\ref{lem:similar-generators}). 
	Also the case $D \le N$ for arbitrary $N \ge 3$ 
	is discussed in \cite[Theorem~6.1]{Formanek-etal-03}.
	See \cite{Weinberger-15} for some further remarks.


\section{Geometric and magnetic anyon models}\label{sec:ham}

	The general definition of the Hilbert space of anyon models
	was discussed in some detail in \cite{MunSch-95},
	and the choice of a self-adjoint Hamiltonian in \cite{BakCanMulSun-93,DelFigTet-97,LunSol-14}.
	See also \cite{Souriau-70,FroMar-89,MueDoe-93}, \cite[Appendix~A]{KorLanSch-99}, 
	\cite{DoeGroHen-99,DoeStoTol-01,GolMaj-04}, \cite[Sec.\,3.7]{Lundholm-17}, \cite{MacSaw-19} 
	for further details concerning the motivations for these definitions.
	For background on fiber bundles and connections we refer to \cite{Nakahara-03,Taubes-11}.

\subsection{Classical configuration space}\label{sec:ham-cspace}

	Let us temporarily consider particles as points in $\R^d$ 
	for general $d \in \N$, before fixing $d=2$.
	The configuration space for $N$ \keyword{\emph{distinguishable} particles} 
	in $\R^d$ is simply the set $(\R^d)^N$ of $N$-tuples of coordinates (distinguished by label/ordering). 
    Let
	\begin{equation}\label{eq:fat-diagonal}
		\bDelta^N 
		:= \{ (\bx_1,\ldots,\bx_N) \in (\R^d)^N : \text{$\exists\ j \neq k$ s.t. $\bx_j = \bx_k$} \}
	\end{equation}
	denote the \keyword{fat diagonal} of this space, 
	i.e.\ the coincidence set of at least two particles.
	The configuration space for $N$ \keyword{\emph{distinct} particles} in $\R^d$
	is then the set $(\R^d)^N \setminus \bDelta^N$ of $N$-tuples of distinct coordinates.
	Consider the natural\footnote{We may regard a tuple $\sx = (\bx_j)_{j=1}^N$ 
	as a map $j \mapsto \bx_j$, so that $\sx.\sigma = \sx \circ \sigma$ indeed acts on the right.} 
	actions of permutations $\sigma \in S_N$ on $(\R^d)^N$,
	\begin{align*}
		(\sigma,\sx) \mapsto \sigma.\sx &= (\bx_{\sigma^{-1}(1)},\ldots,\bx_{\sigma^{-1}(N)}) \\
		(\sx,\sigma) \mapsto \sx.\sigma &= (\bx_{\sigma(1)},\ldots,\bx_{\sigma(N)}).
	\end{align*}
	Identifying the orbits of $S_N$ under this action on distinct particles,
	one obtains the configuration space for $N$ \keyword{\emph{identical} particles} in $\R^d$,
	\begin{equation}\label{eq:config-space}
		\cC^N 
		:= \left( (\R^d)^N \setminus \bDelta^N \right) \Big/ S_N
		= \left\{ X=\{\bx_1,\ldots,\bx_N\} \subset \R^d : |X|=N \right\} ,
	\end{equation}
	the space of $N$-point subsets of $\R^d$
    (sometimes denoted $\eP_N(\R^d)$).
	For reference, we fix a particular point in $\cC^N$,
	$$
		\sx_0 = (1\be_1,2\be_1,\ldots,N\be_1)
		\quad \mapsto \quad
		X_0 = \{1\be_1,2\be_1,\ldots,N\be_1\},
	$$
	where $\be_1$ is the first unit basis vector in $\R^d$
	(for $d=2$ this fixes a real axis in $\C \cong \R^2$).
	This choice of base point will provide us with a reference ordering of the particles,
	and corresponds naturally to the arrangement of strands in Figure~\ref{fig:braids}.

	Consider now the continous motion of identical particles in $\R^d$.
	Given a continuous path $\gamma\colon [0,1] \to \cC^N$
	from a point $X=\gamma(0)$ to a point $Y=\gamma(1)$, 
	there is an induced action on $\cC^N$:
	$$
		(\gamma,X) \mapsto \gamma.X = Y
	$$
	In the same way we obtain a (trivial) action of continuous loops $\gamma$ s.t. $\gamma.X=X$.
	Compositions $(\gamma,\eta) \mapsto \gamma\eta$
	of paths and loops are associative under these actions.
	A \keyword{continuous exchange} of identical particles 
	with initial configuration $X \in \cC^N$ is a loop $\gamma$ in $\cC^N$ based at $X$.
	In the case that we are interested in exchanges modulo homotopy equivalence,
	which is the case for free particles\footnote{The idea is that any non-topological information (holonomy) of loops would require a magnetic field. See \cite{Lundholm-23} for further discussion.}, 
	the relevant group of loops is the \keyword{fundamental group}
	$\pi_1(\cC^N) = \pi_1(\cC^N,X_0)$.

\begin{theorem}[Configuration space topology]\label{thm:config-space}
	The space $\cC^N$ is path-connected and its fundamental group is
	$$
		\pi_1(\cC^N) = \begin{cases}1, & d=1, \\ B_N, & d=2, \\ S_N, & d\ge 3.\end{cases}
	$$
\end{theorem}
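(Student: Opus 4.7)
The plan is to handle path-connectedness as a warmup and then split into the three dimensional regimes, using in each case the quotient map $q\colon (\R^d)^N \setminus \bDelta^N \to \cC^N$ induced by the $S_N$-action, which is free (since we removed $\bDelta^N$) and properly discontinuous. Path-connectedness of $\cC^N$ follows easily: for $d \ge 2$ the space $(\R^d)^N \setminus \bDelta^N$ is already path-connected (the real codimension of $\bDelta^N$ is $d$, so linear interpolation plus a small perturbation avoids collisions), and for $d=1$ any two configurations can be sorted and linearly interpolated entry-wise.

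For $d=1$, the ordered space $(\R)^N \setminus \bDelta^N$ decomposes into $N!$ open convex chambers indexed by the total orderings $x_{\pi(1)} < \cdots < x_{\pi(N)}$, and $S_N$ permutes these chambers freely and transitively. Thus $q$ restricted to any single chamber is a homeomorphism onto $\cC^N$, so $\cC^N$ is convex, hence contractible, and $\pi_1(\cC^N)=1$. For $d \ge 3$, I would argue that $(\R^d)^N \setminus \bDelta^N$ is simply connected: each pair-diagonal $\{\bx_j = \bx_k\}$ is a real affine subspace of codimension $d \ge 3$, and by standard general-position/transversality arguments both loops and nullhomotopies can be perturbed off $\bDelta^N$. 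Then $q$ is a regular covering whose deck group is $S_N$, and the long exact sequence of the covering (or equivalently the fact that for a covering of a connected space by a simply connected space the deck group equals $\pi_1$ of the base) gives $\pi_1(\cC^N) \cong S_N$.

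The main work is the case $d=2$, which is Artin's theorem. Here $F_N := (\R^2)^N \setminus \bDelta^N$ is no longer simply connected: the Fadell--Neuwirth fibration $F_N \to F_{N-1}$ (forget the last point), whose fiber is $\R^2$ minus $N-1$ points, together with the long exact sequence of homotopy groups and induction on $N$, gives that $\pi_1(F_N)$ is the pure braid group $P_N$. Combining with the $S_N$-covering $q\colon F_N \to \cC^N$ yields a short exact sequence $1 \to P_N \to \pi_1(\cC^N) \to S_N \to 1$, and one identifies $\pi_1(\cC^N)$ with $B_N$ as presented in Definition~\ref{def:braid-group} by choosing the half-twist exchanging the $j$-th and $(j{+}1)$-th points of $X_0$ as the loop representing $\sigma_j$. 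The relations \eqref{eq:braid-relation-1}--\eqref{eq:braid-relation-2} are then verified by explicit planar isotopies, while completeness of these relations follows from a Seifert--van Kampen argument on a cell decomposition of $\cC^N$ (or by direct comparison of presentations, see e.g.\ Birman \cite{Birman-74}).

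The only nontrivial obstacle is this last identification in $d=2$, since both the computation of $\pi_1(F_N)=P_N$ via the Fadell--Neuwirth fibration and the verification that the half-twists satisfy precisely Artin's relations (and no others) require genuine input beyond general topology. The other two cases are essentially formal consequences of the codimension of $\bDelta^N$.
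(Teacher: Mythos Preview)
Your proposal is correct and in fact considerably more detailed than the paper's own treatment, which does not give a proof at all but simply states that ``the proof amounts to Artin's correspondence between geometric and algebraic braid groups \cite{Artin-25,Artin-47}, and the simple-connectedness of the configuration space of distinct particles for $d\ge 3$'' and refers the reader to \cite[Theorem~1.8]{Birman-74}, \cite{KasTur-08}, and \cite{Weinberger-15}. Your outline via the free $S_N$-action, the chamber decomposition for $d=1$, the codimension argument for $d\ge 3$, and the Fadell--Neuwirth fibration plus the short exact sequence $1\to P_N\to \pi_1(\cC^N)\to S_N\to 1$ for $d=2$ is exactly the standard route taken in those references, so there is no divergence in approach---you have simply unpacked what the paper leaves to citation.
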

	The proof amounts to Artin's correspondence between geometric and algebraic braid groups 
	\cite{Artin-25,Artin-47},
	and the simple-connectedness of the configuration space of distinct particles for $d\ge 3$.
	For details we refer to e.g. \cite[Theorem~1.8]{Birman-74}, \cite{KasTur-08}, or \cite{Weinberger-15}.
	
	The (universal) \keyword{covering space} of $\cC^N$ is a fiber bundle
	$$
		\tcC^N \to \cC^N \qquad \text{with fiber $\pi_1(\cC^N)$},
	$$
	and
	may be defined as the space of continuous paths in $\cC^N$
	based at $X_0$ and up to homotopy equivalence:
	\begin{align*}
		\gamma \sim \gamma' \quad \text{iff} \quad
		\exists F\colon [0,1]\times[0,1] \to \cC^N \ \text{cont. s.t.} \ 
		&F(0,\cdot) = \gamma, F(1,\cdot) = \gamma', \\
		&F(s,0) = \gamma(0), F(s,1) = \gamma(1) \ \forall s \in [0,1].
	\end{align*}
	Thus, with the usual composition of paths,
	the action of paths and loops $\gamma$ in $\cC^N$ lifts to 
	a faithful action on $\tcC^N$,
	$(\gamma,\tX) \mapsto \gamma.\tX$, with
	$\gamma.\tX = \gamma'.\tX$ for homotopic paths $\gamma \sim \gamma'$ in $\cC^N$
	(with the same endpoints),
	where the image of the action of loops $\gamma$ at $X \in \cC^N$ 
	is exactly the induced group $\pi_1(\cC^N,X)$ 
	of equivalence classes $[\gamma]$ of homotopic loops based at $X$.
	The canonical projection to the endpoint of a path is denoted 
	$\tilde\pr\colon \tcC^N \to \cC^N$,
    $[\gamma] \mapsto \gamma(1)$.
	By definition $\tcC^N$ is simply connected, $\pi_1(\tcC^N,\tX_0) = 1$.
	
	Hence, taking the reference points 
	$X_0 \mapsfrom \tX_0 \leftrightarrow (X_0,1)$, 
    corresponding to the trivial path from $X_0$,
	an arbitrary point $\tX \in \tcC^N$ may be expressed
	$$
		\tX = \tilde\gamma_0.\tX_0 = \gamma_0.\tX_0 = [\gamma_0].\tX_0 
	$$
	for some path $\tilde\gamma_0$ in $\tcC^N$ based at $\tX_0$, 
	or equivalently, using its projection
	$\gamma_0 = \tilde\pr \circ \tilde\gamma_0$ in $\cC^N$ based at 
	$X_0 = \tilde\pr(\tX_0)$ and ending at $X = \tilde\pr(\tX)$,
	or using simply the equivalence class $[\gamma_0]$ of homotopic such paths.
	
	Any two points $\tX,\tX' \in \tcC^N$ with the same projection 
	$X=\tilde\pr(\tX)=\tilde\pr(\tX')$ differ only by an element of 
	$\pi_1(\cC^N) = \pi_1(\cC^N,X_0)$,
	since
	$$
		\tX = [\gamma_0].\tX_0 = [\gamma_0'\gamma_0'^{-1}\gamma_0].\tX_0
		= ([\gamma_0'][\gamma]).\tX_0 = \tX'.[\gamma],
	$$
	$[\gamma] = [\gamma_0'^{-1}\gamma_0] \in \pi_1(\cC^N,X_0)$,
	where we have defined an action of equivalence classes of homotopic 
	loops based at $X_0$ on the right,
	$$
		(\tX,[\gamma]) \mapsto \tX.[\gamma] := [\gamma_0\gamma].\tX_0.
	$$
	In other words, we first do $\gamma$ at $X_0$ 
	and then the path $\gamma_0$ taking us to $X = \tilde\pr(\tX)$.
	The corresponding representation $L\colon \pi_1(\cC^N) \to \mathrm{Diff}(\tcC^N)$ 
	on the left, s.t. 
	$L_{[\gamma][\eta]} = L_{[\gamma]}L_{[\eta]}$, is
	$$
		L_{[\gamma]}(\tX) := \tX.[\gamma^{-1}] = [\gamma_0\gamma^{-1}].\tX_0.
	$$
	The group $\pi_1(\cC^N)$ then acts in this way as a group of 
	diffeomorphisms of $\tcC^N$,
	and $\cC^N = \tcC^N / \pi_1(\cC^N)$.
	By writing each point $\tX \in \tcC^N$ 
	using a simple path $\gamma_0$ from $X_0$ to $X$
	and a loop $\gamma$ at $X_0$,
	we have a way of representing 
	$\tX = ([\gamma_0].\tX_0).[\gamma]$ as a pair 
	$(X,[\gamma]) \in \cC^N \times \pi_1(\cC^N)$
	(this is not a canonical idenfication but rather a choice of
	local trivialization of the bundle).
	The corresponding set of points $(X,1) \leftrightarrow [\gamma_0].\tX_0$,
	with $\gamma_0$ a simple path from $X_0$ to $X$,
	constitutes a fundamental domain in $\tcC^N$ in bijection to $\cC^N$.
	
	Further, one may interchange the above actions via the conjugation of paths
	$$
		\Ad_{\eta}[\gamma] := [\eta\gamma\eta^{-1}],
	$$
	for compatible paths $\eta,\gamma$ in $\cC^N$.
	Namely, given $\tX = [\gamma_0].\tX_0$ and $[\gamma] \in \pi_1(\cC^N,X)$,
	$$
		[\gamma].\tX = [\gamma\gamma_0].\tX_0 
		= [\gamma_0(\gamma_0^{-1}\gamma\gamma_0)].\tX_0
		= \tX.\Ad_{\gamma_0^{-1}}[\gamma],
	$$
	with $\Ad_{\gamma_0^{-1}}[\gamma] \in \pi_1(\cC^N,X_0)$.
	Conversely, if $[\gamma] \in \pi_1(\cC^N,X_0)$ then
	$$
		\tX.[\gamma] = \Ad_{\gamma_0}[\gamma].\tX
	$$
	with $\Ad_{\gamma_0}[\gamma] \in \pi_1(\cC^N,X)$.
	
	For $d=1$, trivially $\tcC^N \cong \cC^N$ since no continuous exchange is possible
	and the particles may be ordered canonically along the real line according to the choice 
	$X_0 \leftrightarrow \sx_0 \leftrightarrow \tX_0$.
	For $d \ge 3$ we have $\tcC^N \cong \cC^N \tilde{\times} S_N$ (a twisted product; compare $N=2$ which yields a doubling of a relative half-space), 
	i.e.\ each point $X$ in the configuration space has $N!$ different representatives
	corresponding to each permutation $\sigma \in S_N$, 
	and we may therefore reconstruct the space of distinct particles as
	precisely the covering space
	$$
		(\R^d)^N \setminus \bDelta^N \cong \tcC^N 
		\cong \{(X,\sigma) : X \in \cC^N, \sigma \in S_N \},
	$$
	with the reference point 
	$\tX_0 \leftrightarrow (X_0,1) \leftrightarrow \sx_0$,
	the ordered $N$-tuple of particles.

	For concreteness, we will from now on fix $d=2$ so that the relevant 
	group is the braid group $B_N$. In this case we deal with an infinite
	set of representatives for each point of $\cC^N$:
	$$
		\tcC^N \ni \tX \leftrightarrow (X,b), \ X \in \cC^N, \ b \in B_N.
	$$
	It is now convenient to consider two projections:
	$$
		\begin{array}{rcccl}
		\tcC^N & \xrightarrow{\tilde\pr} & \cC^N & \xleftarrow{\pr} & (\R^2)^N \setminus \bDelta^N \\
		(X,b) & \mapsto & X=\{\bx_1,\ldots,\bx_N\} & \mapsfrom & \sx = (\bx_1,\ldots,\bx_N)
		\end{array}
	$$
	In fact there is also a canonical projection $\hat\pr: B_N \to S_N$ 
	defined by taking the quotient with the additional relations $\{\sigma_j^2=1\}$ of $S_N$,
	and thus we may in similarity to the higher-dimensional case consider
	$$
		\begin{array}{rcccccl}
		\tcC^N & \xrightarrow{\hat\pr} & \cC^N \times S_N &\cong& (\R^2)^N \setminus \bDelta^N & \xrightarrow{\pr} & \cC^N \\
		(X,b) & \mapsto & (X,\hat\pr(b)) &\leftrightarrow& \sx = (\bx_1,\ldots,\bx_N) & \mapsto & X=\{\bx_1,\ldots,\bx_N\} \\
		(X_0,b) & \mapsto & (X_0,\hat\pr(b)) &\leftrightarrow& \hat\pr(b^{-1}).\sx_0 = \sx_0.\hat\pr(b) & \mapsto & X_0
		\end{array}
	$$
	so that the composition is $\tilde\pr = \pr \circ \hat\pr$.
	Inverses may then be defined,
	$$
		\pr^{-1}_0(X) := (X,1),
		\qquad
		\tilde\pr^{-1}_0(X) := (X,1),
	$$
	mapping $\cC^N$ to fundamental domains in $(\R^2)^N$ respectively $\tcC^N$.
	
\begin{example}\label{ex:2-anyon-cspace}
	In the case $N=2$ we may identify
	$$
		\cC^2 \cong \cC_0 \times \R^+\times\R^2,
	$$
	where the configuration space of the relative angle 
	$\cC_0 = S^1/\Z_2 = [0,\pi)_{\text{per.}}$ is a circle, 
	$\pr^{-1}_0(\cC_0) = [0,\pi)$ is a single-cover of $\cC_0$ and a half-circle,
	$\pr^{-1}(\cC_0) = [0,2\pi)_{\text{per.}}$ is a double-cover and a full circle,
	while the full covering space $\tcC_0 = \R$ is the real line
	containing the fundamental domain $\tilde\pr^{-1}_0(\cC_0) = [0,\pi)$.
	Points on $\tcC_0$ are indexed by $\cC_0$ and
	$\pi_1(\cC_0) = \Z$, the winding number of the loop relative to a base point such as $0$.
\end{example}

\subsection{Hamiltonian and Hilbert space}\label{sec:ham-space}

	One means of quantizing a classical system is to find a formulation in terms of a 
	Poisson algebra of observables and then look for representations
	as operators on a Hilbert space
	(see e.g.\ \cite{Thiemann-07,Lundholm-17} for introductions).	
	We consider quantizations of the classical non-relativistic kinetic energy,
	i.e.\ the Hamiltonian function\footnote{For simplicity 
	we put the mass equal to $1/2$ throughout, 
	as well as Planck's constant $\hbar=1$.}
	\begin{equation}\label{eq:Hamiltonian}
		T = \sum_{j=1}^N \bp_j^2,
	\end{equation}
	and thus seek representations of the momenta 
	$\ssp = (\bp_j)_{j=1}^N = (p_{jk})_{j=1,\ldots,N,k=1,2}$ 
	as differentiation operators.
	The important point here is that 
	the momenta are conjugate to the configuration variables 
	$X = \{\bx_1,\ldots,\bx_N\}$ 
	of \emph{identical} particles, which take values on the manifold $\cC^N$.
	Locally, i.e.\ on any topologically trivial open
	subset $\Omega \subseteq \cC^N$ the particles remain distinguishable,
	and may thus be given representatives $\sx = (\bx_1,\ldots,\bx_N)$ 
	on coordinate charts $\pr^{-1}_0(\Omega) \subseteq \R^{2N}$, 
	by the correspondence above.
	We may then use that,
	locally w.r.t $\sx$ in these charts, 
	the classical canonical Poisson algebra
	$$
		\{x_{jk},p_{j'k'}\} = \delta_{jj'}\delta_{kk'},
		\qquad j,j' \in \{1,\ldots,N\}, \ k,k' \in \{1,2\},
	$$
	can be given a Schr\"odinger representation as the quantum CCR algebra
	$$
		{\textstyle\frac{1}{i}}[\hx_{jk},\hp_{j'k'}] = \delta_{jj'}\delta_{kk'} \1,
		\qquad j,j' \in \{1,\ldots,N\}, \ k,k' \in \{1,2\},
	$$
	represented on a Hilbert space $\cH_\Omega = L^2(\Omega;\cF)$, 
	where $\cF$ is a representation Hilbert space for any internal degrees of freedom
	(observables other than these $x$'s and $p$'s),
	and then the different charts $\Omega$ 
	are to be patched together to a representation on all of the configuration
	space manifold $\cC^N$.
	The natural geometric framework \cite{MueDoe-93,DoeGroHen-99,DoeStoTol-01},
	\cite[Remark~3.29]{Lundholm-17} 
	for this problem is to consider a hermitian fiber
	bundle $E \to \cC^N$ over the configuration space, with fiber $\cF$,
	endowed with a connection
	$\cA$ and thus a covariant derivative $\nabla^{\cA}$.
	We are in this work, as a first step, interested in ideal, free anyon models so that
	the pure effects of statistics may be isolated. Physically, this means that
	upon restricting the particles to topologically trivial subsets 
	$\Omega \subseteq \cC^N$ 
	they should
	behave as ideal, free and distinguishable particles with the usual momenta
	$\hbp_j = -i\nabla_{\bx_j}$
	and the usual free 
	kinetic energy $\hT = \sum_{j=1}^N (-\Delta_{\bx_j})$
	and no interactions.
	Geometrically, this means that the connection $\cA$ should be (locally) flat.
	Furthermore, for simplicity we only consider finite-dimensional spaces $\cF$.
	We have the following convenient classification of such bundles:

\begin{theorem}[Classification theorem of flat bundles]\label{thm:flat-bundle}
	There is a 1-to-1 correspondence between flat hermitian vector bundles 
	$E_\rho \to \cC^N$ with fiber $\cF=\C^D$
	and unitary representations of $B_N$ on $\cF$, $\rho\colon B_N \to \sU(\cF)$,
	up to conjugation (similarity) in $\sU(\cF)$.
\end{theorem}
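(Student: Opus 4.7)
The plan is to prove the bijection in both directions using the standard holonomy/associated-bundle construction for a connected base manifold, applied with the braid-group fundamental group provided by Theorem~\ref{thm:config-space}.

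For the direction from bundles to representations, I would start with a flat hermitian bundle $E \to \cC^N$ with connection $\cA$, fix the base point $X_0 \in \cC^N$, and fix a unitary identification $E|_{X_0} \cong \cF = \C^D$. Because $\cA$ is flat, parallel transport along a loop depends only on the homotopy class of the loop; this yields a group homomorphism from $\pi_1(\cC^N,X_0) = B_N$ into the unitary group $\sU(E|_{X_0}) \cong \sU(\cF)$. A different choice of unitary frame at $X_0$ conjugates this holonomy representation by an element of $\sU(\cF)$, so at the level of conjugacy classes the assignment $(E,\cA) \mapsto \rho$ is well-defined; bundle isomorphisms respecting the flat hermitian structure translate to such unitary frame changes.

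For the direction from representations to bundles, given $\rho\colon B_N \to \sU(\cF)$, I would form the associated bundle
\begin{equation}
E_\rho := \tcC^N \times_\rho \cF := (\tcC^N \times \cF) / B_N,
\end{equation}
where $B_N = \pi_1(\cC^N)$ acts diagonally through the deck action $L_{[\gamma]}$ on $\tcC^N$ constructed in Section~\ref{sec:ham-cspace} and through $\rho$ on $\cF$. The trivial connection on $\tcC^N \times \cF$ is $B_N$-invariant and descends to a flat connection on $E_\rho$, while the canonical hermitian structure of $\cF=\C^D$ is preserved because $\rho$ takes values in $\sU(\cF)$, giving a flat hermitian bundle over $\cC^N$. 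Local trivializations are obtained from the fundamental domain $\tilde\pr^{-1}_0(\Omega)$ of any simply-connected $\Omega \subseteq \cC^N$, consistent with the local distinguishable-particle Schr\"odinger picture described in Section~\ref{sec:ham-space}.

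The remaining step is to check that the two constructions are mutually inverse: starting from $\rho$, computing holonomy on $E_\rho$ by parallel transporting a constant section $v \in \cF$ along a lift $\tilde\gamma$ of a loop $\gamma$ at $X_0$ recovers $\rho([\gamma])v$; and starting from $(E,\cA)$, the holonomy representation rebuilds $E$ by identifying each fiber with $\cF$ via parallel transport from $X_0$ along any path, modded out by the $B_N$-ambiguity in that path. Both checks are standard once the deck action and the trivialization at $\tX_0$ are fixed. The main point requiring a little care is tracking how a change of base trivialization acts simultaneously on the bundle isomorphism class and on $\rho$, so as to match on the two sides the conjugation ambiguity in $\sU(\cF)$; otherwise there is no serious obstacle, since flatness reduces everything to $\pi_1$ and Theorem~\ref{thm:config-space} supplies $\pi_1(\cC^N) = B_N$.
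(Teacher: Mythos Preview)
Your proposal is correct and follows the standard holonomy/associated-bundle argument, which is exactly what the paper invokes: it does not give a self-contained proof but cites \cite[Theorem~13.2]{Taubes-11} and then sketches the associated bundle $E_\rho = (\tcC^N \times \cF)/B_N$ via the right action $(\tX,v).b = (\tX.b,\rho(b^{-1})v)$, precisely your construction. Your write-up is in fact more detailed than the paper's treatment.
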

	
	In other words, given $N$ and fiber $\cF$,
	the \keyword{moduli space of flat connections} is exactly
	\begin{equation}\label{eq:character-variety}
		\mathrm{Hom}(B_N,\sU(\cF))/\sU(\cF)
	\end{equation}
	where the action of $\sU(\cF)$ on representations $\rho \in \mathrm{Hom}$ 
	is the adjoint one,
	$$
		(S,\rho) \mapsto \Ad_S\rho = S\rho S^{-1}\colon b \mapsto S\rho(b)S^{-1}.
	$$
	A detailed proof of this well-known classification theorem is given in \cite[Theorem~13.2]{Taubes-11},
	and a further discussion of the correspondence in \cite[Chapter~5]{Michiels-13}
	as well as \cite{MacSaw-19}.
	This correspondence motivates the following definition of an arbitrary model
	of ideal, free anyons.

\begin{definition}[Geometric anyon model]\label{def:anyon-model}
	By a \keyword{geometric $N$-anyon model} we mean here a 
	flat hermitian vector bundle $E_\rho \to \cC^N$,
	or equivalently, a unitary representation $\rho\colon B_N \to \sU(\cF)$.
	Similarly, a \keyword{geometric many-anyon model}
	is a family of geometric anyon models
	$\rho_N\colon B_N \to \sU(\cF_N)$ for some sequence of $N \ge 1$
	(possibly with restrictions on which $N$ are allowed; 
	multiples, odd/even, etc.), where $B_1$ is the trivial group.
	
	Further, the \keyword{Hilbert space} of a geometric $N$-anyon model 
	is defined as the space of
	$L^2$ global sections on the bundle $E_\rho \to \cC^N$,
	i.e. the $L^2$-closure of smooth global sections,
	$$
		\cH_\rho^N := \overline{\left\{ \Psi \in \Gamma(\cC^N,E_\rho) : \int_{\cC^N} |\Psi|_\cF^2 < \infty \right\}}^{L^2},
	$$
	where $\Gamma(M,E)$ denotes the set of smooth sections on the bundle $E \to M$.
\end{definition}

\begin{example}
	Using the map $\hat\pr\colon B_N \to S_N$ and the canonical representation 
	$\hat\rho\colon S_N \to \sU(N)$ of $\sigma_j$ acting as permutation matrices 
	($z=1$ in \eqref{eq:Burau-sigma}),
	we may construct a bundle (the \keyword{canonical permutation bundle})
	with fiber $\cF = \C^N$ and the canonical open covering
	$\{\Omega_\sigma\}_{\sigma \in S_N}$ of $\R^{2N} \setminus \bDelta^N \to \cC^N$ 
	with locally constant transition functions
	$$
		t_{\sigma\sigma'}\colon \Omega_\sigma \cap \Omega_{\sigma'} \to \sU(N),
		\quad X \mapsto \hat\rho(\sigma\sigma'^{-1}),
		\quad \sigma,\sigma' \in S_N.
	$$
	The connection is locally trivial but its holonomy nontrivial due to the transition functions.
\end{example}

	The space $\Gamma(\cC^N,E_\rho)$ of smooth global sections
	can equivalently be characterized as the space $C^\infty_\rho(\tcC^N;\cF)$ of smooth
	\keyword{$\rho$-equivariant functions} on the covering 
	space $\tcC^N$, i.e. functions 
	$\Psi_\rho\colon \tcC^N \to \cF$ such that, 
	given any two points $\tX = [\gamma_0].\tX_0$ and $\tY = \tilde\gamma.\tX$ in $\tcC^N$ 
	connected by a path $\tilde\gamma$ in $\tcC^N$
	and such that its projection $\gamma = \pr \circ \tilde\gamma$ in $\cC^N$,
	based at the point $X = \gamma(0) \in \cC^N$,
	is a closed loop, $\gamma.X = \gamma(1) = X$,
	we have
	\begin{equation}\label{eq:Psi-equivariant}
		\Psi_\rho(\tY) = \rho(b^{-1}) \Psi_\rho(\tX),
	\end{equation}
	where $b = \Ad_{\gamma_0^{-1}} [\gamma] \in B_N$ is the equivalence class at $X_0$ to which the loop
	$\gamma$ belongs, i.e. $\tY = \tX.b = L_{b^{-1}}(\tX)$.
	Thus, we have an isomorphism $\cH_\rho^N \cong L^2_\rho(\tcC^N;\cF)$ 
	(see \cite[Lemma~2.4]{MunSch-95}), 
	where the inner product of the latter space is defined
	$$
		\langle \Phi_\rho,\Psi_\rho \rangle_{L^2_\rho} 
		= \int_{\cC^N} \langle\Phi_\rho(\tX),\Psi_\rho(\tX)\rangle_\cF \,dX.
	$$
	In the integrand $\tX$ denotes any pre-image of $X \in \cC^N$ under the projection 
	$\tilde\pr\colon \tcC^N \to \cC^N$.
	This integrand is indeed well defined 
	thanks to the equivariance condition \eqref{eq:Psi-equivariant}, 
	because if $\tX' = \tX.b$, $b \in B_N$, then
	$$
		\langle\Phi_\rho(\tX'),\Psi_\rho(\tX')\rangle_\cF 
		= \langle\rho(b^{-1})\Phi_\rho(\tX),\rho(b^{-1})\Psi_\rho(\tX)\rangle_\cF 
		= \langle\Phi_\rho(\tX),\Psi_\rho(\tX)\rangle_\cF,
	$$
	by the unitarity of the representation $\rho$.
	The measure $dX$ on $\cC^N$
	is derived from the usual Lebesgue measure $d\sx$ on $\R^{2N}$
	via the projection $\pr$, such that
	$\int_{\Omega} dX = \frac{1}{N!}\int_{\pr^{-1}(\Omega)} d\sx$.
		
	The condition \eqref{eq:Psi-equivariant} 
	is in fact used in the proof of Theorem~\ref{thm:flat-bundle}
	to explicitly construct
	the flat bundle $E_\rho$ associated to $\rho$; cf. \cite[Chapter~13.9.3]{Taubes-11}.
	A $\rho$-equivariant function may be considered a relation 
	$\Psi_\rho \subseteq \tcC^N \times \cF$
	with a unique $v \in \cF$ to each $\tX \in \tcC^N$ and such that 
	if $(\tX,v) \in \Psi_\rho$ then
	$(\tX.b,\rho(b^{-1})v) \in \Psi_\rho$ for all $b \in B_N$.
	Then the bundle $E_\rho \to \cC^N$ is precisely
	the set of orbits in $\tcC^N \times \cF$ of the right action
	$$
		(\tX,v).b := (\tX.b,\rho(b^{-1})v),
		\qquad \tX \in \tcC^N, v \in \cF, b \in B_N,
	$$
	with projection $[(\tX,v)] \mapsto \tilde\pr(\tX)$.
	Conversely, given a smooth global section $\Psi \in \Gamma(\cC^N,E_\rho)$,
	there is to each $\tX$ a unique $v \in \cF$ such that
	$\Psi(\tilde\pr(\tX)) = [(\tX,v)]$, which defines a $\rho$-equivariant
	function $\Psi_\rho(\tX) := v$
	(again, see \cite[Lemma~2.4]{MunSch-95}).
	
	Given an $N$-anyon model $\rho$ and locally flat bundle $E_\rho \to \cC^N$ we now
	have a means of defining a covariant derivative $\nabla^\rho\Psi$ 
	on smooth sections $\Psi \in \Gamma(\cC^N,E_\rho)$
	which on any topologically trivial subset $\Omega \subseteq \cC^N$
	reduces to the ordinary derivative on $\Psi\colon \Omega \to \cF$
	$$
		\nabla\Psi(X) = (\nabla_j\Psi)_{j=1,\ldots,N} = (\partial\Psi/\partial x_{jp})_{j=1,\ldots,N}^{p=1,2}
	$$
	and thus a kinetic energy operator 
	$\hT_\rho = (-i\nabla^\rho)^*(-i\nabla^\rho)$
	which on $\Omega$ reduces to the ordinary one for distinguishable particles
	$$
		\hT_\rho\Psi(X) = \sum_{j=1}^N (-i\nabla_j)^2\Psi = -\sum_{j=1}^N \sum_{p=1,2} \partial^2 \Psi/\partial x_{jp}^2.
	$$
	The corresponding quadratic form
	\begin{equation}\label{eq:def-kinetic-bundle}
		T_\rho[\Psi] := \int_{\cC^N} |\nabla^\rho\Psi|_{\cF^{2N}}^2 \,dX
	\end{equation}
	may be extended to the $L^2$-sections for which the integral is finite,
	by taking the closure of smooth sections with compact support on $\cC^N$
	(see below).
	
	In the $\rho$-equivariant setting, a function $\Psi_\rho \in C^\infty_\rho(\tcC^N;\cF)$
	has a derivative at any point $\tX \in \tcC^N$, defined by
	\begin{equation}\label{eq:equivariant-derivative}
		\nabla\Psi_\rho = \left( \frac{\partial\Psi_\rho}{\partial x_{j1}}, \frac{\partial\Psi_\rho}{\partial x_{j2}} \right)_{j=1,\ldots,N},
		\quad
		\frac{\partial\Psi_\rho}{\partial x_{jp}} 
		= \lim_{t \to 0} \frac{1}{t}(\Psi_\rho(\gamma_{jp}^t.\tX)-\Psi_\rho(\tX))
	\end{equation}
	where $\gamma_{jp}^t$ is the path 
	$[0,1] \ni s \mapsto \{\bx_1,\ldots,\bx_j + st\be_p,\ldots,\bx_N\}$ in $\cC^N$
	defining the corresponding tangent vector.
	The index $j$ as it stands here is artificial and depends on which 
	ordering of particles we choose, and we may use the projection 
	$\hat\pr\colon \tX \mapsto \sx$ before applying the path 
	on the $j$th entry of the tuple $\sx$ 
	and then project again to $\cC^N$ with $\pr$.
	Hence the vector in \eqref{eq:equivariant-derivative} depends in fact on the pre-image
	of $X$ in $B_N$.
	In any case, this ambiguity is removed in the permutation-invariant sum
	\begin{equation}\label{eq:kinetic-sum}
		|\nabla\Psi_\rho|^2_{\cF^{2N}} 
		= \sum_{j=1}^N \sum_{p=1,2} \left| \frac{\partial\Psi_\rho}{\partial x_{jp}} \right|_\cF^2.
	\end{equation}
	The kinetic energy
	\begin{equation}\label{eq:def-kinetic-equivariant}
		T_\rho[\Psi_\rho] := \int_{\cC^N} |\nabla\Psi_\rho(\tX)|_{\cF^{2N}}^2 \,dX
	\end{equation}
	is then again well defined because of equivariance,
	\begin{equation}\label{eq:diff-equiv}
		\frac{\partial\Psi_\rho}{\partial x_{jp}}(\tX.b) 
		= \lim_{t \to 0} \frac{1}{t}(\Psi(\gamma_{jp}^t.(\tX.b))-\Psi(\tX.b))
		= \rho(b^{-1})\frac{\partial\Psi_\rho}{\partial x_{j'p}}(\tX),
		\quad j' = \hat\pr(b)[j],
	\end{equation}
	\begin{multline*}
		\langle\nabla\Phi(\tX'),\nabla\Psi(\tX')\rangle_{\cF^{2N}} 
		= \sum_{j=1}^N \sum_{p=1,2} \left\langle \rho(b^{-1}) \frac{\partial\Phi_\rho}{\partial x_{j'p}}(\tX) , \rho(b^{-1}) \frac{\partial\Psi_\rho}{\partial x_{j'p}}(\tX) \right\rangle_\cF  \\
		= \langle\rho(b^{-1})\nabla\Phi(\tX),\rho(b^{-1})\nabla\Psi(\tX)\rangle_{\cF^{2N}} 
		= \langle\nabla\Phi(\tX),\nabla\Psi(\tX)\rangle_{\cF^{2N}},
	\end{multline*}
	and is identical to \eqref{eq:def-kinetic-bundle} by the above correspondence.
	We consider the closure of this form \eqref{eq:def-kinetic-equivariant}
	on $\Psi_\rho \in C^\infty_{\rho,c}$, 
	the $\rho$-equivariant smooth functions on $\tcC^N$
	such that $\tilde\pr(\supp\Psi_\rho) \subseteq \cC^N$ is compact,
	which then defines a Sobolev subspace 
	$H^1_\rho \subseteq L^2_\rho$
	of anyon wave functions with finite expected kinetic energy.
	This is also a Hilbert space with the inner product
	$$
		\langle\Phi,\Psi\rangle_{H^1_\rho} 
		:= \int_{\cC^N} \left( \langle\Phi(\tX),\Psi(\tX)\rangle_\cF + \langle\nabla\Phi(\tX),\nabla\Psi(\tX)\rangle_{\cF^{2N}} \right)\,dX.
	$$
	
\begin{definition}[Anyon Hamiltonian]\label{def:anyon-Hamiltonian}
	We define the kinetic energy operator $\hT_\rho$ on the full 
	$N$-anyon Hilbert space
	$\cH_\rho^N$ as the unique self-adjoint operator corresponding 
	to the closed non-negative quadratic form $T_\rho$ in \eqref{eq:def-kinetic-equivariant},
	with operator domain $\cD(\hT_\rho) \subseteq H^1_\rho \subseteq \cH^N_\rho$
	(see e.g. \cite[Theorem~2.14]{Teschl-14}).
	Considering $\hT_\rho$ as initially defined 
    on the minimal domain
    $C^\infty_{\rho,c}$, 
	with supports away from diagonals $\bDelta^N$,
	this is the \keyword{Friedrichs extension}.
\end{definition}
\begin{remark}\label{rmk:extensions}
	One may consider other extensions, 
	such as the \keyword{Krein extension}\footnote{This
	is the smallest (in terms of energy) non-negative extension while Friedrichs is the largest;
    see \cite{AloSim-80}.
	The extension theory for the 2-anyon problem is actually somewhat analogous to that of two 
	bosons in 3D, admitting a circle of extensions, 
	with the functions of the Friedrichs extension vanishing as $r^\alpha$
	and those of Krein behaving like $r^{-\alpha}$ as the relative distance $r \to 0$.
	Indeed, just as in 3D, we anticipate that there could be 
	good physical reasons to consider
	these other extensions whenever anyons arise as emergent particles 
	along with other interactions. 
	Non-Friedrichs extensions are sometimes referred to as soft-core anyons.},
	however comparing to the two-anyon case where the extension theory has been
	carried through in detail 
	\cite{Grundberg-etal-91,ManTar-91,BorSor-92,AmeBak-95,CorOdd-18,CorFer-21,BorCorFer-24}
	these other extensions would all correspond to introducing additional 
	attractive interactions between the anyons.
	Thus taking the Friedrichs extension is in alignment with our wish in this work
	to isolate the effects of statistics and study purely ideal anyons.
	See \cite{DelFigTet-94,AtaGirLun-25} for possible approaches to the other extensions in the many-anyon case,
	and \cite{LunSol-14} for some further mathematical details 
	in the abelian Friedrichs case.
	Note that the issue here concerns \keyword{point interactions} supported at $\bDelta^N$ while we
	could also have introduced ordinary scalar interactions by adding 
	pair (or higher order) potentials to the Hamiltonian \eqref{eq:Hamiltonian}.
	Both these types of modifications would alter the behavior for 
	distinguishable particles as well.
\end{remark}
	
	In the many-anyon context one may consider the group
	$B_\infty$ defined as the direct limit of the $N$-strand groups with respect to
	the inclusion maps $B_N \hookrightarrow B_{N+1}$ sending
	$\sigma_j \mapsto \sigma_j$.
	It is also natural to consider sequences of representations
	$\rho_N\colon B_N \to \sU(\cF_N)$ such that the respective inclusions 
	$B_N \hookrightarrow B_{N+1}$ and 
	$\C\rho_N(B_N) \hookrightarrow \C\rho_{N+1}(B_{N+1})$ 
	(the group algebras) commute \cite[Definition~2.1]{RowWan-12},
	although not all representations are of this type, such as the Burau representations
	\cite[Sec.~6.1]{DelRowWan-16}.
	
	For geometric many-anyon models we may also consider the \keyword{Fock space}
	$$
		\cH^\infty := \bigoplus_{N=0}^\infty \cH_{\rho_N}^N,
	$$
	where $\cH_{\rho_0}^0 := \cF_0 = \C$ is the vacuum sector.
	Recall that in order to have a non-abelian model it is necessary that 
	$\dim \cF_N \ge N-2$ 
	(cf. Theorem~\ref{thm:small-reps-abelian}), and we cannot therefore 
	simply fix a finite-dimensional $\cF$ as we take $N \to \infty$,
	as one may do for bosons and fermions.

\subsection{Statistics transmutation}\label{sec:ham-transmutation}
		
	Given $N$ and fiber $\cF$,
	let the \keyword{trivial} or \keyword{bosonic bundle} be the bundle with trivial geometry (also defined for any configuration space), 
	taking the trivial representation
	$\rho_+(\sigma_j) := +\1$ for all $j$, thus $E_{\rho_+} = \cC^N \times \cF$.
	We note that any function $\Psi_+ \in \cH_+^N$
	of the standard \keyword{bosonic $N$-body Hilbert space}
	\begin{equation}\label{eq:bosonic-Hilbert-space}
		\cH_+^N := L^2_\sym((\R^2)^N;\cF) :=
		\left\{ \Psi \in L^2(\R^{2N};\cF) : \Psi(\sx.\sigma) = \Psi(\sx) \,\forall \sigma \in S_N \right\},
	\end{equation}
	may be considered an equivariant function $\Psi_{\rho_+} \in \cH^N_{\rho_+}$
	of the bosonic bundle. 
	Namely, by the canonical map $\hat\pr\colon B_N \to S_N$,
	we have in $\tcC^N$ pre-images of $\cC^N$ labeled by 
	permutations $\hat\pr(b)$.
	A symmetric function $\Psi_+ \in \cH_+^N$ thus extends to a function
	$\Psi_{\rho_+}(\tX.b) := \Psi_+(\sx.\hat\pr(b)) = \Psi_+(\sx)$, 
	$\sx = \hat\pr(\tX)$.
	Conversely, any function $\Psi_{\rho_+}\colon \tcC^N \to \cF$ with symmetry
	\begin{equation}\label{eq:Psi-invariant}
		\Psi_{\rho_+}(\tX) = \Psi_{\rho_+}(\tX.b) = \Psi_{\rho_+}(X),
		\quad b \in B_N,
	\end{equation}
	where $\tX$ is any pre-image in $\tcC^N$ of a point $X \in \cC^N$,
	may be identified with a function in $\cH_+^N$
	by defining $\Psi_+(\sx) := \Psi_{\rho_+}\bigl(\pr(\sx)\bigr)$ on 
	$\R^{2N} \setminus \bDelta^N$
	and using that
	$L^2_\sym(\R^{2N} \setminus \bDelta^N) = L^2_\sym(\R^{2N})$
	by the vanishing measure of $\bDelta^N$ in $\R^{2N}$.
	Note however that to make it a probability distribution in $L^2(\R^{2N})$ 
	(i.e. on the \emph{distinguishable} configuration space as is conventional)
	it is necessary to normalize 
	$\Psi_+ := \Psi_{\rho_+}/\sqrt{N!}$.
	We note for example that, given any $\rho$-equivariant $\Psi_\rho$, 
	the scalar function $|\Psi_\rho|_\cF$ (amplitude) is $\rho_+$-equivariant,
	$$
		|\Psi_\rho(\tX.b)|_\cF = |\rho(b^{-1})\Psi_\rho(\tX)|_\cF = |\Psi_\rho(\tX)|_\cF,
	$$
	which is to say that $|\Psi_\rho|_\cF^2$ defines a probability distribution on $\cC^N$
	for $L^2_\rho$-normalized $\Psi_\rho$.
	Furthermore, we have that $H^1_0(\R^{2N} \setminus \bDelta^N) = H^1(\R^{2N})$ 
	(see e.g. \cite[Lemma~3]{LunSol-14} and, more generally, \cite[Appendix~B]{LarLunNam-19})
	so that we may as well identify the \keyword{bosonic Sobolev spaces}
	$$
		H^1_{\rho_+} \cong H^1_\sym := \overline{C^\infty_c(\R^{2N}\setminus\bDelta^N) \cap L^2_\sym}^{H^1(\R^{2N})}.
	$$
	Thus, after taking the closure, the functions do not have to vanish at 
	$\bDelta^N$, and indeed for example the Gaussian
	$\Psi(\sx) = e^{-|\sx|^2} \in H^1_\sym$.
	
	There is also an analogous version for fermions, where the
	\keyword{fermionic bundle} is defined by
	$\rho_-(\sigma_j) := -\1$ for all $j$, i.e. 
	$\rho_- = \sign \circ \hat\pr \otimes \1$.
	Instead of $\cH_+^N$ we then consider the \keyword{fermionic $N$-body Hilbert space}
	\begin{equation}\label{eq:fermionic-Hilbert-space}
		\cH_-^N := L^2_\asym((\R^2)^N;\cF) :=
		\left\{ \Psi \in L^2(\R^{2N};\cF) : \Psi(\sx.\sigma) = (\sign\sigma)\Psi(\sx) \,\forall \sigma \in S_N \right\},
	\end{equation}
	Taking $\Psi_{\rho_-}(\tX.b) := \Psi_-(\sx.\hat\pr(b)) = \sign \hat\pr(b^{-1}) \Psi_-(\sx)$,
	we obtain a section of the fermionic bundle,
	and vice versa $\Psi_-(\sx) := \Psi_{\rho_-}(\tX)$ (modulo normalizations)
	with any lift $\tX \mapsto \sx \mapsto X$
	defines an antisymmetric function.
	We define analogously the \keyword{fermionic Sobolev space} 
	$$
		H^1_{\rho_-} \cong H^1_\asym := \overline{C^\infty_c(\R^{2N}\setminus\bDelta^N) \cap L^2_\asym}^{H^1(\R^{2N})},
	$$
	and in this case we will see 
	(by means of the Hardy inequality in Section~\ref{sec:repulsion}) that the functions
	with finite kinetic energy necessarily vanish on $\bDelta^N$,
	as had been expected from \eqref{eq:Pauli}.
		
	Note that to any flat hermitian vector bundle $E \to \cC^N$ with fiber $\cF$, 
	or representation $\rho\colon B_N \to \sU(\cF)$, 
	there is also the associated flat principal bundle $P \to \cC^N$ with fiber $\sU(\cF)$
	(and vice versa).
	Again the correspondence to $\rho$-equivariant functions carries over to $P$, 
	which may be defined as the
	set of orbits in $\tcC^N \times \sU(\cF)$ of the right action
	$$
		(\tX,g).b := (\tX.b,\rho(b^{-1})g),
		\qquad \tX \in \tcC^N, g \in \sU(\cF), b \in b_N,
	$$
	with the projection $[(\tX,g)] \mapsto \tilde\pr \tX = X$.
	Using a similar identification as for $\Gamma(\cC^N,E)$ then,
	the set of smooth global sections 
	$\Gamma(\cC^N,P)$ is the set of smooth functions
	$u_\rho\colon \tcC^N \to \sU(\cF)$ that are $\rho$-equivariant,
	\begin{equation}\label{eq:u-equivariant}
		u_\rho(\tX.b) = \rho(b^{-1})u_\rho(\tX).
	\end{equation}
	
	Recall that a hermitian vector bundle of rank $n$ 
	is \keyword{trivial} iff there exist
	$n$ global sections which are pointwise orthonormal, thus defining a frame
	$\{u_1(X),\ldots,u_n(X)\}$ in $\cF$ at each $X$,
	and, equivalently, iff there exists a global section of the principal bundle,
	so that $P \cong \cC^N \times \sU(\cF)$.
	Thus, triviality of $E$ and $P$ is equivalent to the existence of $u_\rho$ 
	satisfying \eqref{eq:u-equivariant}.
	
	Now, given an anyon model $\rho\colon B_N \to \sU(\cF)$,
	assume that there exists a smooth function $u_\rho\colon \tcC^N \to \sU(\cF)$ 
	satisfying \eqref{eq:u-equivariant},
	i.e.\ a global section of $P$ ($\Leftrightarrow P$ topologically trivial). 
    Then, given any symmetric function 
	$\Psi_+ \in \cH_+^N$
	one obtains a $\rho$-equivariant function $\Psi_\rho \in \cH_\rho^N$ by taking
	\begin{equation}\label{eq:Psi-transmutation}
		\Psi_\rho(\tX) := u_\rho(\tX)\Psi_+(\tX).
	\end{equation}
	Conversely, given a $\rho$-equivariant function $\Psi_\rho \in \cH_\rho^N$,
	the function 
	\begin{equation}\label{eq:Psi-transmutation-inverse}
		\Psi_+(\tX) := u_\rho(\tX)^{-1}\Psi_\rho(\tX)
	\end{equation}
	satisfies
	$\Psi_+(\tX.b) = \Psi_+(\tX)$,
	where $\tX$ is any pre-image in $\tcC^N$ of $X \in \cC^N$,
	and may therefore be identified with a function in $\cH_+^N$.
	
	Considering the kinetic energy, we may write using \eqref{eq:Psi-transmutation}
	\begin{equation}\label{eq:kinetic-magnetic}
		T_\rho[\Psi_\rho] 
		= \int_{\cC^N} |\nabla(u_\rho\Psi_+)(\tX)|_{\cF^{2N}}^2 \,dX
		= \int_{\cC^N} |\nabla^\cA\Psi_+(\tX)|_{\cF^{2N}}^2 \,dX
		= \int_{\cC^N} |\nabla^\cA\Psi_+(X)|_{\cF^{2N}}^2 \,dX,
	\end{equation}
	where the covariant derivative
	$$
		\nabla^\cA = \nabla + \cA,
	$$
	is given in terms of the connection $\cA\colon \tcC^N \to \gu(\cF)^{2N}$
	($\gu(\cF)$ is the Lie algebra of $\sU(\cF)$), 
	$$
		\cA(\tX) := u_\rho(\tX)^{-1}\nabla u_\rho(\tX).
	$$
	This expression descends to $\hat\pr(\tcC^N) = \R^{2N}\setminus\bDelta^N$ 
	thanks to the symmetry
	$$
		\cA_j(\tX.b) = u_\rho(\tX)^{-1}\rho(b)\nabla_{j'}[\rho(b^{-1})u_\rho](\tX) = \cA_{j'}(\tX)
	$$
	and thus the integrand to $\cC^N$, thanks to the permutation invariance 
	of the sum, as in \eqref{eq:kinetic-sum}.
	
	We call the above operation \keyword{statistics transmutation} as it takes
	us from one model of anyons to another, to the cost of introducing the non-trivial
	connection or \keyword{gauge potential} $\cA$.
	In fact in the abelian case $\cF=\C$, $\gu(\cF) = i\R$, 
	one has $\cA_j = i\bA_j$
	for magnetic vector potentials $\bA_j\colon \R^{2N}\setminus\bDelta^N \to \R$ 
	with magnetic field
	$$
		\curl \bA_j = \nabla u_\rho^\dagger \wedge \nabla u_\rho + u_\rho^\dagger \nabla \wedge \nabla u_\rho = \0,
	$$
	as follows due to differentiating $u_\rho^\dagger u_\rho = 1$.
	We have thus represented the geometric anyon model $\rho$ in this way as a 
	\keyword{magnetic anyon model} using bosons and magnetic potentials. 
	Also in the non-abelian case one could talk about non-abelian ``magnetic'' 
	gauge fields with zero curvature,
	$$
		D\omega_\cA = d\omega_\cA + \omega_\cA \wedge \omega_\cA = 0, 
		\qquad \omega_\cA = \sum_{j,k} \cA_{jk} dx_{jk} = u_\rho^{\dagger}du_\rho.
	$$
	One may do the corresponding transformations 
	\eqref{eq:Psi-transmutation}-\eqref{eq:Psi-transmutation-inverse}-\eqref{eq:kinetic-magnetic}
	also with a fermionic function as reference
	(and indeed there are good physical reasons for doing so),
	although for simplicity we will stick to the bosonic case in all that follows
	as it will be seen not to be a loss in generality.

\begin{definition}[Transmutable anyon model]
	A geometric $N$-anyon model, 
	determined by a braid group representation
	$\rho\colon B_N \to \sU(\cF)$, will be called \keyword{transmutable} 
	if its associated flat principal bundle 
	$P \to \cC^N$ is trivial, i.e. if it admits a smooth 
	global section in $\Gamma(\cC^N,P)$
	or a $\rho$-equivariant function $u_\rho\colon \tcC^N \to \sU(\cF)$.
	The corresponding transmuted model with bosons and the gauge potential 
	$\cA = u_\rho^\dagger\nabla u_\rho$ in the kinetic energy \eqref{eq:kinetic-magnetic}
	will be called a \keyword{magnetic anyon model} corresponding to $\rho$.
	The geometric model formulation is also referred to as the \keyword{anyon gauge} 
	and the magnetic model as the \keyword{magnetic gauge}.
\end{definition}

	Bosons are then obviously transmutable with the global section $u_\rho = \1$.
	Naturally this leads to the following question which we have not seen 
	duly discussed in the literature (among the exceptions we find 
	\cite{Bloore-80,Dowker-85,SudImbImb-88,HorMorSud-89,FroMar-89,ImbImbSud-90,DoeGroHen-99,MunSch-95}, 
	as well as very recently \cite{MacSaw-19} 
	which take an alternative approach via graph configuration spaces):
\begin{question}
	Which (geometric) anyon models are transmutable? 
\end{question}

	Only in the abelian case do we find a complete answer:

\begin{theorem}
	All abelian anyon models on $\R^2$ are transmutable.
\end{theorem}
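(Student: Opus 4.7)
The plan is to reduce to the case of a one-dimensional representation via direct-sum decomposition and then construct the equivariant section $u_\rho$ explicitly from the classical Aharonov--Bohm phase on $\tcC^N$.

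First, since $\rho\colon B_N\to\sU(\cF)$ is abelian and unitary, Lemma~\ref{lem:similar-generators} gives a decomposition $\rho=\bigoplus_k \rho_k$ into irreducible unitary subrepresentations. Because the image is abelian, each $\rho_k(\sigma_j)$ commutes with all the others on the irreducible block; combining this with the similarity $\rho_k(\sigma_j)\sim\rho_k(\sigma_1)$ from \eqref{eq:similar-generators} and Schur's lemma, each $\rho_k$ must be one-dimensional and of the form $\rho_k(\sigma_j)=e^{i\alpha_k\pi}$ for all $j$, with a single parameter $\alpha_k\in[0,2)$. Since the principal bundle over $\cC^N$ is a product of the line bundles associated to the $\rho_k$, it suffices to trivialize each of them; a global frame is then obtained by combining the one-dimensional equivariant functions $u_{\rho_k}$ diagonally.

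Fix therefore an abelian one-dimensional model with parameter $\alpha$, and define on the covering space the function
\begin{equation}
	\Phi\colon\tcC^N\to\R,\qquad
	\Phi(\tX):=\sum_{1\le j<k\le N}\arg(\bx_j-\bx_k),
\end{equation}
where on the fundamental domain $\tilde\pr_0^{-1}(\cC^N)$ the argument is taken in its principal branch with respect to the reference configuration $\tX_0$, and extended continuously along the chosen representative path $\gamma_0$ in the construction $\tX=[\gamma_0].\tX_0$. This is single-valued and smooth on $\tcC^N$ precisely because the covering space unwraps the multivaluedness of $\arg$. Then set
\begin{equation}
	u_\rho(\tX):=e^{-i\alpha\Phi(\tX)}\in\sU(1).
\end{equation}

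To verify $\rho$-equivariance, I would compute $\Phi(\tX.b)-\Phi(\tX)$ for $b=\sigma_j$ along a path in $\tcC^N$ implementing the elementary counterclockwise half-exchange. Only the relative angle $\arg(\bx_j-\bx_{j+1})$ acquires a nontrivial contribution of $\pi$ while all other terms $\arg(\bx_l-\bx_m)$ return to their initial values (possibly after relabeling, which is compatible with the symmetry of the sum). Hence $\Phi(\tX.\sigma_j)-\Phi(\tX)=\pi$, giving
\begin{equation}
	u_\rho(\tX.\sigma_j)=e^{-i\alpha\pi}u_\rho(\tX)=\rho(\sigma_j^{-1})u_\rho(\tX).
\end{equation}
Since the $\sigma_j$ generate $B_N$ and $\rho$ is a homomorphism, \eqref{eq:u-equivariant} follows for arbitrary $b\in B_N$. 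Smoothness on $\tcC^N$ is clear because $\bx_j\neq\bx_k$ there.

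The main (minor) obstacle lies in bookkeeping the branch choice for $\Phi$: one must check that on the fundamental domain $\tilde\pr_0^{-1}(\cC^N)$ the function $\Phi$ can be chosen smooth, and that its extension to $\tcC^N$ via the covering action is consistent with the homotopy equivalence defining $\tcC^N$. This is a direct consequence of the simple connectedness of $\tcC^N$ (Theorem~\ref{thm:config-space} and the construction in Section~\ref{sec:ham-cspace}): any two paths from $\tX_0$ to $\tX$ in $\tcC^N$ are homotopic, so analytic continuation of $\arg$ along them yields the same value. Having established this, the local bosonic gauge potential produced by transmutation is $\cA_j=u_\rho^{\dagger}\nabla_{\bx_j}u_\rho=-i\alpha\sum_{k\neq j}(\bx_j-\bx_k)^{-\perp}$, which is the familiar abelian anyon vector potential of Section~\ref{sec:intro}, confirming consistency with the known magnetic representation.
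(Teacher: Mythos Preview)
Your proposal is correct and follows essentially the same route as the paper: the paper cites the abstract homological argument (torsion in $H_1(B_N,\Z)$ versus $H^2(B_N,\Z)$) and then, in Section~\ref{sec:ham-abelian}, writes down the very same explicit global section $u_{\rho^\alpha}=\exp\bigl(i\alpha\sum_{j<k}\phi_{jk}\bigr)$ obtained by analytic continuation of the relative-angle sum to $\tcC^N$, together with the diagonal block decomposition for the reducible case. Your $\Phi$ is the paper's $\sum_{j<k}\phi_{jk}$ up to a sign convention, and your verification that $\Phi(\tX.\sigma_j)-\Phi(\tX)=\pi$ is precisely the paper's statement that this combination ``maps braids to winding numbers $B_N\to\Z$, hence abelianizes the braid group.''
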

	A proof was given by \cite{Dowker-85} and \cite[Theorem~2.22]{MunSch-95} 
	(see also e.g. \cite[Example~5]{MacSaw-19})
	and 
	involves an observation that torsion in homology $H_1(B_N,\Z)$ 
	is the same as torsion in cohomology $H^2(B_N,\Z)$,
	with an explicit global section given below in Section~\ref{sec:ham-abelian}, 
	which was also implicitly used in the earlier works \cite{Wu-84b,FroMar-91}.
	
	Since a fermionic model is abelian it is thus transmutable into a bosonic model.
	Furthermore, two-anyon models are trivially transmutable since they are also abelian
	(though possibly of higher rank and reducible).
	Also recall that for rank $D=\dim \cF < N-2$ and $N > 6$, 
	any representation of $B_N$ is necessarily abelian 
	(Theorem~\ref{thm:small-reps-abelian}) and therefore the bundle trivializes 
	to a sum of line bundles.
	Geometrically, the moduli space \eqref{eq:character-variety} 
	of possible anyon models or 2D quantum statistics 
	splits into components of non-transmutable models 
	(i.e.\ isomorphism classes of topologically non-trivial bundles) 
	and a component of mutually transmutable models 
	(which in turn may a priori consist of several connected components of different 
	flat connections on a topologically trivial bundle; see however \cite[Section~3.3]{MacSaw-19})
	including the circle $\rho^\alpha \otimes \1_\cF$, $\alpha \in [0,2)$, 
	of abelian models and 
	the identity point $\rho^{\alpha=0}=\rho_+$ of the bosonic/trivial bundle.
	
	The classification of isomorphism classes of higher-rank flat hermitian
	vector bundles on $\cC^N$ for $N > 2$
	is to our knowledge an open problem.
	However, as pointed out in \cite[p.10]{MunSch-95},
	one possible way to obtain transmutable models given non-transmutable ones
	is to simply take the Whitney sum of the bundles:

\begin{definition}[Whitney sum]
	Given two geometric $N$-anyon models, determined by representations
	$\rho_1\colon B_N \to \sU(\cF_1)$ resp.\ $\rho_2\colon B_N \to \sU(\cF_2)$,
	their \keyword{Whitney sum} is the geometric $N$-anyon model determined by
	$\rho = \rho_1 \oplus \rho_2\colon B_N \to \sU(\cF_1 \oplus \cF_2)$.
	This is the bundle $E_\rho = E_{\rho_1} \oplus E_{\rho_2}$.
\end{definition}

	Namely, as noted in \cite[p.10]{MunSch-95}, 
	the bundles trivialize under the taking of
	their $k$-fold Whitney sum $\oplus^k \rho$ for large enough $k$,
	because of the following finiteness theorem for the cohomology ring
	$H^*(\cC^N,\Z) = H^*(B_N,\Z)$
	\cite{Arnold-70,Arnold-14,Vainshtein-78}:

\begin{theorem}[{Arnold's Cohomology Theorems \cite[p.201]{Arnold-14}}]
	
	\item{Finiteness:}
	With the exception of $H^0 \cong H^1 \cong \Z$,
	the cohomology groups $H^j(B_N,\Z)$ are all finite.
	
    \item{Vanishing:}
    $H^j(B_N,\Z) = 0$ for $j=2$ and for $j \ge N$.
	
	\item{Recurrence:} 
	$H^j(B_{2N+1},\Z) = H^j(B_{2N},\Z)$ for all $j$ and $N$.
	
	\item{Stability:}
	$H^j(B_N,\Z) = H^j(B_{2j-2},\Z)$ for all $N \ge 2j-2$.
\end{theorem}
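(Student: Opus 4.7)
The starting point is to identify $H^*(B_N,\Z)$ with $H^*(\cC^N,\Z)$ by realizing the ordered configuration space $(\R^2)^N\setminus\bDelta^N$ as a classifying space:
by iterated application of the Fadell--Neuwirth fibration
\[
	\R^2\setminus\{p_1,\ldots,p_{N-1}\}\hookrightarrow (\R^2)^N\setminus\bDelta^N \twoheadrightarrow (\R^2)^{N-1}\setminus\bDelta^{N-1},
\]
each fiber is homotopy equivalent to a wedge of $N{-}1$ circles and so aspherical, hence by the long exact sequence of homotopy groups $\tcC^N\simeq (\R^2)^N\setminus\bDelta^N$ is a $K(\pi,1)$, and $\cC^N$ itself is $K(B_N,1)$. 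The plan is then to realize $\cC^N$ as a finite CW-complex whose combinatorial structure can be read off explicitly, and to analyze its cellular cochain complex.

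The cleanest model for this is Fuchs's (or the Fox--Neuwirth) cell decomposition of $\cC^N$: cells are indexed by ordered set partitions of $\{1,\ldots,N\}$ encoding the horizontal ordering of abscissas together with the pattern of coincident abscissas, and the dimension of a cell equals $N$ minus the number of coincidence blocks in the partition. In particular there are no cells of dimension greater than $N-1$ after quotienting by $S_N$, giving immediately $H^j(B_N,\Z)=0$ for $j\geq N$. The total number of cells is finite, so $H^*(B_N,\Z)$ is finitely generated; combined with the known computation $H^0=H^1=\Z$ coming from path-connectedness and the abelianization $B_N^{ab}=\Z$, finiteness of $H^j$ for $j\geq 2$ then follows from the fact that $\cC^N$ is orientable and non-compact of real dimension $2N$, so Poincar\'e--Lefschetz duality plus the vanishing of top-degree compactly supported cohomology forces torsion in intermediate degrees (alternatively, one verifies directly that the cellular chain complex mod torsion is trivial above degree $1$).

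The vanishing $H^2(B_N,\Z)=0$ is the most delicate point and is where I expect the main obstacle. In the Fuchs complex the generators in degrees $1$ and $2$ can be listed explicitly, and one must verify by a direct boundary-map computation that every $2$-cocycle is a coboundary. Equivalently, one can invoke the universal coefficient theorem together with $H_2(B_N,\Z)=0$, the latter being the statement that the Schur multiplier of $B_N$ is trivial; this in turn reduces to showing that the commutator subgroup $[B_N,B_N]$ is perfect for $N\geq 5$, a fact provable by direct manipulation of the braid relations \eqref{eq:braid-relations}. (For small $N$ the result is verified by hand.)

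Finally, repetition and stability. Both are consequences of analyzing the inclusion $B_N\hookrightarrow B_{N+1}$ and the associated map on cellular cochains. The stability statement $H^j(B_N,\Z)=H^j(B_{2j-2},\Z)$ for $N\geq 2j-2$ follows because the cells of $\cC^N$ of dimension $\leq j$ involve at most $2j-2$ of the $N$ particles in a nontrivial way; stabilization in $N$ therefore does not affect the cochain complex in degrees $\leq j$ once $N\geq 2j-2$. The repetition $H^*(B_{2N+1},\Z)=H^*(B_{2N},\Z)$ is the more surprising assertion and is proved via a specific chain homotopy on the Fuchs complex exploiting a parity symmetry among ordered partitions, or equivalently by Arnold's transfer argument between the configuration spaces of even and odd cardinality. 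Together, the combinatorial Steps (cell complex $+$ dimension count $+$ direct $H^2$ vanishing $+$ stabilization map) deliver all three parts of the theorem; the $H^2=0$ calculation is the one requiring genuine group-theoretic input rather than purely topological bookkeeping.
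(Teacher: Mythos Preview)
The paper does not supply a proof of this theorem; it is quoted as a result from the literature with a citation to Arnold (and Va{\u\i}nshte{\u\i}n). So there is no in-paper argument to compare against, and your sketch should be assessed on its own merits as an outline of the classical approach.

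Your overall architecture is the right one and is essentially Arnold's and Fuchs': identify $\cC^N$ as a $K(B_N,1)$ via Fadell--Neuwirth, then compute using the Fox--Neuwirth/Fuchs cellular model. The vanishing for $j\ge N$ does follow from the fact that this model has cohomological dimension $N-1$, and $H^0\cong H^1\cong\Z$ is immediate from connectivity and $B_N^{\rm ab}\cong\Z$. The reduction of $H^2=0$ to triviality of the Schur multiplier is also fine (Gorin--Lin indeed show $[B_N,B_N]$ is perfect for $N\ge 5$).

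There are, however, two genuine gaps. First, your finiteness argument via Poincar\'e--Lefschetz duality is not valid: orientability and non-compactness of a $2N$-manifold do not force the intermediate cohomology to be torsion (think of $\R^n\setminus\{0\}$, which has $H^{n-1}\cong\Z$). The actual proof requires the explicit cellular differential in the Fuchs complex and a direct verification that the rational chain complex is acyclic above degree~$1$; this is a computation, not a formal consequence of duality. Second, your stability claim that ``cells of dimension $\le j$ involve at most $2j-2$ particles'' is not correct as stated for the Fuchs complex: cells are indexed by compositions of $N$, all of which involve all $N$ points, and the dimension is determined by the shape of the composition rather than by how many points are ``nontrivially'' involved. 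The stability and repetition statements are instead proved by analyzing how the cellular differential behaves under the natural inclusions and by an explicit parity/transfer argument on compositions; you have correctly flagged repetition as the delicate step, but neither it nor stability follows from the heuristic you gave. In short: the skeleton is right, but the finiteness and stability steps need the actual Fuchs-complex computations rather than the shortcuts you propose.
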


\begin{theorem}
	Given any flat vector bundle $E \to \cC^N$, 
	there exists a $k \in \N$ such that 
	the $k$-fold Whitney sum $\bigoplus^k E$ is trivial.
\end{theorem}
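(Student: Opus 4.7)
The plan is to deduce triviality of a sufficiently large Whitney sum from the finiteness of reduced complex K-theory of $\cC^N$, which in turn follows from Arnold's finiteness theorem via the Chern character.

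First I would recall that $\cC^N$ is a $K(B_N,1)$, since $(\R^2)^N\setminus\bDelta^N$ is aspherical, and that it has the homotopy type of a finite CW complex of dimension $N-1$ (for instance, the Salvetti complex associated to the Artin group $B_N$). Consequently its singular cohomology coincides with group cohomology, $H^j(\cC^N,\Z)=H^j(B_N,\Z)$, and is finitely generated in every degree, vanishing above dimension $N-1$.

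Next, I would invoke the Chern character isomorphism
\[
\mathrm{ch}\colon \tilde K^0(\cC^N)\otimes_\Z\Q \ \xrightarrow{\ \sim\ }\ \bigoplus_{j\ge 1} H^{2j}(\cC^N,\Q).
\]
By Arnold's first theorem, each $H^{2j}(B_N,\Z)$ is a finite group for $j\ge 1$ (with $H^2$ vanishing outright), so tensoring with $\Q$ kills the entire right-hand side. Combined with finite generation of $\tilde K^0(\cC^N)$ (which is immediate from the Atiyah--Hirzebruch spectral sequence applied to the finite CW model), this forces $\tilde K^0(\cC^N)$ to be a \emph{finite} abelian group.

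Now let $E\to\cC^N$ be a flat hermitian vector bundle of rank $n$, and let $m$ be the order of its reduced class $[E]-[\C^n]\in\tilde K^0(\cC^N)$. For any positive integer $k$ divisible by $m$ we then have $[\bigoplus^k E]=[\C^{kn}]$ in $K^0(\cC^N)$, so $\bigoplus^k E$ is stably isomorphic to a trivial bundle of rank $kn$. Finally, I would apply the standard stable-range theorem: a complex vector bundle of rank $r$ over a CW complex of dimension $d$ with $2r>d$ is determined up to isomorphism by its stable class. Choosing $k$ a multiple of $m$ large enough that $2kn>2(N-1)$ upgrades stable triviality to honest triviality of $\bigoplus^k E$.

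The main obstacle is bookkeeping rather than substance: one has to verify that $\cC^N$ genuinely has finite CW type despite being a noncompact manifold, so that the K-theoretic and stable-range machinery applies; and one must check that Arnold's finiteness statement (for $j\ge 2$) suffices, which it does because even rational cohomology in positive degree is all that enters the Chern character. Once these are in place, the argument is entirely formal.
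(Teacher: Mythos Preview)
Your proof is correct and follows essentially the same strategy as the paper: use Arnold's finiteness theorem to show the relevant even-degree cohomology is torsion, deduce that some Whitney power of $E$ is stably trivial, and then invoke the stable-range theorem (the paper cites \cite{AntEbe-12} for this last step) to upgrade stable triviality to genuine triviality. The paper's two-line proof phrases the first step loosely in terms of Chern classes ``classifying up to stable equivalence'' and ``trivializing'' under Whitney sums, whereas your route through the Chern character and the finiteness of $\tilde K^0(\cC^N)$ is the precise way to make that step rigorous; your explicit use of the finite CW model (Salvetti complex) to justify both the Atiyah--Hirzebruch input and the stable-range hypothesis also fills in details the paper leaves implicit.
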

\begin{proof}
	Hermitian vector bundles are classified up to stable equivalence by their Chern classes,
	$c_j \in H^{2j}$, 
	which trivialize in $\oplus^k E$ for large enough $k$ if they are only torsion.
	Furthermore, a hermitian vector bundle of sufficiently high rank is trivial 
	iff it is stably trivial \cite{AntEbe-12}.
\end{proof}

\begin{remark}
	As a relevant comparison, note that with the analogous definitions, 
	a pair of \emph{spinless} fermions in dimension $d=3$ is \emph{not} directly transmutable to bosons. 
	Namely, in relative coordinates
	$$
		\cC^2(\R^3) \cong \cC_0 \times \R^+ \times \R^3,
	$$
	where the relative angles are on $\cC_0 \cong S^2/{\sim}$, the sphere with
	antipodal points identified. Here $\tcC_0 \cong S^2$ and hence a global
	section $u_\rho\colon \tcC_0 \to \sU(1)$ is a smooth function on the
	sphere with the equivariance relation
	$$
		u_\rho(-\bx) = \rho(\tau)u_\rho(\bx), \qquad \bx \in S^2,
	$$
	where $\tau$ is the generator of $\pi_1(\cC_0)=\Z_2$ and $\rho$ the representation.
	For bosons ($\rho_+=1$) we can take $u_{\rho_+} = 1$ while for fermions 
	($\rho_-(\tau)=-1$)
	there is no such function by the Borsuk-Ulam theorem
	(any continuous and odd complex-valued function on the sphere has a zero somewhere).
	However, taking the Whitney sum, $\rho_2 = \rho_- \oplus \rho_-$, 
	there is e.g. the global section
	$$
		u_{\rho_2}\colon S^2 \to \sU(2),
		\quad
		u_{\rho_2}(x,y,z) = 
		  x \begin{bmatrix} 0 & 1 \\ 1 & 0 \end{bmatrix}
		+ y \begin{bmatrix} 0 &-i \\ i & 0 \end{bmatrix}
		+ z \begin{bmatrix} 1 & 0 \\ 0 &-1 \end{bmatrix},
		\quad
		x^2+y^2+z^2=1,
	$$
	that trivializes the corresponding bundle.%
	\footnote{Thus from this perspective, one could say that spin emerges 
	as a trivialization.
	Also, in $d=3$ there is 
	the notion of \keyword{dyons} \cite{Schwinger-69} which uses
	magnetic monopoles to transmute statistics, however this goes beyond the
	scope of the present article.}
\end{remark}

\subsection{Abelian anyons}\label{sec:ham-abelian}

	The fiber for an irreducible abelian model $\rho^\alpha(\sigma_j^{-1}):=e^{i\alpha\pi}$, 
	$\alpha \in (-1,1]$, is
	$\cF = V^{\balpha^N}_c = \C$, where the total charge $c = \balpha^{\times N}$
	corresponds to the total normalized magnetic flux of all particles (mod 2).
	The convention we use here on the sign of $\alpha$ (conjugation symmetry) 
	will help to recover the
	familiar conventions with the equivariance condition \eqref{eq:Psi-equivariant}
	actually being an action of $B_N$ on the right.
	
	A global section is given by the analytic continuation to $\tcC^N$ of
	$u_{\rho^\alpha}(\sx) := U(\sx)^{\alpha}$, where we use 
	$z_j := x_{j1} + ix_{j2} \in \C$
	and the \keyword{boson-fermion transmutation}
	$$
		U\colon L^2_{\sym/\asym} \to L^2_{\asym/\sym}, 
		\qquad (U\Psi)(\sx) := \prod_{1 \le j<k \le N} \frac{z_j-z_k}{|z_j-z_k|} \Psi(\sx)
		= \exp\bigl({\textstyle\sum_{j<k}} i\phi_{jk}\bigr) \Psi(\sx),
	$$
	where $\phi_{jk}$ is the phase of $z_j - z_k$ which is initially 
	defined for $(-\pi,\pi]$ and extends to a multivalued function.
	In fact, the combination
	$$
		\frac{1}{\pi} \sum_{j<k}\phi_{jk} 
		= \frac{1}{2\pi} \sum_{j \neq k}\phi_{jk} 
		:= \frac{1}{2\pi} \sum_{j \neq k} d\phi_{jk}[\tX] 
		= \frac{1}{2\pi} \int_\gamma \sum_{j \neq k} d\phi_{jk},
		\qquad \tX = [\gamma].\tX_0,
	$$
	is well defined on $\tcC^N$ and maps braids to winding numbers $B_N \to \Z$,
	hence abelianizes the braid group (see \cite{Dowker-85,MunSch-95}).
	We then obtain $U(\tX.{\sigma_j}) = -U(\tX)$ and 
	$u_{\rho^\alpha}(\tX.\sigma_j) = \rho^\alpha(\sigma_j^{-1})u_{\rho^\alpha}(\tX)$.
	Thus by the above recipe,
	the abelian model may be converted to a magnetic model on $\cH_+$ 
	with gauge potential
	$\cA = u_\rho^{-1}\nabla u_\rho = i\bA_\alpha$:
	$$
		\bA_{\alpha,j}(\sx) 
		= \alpha \sum_{k \neq j} \nabla_j \phi_{jk}
		= \alpha \sum_{k \neq j} \frac{(\bx_j-\bx_k)^\perp}{|\bx_j-\bx_k|^2},
		\qquad
		\bx^\perp = (x,y)^\perp := (-y,x).
	$$
	Using that $\nabla^\perp \cdot \bx^\perp/|\bx|^2 = \Delta \ln|\bx| = 2\pi \delta(\bx)$,
	its magnetic field w.r.t. $\bx_j \in \R^2$ is
	$$
		\curl \bA_{\alpha,j}(\bx_j) = 2\pi\alpha\sum_{k \neq j} \delta(\bx_j - \bx_k).
	$$
	The corresponding kinetic energy for bosonic
	$\Psi_+ = u_{\rho^\alpha}^{-1}\Psi_{\rho^\alpha}$ is
	$$
		\hT_\alpha := \sum_{j=1}^N (-i\nabla_j + \bA_{\alpha,j})^2,
		\qquad
		T_\alpha[\Psi_+] = \frac{1}{N!} \int_{\R^{2N}} |(-i\nabla_j + \bA_{\alpha,j})\Psi_+|^2.
	$$
	In this formulation there is the possibility to extend the
	statistics parameter $\alpha$ to all of $\R$
	and consider the gauge-equivalent magnetic models
	\begin{equation}\label{eq:gauge-equivalence}
		-i\nabla+\bA_{\alpha+2n} = U^{-2n} (-i\nabla+\bA_\alpha) U^{2n}, 
		\quad
		\cD(\hT_{\alpha+2n}) = U^{-2n} \cD(\hT_\alpha), 
		\qquad n \in \Z.
	\end{equation}
	For $\alpha=0$ the domain of the Friedrichs extension is 
	$\cD(\hT_{\rho_+}) = H^2_\sym = L^2_\sym \cap H^2(\R^{2N})$ which defines free bosons,
	while for $\alpha=1$ it is 
	$\cD(\hT_{\rho_-}) = U^{-1} H^2_\asym = U^{-1} L^2_\asym \cap H^2(\R^{2N})$ 
	which defines free fermions in a bosonic representation
	(see \cite[Section~2.2]{LunSol-14}).
	
	In the reducible case $\cF = \C^D$, we take a basis of joint eigenvectors s.t.
	$$
		\rho(\sigma_j^{-1}) = S \diag(e^{i\gamma_1\pi}, \ldots, e^{i\gamma_D\pi}) S^{-1},
	$$
	$S \in \sU(D)$, 
	and the corresponding global section is the analytic continuation to $\tcC^N$ of
	$$
		u_\rho(\sx) = S \diag\left( U(\sx)^{\gamma_1}, \ldots, U(\sx)^{\gamma_D} \right) S^{-1}.
	$$
	Thus, if $\{v_n\}$ is a basis in $\cF$ of joint unit eigenvectors such that
	$\rho(\sigma_j^{-1})v_n = e^{i\gamma_n\pi}v_n$, then we may consider the 
	subspace in $H^1_\rho$ consisting of functions
	\begin{equation}\label{eq:abelian-magnetic}
		\Psi(\tX) = \sum_{n=1}^D U(\tX)^{\gamma_n} \Phi_n(X) v_n,
	\end{equation}
	where $\Phi_n \in C^\infty_{c,\sym}(\R^{2N} \setminus \bDelta^N)$, for which
	$$
		\int_{\cC^N} |\Psi|^2 = \sum_{n=1}^D \frac{1}{N!} \int_{\R^{2N}} |\Phi_n|^2,
		\qquad
		T_\rho[\Psi] = \sum_{n=1}^D T_{\gamma_n}[\Phi_n],
	$$
	i.e. $(\Phi_n) \in (\cH_+^N)^D$ models a collection of 
	$N$-anyon wave functions in the magnetic representation
	with statistics parameters $\alpha=\gamma_n$, $n \in \{1,\ldots,D\}$.
	
	For a glimpse of how such abelian models may arise in a FQHE context, see
	e.g.\ the brief introduction \cite[Chapter~2]{Tournois-20},
	and \cite{LunRou-16} for a more precise application
	(accounting for certain ambiguities \cite{Forte-91,Forte-92} 
	in the original derivation \cite{AroSchWil-84}).
	Another recent realization is via quantum impurity problems, polarons and angulons
	\cite{Yakaboylu-etal-19,Yakaboylu-etal-20}.

\subsection{Fibonacci anyons}\label{sec:ham-fib}

	We consider the representation $\rho_N\colon B_N \to \sU(\cF^N)$
	on the splitting space given in Definition~\ref{def:splitting-rep}.
	Imposing that we should have exactly $N$ anyons, as well as irreducibility,
	we may thus take $\cF^N_c := V^{\tau^N}_c$, where
	there is a remaining choice to be made for the total charge $c \in \{1,\tau\}$:
	$\cF^N_1 \cong \C^{\Fib(N-1)}$ and 
	$\cF^N_\tau \cong \C^{\Fib(N)}$.
	
	Presently we do not know if this model is transmutable or not.
	However, there is a correspondence to FQHE and CFT 
	(Read-Rezayi states \cite{ReaRez-99,CapGeoTod-01,ArdKedSto-05,Lundholm-16})
	which is similar to that for the Ising model discussed below, 
	and which seems to suggest transmutability
	\cite{ArdSch-07,TouArd-20}.

\subsection{Ising anyons}\label{sec:ham-ising}

	Here we could work with two types of particles 
	(the Ising anyon $\sigma$ and the Majorana fermion $\psi$)
	explicitly present in the system
	and having dynamics,
	but to simplify matters slightly we consider 
	$N$ anyons of only the more interesting type $\sigma$,
	with fiber $\cF^N = V^{\sigma^N}_\sigma \cong \C^{2^{(N-1)/2}}$ for $N$ odd and
	$\cF^N_c = V^{\sigma^N}_c \cong \C^{2^{N/2-1}}$ for $N$ even, 
	where either $c=\sigma$ or $c=\psi$.
	
	The corresponding model, also including the fermions $\psi$,
	arises in the FQHE context, supposedly as an effective model
	for quasiparticles modeled by a family of electron (fermion) wave functions known as 
	Moore-Read or Pfaffian states \cite{MooRea-91,NayWil-96}. 
	Another representation for these states are as 
	correlators of a CFT related to the Ising model. 
	This story is beyond the scope of the present article 
	(see e.g.\ \cite{HanHerSimVie-17} and \cite[Chapter~3.4]{Tournois-20} for recent reviews),
	but it suffices to note that, given positions $\bx_j \in \R^2$, $j=1,\ldots,N$ 
	lifting to $\tX \in \tcC^N$ there exists a family of states $\{\Psi_n(\tX)\}_{n=1}^D$
	with values in an $M$-fermion Hilbert space $\C^D \cong \cF_N \subseteq L^2_\asym(\R^{2M})$
	and some constant $C>0$ 
	such that in the limit $M \to \infty$ 
	$$
		\langle \Psi_n(\tX), \Psi_m(\tX) \rangle_{\cF_N} = C\delta_{nm} + O(e^{-|\bx_j-\bx_k|})
	$$
	if $|\bx_j-\bx_k| \to \infty$ \cite{GurNay-97,BonGurNay-11}.
	Thus the model appears to be transmutable by the existence of this trivialization, 
	at least in some limit and for $N=3$ or 4, and $D=2$.
	However the connection in this family of models is the Berry connection induced
	from the embedding $\cF_N \hookrightarrow L^2_\asym(\R^{2M})$ 
	and which is not necessarily locally flat.
	
	Another way to obtain its transmutability would be to use that is expected 
	(as part of the CFT correspondence) 
	to arise as subrepresentations of NACS with gauge group $G = SU(2)$.

\subsection{NACS}\label{sec:ham-nacs}

	There is a certain family of representations which arise
	in the context of \keyword{Non-Abelian Chern-Simons (NACS)} theory 
	\cite{Kohno-87,GuaMarMin-90,Verlinde-91,Lo-93,LeeOh-94,BakJacPi-94,ManTroMus-13a,ManTroMus-13b}, 
	These may be explicitly defined as magnetic models on the trivial bundle \cite{Kohno-87}, 
	and are hence transmutable.
	Typically one formulates the models in a holomorphic gauge for convenience.
	
	Take a compact Lie group $G$ (for example $G=\sSU(2)$) and a unitary representation 
	$\rho_1\colon G \to \sU(\cF_1)$ (for example $\cF_1=\C^2$), 
	such that the generators $L_a$ of the Lie algebra $\mathfrak{g}$ of $G$
	are mapped to operators $\hat{L}_a \in \cL(\cF_1)$.
	Using path-ordered exponentials, we may write
	$$
		g = \exp_\eP \int_\gamma \sum_a L_a \,dg^a,
		\qquad
		\rho_1(g) = \exp_\eP \int_\gamma \sum_a \hat{L}_a \,dg^a.
	$$
	Let $\cF_N=\bigotimes^N \cF_1$, with $\rho_1$ acting canonically on each of the factors
	$$
		\rho_1^{(j)}(g) = \exp_\eP \int_\gamma \sum_a \hat{L}_a^{(j)} \,dg^a = \1 \otimes \rho_1(g) \otimes \1,
		\qquad
		\hat{L}_a^{(j)} = \1 \otimes \hat{L}_a \otimes \1.
	$$
	Defining
	$$
		U_\rho(\tX) := \exp_\eP \int_\Gamma \frac{1}{4\pi\kappa} \sum_{j \neq k} \sum_a \hat{L}_a^{(j)} \hat{L}_a^{(k)} d\log(z_j-z_k),
		\qquad \tX = [\Gamma].\tX_0,
	$$
	then $\cF_N$ carries a representation $\rho\colon B_N \to \sU(\cF_N)$ 
	with a globally defined section
	$$
		u_\rho(\tX) := U_\rho(\tX)/|U_\rho(\tX)|_{\cF_N}.
	$$
	The parameter $4\pi\kappa-2 \in \Z$ 
	is known as the level of the representation,
	and $U_\rho$ formally solves the Knizhnik-Zamolodchikov equations 
	\cite{KniZam-84}, \cite{Kohno-87}, \cite[Eq.~(39)]{LeeOh-94}.
	
	The representation $\rho$ will typically be reducible into irreducible blocks (`conformal blocks')
	$\rho_n$ and one could say that, while $\rho_n$ may not be transmutable individually,
	their sum $\rho$ is.
	
\subsection{Local kinetic energy}\label{sec:ham-energy}

	Given a subset $\Omega \subseteq \R^2$ 
	we define the \keyword{local configuration space} of identical particles on $\Omega$
	$$
		\cC^N(\Omega) := \left( \Omega^N \setminus \bDelta^N \right) \Big/ S_N
		= \eP_N(\Omega),
	$$
	the set of $N$-point subsets of $\Omega$.
	Let now $\Omega$ be open and simply connected (homotopic to $\R^2$).
	Similarly to the full space we may again define the covering space
	$$
		\tcC^N(\Omega) \to \cC^N(\Omega)
	$$
	as the space of continuous paths in $\cC^N(\Omega)$ modulo homotopy equivalence.
	We choose as reference base point for these paths a point 
	$X_0(\Omega) \in \cC^N(\Omega)$,
	which is exactly the ordered set of points $X_0 \in \cC^N(\R^2)$ as before, 
	only translated and scaled to fit the domain $\Omega$.
	Denote a path in $\cC^N(\R^2)$ which performs this scaling and 
	translation by $\gamma_\Omega$, so that $X_0(\Omega) = [\gamma_\Omega].X_0$
	and $\tX_0(\Omega) := [\gamma_\Omega].\tX_0$.
	With the corresponding scaling and translation
	$[\gamma_\Omega \gamma_\Omega'^{-1}]$
	to match the reference points 
	$X_0(\Omega)$ and $X_0(\Omega')$
	there is then an inclusion
	$$
		\tcC^N(\Omega') \hookrightarrow \tcC^N(\Omega)
		\quad \text{if} \quad
		\Omega' \subseteq \Omega.
	$$ 
	By the homotopy equivalence $\Omega \sim \R^2$ we have
	$$
		\pi_1(\cC^N(\Omega),X_0(\Omega)) \cong \pi_1(\cC^N(\R^2),X_0) = B_N,
	$$
	$$
		[\gamma].\tX_0(\Omega) 
		= [\gamma\gamma_\Omega].\tX_0 
		= [\gamma_\Omega\gamma_\Omega^{-1}\gamma\gamma_\Omega].\tX_0 
		= \tX_0.\Ad_{\gamma_\Omega^{-1}}[\gamma],
		\quad [\gamma] \in \pi_1(\cC^N(\Omega),X_0(\Omega)),
	$$
	$$
		\tX_0(\Omega).[\eta] = [\gamma_\Omega\eta].\tX_0 
		= [\gamma_\Omega\eta\gamma_\Omega^{-1}\gamma_\Omega].\tX_0 
		= \Ad_{\gamma_\Omega}[\eta].\tX_0(\Omega),
		\quad [\eta] \in \pi_1(\cC^N(\R^2),X_0).
	$$
	The actions of paths and loops in $\tcC^N$ thus carry over to $\tcC^N(\Omega)$
	straightforwardly, considering there only paths contained in $\cC^N(\Omega)$.

\begin{definition}[Function space]
	Given an $N$-anyon model $\rho\colon B_N \to \sU(\cF)$ 
	and a simply connected 
	open domain $\Omega \subseteq \R^2$,
	the space $L^2_\rho(\tcC^N(\Omega);\cF)$
	of \keyword{ideal $N$-anyon wave functions on $\Omega$} is the closure
	of the space of $\rho$-equivariant smooth functions 
	$\Psi \in C^\infty_{\rho,c}(\tcC^N(\R^2);\cF)$ 
	with $\tilde\pr(\supp \Psi) \subseteq \cC^N(\R^2)$ compact,
	\begin{equation}\label{eq:rho-equiv-local}
		\Psi(\tX.b^{-1}) = \rho(b)\Psi(\tX),
		\qquad b \in B_N,
	\end{equation}
	w.r.t. the norm $\|\cdot\|_{L^2_\rho}$ induced by the inner product
	on $\cC^N(\Omega)$
	\begin{equation}\label{eq:L2-local}
		\langle \Phi,\Psi \rangle_{L^2_\rho} 
		:= \int_{\cC^N(\Omega)} \langle\Phi(\tX),\Psi(\tX)\rangle_\cF \,dX
		= \frac{1}{N!}\int_{\Omega^N} \langle\Phi(\tX),\Psi(\tX)\rangle_\cF \,d\sx.
	\end{equation}
\end{definition}

\begin{definition}[Kinetic energy]
	Given an $N$-anyon model $\rho\colon B_N \to \sU(\cF)$
	and a simply connected domain $\Omega \subseteq \R^2$,
	we define a \keyword{Sobolev space} $H^1_\rho(\tcC^N(\Omega);\cF)$
	of wave functions $\Psi$ with finite \keyword{expected kinetic energy on $\Omega$}
	\begin{equation}\label{eq:kinetic-local}
		T_\rho^\Omega[\Psi] := \int_{\cC^N(\Omega)} |\nabla\Psi(\tX)|_{\cF^{2N}}^2 \,dX.
	\end{equation}
	This is the Hilbert space with inner product
	\begin{equation}\label{eq:H1-local}
		\langle\Phi,\Psi\rangle_{H^1_\rho} 
		:= \int_{\cC^N(\Omega)} \left( \langle\Phi(\tX),\Psi(\tX)\rangle_\cF + \langle\nabla\Phi(\tX),\nabla\Psi(\tX)\rangle_{\cF^{2N}} \right)\,dX
	\end{equation}
	obtained by taking the closure of smooth $\rho$-equivariant functions 
	on $\tcC^N(\R^2)$ w.r.t.\ the corresponding norm,
	$$
		H^1_\rho(\tcC^N(\Omega);\cF) := \overline{C^\infty_{\rho,c}(\tcC^N;\cF)}^{H^1_\rho}.
	$$
	We may then define the \keyword{$N$-anyon ground-state energy on $\Omega$}, $N \ge 2$,
	$$
		E_N(\Omega) := \inf \left\{ \int_{\cC^N(\Omega)} |\nabla\Psi(\tX)|_{\cF^{2N}}^2 \,dX : 
		\Psi \in H^1_\rho(\tcC^N(\Omega);\cF), \int_{\cC^N(\Omega)} |\Psi|_\cF^2 \,dX = 1 \right\}.
	$$
	For convenience we also define $E_0(\Omega) := 0$ and the one-particle energy
	$$
		E_1(\Omega) := \inf \left\{\int_\Omega |\nabla\Psi|_{\cF^2}^2 : 
		\Psi \in H^1(\Omega;\cF), \int_\Omega |\Psi|_\cF^2 = 1 \right\}
		= 0.
	$$
	If we need to indicate the anyon model $\rho$ we write $E^\rho_N(\Omega)$.
\end{definition}

	Again, the $\rho$-equivariance \eqref{eq:rho-equiv-local} 
	of functions assures that the integrands
	in \eqref{eq:L2-local}, \eqref{eq:kinetic-local} and \eqref{eq:H1-local} 
	are well defined. We will also need the corresponding one-body density:
	
\begin{definition}[One-body density]\label{def:density}
	Let $\rho\colon B_N \to \sU(\cF)$ be an $N$-anyon model and $\Omega \subseteq \R^2$ 
	a simply connected open subset.
	For $\Psi \in L^2_\rho(\tcC^N(\Omega);\cF)$ a normalized
	$N$-body wave function on $\Omega$ we
	define its \keyword{one-body density} on $\Omega$ as the function
	$\varrho_\Psi \in L^1(\Omega)$, where $\varrho_\Psi(\bx)$ at $\bx \in \Omega$
	is the marginal of the probability distribution $|\Psi|_\cF^2$ at 
	$X = \{\bx\} \cup X'$,
	\begin{equation}\label{eq:def-1b-density}
		\varrho_\Psi(\bx) := \int_{\cC^{N-1}(\Omega \setminus \{\bx\})} |\Psi(\{\bx\}\cup X')|_\cF^2 \,dX'
		= \frac{1}{(N-1)!} \int_{\Omega^{N-1}} |\Psi(\bx,\sx')|_\cF^2 \,d\sx'.
	\end{equation}
	It may equivalently be defined via its integral over a subset $\Omega' \subseteq \Omega$
	$$
		\int_{\Omega'} \varrho_\Psi = \int_{\cC^N(\Omega)} \sum_{j=1}^N \1_{\{\bx_j \in \Omega'\}} |\Psi|_\cF^2 \,dX,
	$$
	(the expected number of particles on $\Omega'$)
	and the Lebesgue differentiation theorem, taking $\Omega' = B_\eps(\bx)$, $\eps \to 0$.
\end{definition}

\begin{definition}[Energy density]\label{def:energy-density}
	Similarly, we define for $\Psi \in H^1_\rho(\tcC^N(\Omega);\cF)$
	a \keyword{local expected kinetic energy on $\Omega' \subseteq \Omega$}
	$$
		T^{\Omega' \subseteq \Omega}_\rho[\Psi] 
		:= \int_{\Omega'} T_{\rho,\Psi}(\bx) \,d\bx
		= \int_{\cC^N(\Omega)} \sum_{j=1}^N \1_{\{\bx_j \in \Omega'\}} |\nabla_j\Psi|_{\cF^2}^2 \,dX,
	$$
	and its corresponding \keyword{kinetic energy density} $T_{\rho,\Psi} \in L^1(\Omega)$,
	$$
		T_{\rho,\Psi}(\bx) := \int_{\cC^{N-1}(\Omega \setminus \{\bx\})} |\nabla_\bx \Psi(\{\bx\}\cup X')|_{\cF^2}^2 \,dX'.
	$$
\end{definition}

	Occasionally we shall suppress the $\cF$ on the norms to make the notation less heavy.


\section{Statistical repulsion}\label{sec:repulsion}

	Given an $N$-anyon model $\rho\colon B_N \to \sU(\cF)$ 
	with associated exchange operators $\{U_p\}_{p=0}^{N-2}$, we may 
	define an \keyword{exchange parameter} for $p$ \keyword{\emph{enclosed}} particles
	$$
		\beta_p := \min \{ \beta \in [0,1] : \text{$e^{i\beta\pi}$ or $e^{-i\beta\pi}$ is an eigenvalue of $U_p$} \},
	$$
	and for $n \in \{2,3,\ldots,N\}$ \keyword{\emph{involved}} particles 
	(i.e. worst case of up to $n-2$ enclosed)
	$$
		\alpha_n := \min_{p \in \{0,1,2,\ldots,n-2\}} \beta_p.
	$$
	To a simple two-particle exchange $U_0 = \rho(\sigma_1)$ 
	is then associated the parameter
	$\alpha_2 = \beta_0$.	
	In the case of irreducible abelian anyons, with exchange phases 
	$U_p = e^{i(2p+1)\alpha\pi}$, we have
	$$
		\alpha_n = \min\limits_{p \in \{0, 1, \ldots, n-2\}} \min\limits_{q \in \Z} |(2p+1)\alpha - 2q|.
	$$
	
	The goal of this section is to derive explicit lower bounds
	for the local kinetic energy in terms of the above model-dependent parameters
	which quantify the strength of repulsion among the particles due to their
	statistics (exchange phases).

\subsection{Relative configuration space}\label{sec:repulsion-rel}

	Let $\Omega \subseteq \R^2$ be a convex open set.
	Consider the symmetric exchange of two particles inside $\Omega$ with the 
	remaining $N-2$ particles fixed on $\Omega$.
	It is convenient to switch to relative and center-of-mass coordinates for
	the pair $(\bx_1,\bx_2)$:
	$$
		\br = \bx_1-\bx_2, \qquad
		\bR = {\textstyle\frac{1}{2}}(\bx_1+\bx_2) \in \Omega, \qquad
		\sx' = (\bx_3,\ldots,\bx_N) \in \Omega^{N-2},
	$$
	$$
		\sx(\br;\bR;\sx') := \bigl( \bR+{\textstyle\frac{1}{2}}\br, \bR-{\textstyle\frac{1}{2}}\br, \bx_3, \ldots, \bx_N \bigr) 
		\in \R^{2N}
	$$
	To keep all points in $\Omega$ 
	we must restrict the values of $\br \in \R^2$ to the subset
	$$
		\Omega_\rel(\bR;\sx') = \left\{ \br \in \R^2 : \sx(\br;\bR;\sx') \in \Omega^N \setminus \bDelta^N \right\}.
	$$
	For these we have a surjection
	$$
		X(\br;\bR;\sx') := \pr(\sx(\br;\bR;\sx'))
		= \bigl\{ \bR+{\textstyle\frac{1}{2}}\br, \bR-{\textstyle\frac{1}{2}}\br, \bx_3, \ldots, \bx_N \bigr\} 
		\in \cC^N(\Omega).
	$$
	However, $\Omega_\rel$ is not simply connected.
	For example, in the case $\Omega = \R^2$ we have
	$$
		\Omega_\rel = \R^2 \setminus\{\0, \pm 2(\bx_3-\bR), \ldots, \pm 2(\bx_N-\bR) \}
	$$
	and
	$$
		X(\pm\be_1; {\textstyle\frac{3}{2}}\be_1; 3\be_1,\ldots,N\be_1) = X_0
		= X_0(\R^2),
	$$
	while if $\bar\Omega = Q_0 = [0,1]^2$ is the unit square then
	$\Omega_\rel \subseteq [-1,1]^2$ 
	is a rectangle centered at $\0$ with up to $2N+1$ points removed
	symmetrically around the origin (one point being the origin).
	
	Note that several configurations $(\br;\bR;\sx')$ map to
	the same point $X(\br;\bR;\sx')$,
	and we have for example the antipodal symmetry
	$$
		X(-\br;\bR;\sx') = X(\br;\bR;\sx'), 
		\qquad \br \in \Omega_\rel(\bR;\sx').
	$$ 
	Upon fixing $\bR$ and $\sx'$ and a base point $\br_0 \in \Omega_\rel$ we may
	extend $\br \mapsto X(\br;\bR;\sx')$ to a smooth map 
	on the covering space 
	$\tilde\Omega_\rel(\bR;\sx') \to \Omega_\rel(\bR;\sx')$
	$$
		\tilde\br \mapsto \tX(\tilde\br;\bR;\sx') \mapsto X(\br;\bR;\sx')
	$$
	such that $\tX(\tilde\br;\bR;\sx'.\sigma) = \tX(\tilde\br;\bR;\sx').b$
	for $\sigma \in S_{N-2}$ and some $b \in B_N$. 
	By composing with $\Psi \in C^\infty_{\rho,c}$
	we then have a smooth map
	\begin{equation}\label{eq:rel-func-def}
		\psi\colon \tilde\Omega_\rel \to \cF, \qquad
		\psi(\tilde\br) := \Psi\bigl(\tX(\tilde\br,\bR,\sx')\bigr)
	\end{equation}
	such that
	$$
		\partial_k \psi(\tilde\br) 
		= \sum_{j,q} \frac{\partial\Psi}{\partial x_{jq}} \frac{\partial x_{jq}}{\partial r_k}(\tilde\br)
		= \pm\frac{1}{2}\left( \frac{\partial\Psi}{\partial x_{1k}} - \frac{\partial\Psi}{\partial x_{2k}} \right)(\tX(\tilde\br)),
	$$
	where the sign depends on the order of exchange of particles 1 and 2 done in $\tilde\br$.
	Namely, the action of any $\beta \in \pi_1(\Omega_\rel,\br_0)$ 
	carries over to a corresponding one on $\tX$,
	$$
		\tX(\tilde\br.\beta;\bR;\sx') = \tX(\tilde\br;\bR;\sx').b,
		\qquad \text{some}\ b=b(\beta) \in B_N,
	$$
	so that, with the transformation rule \eqref{eq:diff-equiv},
	$$
		\partial_k \psi(\tilde\br.\beta) 
		= \pm\frac{1}{2} \rho(b^{-1})\left( \frac{\partial\Psi}{\partial x_{1'k}} - \frac{\partial\Psi}{\partial x_{2'k}} \right)(\tX(\tilde\br)),
		\quad j' = \hat\pr(b)[j],
	$$
	and hence
	$$
		|\nabla\psi(\tilde\br.\beta)|^2 = \frac{1}{4}|\nabla_1\Psi-\nabla_2\Psi|^2
	$$
	for any $\beta \mapsto b \mapsto \hat\pr(b)$ wich 
	leaves invariant the partition 
	$\{1,2\},\{3,4, \ldots, N\}$.
	
	Furthermore, given any $r=|\br|>0$ such that the full circle $rS^1$ 
	is contained in $\Omega_\rel$, 
	we may consider the corresponding lift of this circle to a subset 
	in $\tilde\Omega_\rel$ isomorphic to $\R$.
	By the convexity of $\Omega$,
	this will be possible on any of the annuli $A_p \subseteq \Omega_\rel$ 
	defined by $r \in I_p$,
	$$
		I_0 = (r_0:=0,r_1),\ I_1 = (r_1,r_2),\ \ldots,\ I_{M-1} = (r_{M-1},r_M),\ I_M = (r_M,r_{\max}),
		\ 0 \le M \le N-2,
	$$
	with $r_{\max} := \dist(\0,\partial\overline{\Omega_\rel}) = 2\dist(\bR,\partial\overline{\Omega})$,
	and having written the relative distances to the other particles in increasing order
	$$
		0 \le r_1 = 2|\bx_{2+\sigma(1)}-\bR| \le r_2 = 2|\bx_{2+\sigma(2)}-\bR| 
		\le \ldots \le r_{N-2} = 2|\bx_{2+\sigma(N-2)}-\bR|
	$$
	for a suitable permutation $\sigma \in S_{N-2}$.
	
	Comparing now
	to the two-particle relative configuration space
	$\tcC_0 \to \cC_0$ in Example~\ref{ex:2-anyon-cspace},
	we have the curve $\R \to S^1 \to S^1/{\sim} = \cC_0$
	(with $\bx \sim -\bx$) 
	of relative angles
	$$
		\varphi \mapsto \be(\varphi) := (\cos\vphi,\sin\vphi) 
		\mapsto \{\pm(\cos\vphi,\sin\vphi)\},
	$$
	which after fixing a base point $\tilde\be(0) \in \tcC_0$ 
	lifts to a unique curve $\R \to \tcC_0$
	$$
		\vphi \mapsto \tilde\be(\vphi) := [\gamma_\vphi].\tilde\be(0), \qquad
		\gamma_\vphi(t) := \{\pm\be(\vphi t)\},
		\ t \in [0,1].
	$$
	In the same way there is for $r \in I_p$ a corresponding smooth curve 
	$\R \to \tcC^N(\Omega) \to \cC^N(\Omega)$
	obtained by taking $\tilde\br = r\tilde\be(\vphi)$ with the remaining
	coordinates fixed, i.e.
	$$
		\vphi \mapsto \tX(r\tilde\be(\vphi);\bR;\sx') \mapsto X(r\be(\vphi);\bR;\sx').
	$$
	Given a base point on the curve, for example
	$$
		\tX(r\tilde\be(0);\bR;\sx') = [\eta].\tX_0(\Omega) = [\eta\gamma_\Omega].\tX_0
	$$
	for some fixed path $\eta$ from $X_0(\Omega)$ to $X(r\be(0);\bR;\sx')$
	(possibly via $X(\br_0;\bR;\sx')$),
	this curve can also be represented as
	$$
		\vphi \mapsto \tX(r\tilde\be(\vphi);\bR;\sx') = [\Gamma_\vphi].\tX(r\tilde\be(0);\bR;\sx')
	$$
	using the path in $\cC^N(\Omega)$
	$$
		\Gamma_\vphi(t) := \bigl\{ \bR+{\textstyle\frac{1}{2}}r\gamma_\vphi(t), \bR-{\textstyle\frac{1}{2}}r\gamma_\vphi(t), \bx_3, \ldots, \bx_N \bigr\},
		\ t \in [0,1].
	$$
	We may then define a smooth function $v\colon \R \to \cF$ as the restriction of 
	$\Psi$ to this particular curve,
	\begin{equation}\label{eq:angular-func-def}
		v(\vphi) := \psi(r\tilde\be(\vphi)) = \Psi(\tX(r\tilde\be(\vphi);\bR;\sx')),
	\end{equation}
	with
	$$
		v'(\vphi) 
		= \nabla \psi \cdot r\tilde\be'(\vphi)
		= \pm\left[ \frac{1}{2}\left( \frac{\partial\Psi}{\partial x_{11}} - \frac{\partial\Psi}{\partial x_{21}} \right) (-r\sin\vphi)
		+ \frac{1}{2}\left( \frac{\partial\Psi}{\partial x_{12}} - \frac{\partial\Psi}{\partial x_{22}} \right) (r\cos\vphi) \right].
	$$
	We also compute
	$$
		\partial_r \psi( r\tilde\be(\vphi) ) = \nabla \psi \cdot \be(\vphi)
		= \pm\left[ \frac{1}{2}\left( \frac{\partial\Psi}{\partial x_{11}} - \frac{\partial\Psi}{\partial x_{21}} \right) \cos\vphi
		+ \frac{1}{2}\left( \frac{\partial\Psi}{\partial x_{12}} - \frac{\partial\Psi}{\partial x_{22}} \right) \sin\vphi \right],
	$$
	which combines with the previous expression to yield
	\begin{equation}\label{eq:psi-derivative}
		|\nabla \psi|_{\cF^2}^2 
		= |\partial_r \psi( r\tilde\be(\vphi) )|_\cF^2 + \frac{1}{r^2}|v'(\vphi)|_\cF^2
		= \frac{1}{4}|\nabla_1\Psi - \nabla_2\Psi|_{\cF^2}^2.
	\end{equation}
	
	A two-particle exchange on $\cC_0$ with winding number $n \in \Z$ 
	yields $\be(n\pi) = (-1)^n \be(0)$ and
	$$
		\tilde\be(n\pi) = [\gamma_{n\pi}].\tilde\be(0) 
		= [\gamma_\pi]^n.\tilde\be(0) = \tilde\be(0).[\gamma_\pi]^n,
	$$
	with $[\gamma_\pi]$ the generator of the respective braid group 
	$\pi_1(\cC_0) = \Z$ (base point $\{\pm\be(0)\}$).
	With our corresponding curve on $\cC^N(\Omega)$
	$$
		\tX(r\tilde\be(n\pi);\bR;\sx') 
		= [\Gamma_{n\pi}].\tX(r\tilde\be(0);\bR;\sx')
		= \tX(r\tilde\be(0);\bR;\sx').b^n,
	$$
	with $b = \Ad_{[\eta\gamma_\Omega]^{-1}}([\Gamma_\pi]) \in B_N$
	the generator of exchange.
	In this case we perform continuous two-particle exchanges which are not 
	necessarily simple but may, depending on $r$, involve other particles.
	Topologically we have a situation similar to \eqref{eq:exchange-braid}, 
	i.e. $[\Gamma_\pi] \sim b \sim \Sigma_p$  where there are $p$ enclosed particles 
	if $r \in I_p$.
	Thus the $\rho$-equivariance of $\Psi \in C^\infty_\rho$ implies
	$$
		v(n\pi) = \Psi(\tX(r\tilde\be(n\pi);\bR;\sx'))
		= \Psi(\tX(r\tilde\be(0);\bR;\sx').b^n)
		= \rho(b^{-n})v(0)
	$$
	where $\rho(b) \sim \rho(\Sigma_p) = U_p$, 
	the corresponding two-anyon exchange operator.

\subsection{Poincar\'e inequality}\label{sec:repulsion-Poincare}

	The statistical repulsion between a pair of anyons boils down to the 
	following simple version of the Poincar\'e inequality
    (Wirtinger's inequality).

\begin{lemma}[Poincar\'e inequality for semi-periodic functions]\label{lem:Poincare}
	Let $\psi \in H^1([0,\pi];\cF)$ satisfy the semi-periodic boundary condition
	\begin{equation}\label{eq:Poincare-bc}
		\psi(\pi) = U \psi(0),
	\end{equation}
	for some $U \in \sU(\cF)$, then
	\begin{equation}\label{eq:Poincare-ineq}
		\int_0^\pi |\psi'(\vphi)|_\cF^2 \,d\vphi 
		\ge \lambda_0(U)^2 \int_0^\pi |\psi(\vphi)|_\cF^2 \,d\vphi,
	\end{equation}
	where 
	\begin{equation}\label{eq:Poincare-lambda}
		\lambda_0(U) := \inf \{ \lambda \in [0,1] : 
			\text{$e^{i\lambda\pi}$ or $e^{-i\lambda\pi}$ 
			is an eigenvalue of $U$} \}.
	\end{equation}
\end{lemma}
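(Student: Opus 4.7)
The plan is to diagonalize $U$ by the spectral theorem, reducing the problem to a family of scalar semi-periodic Poincaré inequalities on the eigenspaces, and then to expand each component in a Fourier series on $[0,\pi]$.

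First, since $U \in \sU(\cF)$ is unitary on the finite-dimensional Hilbert space $\cF$, write $U = \sum_k e^{i\theta_k\pi} P_k$ with mutually orthogonal spectral projections $P_k$ summing to $\1$, and with representatives $\theta_k \in [0,2)$. Because each $P_k$ is a bounded linear operator commuting with $\partial_\vphi$, the decomposition $\psi = \sum_k \psi_k$ with $\psi_k := P_k\psi \in H^1([0,\pi];P_k\cF)$ splits both sides of \eqref{eq:Poincare-ineq} orthogonally into a sum over $k$. Moreover, applying $P_k$ to \eqref{eq:Poincare-bc} yields the semi-periodic boundary condition $\psi_k(\pi) = e^{i\theta_k\pi}\psi_k(0)$ on each eigenspace, so it suffices to establish the inequality componentwise.

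Next I factor out the boundary phase: set $\psi_k(\vphi) = e^{i\theta_k\vphi}\chi_k(\vphi)$, so that $\chi_k \in H^1([0,\pi];P_k\cF)$ is genuinely $\pi$-periodic, $\chi_k(\pi)=\chi_k(0)$, and $\psi_k'(\vphi) = e^{i\theta_k\vphi}\bigl(\chi_k'(\vphi) + i\theta_k\chi_k(\vphi)\bigr)$. Expanding the periodic $\chi_k$ in a Fourier series $\chi_k(\vphi) = \sum_{n \in \Z} c_{k,n}\, e^{2in\vphi}$ with $c_{k,n} \in P_k\cF$ and invoking Parseval's identity gives
\begin{equation*}
  \int_0^\pi |\psi_k|_\cF^2 \,d\vphi = \pi \sum_{n\in\Z} |c_{k,n}|_\cF^2, \qquad
  \int_0^\pi |\psi_k'|_\cF^2 \,d\vphi = \pi \sum_{n\in\Z} (2n+\theta_k)^2 |c_{k,n}|_\cF^2.
\end{equation*}
Hence the sharp ratio on this eigenspace is $\min_{n \in \Z}(2n+\theta_k)^2$, which for $\theta_k \in [0,2)$ equals $\min(\theta_k, 2-\theta_k)^2$.

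Finally, I identify this constant with $\lambda_0(U)^2$. The value $\theta_k$ itself realizes the eigenvalue $e^{i\theta_k\pi}$ of $U$, while $2-\theta_k$ (the other near-integer shift) corresponds to the conjugate eigenvalue $e^{-i\theta_k\pi} = e^{i(2-\theta_k)\pi}$, both of which are precisely the values admitted in the definition \eqref{eq:Poincare-lambda} of $\lambda_0(U)$. Therefore $\min(\theta_k, 2-\theta_k) \ge \lambda_0(U)$ for every $k$ appearing in the spectrum, and summing the componentwise inequalities over $k$ yields \eqref{eq:Poincare-ineq}. There is no serious obstacle: the only subtlety is the dual role of $e^{\pm i\lambda\pi}$ in the definition of $\lambda_0(U)$, which is automatically accommodated by the symmetry $n \mapsto -n$ in the Fourier sum.
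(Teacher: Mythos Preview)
Your proof is correct and is essentially the same as the paper's: the paper phrases it as the spectral theorem for the self-adjoint operator $D=-i\partial_\vphi$ on $L^2([0,\pi];\cF)$ with the twisted boundary condition, whose eigenfunctions $e^{i(\mu_n+2k)\vphi}v_n$ are exactly your $e^{i(\theta_k+2n)\vphi}$ times vectors in $P_k\cF$. One small wording issue: when justifying that $2-\theta_k$ lies in the set defining $\lambda_0(U)$, you call $e^{-i\theta_k\pi}$ a ``conjugate eigenvalue'', but it need not be an eigenvalue of $U$; the correct observation is that for $\lambda=2-\theta_k$ one has $e^{-i\lambda\pi}=e^{i\theta_k\pi}$, which \emph{is} an eigenvalue, and this is what the ``or $e^{-i\lambda\pi}$'' clause in \eqref{eq:Poincare-lambda} captures.
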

\begin{proof}
	We consider the operator $D$ on the space $L^2([0,\pi];\cF)$
	with $D\psi(\vphi) = -i\psi'(\vphi)$ and domain given by functions
	$\psi \in H^1([0,\pi];\cF)$ satisfying the b.c. \eqref{eq:Poincare-bc}.
	This is a self-adjoint operator and its spectrum is given explicitly by
	$\lambda_{n,k} = \mu_n + 2k \in \R$, $n \in \{1,\ldots,\dim\cF\}$,
	$k \in \Z$, 
	with corresponding orthonormal eigenfunctions 
	$u_{n,k}(\vphi) = e^{i\lambda_{n,k}\vphi}v_n$,
	where $U v_n = e^{i\mu_n \pi} v_n$, $\mu_n \in (-1,1]$,
	a basis in $\cF$ of eigenvectors of $U$.
	By the spectral theorem, the l.h.s. of \eqref{eq:Poincare-ineq} is then
	$$
		\|D\psi\|^2 = \inp{\psi, D^2 \psi}
		\ge \inf_{n,k} \lambda_{n,k}^2 \|\psi\|^2 = \lambda_0(U)^2 \|\psi\|,
	$$
	according to the definition \eqref{eq:Poincare-lambda}.
\end{proof}

	For completeness we also state a version suitable for magnetic
	anyon models modeled using bosons:

\begin{lemma}[Gauge-transformed Poincar\'e inequality]\label{lem:Poincare-gauge}
	Let $\psi \in H^1([0,\pi];\cF)$ satisfy the periodic boundary condition
	\begin{equation}\label{eq:Poincare-bc-gauge}
		\psi(\pi) = \psi(0),
	\end{equation}
	and assume $\cA \in C([0,\pi];\gu(\cF))$, where $\gu(\cF)$ are the
	anti-hermitian operators on $\cF$. 
	Then
	\begin{equation}\label{eq:Poincare-ineq-gauge}
		\int_0^\pi \left|\left(\frac{d}{d\varphi} + \cA\right)\psi(\varphi)\right|_\cF^2 d\varphi 
		\ge \lambda_0(U)^2 \int_0^\pi |\psi(\varphi)|_\cF^2 \,d\varphi,
	\end{equation}
	where 
	$U = \exp_\eP (\int_0^\pi \cA) \in \sU(\cF)$ 
	is the path-ordered exponential of $\cA$.
\end{lemma}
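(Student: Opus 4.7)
The plan is to reduce to the ungauged Lemma~\ref{lem:Poincare} by parallel transport, i.e.\ gauge the connection $\cA$ away with a pointwise unitary.

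Concretely, I would define $g\colon [0,\pi] \to \sU(\cF)$ as the unique solution of the initial-value problem
\begin{equation}
    g'(\varphi) = -\cA(\varphi) g(\varphi), \qquad g(0) = \1.
\end{equation}
Since $\cA$ is continuous with values in the anti-hermitian operators, standard ODE theory gives a unique $C^1$ solution, and differentiating $g^*g$ yields $(g^*g)' = -g^*\cA^*g - g^*\cA g = g^*\cA g - g^*\cA g = 0$, so $g(\varphi) \in \sU(\cF)$ for every $\varphi$. By the chosen conventions for the path-ordered exponential, $g(\pi)^{-1}$ equals (or is equivalent to) the holonomy $U = \exp_\eP(\int_0^\pi \cA)$.

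Next, I would perform the gauge change $\tilde\psi(\varphi) := g(\varphi)^{-1}\psi(\varphi)$. Because $g$ and $g^{-1}$ are Lipschitz on $[0,\pi]$, the map $\psi \mapsto \tilde\psi$ preserves the Sobolev space $H^1([0,\pi];\cF)$. A direct computation gives
\begin{equation}
    \left(\frac{d}{d\varphi} + \cA\right)\psi
    = g'\tilde\psi + g\tilde\psi' + \cA g\tilde\psi
    = -\cA g\tilde\psi + g\tilde\psi' + \cA g\tilde\psi
    = g\tilde\psi',
\end{equation}
and the unitarity of $g(\varphi)$ yields the pointwise identities $|(\frac{d}{d\varphi}+\cA)\psi|_\cF = |\tilde\psi'|_\cF$ and $|\psi|_\cF = |\tilde\psi|_\cF$.

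Finally, I would translate the boundary condition: $\psi(\pi)=\psi(0)$ becomes $g(\pi)\tilde\psi(\pi) = \tilde\psi(0)$, i.e.\ $\tilde\psi(\pi) = g(\pi)^{-1}\tilde\psi(0) = U\tilde\psi(0)$. Applying Lemma~\ref{lem:Poincare} to $\tilde\psi$ with this unitary $U$ gives
\begin{equation}
    \int_0^\pi |\tilde\psi'(\varphi)|_\cF^2 \,d\varphi
    \ge \lambda_0(U)^2 \int_0^\pi |\tilde\psi(\varphi)|_\cF^2 \,d\varphi,
\end{equation}
and substituting the two pointwise identities produces exactly \eqref{eq:Poincare-ineq-gauge}. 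The only delicate point is bookkeeping: one must check that the convention for $\exp_\eP$ in the statement matches the holonomy $g(\pi)^{-1}$; if instead one obtains $U^{-1}$, the conclusion is unaffected because $\lambda_0(U) = \lambda_0(U^{-1})$ follows immediately from the symmetric definition \eqref{eq:Poincare-lambda} (the spectrum of $U^{-1}$ is the complex conjugate of that of $U$).
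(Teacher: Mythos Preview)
Your proof is correct and follows essentially the same route as the paper: gauge away $\cA$ via the path-ordered exponential (parallel transport), so that the covariant derivative becomes an ordinary derivative of a function satisfying the semi-periodic boundary condition, and then invoke Lemma~\ref{lem:Poincare}. Your treatment is in fact slightly more careful than the paper's (you verify unitarity of the transport, preservation of $H^1$, and note that the convention ambiguity $U \leftrightarrow U^{-1}$ is harmless since $\lambda_0(U)=\lambda_0(U^{-1})$).
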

\begin{proof}
	Define the function $v(\vphi) = U(\vphi) \psi(\vphi)$,
	where $U(\vphi) := \exp_\eP (\int_0^\vphi \cA(t) dt)$,
	i.e. (by definition) $U(0) = \1$ and
	$U'(\vphi) = \cA(\vphi) U(\vphi)$ for all $\vphi \in [0,\pi]$.
	Then $v'(\vphi) = \psi'(\vphi) + \cA(\vphi)\psi(\vphi)$, so the l.h.s.
	of \eqref{eq:Poincare-ineq-gauge} is $\int_0^\pi |v'|^2$,
	and furthermore $v$ satisfies the b.c.
	$$
		v(\pi) = U(\pi)\psi(\pi) = U\psi(0) = U v(0).
	$$
	Now we may apply Lemma~\ref{lem:Poincare} to $v$ and
	use $|v| = |\psi|$ to conclude the lemma.
\end{proof}

\subsection{Diamagnetic inequality}\label{sec:repulsion-diamag}

	Consider estimates due to a polar decomposition of $\cF$:

\begin{lemma}[Angular diamagnetic inequality]\label{lem:diamag-circle}
	Let $\psi \in H^1([0,\pi];\cF)$, 
	$|\psi|_\cF = \left( \sum_k |\psi_k|^2 \right)^{1/2}$ where $\psi_k$ are
	the components in an ON basis of $\cF$. Then
	$|\psi|_\cF \in H^1([0,\pi];\R_+)$, 
	\begin{equation}\label{eq:diamag-ineq-pwc}
		|\psi'(\vphi)|_\cF^2 \ge \bigl||\psi|_\cF'(\vphi)\bigr|^2
	\end{equation}
	pointwise a.e. on $[0,\pi]$, and thus
	\begin{equation}\label{eq:diamag-ineq-circle}
		\int_0^\pi |\psi'(\vphi)|_\cF^2 \,d\vphi 
		\ge \int_0^\pi \bigl||\psi|_\cF'(\vphi)\bigr|^2 \,d\vphi.
	\end{equation}
\end{lemma}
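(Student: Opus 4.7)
The plan is to follow the standard regularization argument used in proofs of the diamagnetic inequality (cf.\ the scalar case where the field is absent), adapted to the vector-valued setting on the interval $[0,\pi]$. The only real subtlety is that $\vphi \mapsto |\psi(\vphi)|_\cF$ is not smooth at points where $\psi$ vanishes, so a direct differentiation is not legal and must be replaced by a limit of smooth approximations.

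First I would fix an ON basis and write $\psi = (\psi_k)$ with $\psi_k \in H^1([0,\pi];\C)$, so that $|\psi|_\cF^2 = \sum_k |\psi_k|^2$. For $\eps > 0$ I introduce the regularization
\begin{equation}
f_\eps(\vphi) := \sqrt{|\psi(\vphi)|_\cF^2 + \eps^2},
\end{equation}
which belongs to $H^1([0,\pi];\R_+)$ since $|\psi|_\cF^2 \in W^{1,1}$ (product rule applied componentwise), and $f_\eps \ge \eps > 0$ is bounded away from zero. A direct computation using the chain rule gives
\begin{equation}
f_\eps'(\vphi) = \frac{1}{f_\eps(\vphi)} \,\mathrm{Re}\,\langle \psi(\vphi), \psi'(\vphi)\rangle_\cF
\end{equation}
almost everywhere. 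By the Cauchy--Schwarz inequality in $\cF$ and $|\psi|_\cF \le f_\eps$,
\begin{equation}
|f_\eps'(\vphi)| \le \frac{|\psi(\vphi)|_\cF\,|\psi'(\vphi)|_\cF}{f_\eps(\vphi)} \le |\psi'(\vphi)|_\cF,
\end{equation}
which is the pointwise bound at the regularized level.

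Next I would pass to the limit $\eps \to 0$. Clearly $f_\eps \to |\psi|_\cF$ uniformly on $[0,\pi]$, and $f_\eps' \to |\psi|_\cF'$ in the distributional sense on any open set where $|\psi|_\cF > 0$. On the closed set $Z := \{\vphi : \psi(\vphi) = 0\}$ one uses the general fact that if $g \in H^1$ then $g' = 0$ a.e.\ on $g^{-1}(\{0\})$ (Stampacchia's lemma, applied componentwise to the real and imaginary parts of each $\psi_k$, yields $\psi_k' = 0$ a.e.\ on $\psi_k^{-1}(\{0\})$ and hence $\psi' = 0$ a.e.\ on $Z$; the same for $|\psi|_\cF$). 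Combining these, $|\psi|_\cF \in H^1([0,\pi];\R_+)$ and
\begin{equation}
\bigl||\psi|_\cF'(\vphi)\bigr| \le |\psi'(\vphi)|_\cF
\end{equation}
holds a.e., giving the pointwise inequality \eqref{eq:diamag-ineq-pwc}. Squaring and integrating over $[0,\pi]$ yields \eqref{eq:diamag-ineq-circle}.

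The main technical point to be careful about is the behavior on the zero set $Z$ of $\psi$: without the regularization one cannot write $|\psi|_\cF' = \mathrm{Re}\langle \psi, \psi'\rangle/|\psi|_\cF$ at those points, and one has to invoke the Sobolev fact that derivatives vanish a.e.\ on preimages of single values to close the argument. Everything else is a mechanical application of Cauchy--Schwarz and dominated convergence.
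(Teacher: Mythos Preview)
Your proof is correct. The paper takes a different organizational route: it cites the standard scalar diamagnetic inequality \cite[Theorem~6.17]{LieLos-01} for $\C \cong \R^2$-valued functions, i.e.\ $|u_1'|^2 + |u_2'|^2 \ge |\sqrt{u_1^2+u_2^2}\,'|^2$ pointwise a.e., and then applies this inequality inductively on the number of real components to reach $\cF \cong \C^D \cong \R^{2D}$. Your approach instead carries out the $\eps$-regularization and Cauchy--Schwarz argument directly in the vector-valued setting, which is essentially the proof of the cited theorem unpacked and generalized in one go. Your version is more self-contained (no external citation needed) and makes the role of the zero set explicit via Stampacchia's lemma; the paper's version is shorter on the page because it treats the two-component case as a black box and only records the induction step.
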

\begin{proof}
	This is a direct consequence of the usual diamagnetic inequality for functions
	with values in $\C \cong \R^2$
	(see e.g. \cite[Theorem 6.17]{LieLos-01}), i.e. pointwise a.e.
	$$
		|(u_1,u_2)'|^2 = |u_1'|^2 + |u_2'|^2 \ge |\sqrt{u_1^2 + u_2^2}'|^2,
	$$
	where the r.h.s. is zero if $u_1^2+u_2^2=0$.
	Applied inductively,
	$$
		|u_1'|^2 + \ldots + |u_{n-1}'|^2 + |u_n'|^2 
		\ge |\sqrt{u_1^2+\ldots+u_{n-1}^2}'|^2 + |u_n'|^2 
		\ge |\sqrt{u_1^2+\ldots+u_n^2}'|^2,
	$$
	etc., extends to $\cF \cong \C^D \cong \R^{2D}$.
	Thus, pointwise a.e.
	$$
		|\psi'|_\cF^2 = \sum_k |\psi_k'|^2 \ge \bigl||\psi|_\cF'\bigr|^2,
	$$
	and since $|\psi|_\cF \in L^2$ and $\psi' \in L^2$ we also have
	$|\psi|_\cF \in H^1$ and the bound \eqref{eq:diamag-ineq-circle}.
\end{proof}

\begin{lemma}[Spatial diamagnetic inequality]\label{lem:diamag}
	Let $\Omega \subseteq \R^2$ be a simply connected Lipschitz domain. 
	Let $\Psi \in H^1_\rho(\tcC^N(\Omega);\cF)$, 
	$|\Psi|_\cF = \left( \sum_k |\psi_k|^2 \right)^{1/2}$, then
	$|\Psi|_\cF \in H^1_+(\tcC^N(\Omega);\R_+)$ extends to a function
	$|\Psi|_\cF \in H^1_\sym(\Omega^N;\R_+)$, 
	\begin{equation}\label{eq:diamag-ineq-pw}
		|\nabla\Psi(\tX)|_{\cF^{2N}}^2 
		\ge \big| \nabla|\Psi|_\cF(\tX) \big|_{\R^{2N}}^2
	\end{equation}
	pointwise a.e. on $\cC^N(\Omega)$, and
	\begin{equation}\label{eq:diamag-ineq}
		\int_{\cC^N(\Omega)} |\nabla\Psi(\tX)|_{\cF^{2N}}^2 \,dX
		\ge \int_{\cC^N(\Omega)} \big| \nabla|\Psi|_\cF(\tX) \big|_{\R^{2N}}^2 \,dX
		= \frac{1}{N!} \int_{\Omega^N} \big| \nabla|\Psi|_\cF(\sx) \big|_{\R^{2N}}^2 \,d\sx.
	\end{equation}
\end{lemma}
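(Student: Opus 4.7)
The plan is to reduce the statement to the classical diamagnetic inequality applied componentwise in a local orthonormal basis of $\cF$, and then transfer the result from the equivariant picture on $\tcC^N(\Omega)$ to a symmetric picture on $\Omega^N$. I proceed in three steps.

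First I verify that $|\Psi|_\cF$ descends to a scalar object on $\cC^N(\Omega)$. Since $\rho$ is unitary and $\Psi(\tX.b^{-1}) = \rho(b)\Psi(\tX)$ for all $b \in B_N$, we have $|\Psi|_\cF(\tX.b^{-1}) = |\Psi|_\cF(\tX)$; thus $|\Psi|_\cF$ is $\rho_+$-equivariant and factors through $\tilde\pr$ to a scalar function on $\cC^N(\Omega)$, which in turn lifts to a symmetric function on $\Omega^N\setminus\bDelta^N$ and hence a.e.\ on $\Omega^N$ since $\bDelta^N$ has zero Lebesgue measure.

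Next I establish the pointwise bound \eqref{eq:diamag-ineq-pw}. Away from $\bDelta^N$, fix $\tX \in \tcC^N(\Omega)$ and choose a neighborhood of $\tilde\pr(\tX)$ on which the bundle trivializes; over this neighborhood fix an orthonormal basis $\{v_k\}_{k=1}^D$ of $\cF$ and decompose $\Psi = \sum_k \psi_k v_k$ with $\psi_k \in \C$. For each coordinate $x_{jq}$,
\begin{equation}
\left|\frac{\partial\Psi}{\partial x_{jq}}\right|_\cF^2 = \sum_{k=1}^D \left|\frac{\partial\psi_k}{\partial x_{jq}}\right|^2 \ge \left|\frac{\partial|\Psi|_\cF}{\partial x_{jq}}\right|^2
\end{equation}
pointwise a.e., by the same inductive application of the scalar diamagnetic inequality used in the proof of Lemma~\ref{lem:diamag-circle} (with the convention $\nabla|\Psi|_\cF = 0$ on the set $\{|\Psi|_\cF = 0\}$, which is standard for Sobolev functions). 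Summing over the $2N$ coordinate directions yields \eqref{eq:diamag-ineq-pw}; note that the local basis choice is immaterial because both sides are intrinsic to the hermitian structure on $E_\rho$ and the bundle metric.

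Finally I deduce \eqref{eq:diamag-ineq} and the Sobolev membership. Integrating the pointwise inequality over $\cC^N(\Omega)$ gives the first inequality of \eqref{eq:diamag-ineq}, and the second identity is just the definition of the measure $dX$ applied to a symmetric integrand, namely $\int_{\cC^N(\Omega)} f\,dX = \frac{1}{N!}\int_{\Omega^N} f\,d\sx$ for symmetric $f$. To get $|\Psi|_\cF \in H^1_\sym(\Omega^N;\R_+)$, approximate $\Psi$ in the $H^1_\rho$-norm by a Cauchy sequence $\Psi_n \in C^\infty_{\rho,c}$; then $|\Psi_n|_\cF$ are smooth symmetric functions on $\Omega^N$ with compact support away from $\bDelta^N$, the reverse triangle inequality $\bigl||\Psi_n|_\cF - |\Psi_m|_\cF\bigr| \le |\Psi_n - \Psi_m|_\cF$ controls $L^2$-convergence, and the already-proven pointwise bound applied to $\Psi_n - \Psi_m$ controls gradient convergence, so $\{|\Psi_n|_\cF\}$ is Cauchy in $H^1_\sym$ and its limit is $|\Psi|_\cF$. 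The main (minor) obstacle is this last density argument: one must check that the diamagnetic bound is stable under the limit procedure, which it is precisely because it is a pointwise a.e.\ inequality consistent with $L^2$ norms of gradients.
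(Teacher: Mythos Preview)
Your approach is essentially the same as the paper's: reduce to the scalar diamagnetic inequality componentwise (Lemma~\ref{lem:diamag-circle}), observe $B_N$-invariance of both sides, and then pass from the approximating sequence $\Psi_n \in C^\infty_{\rho,c}$ to the limit. The pointwise argument and the descent to $\cC^N(\Omega)$ are fine.

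There is, however, a genuine slip in your last step. You claim that $\{|\Psi_n|_\cF\}$ is Cauchy in $H^1$ because ``the already-proven pointwise bound applied to $\Psi_n - \Psi_m$ controls gradient convergence.'' But the diamagnetic inequality applied to $\Psi_n-\Psi_m$ yields
\[
\bigl\|\nabla|\Psi_n-\Psi_m|_\cF\bigr\|_{L^2} \le \bigl\|\nabla(\Psi_n-\Psi_m)\bigr\|_{L^2},
\]
which bounds $\nabla|\Psi_n-\Psi_m|_\cF$, not $\nabla|\Psi_n|_\cF - \nabla|\Psi_m|_\cF$. Since $u\mapsto|u|_\cF$ is nonlinear, these are different objects, and the latter is what you need for Cauchy in $H^1$. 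The paper sidesteps this by instead using boundedness: the diamagnetic inequality applied to each $\Psi_n$ gives $\|\nabla|\Psi_n|_\cF\|_{L^2} \le \|\nabla\Psi_n\|_{L^2}$, so $\{|\Psi_n|_\cF\}$ is bounded in $H^1$; together with $L^2$-convergence to $|\Psi|_\cF$ and the identification $H^1(\R^{2N}\setminus\bDelta^N)=H^1(\R^{2N})$, this places the limit in $H^1_\sym(\Omega^N)$. You can repair your argument the same way (or invoke the continuity of $u\mapsto|u|$ on $H^1$, which is true but requires its own proof).
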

\begin{proof}
	The inequality \eqref{eq:diamag-ineq-pwc} may be replaced by a derivative in any 
	direction,
	$$
		|\partial_{j,k}\Psi|_\cF^2 \ge \big| \partial_{j,k}|\Psi|_\cF \big|^2.
	$$
	We therefore have a pointwise a.e. inequality on 
	$\tcC^N(\Omega)$ as above,
	$$
		|\nabla\Psi|_{\cF^{2N}}^2 
		= \sum_{j,k} |\partial_{j,k}\Psi|_\cF^2 
		\ge \sum_{j,k} \big| \partial_{j,k}|\Psi|_\cF \big|^2
		= \big| \nabla|\Psi|_\cF \big|_{\R^{2N}}^2.
	$$
	Furthermore both sides of this inequality are invariant 
	under the action of $B_N$ due to the $\rho$-equivariance of $\Psi$,
	and thus descend to $\cC^N(\Omega)$.
	Hence we have a function $|\Psi|_\cF \in H^1_{\rho_+}(\cC^N(\Omega);\R_+)$ satisfying
	the bound \eqref{eq:diamag-ineq}. 
	This extends to a function $|\Psi|_\cF \in H^1_\sym(\Omega^N\setminus\bDelta^N;\R_+)$.
	We may also consider $|\Psi|_\cF$ as a function in 
	$L^2_\sym(\Omega^N) = L^2_\sym(\Omega^{2N} \setminus \bDelta^N)$ 
	and its distributional derivative $\nabla|\Psi|_\cF \in H^{-1}(\Omega^N)$.
	Since $H^1(\R^{2N} \setminus \bDelta^N) = H^1(\R^{2N})$
	this function extends to $H^1_\sym(\Omega^N)$;
	cf. \cite[Lemmas~3 and 4]{LunSol-14} and more generally \cite[Appendix~B]{LarLunNam-19}.
	In fact $\Psi \in H^1_\rho$ is the limit of $\Psi_n \in C^\infty_{\rho,c}$ and
	the inequality \eqref{eq:diamag-ineq} for such functions
	implies for the limit 
	$|\Psi|_\cF \in H^1(\Omega^N) \cap H^1_0(\R^{2N} \setminus \bDelta^N) = H^1(\Omega^N)$.
\end{proof}

	The diamagnetic inequality shows that $\infspec \hT_\rho \ge \infspec \hT_{\rho_+}$,
	i.e.\ bosons always provide a lower bound to the energy, 
	since given any
	$\Psi \in H^1_\rho(\tcC^N(\Omega);\cF)$ 
	or sequence $\Psi_n \in C^\infty_\rho$ converging to $\Psi$
	we obtain
	$$
		T_\rho^\Omega[\Psi_n] 
		\ge T_{\rho_+}^\Omega\bigl[|\Psi_n|_\cF\bigr] 
		\ge E_N^{\rho_+}(\Omega),
	$$
	and thus $E_N^\rho(\Omega) \ge E_N^{\rho_+}(\Omega)$.
	Furthermore, we obtain by finiteness of the integral
	$$
		T_{\rho_+}^\Omega\bigl[|\Psi|_\cF\bigr] = T_0[\Phi] 
		= \frac{1}{(N-1)!} \int_{\Omega^{N-1}} \int_{\Omega} |\nabla_\bx \Phi(\bx,\sx')|^2 d\bx \,d\sx',
		\quad \Phi(\sx) = |\Psi(X)|_\cF,
	$$
	the Sobolev embedding $|\Psi(\cdot,\sx')|_\cF \in L^p(\Omega)$, $2 \le p < \infty$, 
	for a.e. $\sx' \in \Omega^{N-1}$
	\cite[Theorems~8.5 and 8.8]{LieLos-01}.

\subsection{Hardy inequality}\label{sec:repulsion-Hardy}

	We are now ready to prove our first main result concerning the dynamics of the anyon gas, 
	which extends the many-body Hardy inequality \cite[Theorem~1.3]{LarLun-16}
	for irreducible abelian anyons in the magnetic representation
	to arbitrary geometric anyon models.

\begin{theorem}[\keyword{Hardy inequality for non-abelian anyons}]\label{thm:Hardy}\mbox{}\\
	Let $\rho\colon B_N \to \sU(\cF)$ be an $N$-anyon model 
	with exchange parameters $(\beta_p)_{p\in \{0,\ldots,N-2\}}$ and 
	$\alpha_N = \min_{p \in \{0,\ldots,N-2\}} \beta_p$, and let
	$\Omega \subseteq \R^2$ be open and convex. Then, for any 
	$\Psi \in H^1_\rho(\tcC^N(\Omega);\cF)$,
	\begin{align}
		\int_{\cC^N(\Omega)} |\nabla\Psi|_{\cF^{2N}}^2 \, dX
		&\ge \frac{1}{N} \int_{\cC^N(\Omega)} \biggl| \sum_{j=1}^N \nabla_j \Psi \biggr|_{\cF^2}^2 dX \\
		&\qquad + \frac{4}{N} \int_{\cC^N(\Omega)} \sum_{j < k} \biggl(
			\bigl| \partial_{r_{jk}} |\Psi|_\cF \bigr|^2 + 
			\sum_{p=0}^M \beta_p^2 \,\1_{A_p}(\br_{jk}) \frac{|\Psi|_\cF^2}{r_{jk}^2}
			\biggr) dX \\
		&\ge \pi (j_{\alpha_N}')^2 \frac{4}{N} \int_{\cC^N(\Omega)} \sum_{j < k} 
			\frac{ \1_{\Omega \circ \Omega}(\bx_j, \bx_k) }{ 4\pi\delta(\bX_{jk})^2 }
			\, |\Psi|_\cF^2 \, dX, 
	\end{align}
	where 
	$0 \le M \le N-2$ and 
	the relative annuli $A_p$ (defined in Section~\ref{sec:repulsion-rel})
	depend on the relative positions of all particles with the pair $j<k$ singled out,
	the reduced support
	$$
		\1_{\Omega \circ \Omega}(\bx_j, \bx_k) := \1_{B_{2\delta(\bX_{jk})}(0)}(\br_{jk})
		= \sum_{p=0}^M \1_{A_p}(\br_{jk})
	$$
	defined in terms of pairwise coordinates and distances
	$$
		\br_{jk} := \bx_j-\bx_k, \qquad
		\bX_{jk} := (\bx_j+\bx_k)/2, \qquad
		r_{jk} := |\br_{jk}|, \qquad
		\delta(\bx) := \dist(\bx, \partial\Omega),
	$$
	while $j_\nu'$ denotes the first positive zero of the derivative 
	of the Bessel function $J_\nu$, 
	satisfying
	\begin{equation}\label{eq:Bessel-bounds}
		\sqrt{2\nu} \le j_\nu' \le \sqrt{2\nu(1+\nu)},
		\qquad j_0' := 0.
	\end{equation}
\end{theorem}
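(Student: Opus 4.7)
The plan is to chain three reductions: from the full many-body gradient to pairwise relative gradients, from pairwise gradients to a radial-plus-angular decomposition where the Poincaré inequality supplies statistical repulsion, and finally to a two-dimensional Bessel-type Hardy inequality on the pair disks. First I would apply the elementary identity
\begin{equation}
	\sum_{j=1}^N |\bv_j|_\cF^2 = \frac{1}{N} \biggl|\sum_{j=1}^N \bv_j\biggr|_\cF^2 + \frac{1}{N} \sum_{1 \le j<k \le N} |\bv_j - \bv_k|_\cF^2
\end{equation}
with $\bv_j = \nabla_j \Psi$, which yields the center-of-mass term and reduces matters to estimating each pairwise difference $|\nabla_j\Psi - \nabla_k\Psi|_\cF^2$ integrated over $\cC^N(\Omega)$. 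By the permutation symmetry of the integrand (via equivariance), it suffices to treat a single pair, say $(j,k)=(1,2)$.

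For the pair analysis I would hold $\bR = \bX_{12}$ and $\sx' = (\bx_3,\ldots,\bx_N)$ fixed and pass to the pair relative coordinate $\br = \br_{12}$, noting $\nabla_1 - \nabla_2 = 2\nabla_\br$, so that on the lifted relative domain $\tilde\Omega_\rel$
\begin{equation}
	\tfrac{1}{4}|\nabla_1\Psi - \nabla_2\Psi|_\cF^2 = |\partial_r \psi|_\cF^2 + r^{-2}|v'(\vphi)|_\cF^2
\end{equation}
in polar form, exactly as in \eqref{eq:psi-derivative}. As discussed in Section~\ref{sec:repulsion-rel}, a full circle of radius $r$ lies inside $\Omega_\rel(\bR;\sx')$ precisely when $r$ falls in one of the annuli $A_p$, and the lift of such a circle along $\vphi \in [0,\pi]$ realizes the exchange braid $\Sigma_p$ of \eqref{eq:exchange-braid}, so $\rho$-equivariance forces the boundary condition $v(\pi) = \rho(\Sigma_p^{-1})^{-1} v(0) = U_p v(0)$. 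Applying Lemma~\ref{lem:Poincare} on each such circle yields $\int_0^\pi |v'|_\cF^2 \,d\vphi \ge \beta_p^2 \int_0^\pi |v|_\cF^2 \,d\vphi$, while Lemma~\ref{lem:diamag-circle} gives $|\partial_r \psi|_\cF^2 \ge |\partial_r|\psi|_\cF|^2$ pointwise. After Fubini in $(r,\bR,\sx')$, summing over $p$ with $\1_{A_p}(\br)$ and undoing the change of variables reassembles the first displayed lower bound.

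For the final step I would fix the pair and, using Lemma~\ref{lem:diamag} to pass to the scalar amplitude $|\Psi|_\cF$,
apply the sharp two-dimensional magnetic-type Hardy inequality on the relative disk $B_{2\delta(\bX_{jk})}(\0)$
\begin{equation}
	\int_0^{2\delta} \!\!\! \int_0^{2\pi} \biggl( |\partial_r u|^2 + \frac{\nu^2}{r^2}|u|^2 \biggr) r \,d\vphi\, dr \ge \frac{(j_\nu')^2}{(2\delta)^2} \int_0^{2\delta} \!\!\! \int_0^{2\pi} |u|^2 r\,d\vphi\, dr,
\end{equation}
which is the content of the Bessel-eigenvalue bound used in \cite{LarLun-16} (extracted via separation of variables, noting that any radial profile satisfying the boundary requirement is controlled by the Neumann Bessel zero $j_\nu'$). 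Taking $\nu = \alpha_N$, so that $\beta_p \ge \alpha_N$ uniformly in $p \le M$, and gathering the factor $1/(4\pi\delta(\bX_{jk})^2)$ coming from the Jacobian in the relative-to-pair coordinate change gives the claimed Bessel bound; the elementary estimates \eqref{eq:Bessel-bounds} follow from standard monotonicity of Bessel zeros.

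The main obstacle, to my mind, is not any single analytic step but ensuring that the annular decomposition and lift to $\tilde\Omega_\rel$ are carried out with full rigor, in particular that the exchange operator attached to a circle of radius $r \in I_p$ really is similar to $U_p$ (so that $\lambda_0$ in the Poincaré inequality equals $\beta_p$) and that the radial integration in the presence of finitely many punctures at $\{\pm 2(\bx_\ell - \bR)\}_\ell$ does not lose mass; both are handled by a convexity argument on $\Omega$ together with the reduction of exchange operators in Theorem~\ref{thm:gen-exchange}. A density argument reduces everything to $\Psi \in C^\infty_{\rho,c}$, allowing the pointwise differential identities to be used directly.
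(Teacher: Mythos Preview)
Your proposal is correct and follows essentially the same route as the paper: parallelogram identity, pairwise relative coordinates with polar decomposition, angular Poincar\'e (Lemma~\ref{lem:Poincare}) plus radial diamagnetic (Lemma~\ref{lem:diamag-circle}) on each annulus $A_p$, then the Bessel--Neumann eigenvalue bound on the disk of radius $2\delta(\bX_{jk})$. One small correction: the factor $1/(4\pi\delta^2)$ is not a Jacobian (the change $d\bx_1\,d\bx_2 = d\br\,d\bR$ has unit Jacobian) but simply the way the Bessel eigenvalue $(j'_{\alpha_N})^2/(2\delta)^2$ is packaged in the statement, and the invocation of Lemma~\ref{lem:diamag} in your final step is redundant since the radial diamagnetic reduction to $|\Psi|_\cF$ was already made in obtaining the first inequality.
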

\begin{proof}
	We adapt the strategy of the proof of \cite[Theorem~1.3]{LarLun-16}
	to functions living on the covering space.	
	By definition of the space $H^1_\rho$, w.l.o.g.\ $\Psi \in C^\infty_{\rho,c}$ 
	where $\tilde\pr(\supp\Psi)$ is compact in $\cC^N$, 
	i.e. $\Psi$ is zero close to $\bDelta^N$.
	We may also use invariance of $|\nabla\Psi|^2 = \sum_j |\nabla_j\Psi|^2$ 
	under the action of $B_N$
	and thus of $S_N$ to replace $N!$ copies of the integral 
	$\int_{\cC^N(\Omega)}$ by $\int_{\Omega^N}$.
	
	Now use the parallelogram identity
	\begin{equation}\label{eq:parallelogram}
		\sum_{j=1}^n |\bz_j|^2 
			= \frac{1}{n} \sum_{1 \le j<k \le n} |\bz_j - \bz_k|^2
			+ \frac{1}{n} \biggl| \sum_{j=1}^n \bz_j \biggr|^2, 
	\end{equation}
	valid for any $\sz = (\bz_j)_j \in \cF^{n}$, $n \in \N$, 
	and Hilbert space $\cF$,
	to rewrite the kinetic energy
	\begin{align}\label{eq:derivative-split}
		\int_{\Omega^N} \sum_{j=1}^N |\nabla_j \Psi|^2\, d\sx 
		=& \frac{1}{N} \sum_{1\le j<k \le N} \int_{\Omega^{N-2}}\int_{\Omega^2} |\nabla_j\Psi-\nabla_k\Psi|^2 \, d\bx_j d\bx_k \prod_{l\neq j, k}d\bx_l \\
		&+ \frac{1}{N} \int_{\Omega^N} \biggl| \sum_{j=1}^N \nabla_j\Psi \biggr|^2 d\sx.
	\end{align}
	We consider the term
	$$
		\int_{\Omega^{N-2}} \int_{\Omega^2} |\nabla_1\Psi - \nabla_2\Psi|^2 d\bx_1 d\bx_2 \,d\sx'
		= \int_{\Omega^{N-2}} \int_\Omega \int_{\Omega_\rel(\bR;\sx')} |\nabla_1\Psi - \nabla_2\Psi|^2 d\br \,d\bR \,d\sx'
	$$
	where, following the strategy of Section~\ref{sec:repulsion-rel}, 
	we make the 1-to-1 coordinate transformation to relative coordinates 
	$\bR \in \Omega$, $\br \in \Omega_\rel$ 
	with measure $d\bx_1 d\bx_2 = d\br d\bR$.
	Given a configuration $\sx' \in \Omega^N$ and $\bR \in \Omega$, 
	by \eqref{eq:psi-derivative} it remains
	then to study the integral
	$$
		\int_{\Omega_\rel(\bR;\sx')} |\nabla_1\Psi - \nabla_2\Psi|^2 d\br
		= 4 \int_{\Omega_\rel(\bR;\sx')} |\nabla \psi(\tilde\br)|^2 d\br
		= 4 \sum_{p=0}^M \int_{A_p} |\nabla \psi(\tilde\br)|^2 d\br
		+ 4 \int_{A^c} |\nabla \psi(\tilde\br)|^2 d\br
	$$
	with the smooth function $\psi$ in \eqref{eq:rel-func-def} defined on the covering space
	$\tilde\Omega_\rel$ and $A^c := \Omega_\rel \setminus \overline{\cup_p A_p}$. 
	On each annulus $A_p$ we write in terms of polar coordinates
	$$
		\int_{A_p} |\nabla \psi(\tilde\br)|^2 d\br 
		= \int_{r=r_p}^{r_{p+1}} \int_{\vphi=0}^{2\pi} \left( |\partial_r \psi|^2 + \frac{1}{r^2} |\partial_\vphi \psi|^2 \right) d\vphi \,r dr.
	$$
	
	For the first term we use the pointwise diamagnetic inequality \eqref{eq:diamag-ineq-pwc}
	$$
		|\partial_r \psi| \ge \bigl|\partial_r|\psi|\bigr|
	$$
	while in the second we may at fixed $r \in I_p$ identify 
	$|\partial_\vphi \psi|^2 = |v'(\vphi)|^2$ for the smooth function \eqref{eq:angular-func-def}
	of the relative angle $\vphi \in \R$ which obeys the semi-periodicity
	$$
		v(n\pi) = \rho(b^{-1})^n v(0),
		\qquad n \in \Z, 
	$$
	with $\rho(b^{-1}) \sim U_p^{-1}$.
	We stress that the choice of a base point 
	$\tX(r\tilde\be(0);\bR;\sx') = [\eta].\tX_0(\Omega) = [\eta\gamma_\Omega].\tX_0$ 
	for the curve used in the definition of $v$ (respectively $\psi$) is arbitrary
	since if we instead took 
	$[\eta'].\tX_0(\Omega) = ([\eta].\tX_0(\Omega)).[\gamma_\Omega^{-1}\eta^{-1}\eta'\gamma_\Omega]$ 
	they differ by the braid $B = [\gamma_\Omega^{-1}\eta^{-1}\eta'\gamma_\Omega]$
	and thus, as long as it preserves our division of the particles, we have
	$$
		\partial_\vphi[ \Psi(\tX(r\tilde\be(\vphi);\bR;\sx').B) ]
		= \rho(B^{-1}) \partial_\vphi[ \Psi(\tX(r\tilde\be(\vphi);\bR;\sx')) ],
	$$
	leaving $|v'(\vphi)|^2$ unchanged. In the same way we have
	$v(\vphi+\pi) = \rho(b^{-1})v(\vphi)$ and thus
	$$
		\int_0^{2\pi} |v'|^2
		= 2\int_0^{\pi} |v'|^2 
		\ge 2 \lambda_0(U_p^{-1})^2 \int_0^{\pi} |v|^2
		= \lambda_0(U_p)^2 \int_0^{2\pi} |v|^2,
	$$
	by the Poincar\'e inequality, Lemma~\ref{lem:Poincare}.
	Therefore
	$$
		\int_{A_p} |\nabla u(\tilde\br)|^2 d\br 
		\ge \int_{r=r_p}^{r_{p+1}} \int_0^{2\pi} \left( \bigl|\partial_r|u|\bigr|^2 + \frac{\lambda_0(U_p)^2}{r^2} |u|^2 \right) d\vphi \,r dr 
	$$
	
	The remaining term can be bounded using the diamagnetic inequality,
	$$
		\int_{A^c} |\nabla \psi|^2 \ge \int_{A^c} |\partial_r \psi|^2 
		\ge \int_{A^c} \bigl|\partial_r|\psi|\bigr|^2.
	$$
	Summarizing,
	$$
		\int_{\Omega^N} |\nabla_1\Psi - \nabla_2\Psi|^2 \,d\sx
		\ge 4 \int_{\Omega^N} \left( \bigl|\partial_r|\Psi|^2\bigr| + \sum_{p=0}^M \1_{A_p}(\br)\beta_p^2 \frac{|\Psi|^2}{|\br|^2} \right) d\sx,
	$$
	which, after bounding uniformly
	$\lambda_0(U_p) = \beta_p \ge \alpha_N$ 
	in terms of the exchange parameters of the anyon model,
	simplifies further with the remaining support
	$\sum_{p=0}^M \1_{A_p}(\br) = \1_{B_{r_{\max}}(0)}(\br) = \1_{\Omega \circ \Omega}(\bx_1,\bx_2)$.
	
	The other terms in \eqref{eq:derivative-split}
	involving the pairs $(\bx_j,\bx_k)$ are exactly analogous in terms of
	corresponding relative coordinates $(\br_{jk},\bR_{jk})$.
	After collecting these we may finally
	use again the $B_N$-invariance of the collection of 
	all the terms to write the integrals
	on $\cC^N(\Omega)$.
	This proves the first bound of the theorem.
	
	The second bound of the theorem follows exactly as in \cite{LarLun-16}, 
	passing again to polar coordinates on $\Omega_\rel$ and
	considering $u(r) = |\Psi(X)|_\cF$ in
	$$
		\int_0^{r_{\max}} \left( |u'|^2 + \frac{\alpha_N^2}{r^2}|u|^2 \right) rdr
		\ge \lambda \int_0^{r_{\max}} |u|^2 \,rdr.
	$$
	The minimizer of the Rayleigh quotient satisfies the Bessel equation
	$$
		-u''(r) - u'(r)/r + \nu^2 u(r) / r^2 = \lambda u(r), 
		\quad u(0) = 0, \ u'(r_{\max}) = 0,
	$$
	i.e. $u(r)=J_\nu(j_\nu' r/r_{\max})$, 
	with the eigenvalue $\lambda = (j_\nu')^2/r_{\max}^2$,
	$\nu = \alpha_N \in [0,1]$.	
	The bounds \eqref{eq:Bessel-bounds} for $j_\nu'$ are given in 
	\cite[Appendix A]{LarLun-16}.
\end{proof}

	We note that Theorem~\ref{thm:Hardy} implies some simpler inequalities on the full plane $\Omega=\R^2$:
\begin{corollary}
    Let $\rho$ be as in Theorem~\ref{thm:Hardy}, and
	$\Psi \in H^1_\rho$. Then
	\begin{equation}\label{eq:Hardy-nn}
            T_\rho[\Psi] \ge \frac{2\alpha_2^2}{N} \sum_{j=1}^N \frac{1}{N!} \int_{\R^{2N}} \frac{|\Psi|_\cF^2}{|\bx_j-\bx_{\mathrm{nn}(j)}|^2} \,d\sx,
	\end{equation}
    where $\mathrm{nn}(j)$ denotes the nearest neighbor of particle $j$ (depending on $\sx$).
    Furthermore,
	\begin{equation}\label{eq:Hardy-simple}
		T_\rho[\Psi] \ge C_N \sum_{j<k} \frac{1}{N!} \int_{\R^{2N}} \frac{|\Psi|_\cF^2}{|\bx_j-\bx_k|^2} \,d\sx,
		\qquad
		C_N = \frac{4}{N} \max\left\{ \alpha_N^2, \frac{\alpha_2^2}{N-1} \right\}.
	\end{equation}
\end{corollary}
\begin{proof}
    For \eqref{eq:Hardy-nn}, we consider the first inequality of Theorem~\ref{thm:Hardy} and note that for any fixed $j$ in the sum over particle numbers, for one of the terms $k = \mathrm{nn}(j)$ is its nearest neighbor, which corresponds to $p=0$ and $\beta_{p=0} = \alpha_2$ in the sum over annuli. 
    Thus, the desired inequality remains after dropping the other terms which are non-negative.
    For \eqref{eq:Hardy-simple}, we simply note that
    $$
         |\bx_j-\bx_{\mathrm{nn}(j)}|^{-2} \ge (N-1)^{-1} \sum_{k \neq j} |\bx_j-\bx_k|^{-2},
    $$
    and combine that with the first inequality where we keep all the terms in $k$ and $p$ but use that $\beta_p \ge \alpha_N$ for all $p$.
\end{proof}

\begin{proposition}[Counterexamples for $\alpha_2 = 0$, global case]\label{prop:Hardy-counterex}
	A Hardy inequality of the type \eqref{eq:Hardy-simple} cannot hold 
	with any positive constant $C_N$
	in the case that $\rho\colon B_N \to \sU(D)$ is abelian with $\alpha_2=0$
	or $\rho\colon B_3 \to \sU(2)$ is the non-abelian Burau representation given in Section~\ref{sec:phases-burau} with $w=1$
	($\alpha_2=\alpha_3=0$).
\end{proposition}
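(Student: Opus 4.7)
The plan is, in each case, to construct a sequence of test functions $\Psi_n \in H^1_\rho$ violating \eqref{eq:Hardy-simple} with any positive constant. First I would reduce the problem to an effectively bosonic one, using that $\beta_0 = 0$ forces $1$ to be an eigenvalue of $\rho(\sigma_1)$, and then exploit the well-known failure of the 2D Hardy inequality. For the abelian case, Lemma~\ref{lem:similar-generators} decomposes $\rho$ into scalar one-dimensional irreducibles $\rho_n(\sigma_j) = e^{i\alpha_n\pi}\1$, so the hypothesis supplies a truly trivial sub-representation $\alpha_{n_0} = 0$; taking a unit vector $v_0$ in the corresponding line, every symmetric $\psi \in H^1_\sym(\R^{2N})$ yields $\Psi := \psi v_0 \in H^1_\rho$ with $T_\rho[\Psi] = T_0[\psi]$ and pair integrals unchanged. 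For the Burau representation at $w=1$, direct computation gives the eigenvector $v_0 = (-\sqrt{3}, 1)/2$ with $\rho(\sigma_1) v_0 = v_0$, but $v_0$ is not preserved by $\rho(\sigma_2)$, so $v_0$ cannot be used to trivialize globally.

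To bypass this in the Burau case, I would restrict the support of $\Psi$ to the preimage in $\tcC^3$ of the open set
\[
K := \bigl\{ X \in \cC^3(\R^2) : |X \cap B_1(\0)| = 2,\ |X \cap B_1(10\be_1)| = 1 \bigr\},
\]
consisting of configurations with two close particles in one ball and a third far away in another. Since $\pi_1(K) \cong \Z$ and the inclusion-induced map sends its generator to (a conjugate of) $\sigma_1 \in B_3$, on one component $F$ of $\tilde\pr^{-1}(K) \subset \tcC^3$ I would set $\Psi(\tX) := \psi(\tilde\pr(\tX))\, v_0$ for any smooth $\psi$ with compact support in $K$ that is invariant under the close-pair exchange, then extend by $\Psi(\tX.b) := \rho(b^{-1})\Psi(\tX)$ to the other components of $\tilde\pr^{-1}(K)$ and by zero outside. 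The condition $\rho(\sigma_1) v_0 = v_0$ guarantees consistency on $F$, compactness of $\supp\psi$ in $K$ away from $\bDelta^3$ yields $\Psi \in H^1_\rho$, and both $T_\rho[\Psi]$ and the pair-density integrals reduce to the bosonic expressions for $\psi$.

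Taking a product ansatz $\psi(\bx_1,\ldots,\bx_N) = g(|\bx_1-\bx_2|)\, h\bigl((\bx_1+\bx_2)/2\bigr) \prod_{j \ge 3} f(\bx_j)$ (symmetrized over the first two and, in the abelian case, over all variables; with $\supp f$ far from $\supp h$ in the Burau case), the contributions to both sides of \eqref{eq:Hardy-simple} from $|\bx_j - \bx_k|^{-2}$ with $\{j,k\} \neq \{1,2\}$ stay uniformly bounded, so the ratio is controlled by the two-body radial quotient
\[
\frac{\int_0^\infty |g'(r)|^2\, r\,dr}{\int_0^\infty |g(r)|^2\, r^{-1}\,dr}.
\]
After the change of variables $s = \log r$, $H(s) := g(e^s)$, this becomes $\int_\R |H'|^2\,ds \big/ \int_\R |H|^2\,ds$, and taking $H$ a smooth bump on a large scale $\sigma \to \infty$ drives the ratio to $O(\sigma^{-2}) \to 0$.

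The main obstacle is the Burau case: ensuring that the locally defined $\Psi = \psi v_0$ on $F$ extends to a smooth, globally well-defined $\rho$-equivariant element of $H^1_\rho(\tcC^3(\R^2); \C^2)$. This rests on verifying (i) that the inclusion-induced map $\pi_1(K) \to B_3$ is generated by $\sigma_1$ alone, so that the single condition $\rho(\sigma_1) v_0 = v_0$ suffices for equivariance on $F$, and (ii) that $\supp\psi$ can be kept compactly inside $K$ and bounded away from $\bDelta^3$, so the equivariant extension together with zero-extension outside $\tilde\pr^{-1}(K)$ produces a smooth section.
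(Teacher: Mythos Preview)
Your argument is correct and, for the Burau case, takes a genuinely different route from the paper.

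For the abelian case both proofs use the same idea (a joint eigenvector $v_0$ with eigenvalue~$1$ turns the problem into a bosonic one). The paper simply takes $\Psi = \Phi_\eps v_0$ with $\Phi_\eps \to e^{-|\sx|^2}$ in $H^1_\sym$, so that $T_\rho[\Psi]$ stays bounded while $\int |\Psi|^2/|\bx_1-\bx_2|^2$ diverges by non-integrability of $r^{-1}$ near the origin in~2D. Your log-scaled two-body ansatz achieves the same, though the full symmetrization over all variables without disjoint supports requires a little more bookkeeping than you indicate; imposing disjoint supports also in the abelian case (or using the product/Gaussian shortcut) avoids that.

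For the Burau case the approaches differ. The paper exploits that at $w=1$ one has $\rho(\sigma_j)^2=\1$, so $\rho$ factors through $S_3$; it then writes down an explicitly $S_3$-equivariant trial state on $(\R^2)^3$,
\[
\Psi(\bx_1,\bx_2,\bx_3)=\sum_{\sigma\in S_3} f(\bx_{\sigma(1)},\bx_{\sigma(2)})\,g(\bx_{\sigma(3)})\,\rho(\sigma)v_1,
\]
with $f$ symmetric supported in $\Omega_1^2$ and $g$ supported in a disjoint $\Omega_2$, and lifts via $\hat\pr$. Your construction instead restricts to the open set $K\subset\cC^3$ where two particles sit in one ball and the third in a distant ball; since $\pi_1(K)\cong\Z$ is generated by (a conjugate of) $\sigma_1$, the single eigenvector condition $\rho(\sigma_1)v_0=v_0$ suffices to define $\Psi=\psi v_0$ consistently on one component of $\tilde\pr^{-1}(K)$ and then extend by equivariance. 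The two verifications you flag in (i) and (ii) are indeed routine: the inclusion $\pi_1(K)\hookrightarrow B_3$ sends the generator to a conjugate of $\sigma_1$ (so one may either adjust the basepoint or replace $v_0$ by $\rho(b)v_0$), and compact support of $\psi$ in $K$ makes the zero-extension smooth. Your approach is more geometric and does not use that $\rho$ factors through $S_3$, so it would apply equally to any representation with $\beta_0=0$; the paper's explicit symmetrization is shorter and yields exact formulas for $\|\Psi\|^2$ and $T_\rho[\Psi]$ with no covering-space bookkeeping.
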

\begin{proof}
	For $\rho\colon B_N \to \sU(D)$ abelian with $\alpha_2=0$, 
	we may take a joint unit eigenvector
	$v_0 \in \C^D$ such that $\rho(\sigma_j)v_0 = v_0 \ \forall j$ and consider 
	the state $\Psi = \Phi_\eps v_0 \in H^1_\sym$,
	where $\Phi_\eps \in H^1_\sym \cap C^\infty_c(\R^{2N} \setminus \bDelta^N)$ 
	approximates in $H^1$ the product state $\Phi_0(\sx) = e^{-|\sx|^2} = \prod_{j=1}^N e^{-|\bx_j|^2}$ 
	as $\eps \to 0$
    (see e.g. \cite[Lemma~3]{LunSol-14} and, more generally, \cite[Appendix~B]{LarLunNam-19}).
	Thus, $\Psi$ is $\rho$-equivariant,
	$$
		\Psi(\tX.\sigma_j^{-1}) = \Phi_\eps(X)v_0 = \rho(\sigma_j)\Psi(\tX),
	$$
	$\int_{\cC^N} |\Psi|^2 = (N!)^{-1} \int_{\R^{2N}} |\Phi_\eps|^2 \to C < \infty$, and
	$$
		T_\rho[\Psi] = \int_{\cC^N} |\nabla\Psi(\tX)|^2 \,dX 
		= \frac{1}{N!} \int_{\R^{2N}} |\nabla\Phi_\eps|^2
		\to \frac{1}{N!} \int_{\R^{2N}} |\nabla\Phi_0|^2 < \infty,
	$$
	while, due to the non-integrability of the inverse-square potential,
	$$
		\int_{\R^{2N}} \frac{|\Psi|^2}{|\bx_1-\bx_2|^2} d\sx 
		= \int_{\R^{2N}} \frac{|\Phi_\eps|^2}{|\bx_1-\bx_2|^2} d\sx \to \infty
		\ \ \text{as} \ \ \eps \to 0.
	$$
	
	For $N=2$, any representation is abelian and thus the above counterexample applies
	if and only if $\alpha_2 = \beta_0 = 0$.
	
	For $N=3$, we consider the reduced unitarized Burau representation with 
	$\cF=\C^2$ and $w=1$:
	$$
		\rho(\sigma_1) = \frac{1}{2} \begin{bmatrix} 1 & -\sqrt{3} \\ -\sqrt{3} & -1 \end{bmatrix},
		\qquad
		\rho(\sigma_2) = \frac{1}{2} \begin{bmatrix} 1 & \sqrt{3} \\ \sqrt{3} & -1 \end{bmatrix},
		\qquad
		\rho(\sigma_1\sigma_2\sigma_1) = \begin{bmatrix} -1 & 0 \\ 0 & 1 \end{bmatrix}.
	$$
	Note that $\rho(\sigma_j)^2 = \1$ and thus it descends to a non-abelian representation of 
	$S_3 = \{1,\sigma_1,\sigma_2,\sigma_1\sigma_2,\sigma_2\sigma_1,\sigma_1\sigma_2\sigma_1\}$.
	The eigenvectors corresponding to eigenvalue $1$ are
	$$
		v_1 := \frac{1}{2} \begin{bmatrix} \sqrt{3} \\ -1 \end{bmatrix},
		\quad
		v_2 := \frac{1}{2} \begin{bmatrix} \sqrt{3} \\ 1 \end{bmatrix},
		\quad \text{resp.} \quad
		v_2-v_1 = \begin{bmatrix} 0 \\ 1 \end{bmatrix},
	$$
	and therefore we cannot simply take a constant eigenvector as in the abelian case.
	Instead, take two disjoint domains $\Omega_1$ and $\Omega_2$ in $\R^2$ and define
	$\Psi(\tX) = \Psi(\hat\pr(\tX))$ by
	\begin{align*}
		\Psi(\bx_1,\bx_2,\bx_3)
		&:= \sum_{\sigma \in S_3} f(\bx_{\sigma(1)},\bx_{\sigma(2)}) g(\bx_{\sigma(3)}) \rho(\sigma) v_1 \\
		&= 2f(\bx_1,\bx_2)g(\bx_3)v_1 + 2f(\bx_2,\bx_3)g(\bx_1)(-v_2) + 2f(\bx_3,\bx_1)g(\bx_2)(v_2-v_1),
	\end{align*}
	where $f \in H^1_\sym(\R^4;\C)$ supported in $\Omega_1^2$ 
	and $g \in H^1(\R^2;\C)$ supported in $\Omega_2$.
	We also used that $\rho(\sigma_1)v_2 = v_1-v_2$ and $\rho(\sigma_2)v_1 = v_2-v_1$.
	It then follows that the three terms above are pairwise orthogonal 
	in $L^2_\rho$ and
	$$
		\int_{\cC^3} |\Psi|^2 = 2 \int_{\Omega_1^2} |f(\bx_1,\bx_2)|^2 d\bx_1 d\bx_2 \int_{\Omega_2} |g(\bx_3)|^2 d\bx_3.
	$$
	Furthermore, 
	\begin{align*}
		\int_{\cC^3} |\nabla\Psi|^2 =& 2 \int_{\Omega_1^2} \left(
			|\nabla_1 f(\bx_1,\bx_2)|^2 + |\nabla_2 f(\bx_1,\bx_2)|^2 \right) d\bx_1 d\bx_2 
			\int_{\Omega_2} |g(\bx_3)|^2 d\bx_3 \\
		&+ 2\int_{\Omega_1^2} |f(\bx_1,\bx_2)|^2
			\int_{\Omega_2} |\nabla g(\bx_3)|^2 d\bx_3.
	\end{align*}
	and $\Psi$ is $\rho$-equivariant by definition,
	$$
		\Psi(\tX.\sigma_j^{-1}) = \sum_{\sigma \in S_3} f(\bx_{\sigma_j\sigma(1)},\bx_{\sigma_j\sigma(2)}) g(\bx_{\sigma_j\sigma(3)}) \rho(\sigma_j^2\sigma) v_1
		= \rho(\sigma_j)\Psi(\tX).
	$$
	Again, by fixing $g$ and taking $f \in C^\infty_c(\Omega_1 \setminus \bDelta^2)$ 
	to approximate a product state
	$u(\bx_1)u(\bx_2)$ where, e.g., $u$ is the g.s. of the Dirichlet laplacian
	on $\Omega_1$, we find that $T_\rho[\Psi]$ stays uniformly bounded while
	$$
		\int_{\R^6} \frac{|\Psi|^2}{|\bx_1-\bx_2|^2} 
		\ge \int_{\Omega_1^2} \frac{|f(\bx_1,\bx_2)|^2}{|\bx_1-\bx_2|^2} d\bx_1 d\bx_2 \int_{\Omega_2} |g|^2
	$$
	grows unboundedly.
	This contradicts the validity of the Hardy inequality \eqref{eq:Hardy-simple}
	for any positive value of the constant $C_N$.
\end{proof}

\begin{proposition}[Counterexamples for $\alpha_2 = 0$, local case]\label{prop:Hardy-counterex-local}
	In the case that $\rho\colon B_N \to \sU(D)$ is abelian with
	$\rho(\sigma_j^{-1}) \sim \diag(e^{i\gamma_n\pi})_{n=1}^D$ we have
	$$
		E_N^{\rho}(\Omega) \le \min_{n \in \{1,\ldots,D\}} E_N^{\rho^{\gamma_n}}(\Omega),
	$$
	with
	\begin{equation}\label{eq:abelian-bound}
		E_N^{\rho^\alpha}([0,1]^2) \le 2\pi N(N-1) \alpha \frac{1 + \frac{20}{3}\pi(N-2)\alpha}{(1-2\pi\alpha N)^2},
		\quad \text{if}\ 0 \le \alpha < (2\pi N)^{-1},
	\end{equation}
	and thus for $\Omega=[0,1]^2$ (and in fact for any bounded convex $\Omega$)
	the Hardy inequality of Theorem~\ref{thm:Hardy} necessarily trivializes 
	for $\alpha_2=\beta_0=0$.
\end{proposition}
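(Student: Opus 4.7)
For the first inequality, since $\rho\colon B_N \to \sU(D)$ is abelian, the commuting family $\{\rho(\sigma_j)\}_{j=1}^{N-1}$ admits a simultaneous orthonormal eigenbasis $\{v_n\}_{n=1}^D$ of $\cF$ with $\rho(\sigma_j^{-1})v_n = e^{i\gamma_n\pi}v_n$ for every $j$, so $\rho \cong \bigoplus_{n=1}^D \rho^{\gamma_n}$. For each $n$ and any normalized near-minimizer $\Phi \in H^1_{\rho^{\gamma_n}}(\tcC^N(\Omega);\C)$, the lift $\Psi(\tX) := \Phi(\tX)\,v_n$ lies in $H^1_\rho(\tcC^N(\Omega);\cF)$, is $L^2_\rho$-normalized, and satisfies $T_\rho[\Psi] = T_{\rho^{\gamma_n}}[\Phi]$. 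The variational principle and minimization over $n$ then yield $E_N^\rho(\Omega) \le \min_n E_N^{\rho^{\gamma_n}}(\Omega)$.

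For the quantitative bound \eqref{eq:abelian-bound}, I pass to the magnetic gauge of Section~\ref{sec:ham-abelian}, where $E_N^{\rho^\alpha}([0,1]^2)$ equals the bosonic ground-state energy of $\hT_\alpha$ on $[0,1]^{2N}$, and propose the Laughlin-type trial state
\begin{equation*}
\Psi_+(\sx) := \prod_{1 \le j < k \le N}|z_j - z_k|^\alpha,
\end{equation*}
the magnetic-gauge image of $\Psi_\rho(\tX) = \prod_{j<k}(z_j - z_k)^\alpha$. Using $2\partial_{\bar z_l}\log\Psi_+ = \alpha\,\overline{\phi_l}$ with $\phi_l := \sum_{k\ne l}(z_l - z_k)^{-1}$, a short complex-coordinate calculation yields for real $\Psi_+$ the identity $|(-i\nabla_l + \bA_{\alpha,l})\Psi_+|^2 = 2\alpha^2|\phi_l|^2|\Psi_+|^2$. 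Expanding
\begin{equation*}
\sum_l |\phi_l|^2 = \sum_{l\ne k}\frac{1}{|z_l - z_k|^2} + \sum_{\substack{l,k,m \text{ distinct}\\ k\ne m}}\frac{1}{(z_l - z_k)\,\overline{(z_l - z_m)}}
\end{equation*}
reduces the estimate of $T_\alpha[\Psi_+]/\|\Psi_+\|^2$ to the plasma moments $\langle |z_l - z_k|^{-2}\rangle_{|\Psi_+|^2}$ and analogous three-body ratios. The essential point is that the Jastrow factor $|z_l - z_k|^{2\alpha}$ in $|\Psi_+|^2$ regularises the $|z_l - z_k|^{-2}$ singularity; explicitly, $\int_{[0,1]^2}|z_l - z_k|^{2\alpha - 2}\,d\bx_l \le \pi\,2^\alpha/\alpha$. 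Combining this with uniform bounds $|z_l - z_j|^{2\alpha} \le 2^\alpha$ on the remaining factors in the numerator, a Jensen-type lower bound on $\int|\Psi_+|^2\,d\sx$ in the denominator, and summation over $\binom{N}{2}$ pairs produces the leading contribution $2\pi N(N-1)\alpha$. The off-diagonal cross-terms absorb into the correction $1 + \tfrac{20}{3}\pi(N-2)\alpha$, while a geometric-series resummation of pairwise normalization corrections yields the denominator $(1 - 2\pi\alpha N)^{-2}$, valid in the dilute regime $\alpha < (2\pi N)^{-1}$. The main technical obstacle I anticipate is the sharp bookkeeping of the off-diagonal cross-terms and the accurate accounting of the partition function, likely requiring a Mayer-type expansion of the 2D log-gas partition function to extract the explicit constants $2\pi$ and $\tfrac{20}{3}\pi$.

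Trivialization then follows easily: if $\alpha_2 = \beta_0 = 0$, then some $\gamma_n \in 2\Z$, so by the first inequality together with the gauge equivalence \eqref{eq:gauge-equivalence}, $E_N^\rho([0,1]^2) \le E_N^{\rho^{\gamma_n}}([0,1]^2) = E_N^{\rho^0}([0,1]^2) = 0$. A putative Hardy inequality of the form \eqref{eq:Hardy-simple} with a positive universal constant $C$, applied to bosonic approximations $\Psi^\eps$ of the constant ground state vanishing in $\eps$-neighbourhoods of $\bDelta^N$, would force a strictly positive lower bound on $T_\rho[\Psi^\eps]$, whereas the ratio $T_\rho[\Psi^\eps]/\int\sum_{j<k}|\bx_j-\bx_k|^{-2}|\Psi^\eps|^2\,dX \to 0$ as $\eps \to 0$ (the numerator stays bounded while the denominator diverges logarithmically), yielding a contradiction. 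The argument transports to any bounded convex $\Omega$ by translation and scaling.
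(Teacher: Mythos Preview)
Your first and third parts are correct and match the paper: the reduction $E_N^\rho(\Omega)\le\min_n E_N^{\rho^{\gamma_n}}(\Omega)$ via embedding into a joint eigenvector, and the trivialization when some $\gamma_n\in 2\Z$ by approximating the constant function, are exactly what the paper does (the paper's trivialization argument is slightly shorter: once $E_N^\rho([0,1]^2)=0$ the Hardy-type lower bound is automatically zero, no need to revisit \eqref{eq:Hardy-simple}).

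The substantive divergence is in the quantitative bound \eqref{eq:abelian-bound}. The paper does not prove this bound here at all; it simply cites \cite[Lemma~3.2]{LunSei-17}, where the trial state is a \emph{Dyson-type} short-range ansatz: a product $\prod_{j<k} f(|\bx_j-\bx_k|)$ with $f$ equal to $1$ outside a small radius and behaving like $r^\alpha$ near zero, cut off against the box geometry. The explicit constants $2\pi$, $\tfrac{20}{3}\pi$, and the denominator $(1-2\pi\alpha N)^{-2}$ arise from that construction, in which the pair correlations are compactly supported and the normalization can be controlled by a simple inclusion--exclusion rather than a full log-gas partition function. Your Laughlin/Jastrow state $\prod_{j<k}|z_j-z_k|^\alpha$ is a genuinely different trial function with long-range correlations; while your identity $|(-i\nabla_l+\bA_{\alpha,l})\Psi_+|^2=2\alpha^2|\phi_l|^2|\Psi_+|^2$ is correct (it is the holomorphy of $\prod(z_j-z_k)^\alpha$ in the anyon gauge), extracting sharp constants from the resulting plasma ratios $\langle|z_j-z_k|^{-2}\rangle_{|\Psi_+|^2}$ and three-body terms on the unit square is considerably harder than the Dyson route, and there is no reason to expect it to reproduce the \emph{same} numerical form. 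So your outline would at best yield a qualitatively similar bound $E_N^{\rho^\alpha}\lesssim N^2\alpha$ for small $\alpha$, not the statement as written. If you want the stated inequality, you should follow the Dyson construction; if you only need the final trivialization conclusion, the quantitative bound is actually unnecessary once you have $E_N^{\rho^0}=0$.
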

\begin{proof}
	Use the abelian magnetic representation \eqref{eq:abelian-magnetic} to estimate
	for any $1 \le n \le D$
	$$
		E_N^\rho(\Omega) \le \inf \left\{ T_{\gamma_n}^\Omega[\Phi_n] : \Phi_n \in C^\infty_{c,\sym}(\R^{2N} \setminus \bDelta^N), \int_{\cC^N(\Omega)} |\Phi_n|^2 = 1 \right\} 
		= E_N^{\rho^{\gamma_n}}(\Omega).
	$$
	The bound \eqref{eq:abelian-bound} for the abelian energy on 
	$\Omega=[0,1]^2$ was proved in \cite[Lemma 3.2]{LunSei-17}
	using a Dyson-type ansatz \cite{Dyson-57}.
	In fact, if $\gamma_n=0$ and $\Omega \subseteq \R^2$ bounded
	we may take a sequence of $\Phi_\eps \in C^\infty_{c,\sym}(\R^{2N} \setminus \bDelta^N)$
	approximating the constant function $\Phi=(|\Omega|^N/N!)^{-1/2}$, such that
	$$
		\int_{\cC^N(\Omega)} |\Phi_\eps|^2 = 1, \qquad
		T_0^\Omega[\Phi_\eps] = \frac{1}{N!} \int_{\Omega^N} |\nabla\Phi_\eps|^2 \to 0
        \ \text{as} \ \eps \to 0,
	$$
	yielding $E_N^{\rho}(\Omega) = 0$.
\end{proof}

\begin{remark}\label{rem:Hardy-dom}
    In the Hardy inequality \eqref{eq:Hardy-simple},
    the constant
    $C_N > 0$ for $\alpha_2 \neq 0$, which has important consequences for the form domain $H^1_\rho$ due to the singular nature of the potential, requiring $|\Psi|$ to vanish on $\bDelta^N$.
    In fact, after statistics transmutation to bosons or fermions, any abelian anyon model with $\alpha_2 \neq 0$ has the same form domain as fermions;
    see \cite[Appendix A.1]{AtaGirLun-25}.
    Similar observations were made in \cite{RouYan-24} with slightly different methods.
\end{remark}

\begin{remark}\label{rem:Hardy-opt}
	The question concerning the 
	behavior of the optimal constant in
	\eqref{eq:Hardy-simple} 
	for $N \to \infty$ and $\alpha_* = 0$
	is an interesting but difficult open problem
	(in fact open even for fermions with $\alpha_*=1$ \cite{FraHofLapSol-24}).
	It is a consequence of the uncertainty principle that in the ground state 
	we must also consider exchanges $U_p$ 
	with $p>0$ enclosed particles 
	(this was ignored in some earlier scenarios 
	\cite{FroMar-88}, \cite[p.212-213]{FroMar-89}
	resp.\ \cite{GolMaj-04} for anyonic exclusion).
	One may note that Theorem~\ref{thm:Hardy} seems to premiere clustering states where some
	of the annuli $A_p$ can have a higher weight if $\beta_p < \beta_0$.
	This has been discussed to some extent in \cite{LunSol-13b,Lundholm-16,LarLun-16,Lundholm-17},
	and it would be even more relevant for point-attractive anyons 
	(cf.\ Remark~\ref{rmk:extensions}).
	Actually only the latter form of the Hardy inequality relies on the regularity 
	assumptions on $\Psi$ at $\bDelta^N$ implied by the Friedrichs extension.
	See also the remarks following Lemma~\ref{lem:E-from-Hardy} below.
\end{remark}

\subsection{Scale-covariant energy bounds}\label{sec:repulsion-scale}
	
	The goal of this subsection is to use the positivity of the anyonic energy
	due to the repulsion from the Hardy inequality for just a few particles to derive 
	positivity and in fact a quadratic growth for large numbers of particles.
	We are here guided by the scale-covariant method introduced in \cite{LunSei-17}
	and formulated quite generally in \cite{LarLunNam-19}
	(see also \cite[Sec.\,5.5]{Lundholm-17}):

\begin{lemma}[{\keyword{Covariant energy bound}; \cite[Lemma~4.1]{LarLunNam-19}}]
	\label{lem:covariant-energy} 
	Assume that to any $n\in \mathbb{N}_0$ and any cube $Q\subset \R^d$ there is associated a non-negative number (`energy') $e_n(Q)$ satisfying the following properties, for some constant $s>0$: 
	\begin{itemize}
		\item (scale-covariance)
			$e_n(\lambda Q) = \lambda^{-2s} e_n(Q)$ for all $\lambda > 0$;
		\item (translation-invariance)
			$e_n(Q+\bx) = e_n(Q)$ for all $\bx \in \R^d$;
		\item (superadditivity)
			For any collection of disjoint cubes $\{Q_j\}_{j=1}^J$ such that their union is a cube, 
			$$e_n\Bigl(\bigcup_{j=1}^J Q_j\Bigr) \ge \min_{\{n_j\} \in \mathbb{N}_0^J \, s.t.\, \sum_j n_j = n}\ \sum_{j=1}^J e_{n_j}(Q_j) ;$$
		\item (a priori positivity) There exists $q \ge 0$ such that $e_n(Q) > 0$ for all $n \ge q$.
	\end{itemize}
	There then exists a constant $C>0$ independent of $n$ and $Q$ such that
	\begin{equation}\label{eq:covariant-energy-bound}
		e_n(Q) \ge C |Q|^{-2s/d} n^{1+2s/d}, \quad \forall n\ge q.
	\end{equation}
\end{lemma}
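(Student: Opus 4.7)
The bound is a scale-covariant, thermodynamic-type lower bound of the Dyson--Lenard flavor, and I would prove it by normalizing to a reference cube, extracting a one-step splitting inequality, and closing by induction on $n$. By scale covariance and translation invariance, it suffices to prove $f(n) := e_n(Q_0) \ge C n^p$ for all $n \ge q$, where $Q_0 := [0,1]^d$ and $p := 1 + 2s/d$; the general bound then follows from $e_n(Q) = |Q|^{-2s/d} f(n)$ after rescaling. Subdividing $Q_0$ into $2^d$ congruent subcubes $Q'$ of side $1/2$, each with $e_m(Q') = 2^{2s} f(m)$, superadditivity gives
\begin{equation}\label{eq:plan-split}
  f(n) \ge 2^{2s} \min_{\substack{n_1 + \cdots + n_{2^d} = n \\ n_j \in \mathbb{N}_0}} \sum_{j=1}^{2^d} f(n_j).
\end{equation}

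The target bound is then proved by induction on $n$, with base case $q \le n \le n_0$ for a threshold $n_0$ to be chosen. The base case is handled by a priori positivity: $c_0 := \min_{q \le n \le n_0} f(n) > 0$ as a minimum of finitely many positive numbers, so the bound holds on this range with $C \le c_0/n_0^p$. For the inductive step at $n > n_0$, let $\{n_j^*\}$ be a minimizer in \eqref{eq:plan-split}. First, the ``all-in-one-subcube'' splitting (some $n_{j_0}^* = n$, others zero) cannot achieve the minimum: it would force $f(n) \ge 2^{2s} f(n)$ and hence $f(n) \le 0$, contradicting positivity at $n \ge q$ for $s > 0$. Therefore all $n_j^* < n$, and the inductive hypothesis applies: $f(n_j^*) \ge C (n_j^*)^p$ on $S := \{j : n_j^* \ge q\}$ and $f(n_j^*) \ge 0$ otherwise. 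With $N := \sum_{j \in S} n_j^* \ge n - (2^d-1)(q-1)$, Jensen's inequality for the convex function $x \mapsto x^p$ gives $\sum_{j \in S}(n_j^*)^p \ge |S|^{1-p} N^p \ge 2^{d(1-p)} N^p = 2^{-2s} N^p$, so
$$
  f(n) \ge C\bigl(n - (2^d-1)(q-1)\bigr)^p.
$$

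The main obstacle is that this bound is shifted by an $O(q)$ constant rather than being cleanly of the form $C n^p$. Closing the induction therefore requires absorbing this shift, which I would handle by strengthening the ansatz to $f(n) \ge C (n - a)_+^p$, with $a$ chosen self-consistently from the splitting inequality (essentially a fixed-point computation balancing the shift produced in one splitting step against the shift in the ansatz), together with $n_0$ large enough that $(n-a)_+^p \ge c\, n^p$ for $n \ge n_0$. The final constants $C$, $n_0$, $a$ are then fixed in terms of $c_0$, $q$, $s$, $d$, and undoing the normalization $Q_0 \mapsto Q$ via scale covariance produces the claimed bound on an arbitrary cube.
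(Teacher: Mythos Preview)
The paper does not prove this lemma; it is imported verbatim from \cite[Lemma~4.1]{LarLunNam-19}. So there is no ``paper's own proof'' to compare against, and I evaluate your argument on its merits.

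Your overall architecture (normalize to $Q_0$ by scale covariance, dyadic split into $2^d$ subcubes, rule out the all-in-one split via positivity, induct on $n$) is the right one. The gap is in the closing step. Your proposed fix, strengthening the ansatz to $f(n)\ge C(n-a)_+^p$ with $a$ determined by a fixed-point, does not close: if you carry that ansatz through one split you get
\[
f(n)\ \ge\ 2^{2s}\sum_{j=1}^{2^d} C\,(n_j^*-a)_+^p
\ \ge\ 2^{2s}\cdot 2^d\Bigl(\tfrac{n}{2^d}-a\Bigr)_+^p
\ =\ C\,(n-2^d a)_+^p,
\]
so the shift is multiplied by $2^d>1$ at each step and the only fixed point is $a=0$. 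The ``self-consistent $a$'' you allude to does not exist.

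The real issue is that your Jensen step is too crude. You bound $\sum_{j\in S}(n_j^*)^p \ge |S|^{1-p}N^p$ and then immediately replace $|S|^{1-p}$ by its worst case $(2^d)^{1-p}=2^{-2s}$. But this worst case ($|S|=2^d$) is \emph{incompatible} with a nonzero shift: if all $2^d$ subcubes have $n_j^*\ge q$ then $N=n$ exactly. Track the number $k=2^d-|S|$ of ``light'' subcubes instead. For $k=0$ one has $N=n$ and the prefactor $2^{2s}\cdot(2^d)^{1-p}=1$, so the step closes exactly. For $1\le k\le 2^d-1$ one has $N\ge n-k(q-1)$ but the Jensen prefactor is
\[
\phi(k):=2^{2s}(2^d-k)^{1-p}=\bigl(\tfrac{2^d}{2^d-k}\bigr)^{2s/d}>1,
\]
strictly. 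Thus the requirement $\phi(k)\,(n-k(q-1))^p\ge n^p$ holds once $n\ge k(q-1)/(1-\phi(k)^{-1/p})$, which is an explicit finite threshold depending only on $d,s,q$. Taking $n_0$ larger than this threshold (and larger than $2^d q$, to force $k<2^d$) closes the induction with the plain ansatz $f(n)\ge Cn^p$; no shift is needed.
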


	Note that scale-covariance with $s=1$ as well as translation-invariance
	follows directly from the definition of the anyonic energy $E_N(\Omega)$:

\begin{lemma}[Covariance]\label{lem:covariance}
	Given $\Omega \subseteq \R^2$ open and simply connected, 
	$\bx \in \R^2$ and $\lambda>0$, we have
	\begin{equation}\label{eq:energy-covariance}
		E_N(\lambda\Omega + \bx) = \lambda^{-2} E_N(\Omega).
	\end{equation}
\end{lemma}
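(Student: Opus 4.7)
The plan is to establish a unitary correspondence between the Hilbert spaces $\cH^N_\rho$ associated to $\Omega$ and to $\lambda\Omega+\bx$ which rescales the kinetic energy by a factor of $\lambda^{-2}$, and then take infima.

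First, I would consider the affine map $\Phi\colon \R^2 \to \R^2$, $\by\mapsto \lambda\by+\bx$. It induces a diffeomorphism
$$
\Phi_N\colon\cC^N(\Omega)\to\cC^N(\lambda\Omega+\bx),\qquad \{\by_1,\ldots,\by_N\}\mapsto\{\lambda\by_1+\bx,\ldots,\lambda\by_N+\bx\},
$$
which since $\Phi$ is a homeomorphism lifts uniquely (after choosing base points consistently via the path $\gamma_\Omega$ used in Section~\ref{sec:ham-energy}) to a diffeomorphism $\tilde\Phi_N\colon\tcC^N(\Omega)\to\tcC^N(\lambda\Omega+\bx)$ of the covering spaces that is equivariant for the $B_N$-action, i.e.\ $\tilde\Phi_N(\tX.b)=\tilde\Phi_N(\tX).b$ for all $b\in B_N$. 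This is the key structural observation and it is essentially automatic: the braid group action is defined via homotopy classes of loops in the configuration space, and $\Phi_N$ is a homeomorphism of configuration spaces sending loops to loops.

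Next I would define the pullback map $\Psi \mapsto \tilde\Psi := \lambda^{-N}\Psi\circ\tilde\Phi_N^{-1}$ on $\rho$-equivariant functions. By the $B_N$-equivariance of $\tilde\Phi_N$, the function $\tilde\Psi$ is again $\rho$-equivariant, and the pullback sends $C^\infty_{\rho,c}$ bijectively to the analogous space over $\lambda\Omega+\bx$. The change of variables $X=\Phi_N(Y)$ with $dX=\lambda^{2N}\,dY$ then yields
$$
\int_{\cC^N(\lambda\Omega+\bx)}|\tilde\Psi|_\cF^2\,dX \;=\; \lambda^{-2N}\lambda^{2N}\int_{\cC^N(\Omega)}|\Psi|_\cF^2\,dY \;=\;\int_{\cC^N(\Omega)}|\Psi|_\cF^2\,dY,
$$
so the correspondence is a unitary isomorphism $L^2_\rho(\tcC^N(\Omega);\cF)\to L^2_\rho(\tcC^N(\lambda\Omega+\bx);\cF)$. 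Since translation is an isometry and $\nabla(\Psi\circ\Phi_N^{-1})=\lambda^{-1}(\nabla\Psi)\circ\Phi_N^{-1}$, the kinetic-energy density satisfies $|\nabla\tilde\Psi|_{\cF^{2N}}^2=\lambda^{-2N-2}|\nabla\Psi|_{\cF^{2N}}^2\circ\tilde\Phi_N^{-1}$, giving
$$
T_\rho^{\lambda\Omega+\bx}[\tilde\Psi] \;=\; \lambda^{-2N-2}\cdot\lambda^{2N}\,T_\rho^{\Omega}[\Psi]\;=\;\lambda^{-2}\,T_\rho^{\Omega}[\Psi].
$$
In particular the Sobolev space $H^1_\rho$ is preserved under the pullback.

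Finally I would take the infimum over normalized $\Psi\in H^1_\rho(\tcC^N(\Omega);\cF)$ on both sides of the identity $T_\rho^{\lambda\Omega+\bx}[\tilde\Psi]=\lambda^{-2}\,T_\rho^{\Omega}[\Psi]$, which by the unitary bijection established above yields $E_N(\lambda\Omega+\bx)=\lambda^{-2}E_N(\Omega)$. No real obstacle arises; the only mildly delicate point is checking that $\tilde\Phi_N$ is $B_N$-equivariant with respect to the chosen base points, which follows from the naturality of the covering-space construction in Section~\ref{sec:ham-cspace}.
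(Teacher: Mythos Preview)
Your proof is correct and follows essentially the same approach as the paper: both set up the affine map on configuration spaces, lift it to the covering space (the paper writes $\tX_\lambda := \lambda\tX + \bx$ directly, while you phrase it via the lifted diffeomorphism $\tilde\Phi_N$), define the $\lambda^{\pm N}$-normalized pullback on $\rho$-equivariant functions, verify the $L^2$-isometry and the $\lambda^{-2}$ scaling of the kinetic energy, and then pass to the infimum. Your version is slightly more explicit about the $B_N$-equivariance of the lift, which the paper treats as implicit in the path-class description of $\tcC^N$.
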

\begin{proof}
	Denote $\Omega_\lambda := \lambda\Omega + \bx$.
	For $X = \{\bx_1,\ldots,\bx_N\} \in \cC^N(\Omega)$ we have 
	$X_\lambda := \lambda X + \bx \in \cC^N(\Omega_\lambda)$,
	and for $\tX \in \tcC^N(\Omega)$ an equivalence class of paths from 
	$X_0(\Omega)$ to $X=\tilde\pr(\tX)$ we have
	$\tX_\lambda := \lambda \tX + \bx \in \tcC^N(\Omega_\lambda)$
	a corresponding translated and scaled equivalence class of paths from 
	$X_0(\Omega_\lambda)$ to $X_\lambda = \tilde\pr(\tX_\lambda)$.
	Given $\Psi_\lambda \in H^1_\rho(\Omega_\lambda)$ we define 
	$\Psi(\tX) := \lambda^{N}\Psi_\lambda(\tX_\lambda)$. 
	Then, using $dX_\lambda = \lambda^{2N}dX$, we have an $L^2_\rho$-isomorphism
	$$
		\int_{\cC^N(\Omega)} |\Psi(\tX)|_\cF^2 \,dX
		= \int_{\cC^N(\Omega_\lambda)} |\Psi_\lambda(\tX_\lambda)|_\cF^2 \,dX_\lambda,
	$$
	and, with the corresponding rescaling in \eqref{eq:equivariant-derivative},
	$$
		\int_{\cC^N(\Omega)} |\nabla\Psi(\tX)|_{\cF^{2N}}^2 \,dX
		= \lambda^2 \int_{\cC^N(\Omega_\lambda)} |\nabla\Psi_\lambda(\tX_\lambda)|_{\cF^{2N}}^2 \,dX_\lambda,
	$$
	and \emph{vice versa} by inversion 
	$(\lambda,\bx) \mapsto (\lambda^{-1},-\bx)$.
	Applied to a sequence of minimizers yields \eqref{eq:energy-covariance}.
\end{proof}

\begin{lemma}[Superadditivity]\label{lem:superadditivity}
	For $K \ge 2$, let $\{\Omega_k\}_{k=1}^K$ be a collection of disjoint,
	convex subsets of $\R^2$. For any 
	$\vec{n} \in \N_0^K$ with $\sum_k n_k = N$,
	let $\1_{\vec{n}}$ denote the characteristic function of the subset of
	$\cC^N(\R^2)$ where exactly $n_k$ of the points $X=\{\bx_1,\ldots,\bx_N\}$
	are in $\Omega_k$, for all $1 \le k \le K$. Let
	$$
		W(X) := \sum_{\vec{n}} \sum_{k=1}^K E_{n_k}(\Omega_k) \1_{\vec{n}}(X),
	$$
	and assume $\Omega := \overline{\cup_k \Omega_k}^\circ$ is also convex. We then have
	\begin{equation}\label{eq:superadd-W}
		\int_{\cC^N(\Omega)} |\nabla\Psi|^2 \ge \int_{\cC^N(\Omega)} W|\Psi|^2
	\end{equation}
	for any $\Psi \in H^1_\rho(\tcC^N(\Omega))$, and in particular
	\begin{equation}\label{eq:superadd-energy}
		E_N(\Omega) \ge \min_{\vec{n}} \sum_{k=1}^K E_{n_k}(\Omega_k).
	\end{equation}
\end{lemma}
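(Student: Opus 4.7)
The plan is to decompose the configuration space $\cC^N(\Omega)$ according to the occupation profile $\vec n=(n_1,\ldots,n_K)$ of particles across the disjoint regions, apply a localized kinetic-energy estimate within each $\Omega_k$, and then sum over profiles. The geometric input is that the disjointness and convexity of the $\Omega_k$'s imply that loops in $\cC^N(\Omega)$ preserving the occupation profile $\vec n$ lie, up to a choice of reference labeling, in the subgroup $B_{n_1}\times\cdots\times B_{n_K}\subseteq B_N$ generated by those $\sigma_j$ which exchange particles within a single $\Omega_k$; consequently $\rho$ restricts meaningfully to these subgroups on each sector.

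Concretely, I would first partition $\cC^N(\Omega)=\bigsqcup_{\vec n}S_{\vec n}$ with $S_{\vec n}:=\{X\in\cC^N(\Omega):|X\cap\Omega_k|=n_k\ \forall k\}$ (configurations touching the interfaces $\partial\Omega_k$ form a null set) and identify $S_{\vec n}\cong\prod_k\cC^{n_k}(\Omega_k)$. The gradient splits orthogonally as $|\nabla\Psi|^2=\sum_k|\nabla_k\Psi|^2$, where $\nabla_k$ collects the derivatives in the $n_k$ coordinates in $\Omega_k$. Fix a sector and an index $k$, and freeze the positions in the regions $\Omega_{k'}$ for $k'\neq k$; the restriction $\Psi_k$ is then a function on $\tcC^{n_k}(\Omega_k)$ equivariant under $\rho|_{B_{n_k}}$, hence belongs to $H^1_{\rho|_{B_{n_k}}}(\tcC^{n_k}(\Omega_k);\cF)$. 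Applying the variational characterization of $E_{n_k}(\Omega_k)$ to $\Psi_k$, after decomposing $\cF$ into orthogonal irreducible blocks for $\rho|_{B_{n_k}}$ and using $E_{n_k}(\Omega_k)$ on each block, yields
$$
\int_{\cC^{n_k}(\Omega_k)}|\nabla_k\Psi|^2\,dX_k \;\ge\; E_{n_k}(\Omega_k)\int_{\cC^{n_k}(\Omega_k)}|\Psi|^2\,dX_k.
$$
Integrating the frozen coordinates by Fubini and summing over $k$ produces the sector bound $\int_{S_{\vec n}}|\nabla\Psi|^2\ge\sum_k E_{n_k}(\Omega_k)\int_{S_{\vec n}}|\Psi|^2$, and summing over $\vec n$ then gives \eqref{eq:superadd-W}. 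The bound \eqref{eq:superadd-energy} follows by applying \eqref{eq:superadd-W} to an approximate minimizer of $E_N(\Omega)$ together with the pointwise inequality $W\ge\min_{\vec n}\sum_k E_{n_k}(\Omega_k)$.

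The main obstacle is the consistent interpretation of the restricted anyon model: one must verify that $\rho|_{B_{n_k}}$ indeed delivers a legitimate $n_k$-anyon model whose ground-state energy on $\Omega_k$ is at least $E_{n_k}(\Omega_k)$ as defined for the ambient many-anyon family. This amounts to adopting a convention relating the smaller models $\rho_{n_k}$ to restrictions of $\rho_N$ (as in the compatible sequences discussed after Definition~\ref{def:anyon-Hamiltonian}), or equivalently taking the infimum over such restrictions in the definition of $E_n(\Omega)$. Once this convention is fixed, what remains is an elementary Fubini argument combined with the standard observation that the interface set has measure zero in $\cC^N(\Omega)$; the convexity hypothesis on $\Omega=\overline{\cup_k\Omega_k}^\circ$ is used only at the level of the ambient energy $E_N(\Omega)$ being well defined on a geometrically nice domain and does not otherwise intervene in the reduction.
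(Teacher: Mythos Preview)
Your proposal is correct and follows essentially the same route as the paper: partition $\cC^N(\Omega)$ by occupation profile $\vec n$, split $|\nabla\Psi|^2=\sum_k|\nabla_k\Psi|^2$, freeze the particles outside $\Omega_k$, observe that the restricted function is $\rho|_{B_{n_k}}$-equivariant (using that loops of $n_k$ particles confined to the convex $\Omega_k$ cannot wind around the frozen particles in the other disjoint $\Omega_{k'}$), apply the variational definition of $E_{n_k}(\Omega_k)$, and integrate back by Fubini. The paper spells out in somewhat more detail the parametrization of the covering-space fiber over the sector $S_{\vec n}$ via $\tcC^{n_k}(\Omega_k)\times\tcC^{N-n_k}(\Omega\setminus\Omega_k)\times B_N$ and uses a smooth cut-off on $\Omega_k^c$ to place the frozen-coordinate restriction $\Psi_k$ into $H^1_\rho(\tcC^{n_k}(\Omega_k);\cF)$; your decomposition of $\cF$ into irreducible $\rho|_{B_{n_k}}$-blocks is unnecessary since the variational characterization of $E_{n_k}(\Omega_k)$ applies directly on the full fiber, and the ``convention'' you worry about is simply that $E_{n_k}(\Omega_k)$ is defined with the restricted representation $\rho|_{B_{n_k}}$ on the same $\cF$, which the paper leaves implicit.
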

\begin{proof}
	Using that $\1 = \sum_{\vec{n}} \1_{\vec{n}}$ on $\cC^N(\Omega)$, 
	we have for any $\Psi \in C^\infty_{\rho,c}(\tcC^N)$
	\begin{equation}\label{eq:superadd-start}
		\int_{\cC^N(\Omega)} |\nabla\Psi|^2 \,dX 
		= \sum_{\vec{n}} \int_{\cC^N(\Omega)} |\nabla\Psi|^2 \1_{\vec{n}} \,dX,
	\end{equation}
	where for any $\tX = [\gamma].X_0(\Omega) \in \tcC^N(\Omega)$
	$$
		|\nabla\Psi(\tX)|^2 \1_{\vec{n}} 
		= \sum_{k=1}^K \sum_{j : \bx_j \in \Omega_k} |\nabla_j\Psi(\tX)|^2 \1_{\vec{n}}.
	$$
	Hence for each $\vec{n}$, the integral to be considered is
	\begin{equation}\label{eq:superadd-n}
		\sum_{k=1}^K \int_{\cC^{N-n_k}(\Omega \setminus \Omega_k)} \int_{\cC^{n_k}(\Omega_k)} \sum_{j : \bx_j \in \Omega_k} |\nabla_j\Psi(\tX)|^2 \,dX_k \1_{\vec{n}} dX_k^c.
	\end{equation}
	Fixing the set of points $X_k^c \in \cC^{N-n_k}(\Omega \setminus \Omega_k)$ and considering 
	the set $X_k \in \cC^{n_k}(\Omega_k)$ s.t. $X = X_k \cup X_k^c$, 
	we have by $\tX = [\gamma].\tX_0(\Omega)$ a path $\gamma$ in $\Omega$ where $n_k$ 
	of the points at $X_0^N(\Omega)$
	move into $\Omega_k$ and $N-n_k$ into $\Omega \setminus \Omega_k$.
	Any such path can equivalently be taken via first acting on $\tX_0^N(\Omega)$ with a braid $b \in B_N$,
	then selecting a partition of particles at $X_0^N(\Omega)$, say $\{1,\ldots,n_k\},\{n_k+1,\ldots,N\}$,
	then moving the former set of points via $X_0^{n_k}(\Omega_k)$ 
	to $X_k$ and the latter via $X_0^{N-n_k}(\Omega_{k'})$ to $X_k^c$ for some $k'\neq k$.
	Note that any additional encircling of the points in $X_k$ around the fixed set $X_k^c$ 
	in this process may be taken care of by writing the corresponding action of
	$[\gamma_k] \in \pi_1(\cC^{n_k}(\Omega \setminus X_k^c),X_k)$ on $\tX_k \mapsto X_k$ 
	as $\tX_0^N(\Omega).b$ for some $b \in B_N$
	(actually in the subgroup of the pure braid group which keeps the indices fixed).
	Furthermore, any $\tX_k \in \tcC^{n_k}(\Omega_k)$ can be represented as 
	$\tX_k = [\gamma_k].\tX_0^{n_k}(\Omega_k)$
	for a path $\gamma_k$ of $n_k$ points in $\Omega_k$ and using
	$\tX_0^{n_k}(\Omega_k).\beta$ if it is a loop $\beta \in B_{n_k} \hookrightarrow B_N$.
	In this way we construct a surjective map
	$$
		\tcC^{n_k}(\Omega_k) \times \tcC^{N-n_k}(\Omega \setminus \Omega_K) \times B_N
		\ni (\tX_k,\tX_k^c,b) 
		\mapsto \tX \in \tilde\pr^{-1}(\supp \1_n) \subseteq \tcC^N(\Omega).
	$$
	
	Hence, if we define for a fixed $X_0^{n_k}(\Omega_k)$ and $X_k^c$ 
	with fixed lift to $\tX \in \tcC^N(\Omega)$ the function
	$\Psi_k(\tX_k) := \Psi(\tX)$
	we have
	$$
		\Psi_k(\tX_k.\beta) = \Psi(\tX.\beta) = \rho(\beta^{-1})\Psi_k(\tX_k).
	$$
	Note that it is also defined in a neighborhood of $\cC^{n_k}(\Omega_k)$.
	By making a smooth cut-off on $\Omega_k^c$ away from the points $X_k^c$
	we may consider this a smooth function on 
	$\tcC^{n_k}(\R^2)$ which is $\rho$-equivariant and with compact support 
	$\tilde\pr(\supp\Psi_k)$. 
	It is therefore in $H^1_\rho(\tcC^{n_k}(\Omega_k);\cF)$.
	Further,
	$$
		\sum_{j=1}^{n_k} |\nabla_j \Psi_k(\tX_k)|^2 
		= \sum_{j : \bx_j \in \Omega_k} |\nabla_j \Psi(\tX)|^2,
	$$
	for $\tX \in \supp \1_{\vec{n}}$.
	Then
	$$
		\int_{\cC^{n_k}(\Omega_k)} \sum_{j=1}^{n_k} |\nabla_j\Psi_k(\tX_k)|^2 \,dX_k
		\ge E_{n_k}(\Omega_k) \int_{\cC^{n_k}(\Omega_k)} |\Psi_k|^2 \,dX_k
	$$
	and after putting this back into the original integral 
	\eqref{eq:superadd-n} and \eqref{eq:superadd-start},
	and using that $|\Psi_k|^2 = |\Psi|^2$,
	we obtain the bound \eqref{eq:superadd-W} 
	in terms of the scalar and $B_N$-symmetric potential $W$.
\end{proof}

	In the sequel we denote simply by $E_N := E_N([0,1]^2)$ the $N$-anyon
	energy on the unit square with a given anyon model.
	Our starting point for positivitity is the following bound originally derived for
	abelian anyons in \cite{LarLun-16}:

\begin{lemma}[{Bound for $E_N$ directly from Hardy; \cite[Lemma~5.3]{LarLun-16}}]\label{lem:E-from-Hardy}
	For $\nu > 0$ let $j_\nu'$ denote the first positive zero of the
	derivative of the Bessel function $J_\nu$, satisfying \eqref{eq:Bessel-bounds}.
	There exists a function $f\colon [0,(j_1')^2] \to \R_+$ satisfying
	$$
		t/6 \le f(t) \le 2\pi t
		\qquad \text{and} \qquad
		f(t) = 2\pi t \bigl( 1-O(t^{1/3}) \bigr) \ \text{as} \ t \to 0,
	$$
	such that
	\begin{equation}\label{eq:E-from-Hardy}
		E_N \ge f\bigl( (j_{\alpha_N}')^2 \bigr) (N-1)_+.
	\end{equation}
\end{lemma}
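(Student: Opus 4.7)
The plan is to derive this lower bound from the local many-anyon Hardy inequality of Theorem~\ref{thm:Hardy}, combined with an explicit one-body Bessel eigenvalue analysis and a pair-counting argument, following the strategy of~\cite[Lemma~5.3]{LarLun-16}.

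I would first apply Theorem~\ref{thm:Hardy} to $\Omega = Q_0 = [0,1]^2$, using its stronger first form which gives a pointwise-in-$X$ pair-decomposition of $T_\rho[\Psi]$: for each pair $(j,k)$, both the relative radial kinetic energy $\bigl|\partial_{r_{jk}}|\Psi|\bigr|^2$ and the inverse-square weight $\alpha_N^2\, r_{jk}^{-2}$ appear, supported on the region where the midpoint $\bX_{jk}$ lies far enough from $\partial Q_0$. The combination of this radial kinetic energy and the $\alpha_N^2/r^2$ potential on a disk of radius $R$ with Neumann boundary at $r=R$ is exactly the radial Bessel operator whose ground-state eigenvalue equals $(j'_{\alpha_N})^2/R^2$; this motivates defining $f(t)$ as (a suitable normalization of) the corresponding ground-state energy. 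The bounds \eqref{eq:Bessel-bounds} on $j'_\nu$ then translate into $t/6 \le f(t) \le 2\pi t$, and the small-$t$ refinement $f(t) = 2\pi t(1-O(t^{1/3}))$ follows from expansions of $j'_\nu$ as $\nu \to 0$.

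The remaining task is to convert the pair-indexed estimate into a lower bound with a coefficient proportional to $(N-1)$ rather than $\binom{N}{2}$. The idea is to select, for each configuration $X \in \cC^N(Q_0)$, a subset of $N-1$ pairs forming a spanning-tree-type structure on the points (for instance a nearest-neighbor tree), together with a partition of $Q_0$ into $N-1$ disjoint cells such that each chosen pair is contained in its cell. On each cell, the Hardy-plus-radial-kinetic contribution from its pair is bounded below by $f((j'_{\alpha_N})^2)$ times the $L^2$-mass of $|\Psi|$ localized there; summing across cells gives the factor $(N-1)_+$. The delicate step, and the main obstacle, is the geometric bookkeeping: the cells must fit inside the reduced support $\1_{Q_0 \circ Q_0}$ of the Hardy inequality, be pairwise disjoint, and carry the correct Neumann boundary condition entering $f$, all without losing overall constants in the weak-coupling regime where $f(t) \sim 2\pi t$. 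This is handled in~\cite{LarLun-16} via a Dyson-type reduction, which is what should be adapted here.
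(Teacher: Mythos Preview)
Your proposal misses the key structural step and misidentifies both the definition of $f$ and the mechanism that produces the factor $(N-1)$.

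The paper does \emph{not} work with the first form of the Hardy inequality plus a spanning-tree/cell partition. Instead it splits $T_\rho = \kappa T_\rho + (1-\kappa)T_\rho$, applies the diamagnetic inequality (Lemma~\ref{lem:diamag}) to the $\kappa$-part to obtain a genuine bosonic kinetic energy $\kappa\int|\nabla|\Psi||^2$, and applies the \emph{second} form of Theorem~\ref{thm:Hardy} to the $(1-\kappa)$-part to obtain the scalar potential $(1-\kappa)\,t\,\frac{4}{N}\sum_{j<k}\frac{\1_{\Omega\circ\Omega}}{4\delta(\bX_{jk})^2}$ with $t=(j'_{\alpha_N})^2$. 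The Bessel analysis you describe is already absorbed into this second form; redoing it from the first form is redundant, and, crucially, once you have combined the radial kinetic term with $\alpha_N^2/r^2$ to get the Bessel constant, \emph{no kinetic energy remains}. A pure potential bound cannot give $E_N>0$, since $|\Psi|$ can concentrate where $\1_{\Omega\circ\Omega}$ vanishes. The $\kappa$-split is precisely what prevents this, and it is why $f$ is defined not as a Bessel eigenvalue but as (half of) the ground-state energy of the two-body Schr\"odinger operator $\kappa(-\Delta_1-\Delta_2) + (1-\kappa)\,t\,\1_{\Omega\circ\Omega}/\delta^2$ on $Q_0^2$, optimized over $\kappa$. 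The asymptotics $f(t)\sim 2\pi t$ then come from projecting onto the constant function and computing $\int_{Q_0^2}\1_{\Omega\circ\Omega}/\delta^2 = 4\pi$, not from Bessel expansions.

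Your third paragraph also misreads the counting. The Hardy inequality already carries the prefactor $4/N$, and the bosonic kinetic energy satisfies the elementary identity $\sum_j|\nabla_j\Phi|^2 = \frac{1}{N-1}\sum_{j<k}\bigl(|\nabla_j\Phi|^2+|\nabla_k\Phi|^2\bigr)$. Matching these two pair-sums gives $\binom{N}{2}/(N-1) = N/2$ (resp.\ $\binom{N}{2}\cdot\frac{4}{N}=2(N-1)$) copies of a two-body problem directly; no spanning tree or cell partition is needed, and a ``Dyson-type reduction'' (which is an \emph{upper}-bound technique, cf.\ Proposition~\ref{prop:Hardy-counterex-local}) plays no role here. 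Your proposed tree-plus-cells construction would in any case face the obstruction that a vertex of degree $\ge 2$ in the tree cannot lie in two disjoint cells, and that keeping only $N-1$ of the $\binom{N}{2}$ pair terms in the $\frac{4}{N}\sum_{j<k}$ bound yields a constant of order $1$, not $(N-1)$.
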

\begin{proof}
	After splitting the energy $T_\rho = \kappa T_\rho + (1-\kappa) T_\rho$ and
	applying the diamagnetic inequality Lemma~\ref{lem:diamag} to the first part,
	and the Hardy inequality of Theorem~\ref{thm:Hardy} in the later form to the second part, 
	the proof is identical to
	that of \cite[Lemma~5.3]{LarLun-16}, where one expresses
	the bosonic $N$-particle energy 
	as $(N-1)/2$ copies of the two-particle energy, and defines
	(where subsequently $t=j_{\alpha_N}'^2$)
	$$
		f(t) := \frac{1}{2} \sup_{\kappa \in (0,1)} \inf_{\int_{Q_0^2} |\psi|^2 = 1} 
			\int_{Q_0^2} \left( \kappa|\nabla_1|\psi||^2 + \kappa|\nabla_2|\psi||^2 
				+ (1-\kappa)t\frac{\1_{\Omega\circ\Omega}(\bx_1,\bx_2)}{\delta(\bX_{12})^2} |\psi|^2 \right) d\bx_1 d\bx_2.
	$$
	The energy of the corresponding Schr\"odinger operator (with singular potential)
	on $L^2(Q_0^2)$ is then estimated in a standard way using projection on the constant function,
	yielding the bounds for $f(t)$ stated in the lemma 
	(see also Figure~\ref{fig:popcorn}).
\end{proof}

\begin{remark}\label{rmk:Hardy-improvement}
	We could replace the above bound by one which takes more of the particle distribution
	into account. Namely, after making a cut-off around the diagonals 
	on the otherwise singular Hardy term we can take the average
	(i.e.\ expectation in the bosonic g.s.)
	$$
		\sum_{j<k} \int_{Q_0^N} \sum_{p=0}^M \beta_p^2 \,\1_{A_p \cap B_\eps(\0)^c}(\br_{jk}) \frac{1}{r_{jk}^2} d\sx,
	$$
	which is finite and greater than the one used in obtaining the above bounds for $f$.
	However, we will be using the lemma mainly for $N=2$, where nothing is gained by this,
	and use a different route to derive a bound for $E_N$ 
	depending on $E_2$ and thus $\alpha_2=\beta_0$ only.
\end{remark}
\begin{remark}\label{rmk:Hardy-optimal}
	In fact, it was pointed out in \cite[Proposition~4.6]{LunSei-17} that
	Lemma~\ref{lem:E-from-Hardy}
	gives the correct behavior 
	to leading order for $N=2$ and $\alpha_2 \to 0$ 
	as it is matched by the upper bound \eqref{eq:abelian-bound} of Proposition~\ref{prop:Hardy-counterex-local} 
	(for $N=2$ it is effectively abelian)
	to yield
	\begin{equation}\label{eq:E2-asymp}
		E_2 = 4\pi\alpha_2 \bigl(1 + O(\alpha_2^{1/3})\bigr) 
		\qquad \text{as}\ \alpha_2 \to 0.
	\end{equation}
	In other words, the energy per particle is asymptotically $2\pi\alpha_2$.
\end{remark}

	Given that $\alpha_N \le \alpha_2$ for all $N \ge 2$, 
	the bound \eqref{eq:E-from-Hardy} will be trivial if $\alpha_2 = 0$,
	i.e. if $E_2 = 0$.
	On the other hand, if $\alpha_2 > 0$ and thus $E_2>0$
	then it is in fact sufficient to yield
	positivity $E_N>0$ for all $N \ge 2$:

\begin{lemma}[{A priori bounds in terms of $E_2$; \cite[Lemma~4.3]{LunSei-17}}]\label{lem:E-apriori}
	For any $N \ge 3$ we have
	$$
		E_N \ge \frac{\pi^2 \binom{N}{2} \left(\frac{3}{4}\right)^{N-2} E_2}{\left(\pi + 4\sqrt{E_2}\right)^2 + \binom{N}{2} \left(\frac{3}{4}\right)^{N-2} E_2},
	$$
	and thus $E_N > 0$ if $E_2 > 0$.
\end{lemma}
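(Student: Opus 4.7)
The plan is to combine two lower bounds on $T_\rho[\Psi]$: one coming from the Neumann spectral gap $\pi^2$ of $-\Delta$ on $Q_0 = [0,1]^2$, the other from the two-anyon ground-state energy $E_2$. For a parameter $t\in(0,1)$ to be optimized at the end, I would split
\[
T_\rho[\Psi] \;\geq\; t\, T_\rho[\Psi] + (1-t)\, T_\rho[\Psi]
\]
and estimate each piece by a different mechanism.

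For the first piece, I would pass to the bosonic amplitude $u:=|\Psi|_\cF$ via the diamagnetic inequality (Lemma~\ref{lem:diamag}), with normalization $\|u\|_{L^2(Q_0^N)}^2 = N!$ so that $T_\rho[\Psi]\ge \frac{1}{N!}\int_{Q_0^N}|\nabla u|^2$. Splitting $u = \bar u + w$ into its constant part $\bar u$ (the $L^2$-projection onto constants on $Q_0^N$) and its orthogonal complement $w$, the Neumann Poincar\'e inequality on $Q_0$ applied in each coordinate gives $\int_{Q_0^N}|\nabla u|^2 = \int_{Q_0^N}|\nabla w|^2 \ge \pi^2\|w\|^2_{L^2(Q_0^N)}$, since $\pi^2$ is the first positive Neumann eigenvalue of $-\Delta_{Q_0}$ and this gap persists on $Q_0^N$ by separation of variables. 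Thus the first piece contributes at least $t\pi^2\|w\|^2/N!$.

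For the second piece, I would decompose $T_\rho[\Psi] = (N-1)^{-1}\sum_{j<k}\int_{\cC^N(Q_0)}(|\nabla_j\Psi|^2 + |\nabla_k\Psi|^2)\,dX$ and, for each pair $(j,k)$, reduce to a two-anyon problem in the coordinates $(\bx_j,\bx_k)$ with the remaining $N-2$ positions fixed. When the exchange loop of the pair topologically encloses none of the other particles, the relevant two-body exchange operator is $U_0$ with parameter $\beta_0=\alpha_2$, so the 2-anyon ground-state energy bound $\int(|\nabla_j|^2+|\nabla_k|^2)|\Psi|^2 \ge E_2 \int|\Psi|^2$ applies on the corresponding sub-region. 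Summing over the $\binom{N}{2}$ pairs and weighting by the measure of "non-enclosing" configurations --- which amounts to restricting the $N-2$ spectator positions to the complement of a pair-dependent region of area $\tfrac14|Q_0|$ --- produces the combinatorial factor $\binom{N}{2}(3/4)^{N-2}\cdot E_2\cdot \bar u^2/N!$ (up to constants).

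Combining the two estimates with the normalization $\|u\|^2 = \bar u^2 + \|w\|^2 = N!$, and using Cauchy--Schwarz to handle mixed contributions --- the $4\sqrt{E_2}$ in the expression $(\pi + 4\sqrt{E_2})^2$ arises from a gradient estimate of the form $\|\nabla u\|\le\sqrt{\text{energy}}$ applied in the two-body reduction, yielding a cross term $\sqrt{E_2}\cdot\|w\|$ --- reduces the bound to a scalar quadratic optimization in the ratio $\bar u^2/\|w\|^2$. Optimizing both this ratio and the parameter $t$ produces the stated rational expression. The main obstacle will be the careful topological accounting in the two-body reduction: for each pair $(j,k)$ one must isolate those configurations in which the exchange loop encloses no other particle (so that $\beta_0$, and hence $E_2$, governs the bound) and show that the complementary configurations --- where enclosure occurs and $\beta_p$ with $p\ge 1$ may even vanish --- are absorbed by the Neumann-gap piece. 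This localization, rather than the functional inequalities themselves, is the technical heart of the proof and is what produces the precise combinatorial factor $(3/4)^{N-2}$.
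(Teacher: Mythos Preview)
Your first piece (diamagnetic inequality plus the Neumann gap $\pi^2$ on $Q_0^N$) is exactly what is used in the paper. The gap is in your second piece: the mechanism by which $E_2$ enters is not the one you describe, and your version does not work as stated.

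You propose to fix the $N-2$ spectator positions and then, on the sub-region of configurations where the exchange of the pair $(\bx_j,\bx_k)$ encloses none of them, invoke the bound $\int(|\nabla_j\Psi|^2+|\nabla_k\Psi|^2)\ge E_2\int|\Psi|^2$. But $E_2$ is the ground-state energy of the genuine $2$-anyon problem on the whole square $Q_0$, and there is no way to apply it ``on a sub-region'': once the spectators are present in $Q_0$, the map $(\bx_j,\bx_k)\mapsto\Psi$ is a section on a covering of $Q_0^2$ punctured at the spectator positions, with possibly trivial monodromy around those punctures, and its Neumann energy on that punctured domain is not bounded below by $E_2(Q_0)$. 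Restricting further to ``non-enclosing'' configurations only removes mass from the domain, which again does not produce the global eigenvalue $E_2$. So the step ``$E_2$ applies on the corresponding sub-region'' has no justification, and with it your explanation of the factor $(3/4)^{N-2}$ collapses.

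The paper obtains $E_2$ by a completely different route: split $Q_0$ into $K=4$ congruent sub-squares $Q_1,\dots,Q_4$ and use superadditivity (Lemma~\ref{lem:superadditivity}) together with scale-covariance (Lemma~\ref{lem:covariance}). On the event $\{n_k=2\}$ that exactly two particles lie in $Q_k$, those two particles form an honest $2$-anyon system on $Q_k$ (the other particles are in disjoint boxes, so no enclosure issue arises), giving the potential bound $W\ge W_2:=4E_2\sum_{k=1}^4\1_{\{n_k=2\}}$. The factor $(3/4)^{N-2}$ then comes from elementary combinatorics: $\int_{Q_0^N}\1_{\{n_k=2\}}=\binom{N}{2}(1/4)^2(3/4)^{N-2}$, summed over the four boxes and multiplied by $4E_2$, yields $\int W_2=E_2\binom{N}{2}(3/4)^{N-2}$. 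One then combines the diamagnetic bound with $T_\rho[\Psi]\ge\int W_2|\Psi|^2$, projects $|\Psi|$ onto constants, and controls the cross term via $\|W_2\|_\infty\le 16E_2$ and Cauchy--Schwarz, which is the origin of the $4\sqrt{E_2}$ in $(\pi+4\sqrt{E_2})^2$. The box decomposition, not a topological ``non-enclosing'' condition in relative coordinates, is what makes the reduction to $E_2$ legitimate.
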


	The proof is exactly analogous to the abelian case,
	\cite[Proposition~2]{Lundholm-13} and \cite[Lemma~4.3]{LunSei-17}, 
	where we split the unit square $Q_0$ in $K=4$ smaller squares and use 
	covariance \eqref{eq:energy-covariance} and superadditivity \eqref{eq:superadd-W},
	with
	$$
		W \ge W_2 := 4E_2 \sum_{k=1}^4 \sum_{\vec{n} :  n_k=2} \1_{\vec{n}},
	$$
	(for each $k$, we keep only the terms in $W$ for which $n_k=2$), and
	$$ 
		\qquad \int_{Q_0^N} W_2 = E_2 \binom{N}{2} \left(\frac{3}{4}\right)^{N-2}.
	$$
	Then we use the diamagnetic inequality of Lemma~\ref{lem:diamag} 
	to estimate the energy $E_N$ in terms of a standard 
	bosonic or distinguishable particle problem, yielding the claimed bound
	(see \cite[Lemma~4.3]{LunSei-17} for details).

\begin{lemma}[{\keyword{Covariant energy bound}}]
	Given a sequence of anyon models $\rho_N\colon B_N \to \sU(\cF_N)$
	with two-anyon exchange parameters $\alpha_2(N) \ge \alpha_2 > 0$ 
	(it could depend on the
	total number $N$ of particles but must then be uniformly bounded from below), 
	the local $N$-anyon energy $E_N(Q)$ on squares $Q$
	satisfies the criteria of Lemma~\ref{lem:covariant-energy}
	with $s=1$ and $q=1$
	and therefore there exists a constant $C=C(\alpha_2)>0$ such that
	$$
		E_N \ge C N^2, \quad N \ge 2.
	$$
\end{lemma}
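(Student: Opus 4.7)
The plan is to verify the four hypotheses of Lemma~\ref{lem:covariant-energy} for the function $e_N(Q) := E_N(Q)$ on squares $Q\subseteq \R^2$, with $d=2$, $s=1$, and then read off the conclusion, which on the unit square specializes to $E_N\ge CN^2$. Three of the four hypotheses are essentially for free from what has already been proved: scale-covariance with $s=1$ and translation-invariance are exactly the content of Lemma~\ref{lem:covariance}, and superadditivity under any finite partition of a cube into subcubes is Lemma~\ref{lem:superadditivity} (applied to the disjoint open subcubes, whose union misses only a null set of interior boundaries which does not affect the integrals defining $E_N$).

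The substantial hypothesis is the uniform a priori positivity: we need some threshold $q$ (in fact $q=2$ suffices, with the convention $E_1=0$ absorbing the statement ``$q=1$'') such that $E_n(Q)>0$ for all $n\ge q$, with a positivity margin controlled by $\alpha_2$ alone. Here I would invoke the Hardy machinery of Section~\ref{sec:repulsion}. By Lemma~\ref{lem:E-from-Hardy} applied to $N=2$,
\begin{equation}
E_2\ \ge\ f\!\bigl((j_{\alpha_2(N)}')^2\bigr)\ \ge\ f\!\bigl((j_{\alpha_2}')^2\bigr)\ >\ 0,
\end{equation}
uniformly in $N$, because $f$ is monotone and $\alpha_2(N)\ge\alpha_2>0$ by assumption. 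Then Lemma~\ref{lem:E-apriori} propagates this strict positivity to every $N\ge 3$, giving an explicit lower bound for $E_N$ depending only on $E_2$, hence only on $\alpha_2$.

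With all four hypotheses in hand, Lemma~\ref{lem:covariant-energy} produces an absolute constant $C>0$ (determined by the positivity threshold, which in turn is a function of $\alpha_2$) such that $E_N(Q)\ge C|Q|^{-1}N^2$ for all $N\ge 2$ and all squares $Q$. Specializing to $Q=[0,1]^2$ yields the claim $E_N\ge C(\alpha_2)N^2$.

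The only conceptual obstacle is the uniformity in $N$ of the positivity bound, since $\alpha_N(N)$ and thus Lemma~\ref{lem:E-from-Hardy} applied directly to the $N$-anyon problem could degenerate as $N\to\infty$. The key trick, already exploited in the abelian setting \cite{LunSei-17}, is to bypass this degeneration by bootstrapping from the two-anyon positivity via superadditivity on quarter-squares, which is exactly what Lemma~\ref{lem:E-apriori} encodes. No further computation is required beyond assembling these pieces into the hypotheses of Lemma~\ref{lem:covariant-energy}.
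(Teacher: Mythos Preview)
Your proof is correct and follows exactly the approach the paper intends: the paper does not give an explicit proof of this lemma but simply states it as a consequence of the preceding results, and you have correctly identified that scale-covariance and translation-invariance come from Lemma~\ref{lem:covariance}, superadditivity from Lemma~\ref{lem:superadditivity}, and a priori positivity from Lemma~\ref{lem:E-from-Hardy} (for $N=2$) combined with Lemma~\ref{lem:E-apriori} (for $N\ge 3$). Your observation that the threshold should really be $q=2$ rather than $q=1$ (since $E_1=0$ by definition) is a valid correction of a minor imprecision in the paper's statement, and is consistent with the stated conclusion $E_N\ge CN^2$ for $N\ge 2$.
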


	For an explicit bound on this constant we will use the original method of
	\cite[Lemma~4.8-4.9]{LunSei-17} and the local exclusion principle.

\subsection{Local exclusion principle}\label{sec:repulsion-exclusion}

	Given $\Psi \in H^1_\rho(\tcC^N(\Omega))$, recall that it defines a corresponding
	one-body density $\varrho_\Psi \in L^1(\Omega)$ 
	and a kinetic energy density $T_{\rho,\Psi} \in L^1(\Omega)$
	(Definitions~\ref{def:density} and \ref{def:energy-density}).
	
\begin{lemma}[{\keyword{Local exclusion principle for non-abelian anyons}}]\label{lem:local-exclusion}
	Given a sequence of anyon models $\rho_N\colon B_N \to \sU(\cF_N)$
	with exchange parameters 
	$\alpha_n = \min_p \beta_p$ (they may all depend on $N$),
	the local $n$-anyon energy $E_n = E_n^{\rho_N}(Q_0)$ on the unit square satisfies
	\begin{equation}\label{eq:local-exclusion-linbound}
		E_n \ge C_4(\rho_N) n, \qquad 4 \le n \le N,
	\end{equation}
	where
	$$
		C_4(\rho_N) := \frac{1}{4} \min\{E_2,E_3,E_4\}
		\ \ge \ 
		\frac{1}{4} \min\{ E_2,0.147 \} \ \ge \ c(\alpha_2)
		= \frac{1}{4}\min\left\{ f(j_{\alpha_2}'^2),0.147 \right\}.
	$$
	Furthermore, for any simply connected open domain $\Omega \subseteq \R^2$ 
	and any square $Q \subseteq \Omega$, any $N \ge 1$
	and $L^2$-normalized $\Psi \in H^1_{\rho_N}(\tcC^N(\Omega);\cF_N)$ 
	with one-particle density $\varrho_\Psi$ on $\Omega$, 
	we have
	\begin{equation}\label{eq:local-exclusion}
		T_{\rho_N}^{Q \subseteq \Omega}[\Psi] \ \ge \ 
		\frac{C(\rho_N)}{|Q|} 
		\left( \int_Q \varrho_\Psi(\bx) \,d\bx \ - 1 \right)_+,
	\end{equation}
	where
	$$
		C(\rho_N) := \max \left\{ C_4(\rho_N), f( j_{\alpha_N}'^2 ) \right\}
		\ge \max \left\{ \frac{1}{4}\min\left\{f( j_{\alpha_2}'^2 ),0.147\right\}, f( j_{\alpha_N}'^2 ) \right\}.
	$$
\end{lemma}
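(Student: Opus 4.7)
The plan is to establish the two claims in sequence. First, the linear bound $E_n \ge C_4(\rho_N)\, n$ for $4 \le n \le N$ follows by strong induction on $n$, combining the scale covariance $E_m(\lambda Q_0) = \lambda^{-2} E_m$ of Lemma~\ref{lem:covariance} with the superadditivity inequality of Lemma~\ref{lem:superadditivity}, applied to the partition of $Q_0$ into four open sub-squares of side length $1/2$. This yields
\begin{equation}
E_n \ge 4 \min_{n_1 + n_2 + n_3 + n_4 = n} \sum_{k=1}^4 E_{n_k}.
\end{equation}
The base case $n=4$ is immediate from the definition of $C_4$. For $n \ge 5$, at most four of the $n_k$ can satisfy $n_k \le 1$ (contributing zero energy), while each $n_k \in \{2,3\}$ contributes $E_{n_k} \ge 4 C_4 \ge C_4 n_k$ and each $n_k \ge 4$ contributes $E_{n_k} \ge C_4 n_k$ by induction; thus $\sum_k E_{n_k} \ge C_4 (n-4)$, hence $E_n \ge 4 C_4(n-4) \ge C_4 n$ for $n \ge 6$. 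The remaining case $n = 5$ is handled directly: the worst partition $(2,1,1,1)$ yields $E_5 \ge 4 E_2 \ge 16 C_4 > 5 C_4$.

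The further inequality $C_4(\rho_N) \ge \tfrac{1}{4}\min\{E_2, 0.147\}$ reduces, by the monotonicity $E_n \le E_{n+1}$ (itself a direct consequence of Lemma~\ref{lem:superadditivity} with an empty spectator sub-square), to the universal estimate $E_3, E_4 \ge 0.147$ whenever $E_2 > 0.147$. I would transplant the corresponding Neumann-type lower bound from \cite[Lemma~4.9]{LunSei-17} to the present geometric setting via the diamagnetic inequality of Lemma~\ref{lem:diamag}: the pointwise bound $|\nabla \Psi|_{\cF^{2N}}^2 \ge |\nabla |\Psi|_\cF|_{\R^{2N}}^2$ reduces $E_n$ to a bosonic Neumann problem on $[0,1]^2$ to which the existing numerical threshold $0.147$ applies unchanged. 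The sharper inequality $C_4 \ge c(\alpha_2)$ then follows from Lemma~\ref{lem:E-from-Hardy} at $n = 2$, which gives $E_2 \ge f(j_{\alpha_2}'^2)$.

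For the local exclusion principle \eqref{eq:local-exclusion}, I would first combine $E_n \ge C_4(\rho_N) n$ (for $n \ge 4$) with $E_n \ge f(j_{\alpha_N}'^2)(n-1)_+$ (from Lemma~\ref{lem:E-from-Hardy}, valid for all $n \ge 2$) into the uniform estimate
\begin{equation}
E_n \ge C(\rho_N)(n-1)_+, \qquad n \ge 0, \qquad C(\rho_N) := \max\{C_4(\rho_N), f(j_{\alpha_N}'^2)\}.
\end{equation}
Then, for $L^2$-normalized $\Psi$ and any square $Q \subseteq \Omega$, I would partition $\cC^N(\Omega)$ into the events $\cA_n := \{X : |X \cap Q| = n\}$ with probabilities $p_n := \int_{\cA_n}|\Psi|_\cF^2\, dX$ satisfying $\sum_n p_n = 1$ and $\sum_n n\, p_n = \int_Q \varrho_\Psi$. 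On each event $\cA_n$, treating the $n$ particles in $Q$ as an $n$-anyon system with the remaining $N-n$ particles playing the role of parameters and using $E_n(Q) = E_n/|Q|$ by Lemma~\ref{lem:covariance}, I obtain
\begin{equation}
T_{\rho_N}^{Q \subseteq \Omega}[\Psi] \ge \sum_{n=0}^N E_n(Q) p_n \ge \frac{C(\rho_N)}{|Q|} \sum_{n=0}^N (n-1)_+ p_n \ge \frac{C(\rho_N)}{|Q|}\biggl(\int_Q \varrho_\Psi - 1\biggr),
\end{equation}
and the positive part absorbs the complementary regime $\int_Q \varrho_\Psi < 1$.

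The principal technical obstacle lies in this restriction step: making rigorous the reduction from a $\rho_N$-equivariant wavefunction on $\tcC^N(\Omega)$, conditioned to the event $\cA_n$, into a bona fide $n$-anyon wavefunction on $\tcC^n(Q)$ to which the variational definition of $E_n(Q)$ applies. This is analogous to the proof of Lemma~\ref{lem:superadditivity}, but must be performed here for a sub-region $Q \subsetneq \Omega$ rather than a full partition of $\Omega$, while consistently tracking how the natural inclusion $B_n \hookrightarrow B_N$ acts through the restricted representation $\rho_N|_{B_n}$ and lifts to the local covering space $\tcC^n(Q) \hookrightarrow \tcC^N(\Omega)$.
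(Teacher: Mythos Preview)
Your overall architecture is correct and matches the paper's: a recursive bound from splitting $Q_0$ into four sub-squares gives the linear estimate \eqref{eq:local-exclusion-linbound}, and then the localization argument with the probabilities $p_n$ yields \eqref{eq:local-exclusion}. Your identification of the restriction step as the main technical point is right; the paper resolves it exactly as you suggest, by mimicking the proof of Lemma~\ref{lem:superadditivity} with $\Omega_1 = Q$ and the complement $Q^c$ cut into convex pieces.

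There are two gaps, however. First, a minor one in the induction: your bound $\sum_k E_{n_k} \ge C_4(n-4)$ uses the inductive hypothesis on each $n_k \ge 4$, but the minimizing partition could be the trivial one $(n,0,0,0)$, to which the hypothesis does not apply. You need a priori positivity $E_n > 0$ (from Lemma~\ref{lem:E-apriori}) to rule this out, since otherwise $E_n \ge 4E_n$ forces $E_n = 0$.

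Second, and more seriously, your derivation of $C_4 \ge \tfrac14\min\{E_2,0.147\}$ does not work. Monotonicity $E_n \le E_{n+1}$ is \emph{not} a direct consequence of Lemma~\ref{lem:superadditivity}: superadditivity gives a lower bound for $E_{n+1}$ as a \emph{minimum} over partitions, and you cannot isolate the partition $(n,1)$ to conclude $E_{n+1} \ge E_n$. Moreover, the bare diamagnetic inequality only yields $E_n^{\rho} \ge E_n^{\rho_+} = 0$; there is no ``numerical threshold $0.147$'' for the free bosonic Neumann problem. The paper instead uses Lemma~\ref{lem:E-apriori} directly: combining superadditivity (which introduces a potential $W_2$ proportional to $E_2$) with the diamagnetic inequality (which reduces the kinetic term to the bosonic one) produces a Schr\"odinger problem whose ground state energy gives $E_3 \ge g(E_2)$ with $g$ increasing and $g(x)=x$ at $x \approx 0.147$; the case $N=4$ has a higher threshold, so $N=3$ is binding. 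This yields $E_3, E_4 \ge \min\{E_2, 0.147\}$ without any appeal to monotonicity.
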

\begin{proof}
	By splitting the unit square $Q_0$ into four equally large squares $Q_k$,
	$|Q_k| = |Q_0|/4$, one may derive from scale-covariance and superadditivity
	the recursive bound
	$$
		E_n \ge 4\min_{m \in \{n/4,n/4+1,\ldots,n\}} E_m.
	$$
	From this and a priori positivity follows the linear bound \eqref{eq:local-exclusion-linbound} 
	(see \cite[Lemma~4.8]{LunSei-17} for details).
	The bound on $C_4(\rho_N)$ in terms of $E_2 \ge f(j_{\alpha_2}'^2)$ 
	follows from Lemma~\ref{lem:E-apriori},
	where the positive root of $(\pi+4\sqrt{x})^2 + \frac{9}{4}x = \frac{9}{4}\pi^2$
	is $x \ge 0.147$. Furthermore one may bound 
	$c(\alpha_2) \ge \frac{1}{4}\min\{ \alpha_2/3, 0.147 \}$ 
	using \eqref{eq:Bessel-bounds} and
	the bounds on $f$ in Lemma~\ref{lem:E-from-Hardy}.
	
	For \eqref{eq:local-exclusion} we use the decomposition of the energy into 
	particles in $Q$ and in $Q^c = \Omega \setminus Q$,
	\begin{align*}
		T_{\rho_N}^{Q \subseteq \Omega}[\Psi] 
		&= \int_{\cC^N(\Omega)} \sum_{j=1}^N |\nabla_j\Psi(\tX)|^2 \1_{\{\bx_j \in Q\}} \,dX \\
		&= \sum_{\vec{n}} \int_{\cC^{N-n_1}(Q^c)} \int_{\cC^{n_1}(Q)} \sum_{j:x_j \in Q} |\nabla_j\Psi(\tX)|^2 dX_1 \1_{\vec{n}} \,dX_1^c
	\end{align*}
	and proceed similarly to the proof of superadditivity (Lemma~\ref{lem:superadditivity}), 
	with $\Omega_1 = Q$ and $\Omega_{k>1} \subseteq Q^c$ 
	(e.g. splitting $Q^c$ along the coordinate axes).
	Thus,
	$$
		T_{\rho_N}^{Q \subseteq \Omega}[\Psi]  
		\ge \sum_{\vec{n}} \int_{\cC^{N-n_1}(Q^c)} E_{n_1}(Q) \int_{\cC^{n_1}(Q)} |\Psi(\tX)|^2 dX_1 \1_{\vec{n}} \,dX_1^c.
	$$
	Now use that from scale-covariance and \eqref{eq:local-exclusion-linbound}
	$$
		E_n(Q) \ge |Q|^{-1} C_4(\rho_N) (n-1)_+ \quad \text{for all} \ 0 \le n \le N,
	$$
	while from our earlier Lemma~\ref{lem:E-from-Hardy}
	$$
		E_n(Q) \ge |Q|^{-1} f( j_{\alpha_n}'^2 ) (n-1)_+ 
		\ge |Q|^{-1} f( j_{\alpha_N}'^2 ) (n-1)_+ 
		\quad \text{for all} \ 0 \le n \le N.
	$$
	Defining the induced $n$-particle probability distribution on $Q$, $n \in \{0,1,\ldots,N\}$,
	\begin{equation}\label{eq:p_n}
		p_n(\Psi;Q) := \sum_{\vec{n} : n_1=n} \int_{\cC^{N-n_1}(Q^c)} \int_{\cC^{n_1}(Q)} |\Psi(\tX)|^2 dX_1 \1_{\vec{n}} \,dX_1^c,
	\end{equation}
	with $\sum_{n=0}^N p_n = 1$ and $\sum_{n=0}^N np_n = \int_Q \varrho_\Psi$,
	we may use the convexity of $x \mapsto (x-1)_+$ to obtain
	\begin{align*}
		T_{\rho_N}^{Q \subseteq Q_0}[\Psi] 
		&\ge \sum_{n=0}^N |Q|^{-1} C(\rho_N) (n-1)_+ p_n 
		\ge |Q|^{-1} C(\rho_N) \sum_n (n p_n - 1)_+,
	\end{align*}
	which proves \eqref{eq:local-exclusion}.
\end{proof}


\section{The ideal non-abelian anyon gas}\label{sec:gas}

\subsection{The homogeneous gas} 

	The (zero-temperature) \keyword{homogeneous anyon gas} on a domain $\Omega$ 
	is defined by taking a ground state $\Psi$
	of the kinetic energy $T_\rho^\Omega$, either with Dirichlet or Neumann boundary conditions.
	As $N \to \infty$ and $|\Omega| \to \infty$ while keeping the mean density $\bar\varrho := N/|\Omega|$ fixed
	(\keyword{the thermodynamic limit})
	one expects on general grounds the one-body density $\varrho_\Psi$ 
	to tend to the constant $\bar\varrho$, 
	regardless of the shape of (reasonable) $\Omega$ 
	and the choice of b.c.
	(at least on average over large enough scales; 
	cf. the almost-bosonic limit \cite{CorDubLunRou-19}).

\begin{definition}[Dirichlet energy]\label{def:Dirichlet}
	Given an $N$-anyon model $\rho\colon B_N \to \sU(\cF)$
	and a simply connected domain $\Omega \subseteq \R^2$,
	we define the \keyword{Dirichlet Sobolev space} $H^1_{\rho,0}(\tcC^N(\Omega);\cF)$,
	of wave functions $\Psi \in L^2_\rho$ vanishing on the boundary $\partial\overline{\Omega}$
	and with finite expected local kinetic energy on $\Omega$,
	as the closure of the space of functions $\Psi \in C^\infty_{\rho,c}(\tcC^N(\Omega))$
	(with $\tilde\pr(\supp\Psi)$ compactly supported in $\cC^N(\Omega)$) 
	in the $H^1_\rho$-norm.
	Taking the infimum of $T_\rho^\Omega[\Psi]$ of such functions defines the 
	\keyword{Dirichlet ground-state energy} on $\Omega$
	$$
		\bar{E}_N(\Omega) := \inf \left\{ \int_{\cC^N(\Omega)} |\nabla\Psi(\tX)|_{\cF^{2N}}^2 \,dX : 
		\Psi \in H^1_{\rho,0}(\tcC^N(\Omega)), \int_{\cC^N(\Omega)} |\Psi|_\cF^2 \,dX = 1 \right\},
		\ \ N \ge 2,
	$$
	also $\bar{E}_0(\Omega) := 0$, and 
	$
		\bar{E}_1(\Omega) := \inf \left\{\int_\Omega |\nabla\Psi|^2 : 
		\Psi \in H^1_0(\Omega;\cF), \int_\Omega |\Psi|^2 = 1 \right\}.
	$
\end{definition}
	
	Since $C^\infty_{\rho,c}(\tcC^N(\Omega)) \subseteq C^\infty_{\rho,c}(\tcC^N(\R^2))$
	we trivially have
	$$
		E_N(\Omega) \le \bar{E}_N(\Omega),
	$$
	where, to distinguish the two,
	we refer to $E_N(\Omega)$ as the \keyword{Neumann energy} on $\Omega$.
	We also note that 
	the one-body energy $E_1$ with fiber $\cF \cong \C^D$ 
	is the same as $E_1$ with $\cF = \C$ or $\cF = \R_+$.
	This follows by decomposing
	$|\nabla\Psi|^2 = \sum_{n=1}^D |\nabla\Psi_n|^2$ w.r.t.\ a basis in $\cF$,
	i.e. $T_0|_{H^1(\Omega;\cF)} = \bigoplus^D T_0|_{H^1(\Omega;\C)}$,
	and using the standard diamagnetic inequality, $|\nabla\Psi| \ge |\nabla|\Psi||$.

\begin{lemma}[Subadditivity]\label{lem:subadd}
	Let $\Omega_k$, $k=1,\ldots,N$ be pairwise disjoint 
	and simply connected subsets of $\R^2$,
	and assume $\Omega := \overline{\cup_k \Omega_k}^\circ$ is also simply connected.
	If $\rho\colon B_N \to \sU(\cF)$ is an arbitrary geometric $N$-anyon model then
	$$
		\bar{E}_N(\Omega) \le \sum_{k=1}^N \bar{E}_1(\Omega_k).
	$$
\end{lemma}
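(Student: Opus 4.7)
The plan is to exhibit an explicit product trial state built from the one-body Dirichlet ground states on each $\Omega_k$. For each $k$, fix an $L^2(\Omega_k)$-normalized minimizer $u_k$ of the Dirichlet energy, achieving $\int_{\Omega_k}|\nabla u_k|^2 = \bar{E}_1(\Omega_k)$; by density of $C^\infty_c(\Omega_k)$ in $H^1_0(\Omega_k)$ it suffices to work with smooth compactly supported $u_k$, at the cost of an arbitrarily small $\varepsilon > 0$ in the final energy estimate.

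The geometric key is that the subset $\cC^* \subset \cC^N(\Omega)$ consisting of configurations $X$ containing exactly one particle in each $\Omega_k$ is simply connected: the disjointness of the $\Omega_k$ makes the assignment of particle to domain unambiguous and locally constant along every path in $\cC^*$, so the map $\prod_k \Omega_k \ni (\bx_1,\ldots,\bx_N) \mapsto \{\bx_1,\ldots,\bx_N\} \in \cC^*$ is a homeomorphism and $\cC^*$ inherits simple-connectedness from the factors. Consequently, the preimage $\tilde\pr^{-1}(\cC^*) \subset \tcC^N(\Omega)$ is a disjoint union of copies of $\cC^*$ indexed by $B_N$, and the flat bundle $E_\rho$ trivializes over $\cC^*$.

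I then fix a unit vector $v \in \cF$ and a single connected component $C_0 \subset \tilde\pr^{-1}(\cC^*)$---the one lying above $\cC^*$ via the chosen reference ordering---on which I set
$$\Psi(\tX) := v \, \prod_{k=1}^N u_k(\bx_k),$$
with $\bx_k$ the uniquely identified coordinate in $\Omega_k$. The function is extended to the remaining components $C_0.b$ by the equivariance prescription $\Psi(\tX.b) := \rho(b^{-1})\Psi(\tX)$, and by zero outside $\tilde\pr^{-1}(\cC^*)$. Since each $u_k \in C^\infty_c(\Omega_k)$, both $\Psi$ and all of its derivatives vanish in a neighborhood of the boundary of every component, so $\Psi$ is smooth on $\tcC^N$ and lies in $C^\infty_{\rho,c} \subset H^1_{\rho,0}$. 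A brief calculation using $\cC^* \cong \prod_k \Omega_k$ yields $\int|\Psi|^2\,dX = \prod_k \|u_k\|_{L^2}^2 = 1$ (the factor $1/N!$ in the measure on $\cC^N(\Omega)$ being cancelled by the $N!$ orderings in $\pr^{-1}(\cC^*)$), and $T_\rho^\Omega[\Psi] = \sum_k \int_{\Omega_k}|\nabla u_k|^2 \le \sum_k \bar{E}_1(\Omega_k) + N\varepsilon$. Sending $\varepsilon \to 0$ completes the argument.

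The step most in need of justification, and the genuine engine of the proof, is the simple-connectedness of $\cC^*$: it is precisely the disjointness hypothesis on the $\Omega_k$ that prevents braiding inside the support of the trial state and thereby collapses the full braid-group equivariance constraint to a trivial one on each component, making the construction work for an \emph{arbitrary} representation $\rho$.
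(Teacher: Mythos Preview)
Your proof is correct and follows essentially the same approach as the paper's: both construct a product trial state $\Psi = v\prod_k u_k(\bx_k)$ on one sheet of the covering over the region where exactly one particle sits in each $\Omega_k$, extend by $\rho$-equivariance to the other sheets, and compute the norm and energy directly. Your explicit identification of $\cC^* \cong \prod_k \Omega_k$ as simply connected is exactly the content of the paper's observation that $\tilde\pr^{-1}\pr(\prod_k \Omega_k) = \bigsqcup_{b\in B_N}\tilde\Omega.b$, just phrased in slightly more topological language.
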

\begin{proof}
	Let $u_k \in C^\infty_c(\Omega_k;\C)$ and $v \in \cF$, $|v|=1$, and 
	consider the subset $\Omega = \prod_{k=1}^N \Omega_k \subseteq \R^{2N}$ where 
	$\bx_j$ is localized on $\Omega_j$ (thus making all particles distinguishable). 
	Fix a representative $\sx_\Omega \in \Omega$.
	This point is in 1-to-1 correspondence to $X_0$ (and $X_0(\Omega)$)
	by means of a simple path $\gamma_\Omega$ in $\cC^N$,
	and any action of $b \in B_N$
	on $X_0$ induces a permutation $\Omega.\hat\pr(b)$ in $\R^{2N}$
	and a lift $\tilde\Omega.b$ in $\tcC^N$.
	Thus, 
	by the disjointness of $\Omega_k$ we can write the set of pre-images in $\tcC^N$
	$$
		\tilde\pr^{-1} \pr(\Omega) = \bigsqcup_{b \in B_N} \tilde\Omega.b.
	$$
	Define for $\tX \in \tilde\Omega$ ($b=1$)
	$$
		\Psi(\tX) := u_1(\bx_1) u_2(\bx_2) \ldots u_N(\bx_N) v
	$$
	and on $\tilde\Omega.b$ for $b \neq 1$
	$$
		\Psi(\tX.b) := \rho(b^{-1})\Psi(\tX).
	$$
	Then we have a $\rho$-equivariant function such that for any
	$\tX \mapsto X = \{\bx_1,\ldots,\bx_N\}$, $\bx_j \in \Omega_j$,
	$$
		|\Psi(\tX)|^2 = |u_1(\bx_1)|^2 \ldots |u_N(\bx_N)|^2, 
	$$
	and
	$$
		|\nabla\Psi(\tX)|^2 = |\nabla u_1(\bx_1)|^2 |u_2(\bx_2)|^2 \ldots |u_N(\bx_N)|^2 + \ldots + |u_1(\bx_1)|^2 \ldots |u_{N-1}(\bx_{N-1})|^2 |\nabla u_N(\bx_N)|^2.
	$$
	Thus, normalizing $\int_{\R^2} |u_k|^2 = 1$ and taking sequences of $u_k$ 
	converging to the respective ground states on $\Omega_k$ realizing $E_1(\Omega_k)$, 
	we obtain the upper bound of the lemma.
\end{proof}

\begin{theorem}[Uniform bounds for the homogeneous anyon gas]\label{thm:homogeneous-gas}
	For any sequence of $N$-anyon models $\rho_N\colon B_N \to \sU(\cF_N)$ 
	with n-anyon exchange parameters $\alpha_n = \alpha_n(N) \in [0,1]$ 
	we have the uniform bounds
	\begin{equation}\label{eq:homogeneous-energy}
		\frac{1}{4} C(\rho_N) N^2 \bigl(1 - O(N^{-1})\bigr)
		\le E_N(Q_0) \le \bar{E}_N(Q_0) 
		\le 2\pi^2 N^2 \bigl(1 + O(N^{-1/2})\bigr),
	\end{equation}
	where
	$$
		C(\rho_N) = \max \left\{ C_4(\rho_N), f(j_{\alpha_N}'^2) \right\}
		\ge \max \left\{ \frac{1}{4}\min\left\{f(j_{\alpha_2}'^2),0.147\right\}, f(j_{\alpha_N}'^2) \right\}.
	$$
\end{theorem}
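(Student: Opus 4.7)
The plan is to derive the lower bound from the local exclusion principle (Lemma~\ref{lem:local-exclusion}) via a tiling-and-optimization argument, and the upper bound from the subadditivity lemma (Lemma~\ref{lem:subadd}) using Dirichlet ground states of the Laplacian on smaller squares as trial wave functions. Both halves are scale-covariant reductions to a one-body problem in disguise, but with very different content: the lower bound encodes statistical repulsion in a quadratic growth, the upper bound only needs the one-body Dirichlet eigenvalue on the unit square, $\lambda_0(-\Delta^\eD_{Q_0}) = 2\pi^2$.

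For the lower bound, I would fix an $L^2_\rho$-normalized $\Psi \in H^1_{\rho_N}(\tcC^N(Q_0))$ and partition $Q_0$ into $K^2$ congruent sub-squares $\{Q_k\}$ of side $1/K$. Applying Lemma~\ref{lem:local-exclusion} to each $Q_k \subseteq Q_0$ and summing gives
\begin{equation}
E_N \ \ge\ K^2 C(\rho_N) \sum_{k=1}^{K^2} (n_k - 1)_+, \qquad n_k := \int_{Q_k} \varrho_\Psi,\ \ \sum_k n_k = N.
\end{equation}
By convexity of $x \mapsto (x-1)_+$ and Jensen's inequality, $\sum_k (n_k-1)_+ \ge K^2(N/K^2 - 1)_+ = (N-K^2)_+$, so $E_N \ge C(\rho_N)\, K^2(N-K^2)_+$. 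The right-hand side is maximized (over real $K^2$) at $K^2 = N/2$ with value $C(\rho_N)\,N^2/4$; choosing $K$ to be the positive integer closest to $\sqrt{N/2}$, one has $|K^2 - N/2| = O(\sqrt{N})$ and hence
\begin{equation}
E_N \ \ge\ C(\rho_N) \bigl( N^2/4 - O(N) \bigr) \ =\ \frac{C(\rho_N)}{4} N^2 \bigl(1 - O(N^{-1})\bigr),
\end{equation}
as required; taking the infimum over $\Psi$ completes this half.

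For the upper bound, since $E_N \le \bar E_N$ trivially, I only need to control $\bar E_N(Q_0)$. Setting $M := \lceil\sqrt{N}\rceil$, I partition $Q_0$ into $M^2 \ge N$ disjoint open sub-squares of side $1/M$ (so that $\overline{\cup_k \Omega_k}^\circ = Q_0$) and select any $N$ of them. Lemma~\ref{lem:subadd} then yields
\begin{equation}
\bar E_N(Q_0) \ \le\ \sum_{k=1}^N \bar E_1(\Omega_k) \ =\ N \cdot 2\pi^2 M^2 \ \le\ 2\pi^2 N (\sqrt{N}+1)^2 \ =\ 2\pi^2 N^2 \bigl(1+O(N^{-1/2})\bigr),
\end{equation}
using that the Dirichlet ground state energy on a square of side $L$ is $2\pi^2/L^2$ and is unchanged by passing from a scalar to a $\cF$-valued problem (componentwise reduction combined with the standard diamagnetic inequality).

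The main conceptual point is the lower bound, and within it the main obstacle is not the tiling computation itself but the extraction of the constant $C(\rho_N)$ via Lemma~\ref{lem:local-exclusion}: this in turn relies on the Hardy inequality (Theorem~\ref{thm:Hardy}) for the $f(j'^2_{\alpha_N})$ piece and on the a priori positivity propagated from $E_2,E_3,E_4$ through Lemma~\ref{lem:E-apriori} for the $C_4(\rho_N)$ piece. Both ingredients are in place, so the argument reduces to the elementary optimization above; the combinatorial matching of the $K^2 \approx N/2$ optimum with the quadratic growth in $N$ is exactly the mechanism that produces the Fermi-gas-type extensivity in the energy even for fractional statistics.
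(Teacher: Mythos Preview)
Your proposal is correct and follows essentially the same approach as the paper: for the lower bound you tile $Q_0$ into $K^2$ sub-squares, apply the local exclusion principle on each, use convexity of $(x-1)_+$, and optimize at $K^2 \approx N/2$, while for the upper bound you tile into $\lceil\sqrt{N}\rceil^2$ sub-squares and apply subadditivity with the one-body Dirichlet energy $2\pi^2$. The only cosmetic difference is that the paper re-derives the per-square bound via superadditivity and the particle-number distribution $p_n$ rather than invoking Lemma~\ref{lem:local-exclusion} directly, but this leads to the identical inequality $E_N \ge C(\rho_N)\,K^2(N-K^2)_+$ and the same optimization.
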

\begin{remark}
	By scale-covariance, the g.s.\ energy per particle and unit density 
	in the thermodynamic limit $N \to \infty$, $\Omega = [-L/2,L/2]^2$, $L \to \infty$
	at fixed density $\bar\varrho = N/|\Omega|$ is then
	$$
		e(\{\rho_N\}) := \liminf_{N \to \infty} E_N/N^2,
		\quad
		\max \left\{ \frac{1}{4}\min\left\{f( j_{\alpha_2}'^2 ),0.147\right\}, f( j_{\alpha_*}'^2 ) \right\}
		\le e(\{\rho_N\}) \le 2\pi^2,
	$$
	where $\alpha_n$ are the $\liminf$ of $\alpha_n(\rho_N)$ as $N \to \infty$,
    and $\alpha_* = \liminf_{N \to \infty} \alpha_N(\rho_N)$.
\end{remark}
\begin{remark}
	One can improve the upper bound in the case that the anyon model is 
	transmutable and with a small statistics parameter, 
	by treating the magnetic potential as a weak interaction and using 
	techniques for weakly interacting Bose gases \cite{Dyson-57,LieSeiSolYng-05}. 
	This was done for almost-bosonic abelian anyons in
	\cite[Lemma~3.2]{LunSei-17}.
\end{remark}
\begin{proof}
	Given sub- and superadditivity and the local exclusion principle, 
	the method to derive these bounds via splitting into smaller boxes is quite standard;
	we follow \cite[Proposition~3.3 and Lemma~4.9]{LunSei-17}.

	For the lower bound, let $K \in \N$ and split $Q_0$ into $K^2$ smaller squares 
	$Q_k$, $k = 1,\ldots,K^2$, of equal size $|Q_k| = K^{-2}$. 
	By superadditivity Lemma~\ref{lem:superadditivity} and scale-covariance Lemma~\ref{lem:covariance}
	we have for any state $\Psi \in H^1_\rho(\tcC^N(Q_0);\cF_N)$
	\begin{equation}\label{eq:homogeneous-lower-split}
		\int_{\cC^N(Q_0)} |\nabla\Psi|^2 
		\ge \sum_{\vec{n}} \sum_{k=1}^{K^2} K^2 E_{n_k} \int_{\cC^N(Q_0)} \1_{\vec{n}} |\Psi|^2
		= K^2 \sum_{k=1}^{K^2} \sum_{n=0}^N E_n p_n(\Psi;Q_k),
	\end{equation}
	where $p_n$ is the induced $n$-particle probability distribution on $Q_k$,
	defined in \eqref{eq:p_n}.
	Define also the average distribution of particle numbers
	$$
		\gamma_n := K^{-2} \sum_{k=1}^{K^2} p_n(\Psi;Q_k),
	$$
	normalized
	$$
		\sum_{n=0}^N \gamma_n = 1,
		\qquad
		\sum_{n=0}^N n\gamma_n = N/K^2 =: \rho_Q,
	$$
	which is the expected number of particles on any one of the smaller squares.
	Hence the r.h.s. of \eqref{eq:homogeneous-lower-split} is, 
	by Lemma~\ref{lem:local-exclusion} and convexity,
	bounded from below by
	$$
		K^4 \sum_{n=0}^N C(\rho_N) (n-1)_+ \gamma_n
		\ge C(\rho_N) K^4 \left( \sum_{n=0}^N n\gamma_n - 1 \right)_+
		= C(\rho_N) N^2 \rho_Q^{-2} (\rho_Q - 1)_+.
	$$
	Optimizing then $\rho_Q \sim 2$ by our choice of $K$, 
	we take $K := \lceil \sqrt{N/2} \rceil$ and obtain
	$2(1+\sqrt{2/N})^{-2} \le \rho_Q \le 2$, and thus
	$$
		E_N \ge C(\rho_N) N^2 \frac{1}{4} \left( 2(1+\sqrt{2/N})^2 - (1+\sqrt{2/N})^4 \right),
	$$
	which proves the claimed lower bound.
	
	For the upper bound
	let $K$ be the smallest integer such that $N \le K^2$, and again split the
	unit square $Q_0 = [0,1]^2$ into $K^2$ smaller squares $Q_k$ of equal size.
	By subadditivity Lemma~\ref{lem:subadd} and the well-known one-body energy
	$\bar{E}_1(Q_0) = 2\pi^2$ of the Dirichlet ground state 
	$u(x,y) = \sin(\pi x)\sin(\pi y)$, 
	we obtain
	$$
		\bar{E}_N(Q_0) \le \sum_{k=1}^{N} E_1(Q_k) + \sum_{k=N+1}^{K^2} E_0(Q_k)
		= N K^2 2\pi^2 \le N (1+\sqrt{N})^2 2\pi^2
	$$
	by scale-covariance. This proves the claimed upper bound.
\end{proof}
	
	With our results of Corollaries~\ref{cor:Up-fib}-\ref{cor:Up-ising}
	respectively Section~\ref{sec:phases-clifford},
	and bounds for $f$ and $j_\nu'$,
	we have then

\begin{corollary}\label{cor:Fibonacci-gas}
	For Fibonacci anyons the exchange parameters are $\alpha_2 = \beta_0 = 3/5$
	and $\alpha_N = \beta_1 = 1/5$ for $N \ge 3$,
	and hence \eqref{eq:homogeneous-energy} holds with $C(\rho_N) \ge 1/15$ for $N \ge 3$.
\end{corollary}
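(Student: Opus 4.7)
The plan is to combine Corollary~\ref{cor:Up-fib} (which lists the spectrum of $U_p$ for Fibonacci anyons) with Theorem~\ref{thm:homogeneous-gas} (which converts these spectral data into a gas energy bound via the universal constant $C(\rho_N)$).

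First I would read off the exchange parameters. By definition $\beta_p$ is the minimal normalized absolute phase $|{\arg\lambda}|/\pi$ among the eigenvalues $\lambda$ of $U_p$. Corollary~\ref{cor:Up-fib} gives the four possible eigenvalues $e^{i4\pi/5}$, $e^{i\pi/5}$, $e^{-i\pi/5}$, $e^{-i3\pi/5}$ with multiplicities $\Fib(p+3)$, $2\Fib(p)$, $3\Fib(p)$, $3\Fib(p-1)$. Using $\Fib(0)=0$ and $\Fib(-1)=1$, the case $p=0$ retains only $e^{i4\pi/5}$ and $e^{-i3\pi/5}$, so $\beta_0 = 3/5$; for $p=1$ the eigenvalues $e^{\pm i\pi/5}$ appear (the $e^{-i3\pi/5}$ multiplicity vanishes since $\Fib(0)=0$) so $\beta_1 = 1/5$; and for $p \ge 2$ all four eigenvalues are present with nontrivial multiplicity, so $\beta_p = 1/5$. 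Taking the running minimum gives $\alpha_2 = \beta_0 = 3/5$ and $\alpha_N = \min\{\beta_0,\beta_1,\ldots,\beta_{N-2}\} = 1/5$ for every $N \ge 3$, which is the first assertion.

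Next I would plug $\alpha_N = 1/5$ into the universal constant of Theorem~\ref{thm:homogeneous-gas},
\begin{equation}
	C(\rho_N) = \max\bigl\{ C_4(\rho_N),\ f(j_{\alpha_N}'^2) \bigr\} \ \ge \ f\bigl(j_{\alpha_N}'^2\bigr).
\end{equation}
Using the elementary estimate $f(t) \ge t/6$ from Lemma~\ref{lem:E-from-Hardy} together with the Bessel bound $j_\nu' \ge \sqrt{2\nu}$ from \eqref{eq:Bessel-bounds}, one obtains
\begin{equation}
	f\bigl(j_{\alpha_N}'^2\bigr) \ \ge \ \frac{j_{\alpha_N}'^2}{6} \ \ge \ \frac{2\alpha_N}{6} \ = \ \frac{\alpha_N}{3} \ = \ \frac{1}{15},
\end{equation}
which yields $C(\rho_N) \ge 1/15$ and hence the bound \eqref{eq:homogeneous-energy} with this constant, completing the proof. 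The numerical improvement $C(\rho_N) \gtrsim 0.35$ noted in Theorem~\ref{thm:intro-nonabelian} would be obtained by evaluating $j_{1/5}'$ and $f(j_{1/5}'^2)$ more sharply than the closed-form bounds used here.

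I do not expect any real obstacle: the eigenvalue data are already catalogued in Corollary~\ref{cor:Up-fib}, the passage from spectral data to $\beta_p$ is purely algebraic, and the numerical bound is a one-line chain of inequalities using tools already developed. The only mildly delicate point is keeping track of the vanishing Fibonacci multiplicities at $p=0,1$ to make sure one does not falsely retain $\beta_p = 3/5$ or miss the jump to $1/5$ when $p \ge 1$.
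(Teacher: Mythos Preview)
Your proposal is correct and follows exactly the approach the paper intends: the paper gives no separate proof of this corollary, only the one-line remark that it follows from Corollary~\ref{cor:Up-fib} together with the bounds on $f$ and $j_\nu'$, and you have simply spelled out those details accurately.
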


\begin{corollary}\label{cor:Ising-gas}
	For Ising anyons the exchange parameters are 
	$\alpha_N = \beta_0 = 1/8$ for all $N \ge 2$,
	and hence \eqref{eq:homogeneous-energy} holds with $C(\rho_N) \ge 1/24$ for $N \ge 2$.
\end{corollary}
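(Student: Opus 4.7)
The plan is to read off the exchange parameters $\beta_p$ directly from the spectral data in Corollary~\ref{cor:Up-ising}, minimize over $p$ to extract $\alpha_N$, and then plug the resulting value into Theorem~\ref{thm:homogeneous-gas} together with the elementary lower bounds for $f$ and $j_\nu'$ recalled in Lemma~\ref{lem:E-from-Hardy} and \eqref{eq:Bessel-bounds}. No new inequalities are needed; everything is an arithmetic check once the $U_p$ spectra are in hand.

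First I would inspect each case of Corollary~\ref{cor:Up-ising}. For $p=0$ the spectrum of $U_{\sigma,1,\sigma}$ is $\{e^{-i\pi/8}, e^{3i\pi/8}\}$, giving $\beta_0 = \min\{1/8,3/8\} = 1/8$. For odd $p=2n+1 \ge 1$ the spectrum is $\{e^{-i\pi/8}, e^{7i\pi/8}\}$, so $\beta_p = \min\{1/8, 7/8\}=1/8$. For even $p=2n\ge 2$ the spectrum is $\{e^{-5i\pi/8}, e^{-i\pi/8}, e^{3i\pi/8}, e^{7i\pi/8}\}$, so $\beta_p=\min\{5/8,1/8,3/8,7/8\}=1/8$. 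In every case $\beta_p=1/8$, hence
\begin{equation}
\alpha_N \;=\; \min_{0\le p\le N-2}\beta_p \;=\; \tfrac{1}{8}\qquad \forall N\ge 2,
\end{equation}
which establishes the first assertion of the corollary.

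For the constant in \eqref{eq:homogeneous-energy} I would use the trivial lower bound
\begin{equation}
C(\rho_N) \;=\; \max\bigl\{C_4(\rho_N),\, f(j_{\alpha_N}'^2)\bigr\} \;\ge\; f(j_{1/8}'^2),
\end{equation}
and then apply the estimates $f(t)\ge t/6$ from Lemma~\ref{lem:E-from-Hardy} and $j_\nu' \ge \sqrt{2\nu}$ from \eqref{eq:Bessel-bounds} with $\nu = 1/8$. This yields
\begin{equation}
f(j_{1/8}'^2) \;\ge\; \tfrac{1}{6}\,(j_{1/8}')^2 \;\ge\; \tfrac{1}{6}\cdot 2\cdot\tfrac{1}{8} \;=\; \tfrac{1}{24},
\end{equation}
and so $C(\rho_N)\ge 1/24$ uniformly in $N\ge 2$. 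The claimed bounds then follow by direct substitution into Theorem~\ref{thm:homogeneous-gas}.

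There is no genuine obstacle in this proof: the algebraic work is entirely contained in Corollary~\ref{cor:Up-ising} (which in turn relied on Theorem~\ref{thm:gen-exchange} and the explicit $B$-symbols \eqref{eq:B-ising-1}--\eqref{eq:B-ising-3}). The only thing to verify is that the smallest angle occurring among all $U_p$ spectra is indeed $\pi/8$, which is just a finite inspection; the numerical estimate $C(\rho_N)\gtrsim 0.25$ mentioned in Theorem~\ref{thm:intro-nonabelian} would require computing $E_2,E_3,E_4$ more carefully via $C_4(\rho_N)$ rather than the crude lower bound through $f$, but the weaker conclusion $C(\rho_N)\ge 1/24$ stated here is immediate from the chain of inequalities above.
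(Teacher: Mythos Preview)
Your proposal is correct and follows exactly the route the paper intends: the paper itself only remarks that the corollary follows ``with our results of Corollaries~\ref{cor:Up-fib}--\ref{cor:Up-ising} \ldots and bounds for $f$ and $j_\nu'$,'' and your argument simply fills in those details by reading off $\beta_p=1/8$ from each case of Corollary~\ref{cor:Up-ising} and then chaining $C(\rho_N)\ge f(j_{1/8}'^2)\ge \tfrac{1}{6}(j_{1/8}')^2\ge \tfrac{1}{6}\cdot\tfrac{1}{4}=\tfrac{1}{24}$.
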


\begin{corollary}\label{cor:Clifford-gas}
	For Clifford anyons the exchange parameters are $\alpha_N = \beta_0 = 1/4$ for $2 \le N \le 4$
	and $\alpha_N = \beta_3 = 0$ for $N \ge 5$,
	and hence \eqref{eq:homogeneous-energy} holds with $C(\rho_N) \ge 1/48$ for $N \ge 5$.
\end{corollary}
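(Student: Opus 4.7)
The plan is to read off the exchange parameters from the spectra in Section~\ref{sec:phases-clifford} and then to invoke the explicit lower bound $C_4(\rho_N) \ge \tfrac{1}{4}\min\{\alpha_2/3,\,0.147\}$ already contained in Theorem~\ref{thm:homogeneous-gas}. First, from
\[
\sigma(U_0) = \{e^{\pm i\pi/4}\}, \quad \sigma(U_1) = \{\pm i\}, \quad \sigma(U_2) = \{e^{\pm i\pi/4},\, e^{\pm 3i\pi/4}\}, \quad \sigma(U_3) = \{\pm 1\},
\]
together with the period-four repetition $\sigma(U_{4k+l}) = \sigma(U_l)$ noted in that section, the definition of $\beta_p$ gives $\beta_0 = \beta_2 = 1/4$, $\beta_1 = 1/2$, and $\beta_3 = 0$ (the latter realized by the eigenvalue $1 = e^{i\cdot 0 \cdot \pi}$ of $U_3$), with these values repeating modulo four. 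Taking $\alpha_N = \min_{p \le N-2} \beta_p$, for $N \in \{2,3,4\}$ only $\beta_0,\beta_1,\beta_2$ enter the minimum and $\alpha_N = \beta_0 = 1/4$; for $N \ge 5$ the value $\beta_3 = 0$ is forced into the minimum, so $\alpha_N = 0$.

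For $N \ge 5$ this collapse of $\alpha_N$ to zero renders the Bessel contribution $f(j_{\alpha_N}'^2) = f(0) = 0$ useless, so the lower bound in Theorem~\ref{thm:homogeneous-gas} has to come entirely from $C_4(\rho_N)$. The key point is that the two-anyon parameter $\alpha_2 = \beta_0 = 1/4$ remains positive for every $N$, since its definition never involves $\beta_3$. The chain of inequalities from Theorem~\ref{thm:homogeneous-gas},
\[
C(\rho_N) \ge C_4(\rho_N) \ge \tfrac{1}{4}\min\{\alpha_2/3,\,0.147\} = \tfrac{1}{4}\min\{1/12,\,0.147\} = 1/48,
\]
then delivers the constant claimed in the corollary, and an application of Theorem~\ref{thm:homogeneous-gas} gives the uniform bounds \eqref{eq:homogeneous-energy}.

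The only conceptual obstacle behind an otherwise routine verification is keeping clear the distinction between the full-model parameter $\alpha_N$, which collapses at $N \ge 5$ because three enclosed Clifford anyons carry a trivial monodromy eigenvalue in $U_3$, and the two-anyon parameter $\alpha_2$, which is immune to this collapse and is what the local exclusion principle of Lemma~\ref{lem:local-exclusion} actually feeds on. The quadratic-in-$N$ lower bound therefore survives the loss of pairwise Hardy repulsion in the full model, at the price of a constant that is controlled only by the two-anyon exchange and so sits well below the $N=2,3,4$ regime where $f(j_{1/4}'^2)$ is still available.
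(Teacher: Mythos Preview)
Your proof is correct and follows essentially the same approach as the paper, which states the corollary as an immediate consequence of the spectra computed in Section~\ref{sec:phases-clifford} together with Theorem~\ref{thm:homogeneous-gas} and the bounds on $f$ and $j_\nu'$. One small remark on attribution: the inequality $C_4(\rho_N) \ge \tfrac{1}{4}\min\{\alpha_2/3,\,0.147\}$ is not literally stated in Theorem~\ref{thm:homogeneous-gas} (which has $f(j_{\alpha_2}'^2)$ in place of $\alpha_2/3$); it follows by combining that theorem with the lower bounds $f(t)\ge t/6$ and $(j_\nu')^2 \ge 2\nu$ from Lemma~\ref{lem:E-from-Hardy} and \eqref{eq:Bessel-bounds}, as recorded in Lemma~\ref{lem:local-exclusion} and Theorem~\ref{thm:intro-nonabelian}.
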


\begin{remark}
	Note that the numerical estimate of $f(j_{\alpha_*}'^2)$ in Figure~\ref{fig:popcorn}
	gives significant improvements to these analytical lower bounds for $C(\rho_N)$:
	$$
		C(\rho_N^\text{Fibonacci}) \gtrsim 0.35, \quad
		C(\rho_N^\text{Ising}) \gtrsim 0.25, \quad
		\text{respectively} \quad 
		C(\rho_N^\text{Clifford}) \gtrsim 0.147/4.
	$$
	Let us make some additional remarks concerning 
	the sharper first form of the Hardy inequality of Theorem~\ref{thm:Hardy}
	and Remark~\ref{rmk:Hardy-improvement}.
	While Ising anyons exhibit a uniform statistical repulsion
	with $\beta_p$ independent of $p$,
	Fibonacci and in particular Clifford anyons could in principle prefer to cluster to
	minimize their repulsion, just like we consider this also a possible preferred behavior 
	(balanced by the uncertainty principle) for abelian anyons
	with $\alpha_* < \alpha$ \cite{LunSol-13b,Lundholm-16}.
	We also note in a similar way that the statistical repulsion of Majorana fermions $\psi$ 
	in the Ising model is potentially weakened to some probability, 
	$U_{\psi,\sigma,\psi} = +\1$ by \eqref{eq:psi-braiding},
	thus allowing for pairing to happen on larger scales.
\end{remark}

\subsection{Lieb--Thirring inequality} 

	In the case that the gas is not homogeneous, such as if an additional external
	scalar potential is added to the Hamiltonian \eqref{eq:Hamiltonian},
	a very powerful bound is given by the Lieb--Thirring inequality,
	which estimates the kinetic energy of $\Psi$ locally at $\bx \in \R^2$ 
	in terms of the homogeneous gas energy at the corresponding density $\varrho_\Psi(\bx)$
	(for fermions that would be the famous Thomas--Fermi appoximation \eqref{eq:TF-approx}).
	The bound combines the uncertainty principle and the exclusion principle
	into a single uniform bound for arbitrary $N$-anyon states $\Psi \in H^1_\rho$.

\begin{theorem}[\textbf{Lieb--Thirring inequality for ideal anyons}]\label{thm:LT} 
	There exists a constant $C>0$ such that 
	for any number of particles $N \ge 1$, 
	for any $N$-anyon model $\rho\colon B_N \to \sU(\cF_N)$
	with 2-particle exchange parameter
	$\alpha_2 \in [0,1]$ (which may depend on $N$), 
	and for any $L^2$-normalized $N$-anyon state $\Psi \in H^1_\rho(\tcC^N;\cF_N)$,
	we have
	\begin{equation}\label{eq:LT}
		T_\rho[\Psi]
		\ge C \alpha_2 \int_{\R^2} \varrho_\Psi(\bx)^2 \, d\bx\,.
	\end{equation}
	Furthermore, given a one-body potential $V\colon \R^2 \to \R$
	and $\hV(X) = \sum_{j=1}^N V(\bx_j)$,
	\begin{equation}\label{eq:LT-potential}
		\infspec (\hT_\rho + \hV) \ge -\frac{1}{4C\alpha_2} \int_{\R^2} V_-(\bx)^2 \,d\bx,
	\end{equation}
	where $V_- := \max\{-V,0\}$.
\end{theorem}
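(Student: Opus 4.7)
The plan is to follow the Lieb--Thirring-via-local-exclusion template developed for abelian anyons in \cite{LunSol-13a,LarLun-16,LunSei-17}, the crucial new input being the non-abelian local exclusion principle of Lemma~\ref{lem:local-exclusion}. The central structural observation is that the bound
\[
T_\rho^{Q\subseteq\R^2}[\Psi] \ \ge\ \frac{C(\rho_N)}{|Q|}\Bigl(\int_Q \varrho_\Psi - 1\Bigr)_+
\]
has exactly the same form as in the abelian case, with a constant satisfying
$C(\rho_N) \ge \tfrac{1}{4}\min\{\alpha_2/3,\,0.147\} \ge c_0\,\alpha_2$
for a universal $c_0 > 0$ (using $\alpha_2\in[0,1]$). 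Hence the same covering argument that produced the abelian Lieb--Thirring constant will yield a bound proportional to $\alpha_2$ in the present generality.

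First I would reduce to a scalar problem via the diamagnetic inequality of Lemma~\ref{lem:diamag}: since $\varrho_\Psi = \varrho_{|\Psi|_\cF}$ and $|\Psi|_\cF$ extends to a function in $H^1_\sym(\R^{2N})$, the right-hand side of \eqref{eq:LT} is insensitive to the internal degrees of freedom, while the full kinetic form $T_\rho^Q[\Psi]$ may still be used for the exclusion step. I would also record the Hoffmann--Ostenhoff bound $T_{\rho_+}[|\Psi|_\cF]\ge \int_{\R^2}|\nabla\sqrt{\varrho_\Psi}|^2$, which makes $\sqrt{\varrho_\Psi}\in H^1(\R^2)$ available for the uncertainty step below.

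Next comes the adaptive cube decomposition. I would split $T_\rho[\Psi]=\eps T_\rho[\Psi]+(1-\eps)T_\rho[\Psi]$ for a small $\eps>0$. On the $(1-\eps)$-piece I would apply the local exclusion on a family of pairwise disjoint dyadic squares $\{Q_k\}$ covering the effective support of $\varrho_\Psi$, chosen adaptively so that on each $Q_k$ the mass $\int_{Q_k}\varrho_\Psi$ is of order unity; concretely, via a Calder\'on--Zygmund stopping-time subdivision at threshold $\Lambda>1$, producing ``heavy'' cubes with $\int_{Q_k}\varrho_\Psi\in[\Lambda,4\Lambda]$ and ``light'' cubes below $\Lambda$. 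On a heavy cube, Cauchy--Schwarz gives $|Q_k|^{-1}\le \Lambda^{-2}\int_{Q_k}\varrho_\Psi^2$, turning Lemma~\ref{lem:local-exclusion} into a contribution proportional to $\int_{Q_k}\varrho_\Psi^2$. On the $\eps$-piece I would use a local Sobolev/Gagliardo--Nirenberg uncertainty inequality applied to $\sqrt{\varrho_\Psi}$ on the light cubes, which absorbs the remaining contribution to $\int\varrho_\Psi^2$ coming from regions where the local mass is below $\Lambda$. Summing over $k$ and optimising $\Lambda$ and $\eps$ yields \eqref{eq:LT} with a universal constant. The main technical obstacle is the combinatorial bookkeeping of the covering and the joint optimisation, but this is carried out in detail in the abelian case in \cite[Sec.~4]{LunSei-17} and transfers verbatim here, since $\varrho_\Psi$ is a scalar object and the only model-dependent input is the single number $C(\rho_N)$.

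Finally, the operator-potential form \eqref{eq:LT-potential} follows by pointwise completion of squares: for $L^2$-normalised $\Psi\in H^1_\rho$,
\[
\langle\Psi,(\hT_\rho+\hV)\Psi\rangle \ \ge\ C\alpha_2\int_{\R^2}\varrho_\Psi^2 - \int_{\R^2}V_-\,\varrho_\Psi
\ \ge\ \int_{\R^2}\min_{t\ge 0}\bigl(C\alpha_2\,t^2 - V_-(\bx)\,t\bigr)\,d\bx
= -\frac{1}{4C\alpha_2}\int_{\R^2}V_-^2,
\]
the pointwise minimum being attained at $t=V_-(\bx)/(2C\alpha_2)$; taking the infimum over $\Psi$ then finishes the argument.
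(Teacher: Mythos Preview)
Your proposal is correct and follows essentially the same strategy as the paper: split $T_\rho$ into two pieces, combine the local exclusion principle (Lemma~\ref{lem:local-exclusion}) with a local uncertainty principle via an adaptive dyadic covering, then derive \eqref{eq:LT-potential} by quadratic optimisation in $\varrho_\Psi$. The paper packages the covering and uncertainty steps by invoking the ready-made Lemmas~\ref{lem:local-uncertainty} and~\ref{lem:covering} (from \cite{LunSol-13a,LunSol-14,LunNamPor-16}) rather than spelling out the Calder\'on--Zygmund stopping time and Hoffmann--Ostenhoff bound explicitly, but the underlying content is the same.
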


\begin{remark}
	For bosons, with $\alpha_2=0$, the inequality \eqref{eq:LT}
	cannot hold with any better constant since for product states
	\eqref{eq:product-state} the l.h.s.\ is $N\int |\nabla u|^2$
	while $\varrho_\Psi(\bx) = N|u(\bx)|^2$ in the r.h.s., 
	so the ratio cannot be better than $C_\mathrm{GNS}/N \to 0$, where
	$$
		C_\mathrm{GNS} := \inf_{u \in H^1(\R^2) : \int_{\R^2} |u|^2 = 1} \frac{\int_{\R^2} |\nabla u|^2}{\int_{\R^2} |u|^4}
	$$ 
	is the optimal constant of the 
	corresponding Ladyzhenskaya-Gagliardo-Nierenberg-Sobolev inequality for $N=1$.
\end{remark}
\begin{remark}
    In \cite{LarLunNam-19} it was shown that a vanishing condition \eqref{eq:Pauli} on the diagonal implies a Lieb--Thirring (LT) inequality iff the codimension $d < 2s$, the order of the operator $\hat{T}$ (here $s=1$). Thus, $d=2$ is a critical case, and while the hard-core Bose gas has vanishing energy in the dilute limit, the stronger vanishing around diagonals implied by the Hardy inequality is sufficient to produce nontrivial LT for such particles.
\end{remark}
\begin{remark}
	We expect that the optimal constant $C$ in \eqref{eq:LT} is on the order of $2\pi$,
	which is the Thomas-Fermi constant in 2D and is obtained in Weyl's law 
	\eqref{eq:Fermi-energy} for the
	sum of the eigenvalues of the Laplacian on a bounded domain
	(although recall also our remarks after \eqref{eq:TF-approx}).
	The standing conjecture for fermions \cite{LieThi-76} 
	is that the optimal constant is slightly smaller and equal to $C_\mathrm{GNS}$ 
	(which is only known numerically).
	However, proving this even for fermions is a difficult open problem; 
	see e.g. \cite{Frank-20,SeiSol-23} for recent review.
\end{remark}

	We prove Theorem~\ref{thm:LT} as an application of the local exclusion principle, 
	Lemma~\ref{lem:local-exclusion},
	combined with a local version of the uncertainty principle for bosons 
	or distinguishable particles.
	The method was introduced for abelian anyons in \cite{LunSol-13a}
	and goes back to Dyson and Lenard's approach to the proof of 
	the stability of fermionic matter \cite[Lemma~5]{DysLen-67}.
	We could simply replace \cite[Lemma~8]{LunSol-13a} by our Lemma~\ref{lem:local-exclusion},
	however we give for completeness a more immediate proof 
	from \cite{LunNamPor-16} formulated using a covering lemma 
	(see \cite{Lundholm-17} for a more detailed exposition of this local approach 
	to Lieb-Thirring inequalities).
	
\begin{lemma}[{Local uncertainty principle \cite[Lemma~9-10]{LunSol-13a}, \cite[Lemma~14]{LunSol-14}}]\label{lem:local-uncertainty}
	\mbox{}\\
	For arbitrary $d\ge 1$ and $N \ge 1$ let
	$\Psi$ be a function in $H^1(\R^{dN})$, 
	normalized $\int_{\R^{dN}} |\Psi|^2 = 1$,
	and let $Q$ be an arbitrary cube in $\R^d$. 
	There exists a constant $C_1>0$ depending only on $d$, such that
	\begin{equation} \label{eq:local-uncertainty}
		T_0^{Q \subseteq \R^d}[\Psi] 
		= \int_{\R^{dN}} \sum_{j=1}^N |\nabla_j\Psi|^2 \1_{\{\bx_j \in Q\}} d\sx
		\ \ge \ C_1\frac{\int_Q \varrho_{\Psi}^{1+2/d}}{\Bigl(\int_Q \varrho_{\Psi} \Bigr)^{2/d}} 
		- \frac{1}{C_1}\frac{1}{|Q|^{2/d}} \int_Q \varrho_{\Psi},
	\end{equation}
	with the one-body density
	\begin{equation}\label{eq:def-density-Rd}
		\varrho_\Psi(\bx) := 
		\sum_{j=1}^N \int_{\R^{d(N-1)}} |\Psi(\bx_1, \ldots, 
		\bx_{j-1}, \bx, \bx_{j+1}, \ldots, \bx_N)|^2 \prod_{k \neq j}d\bx_k.
	\end{equation}
\end{lemma}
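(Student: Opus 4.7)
The natural strategy is a two-step argument: first establish a one-body version of the local uncertainty principle on the cube $Q$, and then lift it to the many-body setting while converting the integrand to the one-body density $\varrho_\Psi$. All the ``anyonic'' structure is irrelevant here because we are only using the Laplacian on distinguishable-particle configurations, so we may as well treat $\Psi$ as a function in $H^1(\R^{dN})$.

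First I would prove the one-body inequality
\begin{equation}\label{eq:plan-one-body}
	\int_Q |\nabla u|^2 \ge C_1 \frac{\int_Q |u|^{2+4/d}}{\bigl(\int_Q |u|^2\bigr)^{2/d}} - C_1' |Q|^{-2/d} \int_Q |u|^2,
\end{equation}
for $u \in H^1(Q)$. By scaling it suffices to treat the unit cube. The standard route is to decompose $u = \bar u + v$, with $\bar u$ the mean of $u$ over $Q$ and $v$ the zero-mean fluctuation, apply a Poincar\'e inequality to control $\|v\|_{L^2(Q)}$ by $\|\nabla u\|_{L^2(Q)}$, and apply the Gagliardo--Nirenberg--Sobolev inequality (or Sobolev embedding $H^1 \hookrightarrow L^{2+4/d}$ after a suitable extension of $v$ to $\R^d$, e.g.\ by even reflection across faces) to control $\|v\|_{L^{2+4/d}(Q)}$. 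The constant-function contribution $\bar u$ to $\int_Q |u|^{2+4/d}$ is handled by the elementary estimate $|\bar u|^{2+4/d} |Q| \le |Q|^{-2/d} (\int_Q |u|^2)^{1+2/d}$, which, after dividing by $(\int_Q |u|^2)^{2/d}$, gives precisely the negative correction term with the exponent $-2/d$ of $|Q|$.

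Second, I would apply \eqref{eq:plan-one-body} for each $j \in \{1,\ldots,N\}$ to the function $\bx \mapsto \Psi(\bx_1,\ldots,\bx_{j-1},\bx,\bx_{j+1},\ldots,\bx_N)$ with $\sx_{j'} := (\bx_k)_{k\ne j}$ held fixed, and integrate in $\sx_{j'}$. Summing over $j$, the negative term on the right becomes exactly $-C_1'|Q|^{-2/d}\int_Q \varrho_\Psi$ by definition of $\varrho_\Psi$. It therefore remains to produce the lower bound
\begin{equation}\label{eq:plan-hoelder}
	\sum_{j=1}^N \int_{\R^{d(N-1)}} \frac{\int_Q |\Psi|^{2+4/d}\,d\bx_j}{\bigl(\int_Q |\Psi|^2\,d\bx_j\bigr)^{2/d}}\,d\sx_{j'} \ \ge \ \frac{\int_Q \varrho_\Psi^{1+2/d}}{\bigl(\int_Q \varrho_\Psi\bigr)^{2/d}}.
\end{equation}
This is where H\"older's inequality enters. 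Writing $g_j(\bx) := \int |\Psi(\bx_1,\ldots,\bx,\ldots,\bx_N)|^2 \prod_{k\ne j} d\bx_k$, so that $\varrho_\Psi = \sum_j g_j$, one bounds $\int_Q g_j^{1+2/d}$ by Minkowski's integral inequality applied in the $L^{1+2/d}(Q)$ norm, combined with H\"older in $\sx_{j'}$ with conjugate exponents $1+2/d$ and $(d+2)/2$; choosing the auxiliary weight to be $\int_Q |\Psi|^2 d\bx_j$ produces exactly the ratio on the left of \eqref{eq:plan-hoelder}. Finally, convexity of $t\mapsto t^{1+2/d}$ gives $\sum_j \int_Q g_j^{1+2/d} \ge (\sum_j \int_Q g_j)^{1+2/d}/(\text{normalization})$ and the sum $\sum_j g_j = \varrho_\Psi$ yields the stated form.

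I expect the H\"older step in \eqref{eq:plan-hoelder} to be the main technical point, because it must simultaneously convert a quotient of $L^{2+4/d}$ and $L^2$ norms into a single $L^{1+2/d}$ integral of the marginal $\varrho_\Psi$, with the right power of the normalization $\int_Q \varrho_\Psi$ appearing in the denominator. The one-body GNS-plus-Poincar\'e step is standard; the reduction from the many-body problem to a one-body problem is trivial; but the Hölder manipulation is the crucial ingredient that makes the resulting bound depend only on the one-body density rather than on the full $N$-body wave function.
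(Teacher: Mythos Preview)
The paper does not supply its own proof of this lemma; it is quoted from the cited references. So let me assess your plan directly.

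Your one-body step \eqref{eq:plan-one-body} is standard and correct, and your H\"older/Minkowski manipulation for fixed $j$ is also correct: inserting the weight $h(\sx_{j'}) := \int_Q |\Psi|^2\,d\bx_j$ and applying H\"older with exponents $(d+2)/d$ and $(d+2)/2$ in the $\sx_{j'}$-variables indeed yields
\[
	\int_{\R^{d(N-1)}} \frac{\int_Q |\Psi|^{2+4/d}\,d\bx_j}{\bigl(\int_Q |\Psi|^2\,d\bx_j\bigr)^{2/d}}\,d\sx_{j'}
	\ \ge\ \frac{\int_Q g_j^{1+2/d}}{\bigl(\int_Q g_j\bigr)^{2/d}}.
\]

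The gap is in your final step. What you wrote, ``convexity of $t\mapsto t^{1+2/d}$ gives $\sum_j \int_Q g_j^{1+2/d} \ge (\sum_j \int_Q g_j)^{1+2/d}/(\text{normalization})$'', is not the inequality you need: the right-hand side of the lemma has $\int_Q \varrho_\Psi^{1+2/d} = \int_Q(\sum_j g_j)^{1+2/d}$ in the numerator, not $(\int_Q \sum_j g_j)^{1+2/d}$. What must actually be shown is the subadditivity
\[
	\sum_j \frac{\int_Q g_j^{1+2/d}}{\bigl(\int_Q g_j\bigr)^{2/d}} \ \ge\ \frac{\int_Q \bigl(\sum_j g_j\bigr)^{1+2/d}}{\bigl(\int_Q \sum_j g_j\bigr)^{2/d}},
\]
i.e.\ subadditivity of the $1$-homogeneous functional $F(g) := \|g\|_p^p/\|g\|_1^{p-1}$ with $p=1+2/d$. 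This does hold, but not by a one-line appeal to convexity of $t\mapsto t^p$: one observes that $F$ is the perspective of the convex map $g\mapsto \int g^p$ (restrict to $\int g = 1$, then $F(g) = s\cdot\int(g/s)^p$ at $s=\int g$), hence jointly convex, and a $1$-homogeneous convex functional is automatically subadditive. You should state this explicitly.

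A cleaner route, and the one taken in the cited references, bypasses the H\"older gymnastics entirely. By the Hoffmann-Ostenhof inequality $\int_Q|\nabla\sqrt{g_j}|^2 \le \int\!\!\int_Q |\nabla_j\Psi|^2$, and by Cauchy--Schwarz $|\nabla\sqrt{\varrho_\Psi}|^2 \le \sum_j |\nabla\sqrt{g_j}|^2$ pointwise, one gets directly $T_0^{Q}[\Psi] \ge \int_Q |\nabla\sqrt{\varrho_\Psi}|^2$, and then a single application of your one-body inequality \eqref{eq:plan-one-body} to $u=\sqrt{\varrho_\Psi}$ finishes the proof.
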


\begin{lemma}[{Covering lemma \cite{LunNamPor-16}}] \label{lem:covering}
	Let $0\le f\in L^1(\R^d)$ be a function with compact support such that 
	$\int_{\R^d} f \ge \Lambda>0$. Then the support of $f$ can be covered by a 
	collection of disjoint cubes $\{Q\}$ in $\R^d$ such that 
	\begin{equation}\label{eq:covering-0}
		\int_{Q} f \le \Lambda, \quad \forall Q
	\end{equation}
	and
	\begin{equation}\label{eq:covering}
		\sum_{Q} \frac{1}{|Q|^{a}} \Biggl( \biggl[\int_{Q} f - q \biggr]_+ 
			- b \int_{Q} f \Biggr) \ge 0
	\end{equation}
	for all $a>0$ and $0\le q< \Lambda 2^{-d}$, where
	$$
		b:= \biggl( 1- \frac{2^dq}{\Lambda} \biggr) \frac{2^{da}-1}{2^{da} + 2^d - 2}>0. 
	$$
\end{lemma}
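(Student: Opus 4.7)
The plan is to produce the cubes by a Calder\'on--Zygmund type stopping-time argument and then verify \eqref{eq:covering} by propagating a one-step estimate through the resulting dyadic tree.

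First, enclose $\supp f$ in a dyadic cube $Q^{(0)}$ large enough that $\int_{Q^{(0)}} f = \int_{\R^d} f \ge \Lambda$. Carry out a recursive dyadic subdivision: any cube $Q$ with $\int_Q f > \Lambda$ is bisected into its $2^d$ equal children, while $Q$ with $\int_Q f \le \Lambda$ is declared a leaf. Since $f \in L^1$ has compact support, Lebesgue's differentiation theorem ensures that almost every point of $\supp f$ is eventually inside a stopped leaf, producing an (at most countable) disjoint collection $\cQ$ that covers $\supp f$ up to a null set and satisfies \eqref{eq:covering-0} by construction. Every interior node $P$ of the resulting dyadic tree then has mass $m_P := \int_P f > \Lambda$.

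The combinatorial heart of the matter is a ``good child'' estimate at each interior node $P$ with children $C_1,\dots,C_{2^d}$. Let $S_P := \{i : m_{C_i} \ge q\}$; since the bad siblings outside $S_P$ carry mass strictly less than $(2^d - |S_P|)q \le 2^d q$, the hypothesis $q < \Lambda\,2^{-d}$ together with $\sum_i m_{C_i} = m_P > \Lambda$ forces $S_P$ to be non-empty and yields the pointwise estimate
\begin{equation}\label{eq:plan-split}
	\sum_{i=1}^{2^d} [m_{C_i} - q]_+ \;=\; \sum_{i \in S_P} m_{C_i} \;-\; |S_P|\,q \;\ge\; m_P - 2^d q.
\end{equation}
Using $|C_i|^{-a} = 2^{da}|P|^{-a}$ and \eqref{eq:plan-split}, the weighted contribution $\sum_i |C_i|^{-a}\bigl([m_{C_i}-q]_+ - b\,m_{C_i}\bigr)$ of the children is compared with the analogous quantity $|P|^{-a}\bigl([m_P - q]_+ - b\,m_P\bigr)$ at the parent; iterating this comparison up the tree reduces \eqref{eq:covering} to non-negativity of the single root term $|Q^{(0)}|^{-a}\bigl((1-b)m_{Q^{(0)}} - q\bigr)$, which follows from $m_{Q^{(0)}} \ge \Lambda$ and the easily checked consequence $b \le 1 - q/\Lambda$ of the closed-form expression for $b$.

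The delicate point is identifying the sharp constant $b = (1 - 2^d q/\Lambda)(2^{da}-1)/(2^{da}+2^d-2)$. A naive per-generation telescoping directly from \eqref{eq:plan-split} forces a value of $b$ that degenerates too quickly as $q \uparrow \Lambda\,2^{-d}$, so the algebra must keep track of the actual excess $m_P - \Lambda$ carried by each interior node, not merely the bound $m_P > \Lambda$, and redistribute it between generations. The factor $(1 - 2^d q/\Lambda)$ encodes precisely the collapse of the regime in which \eqref{eq:plan-split} is non-trivial, while the rational factor $(2^{da}-1)/(2^{da}+2^d-2)$ reflects the competition between the dyadic rescaling $|C_i|^{-a} = 2^{da}|P|^{-a}$ and the $2^d$-fold branching. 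Positivity $b > 0$ is then immediate from the same constraint $q < \Lambda\,2^{-d}$.
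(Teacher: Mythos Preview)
The paper does not prove this lemma; it is quoted from \cite{LunNamPor-16} and used as a black box in the proof of Theorem~\ref{thm:LT}. Your dyadic stopping-time construction of the leaves is the standard one (and the one in the cited reference); the collection satisfying \eqref{eq:covering-0} comes out correctly, and the ``good child'' estimate $\sum_i [m_{C_i} - q]_+ \ge m_P - 2^d q$ is the right combinatorial input.

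There is, however, a genuine gap in your treatment of \eqref{eq:covering}. You first assert that iterating the parent--child comparison reduces the claim to non-negativity of the root term, and then immediately concede that this ``naive per-generation telescoping'' fails to deliver the stated $b$. These two sentences are in tension, and the second is the correct one: the induction step you describe, combined with the good-child estimate and the bound $m_P > \Lambda$, forces only
\[
	b \ \le\ 1 - \frac{q\,(2^{d(a+1)}-1)}{\Lambda\,(2^{da}-1)},
\]
which becomes negative well before $q$ reaches $\Lambda\,2^{-d}$, whereas the stated $b$ remains positive on the full range. Your proposed remedy---``keep track of the actual excess $m_P - \Lambda$ and redistribute it between generations''---is a slogan, not an argument: you neither formulate a strengthened induction hypothesis nor carry out any computation from which the exact value $(1-2^dq/\Lambda)(2^{da}-1)/(2^{da}+2^d-2)$ emerges. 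As written, the proposal establishes \eqref{eq:covering-0} but does not prove \eqref{eq:covering} with the claimed constant for $q$ near the threshold.
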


\begin{proof}[Proof of Theorem~\ref{thm:LT}]
	
	If $\alpha_2=0$ the theorem is trivially true, and otherwise we have $E_2(Q)>0$
	and thus a non-trivial local exclusion principle on squares $Q$
	given by Lemma~\ref{lem:local-exclusion} with $C(\rho) \ge C_4(\rho) \ge c(\alpha_2) > 0$.
	By definition of the space $H^1_\rho$ we may w.l.o.g. assume $\Psi$ is smooth and
	the projection by $\tilde\pr$ of its support to $\cC^N$ contained in the projection 
	by $\pr$ of some large cube $Q_L^N$,
	$Q_L = [-L,L]^2$.

	Let $q=1$ and $\Lambda=5$. 
	If $N\le \Lambda$, then~\eqref{eq:LT} follows immediately from \eqref{eq:local-uncertainty} 
	with $Q=Q_L$, $L \to \infty$. 
	If $N> \Lambda$, then we can apply Lemma \ref{lem:covering} with $f=\varrho_{\Psi}$
	and $a=1$, and obtain a collection of disjoint cubes $\{Q\}$ covering $Q_L$. 
	Using that
	$$
		T_\rho[\Psi] = \sum_{Q} T_\rho^{Q \subseteq Q_L}[\Psi]
	$$ 
	and the diamagnetic inequality of Lemma~\ref{lem:diamag}
	$$
		T_\rho[\Psi] \ge T_0\bigl[|\Psi|_\cF\bigr],
		\qquad \varrho_{|\Psi|_\cF} = \varrho_\Psi,
	$$
	and finally combining the bounds \eqref{eq:local-uncertainty} and \eqref{eq:local-exclusion}, 
	we obtain
	\begin{align*}
		(\eps+1) T_\rho[\Psi] 
		&\ge \eps \sum_Q \left[ C_1\frac{\int_Q \varrho_{\Psi}^2}{\int_Q \varrho_{\Psi}} - \frac{1}{C_1}\frac{1}{|Q|} \int_Q \varrho_{\Psi}\right] 
		 + \sum_Q \frac{C(\rho)}{|Q|}  \biggl[ \int_Q \varrho_\Psi(\bx)\,d\bx - 1 \biggr]_+  \\
		&\ge \eps C_1 \frac{\int_{\R^2} \varrho_{\Psi}^{2}}{\Lambda} 
	\end{align*}
	for any fixed constant $\eps>0$ satisfying $\eps \le C_1 C(\rho) b$.
	Taking
	$$
		\eps = C_1 b \min\{\alpha_2/12,0.147/4,(C_1b)^{-1}\}
	$$ 
	proves the first inequality \eqref{eq:LT} with $C = C_1 \min\{C_1 b,1\}/120 > 0$. 
	
	For the second inequality \eqref{eq:LT-potential} we use a well-known equivalence of
	the bounds, namely by Cauchy-Schwarz and simple optimization
	\begin{align*}
		\left\langle \Psi, (\hT_\rho + \hV)\Psi \right\rangle_{L^2_\rho}
		&= T_\rho[\Psi] - \int_{\R^2} V \varrho_\Psi
		\ge T_\rho[\Psi] - \int_{\R^2} |V_-| \varrho_\Psi \\
		&\ge C \alpha_2 \int_{\R^2} \varrho_\Psi^2 - \left( \int_{\R^2} |V_-|^2 \right)^{1/2} \left( \int_{\R^2} \varrho_\Psi^2 \right)^{1/2} 
		\ge -\frac{1}{4C\alpha_2} \int_{\R^2} |V_-|^2,
	\end{align*}
	which is a uniform bound in $N$ provided that $\alpha_2 = \alpha_2(N)$
	stays uniformly bounded away from zero.
\end{proof}



\def\MR#1{} 

\newcommand{\etalchar}[1]{$^{#1}$}


\end{document}